\documentclass[a4paper, twoside,openright]{report}

\usepackage[a4paper,top=3cm,bottom=3cm,left=3cm,right=3cm]{geometry} 
\usepackage[fontsize=11pt]{scrextend}
\usepackage[english]{babel}
\usepackage[fixlanguage]{babelbib}
\usepackage[utf8]{inputenc} 
\usepackage[T1]{fontenc}
\usepackage{rotating}
\usepackage{fancyhdr}               

\usepackage{amssymb}
\usepackage{amsmath, stmaryrd}
\usepackage{amsthm}         
\usepackage{proof}
\usepackage{graphicx}
\usepackage{pdfpages}
\definecolor{Black}{rgb}{0.0, 0.0, 0.0}
\definecolor{Green}{rgb}{0.0, 0.6, 0.3}
\definecolor{Blue}{rgb}{0.0, 0.0, 1.0}
\usepackage{listings}          
\usepackage{hyperref}     
\hypersetup{colorlinks,allcolors=Blue}
\usepackage[normalem]{ulem}
\usepackage{tikz}
\usetikzlibrary{shapes, positioning}
\usetikzlibrary{patterns}
\usepackage{extarrows}
\usepackage{mathrsfs} 
\usepackage{enumitem} 
\usepackage{mathtools}
\usepackage{booktabs}
\usepackage{bussproofs}
\usepackage[all,2cell,cmtip]{xy} 
\UseAllTwocells 
\usepackage{relsize}
\usepackage{stmaryrd}
\usepackage{longtable}
\usepackage{fancyvrb}
\VerbatimFootnotes

\usepackage{multirow}
\usepackage{microtype}
\usepackage{multicol}
\usepackage{needspace}


\pagestyle{fancy}
\fancyhf{}
\lhead{\rightmark}
\rhead{\textbf{\thepage}}
\fancyfoot{}
\setlength{\headheight}{12.5pt}

\fancypagestyle{plain}{
  \fancyfoot{}
  \fancyhead{}
  
}

\lstset{breaklines=true,
  language=[objective]caml,     
  basicstyle=\ttfamily,
  columns=fullflexible,
  keepspaces=true,
  keywordstyle=[2]\color{green},
  keywordstyle=[1]\color{purple}, 
  keywordstyle=[3]\color{blue}\bfseries,
  showstringspaces=false,
  morekeywords=[1]{new_definition, 
                   new_recursive_definition,
                   new_inductive_definition,
                   define_type,
                   loadt,
                   prove,
                   function},
   morekeywords=[2]{REPEAT,THEN},
   morekeywords=[3]{let, in, val, rec, fun, match, with, type, and, open, module, struct, sig, end, if, then, else, begin, end},
   commentstyle=\color{gray},
   literate={'"'}{\textquotesingle "\textquotesingle}3,
  alsodigit={_}
}

\lstset{
  escapeinside={(*@}{@*)},  
  basicstyle=\ttfamily
}

\nocite{*}

\lstset{aboveskip=20pt,belowskip=20pt}

\hypersetup{
    colorlinks,
    linkcolor=Blue,
    citecolor=Blue
}

\newtheorem{definition}{Definition}[section]
\newtheorem{convention}{Convention}[section]
\newtheorem{theorem}{Theorem}[section]
\newtheorem{lemma}{Lemma}[section]
\newtheorem{fact}{Fact}[section]

\usepackage{chngcntr} 

\newcounter{claimcounter}

\numberwithin{claimcounter}{theorem} 
\numberwithin{claimcounter}{lemma}   
\renewcommand{\theclaimcounter}{\arabic{claimcounter}}
\newenvironment{claim}
{\refstepcounter{claimcounter}%
 \begin{quote}\noindent\textit{Claim \theclaimcounter.} \hspace{0.5em}\ignorespaces}
{\end{quote}}

\newcounter{remarkcounter} 

\newenvironment{remark}[1][]
{\medskip\refstepcounter{remarkcounter}
 \noindent $\triangleright$\ \textit{\textbf{Remark~\theremarkcounter#1.}}\hspace{0.5em}\ignorespaces}

\usepackage{mdframed}
\newmdenv[
    topline=false, 
    bottomline=false, 
    rightline=false, 
    leftline=true, 
    linewidth=0.5pt, 
    linecolor=Blue, 
    skipabove=5pt, 
    skipbelow=5pt, 
]{mydefinition}
\newmdenv[
    topline=false, 
    bottomline=false, 
    rightline=false, 
    leftline=true, 
    linewidth=0.5pt, 
    linecolor=Blue, 
    skipabove=0pt, 
    skipbelow=0pt, 
]{mydefinition2}
\newmdenv[
    topline=false, 
    bottomline=false, 
    rightline=false, 
    leftline=true, 
    linewidth=0.5pt, 
    linecolor=black, 
    skipabove=5pt, 
    skipbelow=5pt, 
]{myproof1}

\newmdenv[
    topline=false, 
    bottomline=false, 
    rightline=false, 
    leftline=true, 
    linewidth=0.5pt, 
    linecolor=Black, 
    skipabove=5pt, 
    skipbelow=5pt, 
]{myproof}
\renewenvironment{proof}[1][\proofname]{%
  \begin{myproof}%
  \textbf{#1.} %
}{%
  \hfill $\Box$%
  \end{myproof}%
}

\newenvironment{claimproof}[1][Proof]
{\begin{myproof} 
    \noindent\textbf{#1.} \ignorespaces} 
{\par\vspace{2mm}  \par\end{myproof}} 

\newcommand{\ExternalLink}{
    \tikz[x=1.2ex, y=1.2ex, baseline=-0.05ex]{
     \color{Blue}
        \begin{scope}[x=1ex, y=1ex]
            \clip (-0.1,-0.1) 
                --++ (-0, 1.2) 
                --++ (0.6, 0) 
                --++ (0, -0.6) 
                --++ (0.6, 0) 
                --++ (0, -1);
            \path[draw, 
                line width = 0.5, 
                rounded corners=0.5] 
                (0,0) rectangle (1,1);
        \end{scope}
        \path[draw, line width = 0.5] (0.5, 0.5) 
            -- (1, 1);
        \path[draw, line width = 0.5] (0.6, 1) 
            -- (1, 1) -- (1, 0.6);
        }
    }


\lstdefinelanguage{OCaml}{
    keywords=[3] {let, in, val, rec, fun, match, with, type, and, open, module, struct, sig, end, if, then, else, begin, end},
    keywordstyle=[3]\color{Green}\bfseries,
    identifierstyle=\color{black},
    commentstyle=\itshape\color{gray},
    stringstyle=\color{Blue},
    morecomment=[s]{(*}{*)},
    morestring=[b]",
    sensitive=true
}

\lstset{
    language=OCaml,      
    frame=single,        
    basicstyle=\ttfamily\small, 
    keywordstyle=[2]\color{Blue},
    keywordstyle=[1]\color{Blue}, 
    showstringspaces=false,
    morekeywords=[1]{new_definition, 
                   new_recursive_definition,
                   new_inductive_definition,
                   new_basic_definition,
                   new_basic_type_definition,
                   define_type,
                   loadt,
                   prove,
                   function},
   morekeywords=[2]{REPEAT,THEN},
   commentstyle=\color{gray},
   literate={'"'}{\textquotesingle "\textquotesingle}3,
  alsodigit={_},
    numbers=none,        
    showstringspaces=false,
    showspaces=false,
    showtabs=false,
    tabsize=2
}

\DeclareSymbolFont{symbolsC}{U}{txsyc}{m}{n}
\DeclareMathSymbol{\strictif}{\mathrel}{symbolsC}{74}
\newcommand{\customref}[1]{1(b)i.}
\newcommand{\customreff}[1]{1(b)ii.}

\usepackage{pifont}
\usepackage{colortbl}


\begin{document}

\includepdf[pages=1]{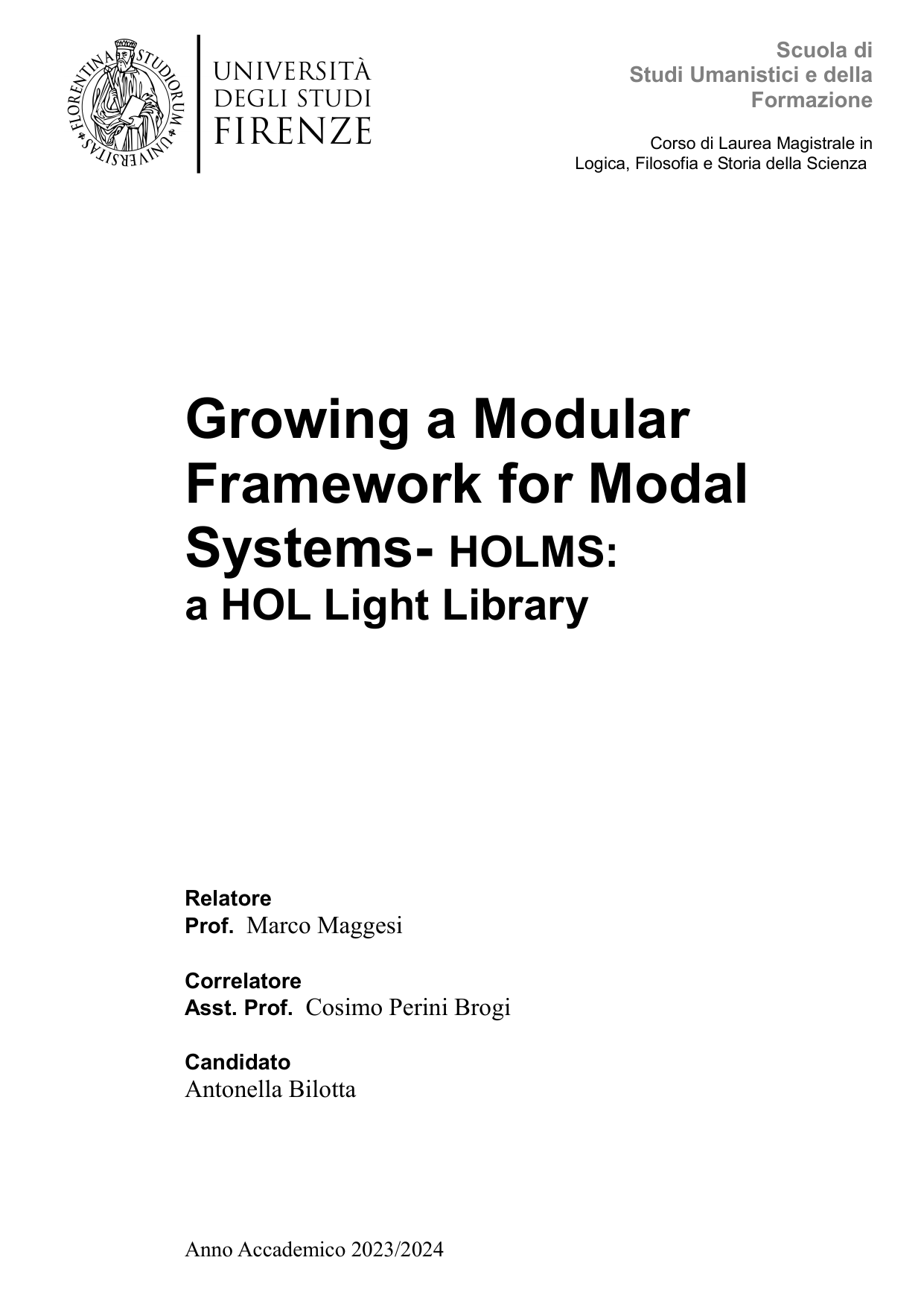} 

\begin{abstract}
    The present dissertation introduces the research project on HOLMS (\textbf{HOL} Light Library for \textbf{M}odal \textbf{S}ystems), a growing modular framework for modal reasoning within the HOL Light proof assistant. 
    To provide an accessible introduction to the library, the fundamentals of modal logic are outlined first, followed by a concise manual for the proof assistant itself.
    The core contribution of this work on HOLMS is the development of a unified and modular strategy for proving adequacy theorems with respect to relational semantics directly within HOL Light for several normal modal systems, currently including K, T, K4, and GL. Adequacy theorems establish a formal connection between syntactic proof systems and their intended relational models, ensuring that derivable statements align with valid ones. This approach extends previous research on Gödel-Löb logic (GL) by two HOLMS developers. It also assesses the generality and compositionality of the completeness proofs in George Boolos' monograph  \textit{The logic of provability}.
    Beyond theoretical contributions, HOLMS incorporates automated decision procedures and a countermodel constructor for K, T, K4, and GL, illustrating how general-purpose proof assistants can be effectively combined with research on labelled sequent calculi and key insights from correspondence and bisimulation theories. The implementation in HOL Light demonstrates the feasibility of mechanising modal reasoning in a flexible and robust manner, paving the way for further developments of the HOLMS framework.
\end{abstract}

\tableofcontents
\addcontentsline{toc}{chapter}{Introduction} 

\chapter*{Introduction}
\markboth{Introduction}{Introductiom}

\textbf{HOLMS Project.} This thesis has been developed in the context of the research project ``HOLMS: \textbf{HOL}-Light Library for \textbf{M}odal \textbf{S}ystems''. HOLMS constitutes a novel framework within the HOL Light proof assistant, designed for automated theorem proving and countermodel construction in modal logics. It extends previous work on Gödel-Löb logic (GL) by two HOLMS developers~\cite{DBLP:journals/jar/MaggesiB23}, offering a more general and modular approach.

The first version of HOLMS was introduced in November 2024 at the \emph{6th International Workshop on Artificial Intelligence and
fOrmal VERification, Logic, Automata, and sYnthesis: OVERLAY 2024}~\cite{DBLP:conf/overlay/BilottaMBQ24}. On that occasion, we generalised the approach to cover a wider range of normal modal systems, starting with the minimal system K. 
That initial version of HOLMS already featured a flexible framework for automating proof search and countermodel generation, leveraging labelled sequent calculi, interactive theorem proving, and formal completeness theorems. As a result, it laid the foundation for a comprehensive tool for modal logic reasoning, combining high levels of confidence and automation.

By employing concepts drawn from the \textit{correspondence theory}, this work extends previous results and creates a \textit{uniform, scalable method} for dealing with several modal systems within HOL light. The integration of two new modal systems--K4 and T-- into HOLMS serves as evidence of the \textit{modularity} of the implemented framework. 

The current version of~\href{https://archive.softwareheritage.org/swh:1:dir:6dbea8a51b27f3b4de7808eb0bc2fa82706962c4;origin=https://github.com/HOLMS-lib/HOLMS;visit=swh:1:snp:2c0efd349323ed6f8067581cf1f6d95816e49841;anchor=swh:1:rev:1caf3be141c6f646f78695c0eb528ce3b753079a}{ HOLMS repository}, which incorporates all the new features introduced in this thesis, has been submitted to an international conference.
\medskip

\noindent \textbf{Source Code.} 
HOLMS consists of just under six thousand lines of OCaml code, organised into a~\href{https://github.com/HOLMS-lib/HOLMS}{\emph{git repository}}.  The project~\href{https://holms-lib.github.io/}{\textit{web page}}~\cite{holms-website} gives a brief overview of the library, its evolution and related publications, while the~\href{https://holms-lib.github.io/}{\textit{readme file}} highlights the specificities of this second version, and provides a usage guide for our library at its current status.
The present version of HOLMS was released on  ~\href{https://archive.softwareheritage.org/swh:1:dir:6dbea8a51b27f3b4de7808eb0bc2fa82706962c4;origin=https://github.com/HOLMS-lib/HOLMS;visit=swh:1:snp:2c0efd349323ed6f8067581cf1f6d95816e49841;anchor=swh:1:rev:1caf3be141c6f646f78695c0eb528ce3b753079a}{\textit{Software Heritage}}, alongside the thesis we will use the symbol \ExternalLink to link the lines of the source code which we are referring to.
\medskip

\noindent \textbf{Authored published work.} During my Master, I authored the following conference and journal papers, on which the present thesis is based: \begin{itemize}
    \item ``Growing HOLMS, a HOL Light Library for Modal `Systems''~\cite{DBLP:conf/overlay/BilottaMBQ24} \\ A. Bilotta, C.Perini Brogi, M.Maggesi and L. Quartini\\ Communication paper presented at the \textit{International workshop OVerLAY 2024}  introducing a first embryo version of HOLMS library;
    \item ``HOLMS framework for modal reasoning in HOL'' \\ A. Bilotta, C.Perini Brogi and M.Maggesi \\
     An extended version of the communication paper is actually under review, providing an overview of the latest developments in the library.
\end{itemize}
\medskip

\noindent \textbf{Contents.} This thesis intends to describe in detail the HOLMS library and its formalisation of \textit{adequacy theorems} and \textit{decision procedures} within the interactive theorem prover HOL Light. Since the author has been involved in the HOLMS project from its inception, this work also provides an opportunity to reconstruct the evolution of the library.

The first chapter introduces the \textbf{fundamentals of modal logic}, offering an \textit{informal} overview of the logical concepts employed in the implementation of HOLMS. This chapter can be read alongside chapter four, both for a better understanding of the formal version of the proofs developed with the aid of the theorem prover and to appreciate the implementation choices made in HOL Light. After outlining the focus and evolution of modal logic, we provide definitions of \textit{syntax} and \textit{semantics} for modal logics, along with an \textit{axiomatic calculus for normal systems}. We then present key results from\textit{ correspondence theory}, followed by a detailed proof of \textit{adequacy theorems} and a brief exploration of \textit{bisimulation theory} and \textit{labelled sequent calculi} for modal logics.

In the second chapter, we present \textbf{HOL Light}, beginning with an explanation of what a theorem prover is and then focusing on the specific features of this proof assistant. We cover its \textit{logic}, including language, metalanguage, syntax, and deductive system, and conclude with a detailed explanation of how \textit{interactive proving} works in HOL Light.

The third chapter is more concise than the previous ones but provides an overview of \textbf{various approaches to formalising modal logic within HOL Light}. Beginning with HOL Light tutorial \textit{deep and shallow embeddings}, it then focuses on earlier work of two HOLMS developers on the \textit{implementation of Gödel-Löb provability logic}.
Finally, the chapter examines the \textbf{HOLMS library}, describing the project's aims, methodology, and evolution. It also includes a comprehensive map of the HOLMS source code and repository.

The fourth and final chapter ``grows'' the current version of our repository and--to describe in depth the implementation of the source code--incorporates several code listings. It provides a detailed account of the \textit{deep embedding} for both syntax and semantics, along with the \textit{formalisation of the deducibility predicate} for normal systems. In addition, the chapter implements key concepts and results from \textit{correspondence} and \textit{bisimultation theories} and presents step-by-step the \textit{proofs of adequacy theorems}. Building on these foundations, we conclude by presenting both a \textit{naive decision procedure} and a \textit{principled decision procedure} based on labelled sequent calculi and adequacy results.
\medskip

\noindent \textbf{Logic Notation.} To clearly differentiate the logical operators of HOL Light from those of our HOLMS modal language, as well as from the \textit{abstract} modal language and the metalanguage, we provide a glossary that clarifies these distinctions in Figure~\ref{fig:glossary}.
\medskip
\medskip

\begin{figure}[hb]
    \centering
        \begin{tabular}{|c|c|c|c|}
            \hline
            \multicolumn{2}{|c|}{\textbf{HOL Light notation}} & \multicolumn{2}{c|}{\textbf{``Informal'' language}} \\
            \midrule
            \textbf{HOL Light} & \textbf{HOLMS} & \textbf{Metalanguage} & \textbf{Object Language} \\
            \midrule
            \verb|F|  & \verb|False| & false & $\bot$ \\
            \verb|T|  & \verb|True|  & true  & $\top$ \\
            \verb|~|  & \verb|Not|   & not   & $\neg$ \\
            \verb|/\| & \verb|&&|   & and   & $\land$ \\
            \verb|\/| & \verb||||   & or    & $\lor$ \\
            \verb|==>|& \verb|-->|  & if...then... ($\implies$)& $\to$ \\
            \verb|<=>|& \verb|<->|  & iff ($\iff$) & $\leftrightarrow$ \\
                      & \verb|Box|  &       & $\Box$ \\
            \verb|!|  &             & for all ($\forall$) &  \\
            \verb|?|  &             & exists ($\exists$) &  \\
            \verb|?!| &             & exists exactly one ($\exists!$) &  \\
        \hline
        \end{tabular}

    \caption{A glossary for the logical notation used in this work}
    \label{fig:glossary}
\end{figure}

\chapter{An ``Informal'' Introduction to Modal Logic}\label{chap:1}

This chapter is designed to provide the reader with all the basic knowledge required to comprehend HOLMS implementation of modal reasoning in HOL Light. 
The entire chapter can be regarded as an \textit{informal/abstract} introduction to modal logic and can be read in parallel with the presentation of HOLMS in Chapter~\ref{chap:4}.

Such a parallel reading of the two chapters allows to appreciate the ideas and choices followed in HOLMS to \textit{formalise}\footnote{This expression here refers to \textit{formalisation} within a proof assistant.} the \textit{abstract} concepts of modal logic presented here. Concurrently, the theorems \textit{informally} proved in this chapter will adhere to the very same strategies employed in HOLMS, allowing the user to leverage these proofs to better understand their implementation in HOL Light. To further facilitate this understanding, links to lines of the source code formalising abstract concepts have been added with the symbol \ExternalLink.

First, this chapter explains what we refer to by using the expression ``modal logic'' and reconstructs a concise \textbf{historical account} of the development of this field. After introducing modal \textbf{syntax} and \textbf{semantics}, an \textbf{axiomatic calculus for normal modal systems} is defined. Thereafter, we show the main results of \textbf{correspondence theory}, in order to modularly demonstrate completeness and soundness.
Finally, we focus on logics within the \textbf{modal cube} and their \textbf{adequacy theorems} with respect to relational semantics are proved.  Furthermore, \textbf{bisimulation theory} is briefly presented and \textbf{labelled sequent calculi} are introduced.

It should be noted that, since HOLMS proof of completeness is a refinement of the demonstrative strategy outlined by Boolos~\cite[\S5]{boolos1995logic}, our presentation of modal reasoning and--in particular of the adequacy theorems--will follow and extend the approach of this classical textbook.

\section{What is Modal Logic?}
Modal logic is the branch of logic that deals with modal propositions. Modal sentences are typical of natural language and are characterised by specific linguistic markers, known as \textbf{modal operators}, that specify the manner or mode in which a proposition holds.

For example, ``How many roads \textit{must} a man walk down/ \textit{before} you call him a man?''\footnote{Bob Dylan, ``Blowin' in the Wind'',  \textit{The Freewheelin' Bob Dylan}, 1963.} and ``I'\textit{d} rather be a forest than a street/ Yes, I \textit{would}/ If I \textit{could}/ I surely \textit{would}''\footnote{Simon and Garfunkel, ``El Condor Pasa (If I could)'', \textit{Bridge Over Troubled Water}, 1970.} are modal propositions by distinguished songwriters and their modal operators are reported in italics. 

While classical logic concerns the truth values of the statements, modal logic extends this perspective by reflecting on \textit{how} statements are affirmed or denied. Modal sentences go beyond the true/false dichotomy and they can concern many linguistic constructs that cannot be reduced to truth-functional logical operators (\textbf{modalities}), such as necessity and possibility, knowledge and belief or obligation and permission. Some classical modalities are listed below (Table~\ref{tab:modalities}), together with their specific modal operators.

\begin{table}[h]
    \centering
    \scriptsize
    \renewcommand{\arraystretch}{1.3} 
    \begin{tabular}{| p{2.5cm} | p{3.5cm} | p{3cm} | p{4cm} |}
        \hline
        \textbf{Modality} & \textbf{Specify the truth} & \textbf{Operators} & \textbf{Expresses ...}  \\
         & \textbf{according to...} & & \\
        \hline
        \multirow[t]{2}{=}{\textbf{Alethic}} & Metaphysic constraints
        & "necessarily"   & What must be the case  \\
        \cline{3-4}
        & & "possibly" &  What may be the case  \\
        \hline
        \textbf{Epistemic} & Epistemic constraints
        & "knows that"   &  What is known to be true  \\
        \hline
        \textbf{Doxastic} & Beliefs constraints
        & "believes that" &  What is believed to be true \\
        \hline
        \multirow[t]{3}{=}{\textbf{Deontic}} & Normative constraints
        & "ought to"  &  What is required or mandatory  \\
        \cline{3-4}
        &  & "is permitted to"  &  What is allowed or permitted \\
        \cline{3-4}
        & & "is forbidden to" & What is forbidden  \\
        \hline
        \multirow[t]{4}{=}{\textbf{Temporal}} & Time constraints
        & "always"    & That something is always true  \\
        \cline{3-4}
        & & "eventually" & That something will be true at some point \\
        \cline{3-4}
        & & "sometimes"  &  That something is sometimes true  \\
        \cline{3-4}
        & & "until"      &  That something holds until another condition is met  \\
        \hline
        \textbf{Dynamic} & Program execution & $[a]P$ &  \( P \) is true after executing $a $ \\
        \hline
    \end{tabular}
    \caption{Modalities and modal operators}
    \label{tab:modalities}
\end{table}

Among these, \textit{alethic modalities}  have ever had a primary role. Modal logic was born precisely to treat them: the language of propositional logic was extended with the two signs box `$\Box$' (\textit{`necessarily'}) and diamond `$\Diamond$' (\textit{`possibly'})\footnote{Note that possibility and necessity are interdefinable over a classical base: a statement is \textit{necessary} if and only if its negation is not \textit{possible}. We can take `$\Box$' as primitive and define `$\Diamond$' as `$\neg \ \Box \ \neg$'.}; and a semantics reflecting our intuitive understanding of necessity and possibility was developed.

The prominence of \textit{altehic modalities} is both historical and theoretical: on the one hand, this was the first modal logic developed by Aristotle and the most investigated during the centuries; on the other hand, the \textit{relational semantics} developed by Kripke for alethic modalities can describe all the others as different interpretations of box and diamond.

In this chapter, we will analyse modal logics independently from their modalities/interpretations. In particular, we will  study the notion of implication between modal sentences by using two logical relationships: 
\begin{itemize}
\item[$\vDash$]: the model-theoretic logical consequence relation (semantical notion)
\item[$\vdash$]: the proof-theoretic deducibility relation (syntactical notion)
\end{itemize}

\subsection{A Brief History of Modal Logic}

In this section, we will briefly illustrate the development of modal logic during the millennia and how a discipline involving metaphysical and uncertain concepts, such as necessity and possibility, has moved from the periphery of logic to a primary role.

\subsubsection{Aristotle and Medieval logicians}
Aristotle profusely reasoned about modal logic in \textit{Prior Analytics}, \textit{Metaphysics} and \textit{De Interpretatione}. This work is sometimes accused of being confused and almost incomprehensible due to its many errors and inconsistencies \cite{boolos1995logic}; still, here Aristotle developed some basic results that will be verified with modern modal instruments.
For example, he developed a theory of syllogism with modal sentences in the premises, and he sketched a modal \textit{square of opposition} to illustrate the relationship between alethic modalities via negation. Aristotle also informally proved some lemmas that hold for each \textit{normal}\footnote{This concept will be introduced in the following sections.} modal logic, such as $\vDash \Box \ (A \to B) \to (\Box A \to \Box B) $.

The reading of Aristotelic work stimulates the medieval disputes about \textit{future contingents}\footnote{The problem of \textit{future contingents} could be roughly summarized as follows: (1) God is omniscient (2) Then he can predict future events (3) So how could the action of a man be contingent? This philosophical discussion is primarily theological and hides the \textit{free will} problem, but it had also an important impact on the study of modal logic.} and the creation of a refined modal doctrine that catches the similarities (\textit{similitudo}) between quantifiers and modal operators ($\forall \sim \Box$ and $\exists \sim \Diamond$). 

From the 15th century, the interest in modal logic tended to decline, and the rebirth of logic as mathematical logic in the second half of the 19th century did not correspond to a renaissance of modal logic. To observe a renewed interest in this topic, we have to wait till the beginning of the 20th century and to Lewis' attempt to characterise a \textit{strict conditional}.

\subsubsection{The Attempt of Characterising Strict Conditional}
Stoics were the first to study deeply logical conditionals since their logic focused on the relationship between sentences. They elaborated at least three types of conditionals, that we reconstruct here--not without philological inaccuracies--following~\cite{MugnaiStoria}.
\begin{enumerate}
    \item \textit{Material Conditional} \\
          Philo the Dialectician elaborated the characterisation used by modern logic: \textit{Material conditional} is true \textit{iff} the premises are false or the consequence is true. 
    \item \textit{Diodorus Conditional }\\
     Diodorus Cronus matched the truth conditions of conditionals with temporal constraints.
     \item \textit{Chrysippus Conditional}\\
     Chrysippus' characterisation required a semantic relation between premises and consequences that truth-functional operators cannot describe. Medieval logicians preferred this implication and wrote: ``a conditional is true iff is \textit{impossible} that premises are true and the consequence is false''. It is unclear if this is a genuine modal sentence, or if ``\textit{impossible}'' here simply stands for \textit{``is not the case that''}.
\end{enumerate}

First Hugh McColl in 1880~\cite{McColl1880-MCCSR} and then, more strenuously, Clarence Irving Lewis in 1912~\cite{10.1093/mind/XXI.84.522}, rediscovered and theorized a not-material conditional, which:

\begin{enumerate}
    \item Cannot be defined with a truth table and then is not a truth-functional operator
    \item Is logical equivalent to the modal sentence \textit{it is impossible the conjunction of the antecedent and the negation of consequent} ($ A \strictif B \iff \neg \Diamond (A \land \neg B) $) \footnote{Material conditional can similarly be characterized as $A \to B \iff \neg  (A \land \neg B) $}
\end{enumerate}

Lewis used the sign `$ \strictif $' to denote \textit{strict implication} and defined `$A \strictif B$' as `$\Box (A \to B)$'~\footnote{Analogously you can take `$\strictif$' as primitive and define `$\Box A$' as `$(A \to A) \strictif A$'.}. He, hence, reflected on a \textit{conditional that is necessarily true}, and which formalises the \textit{deduction} of a sentence from another~\footnote{What Lewis meant with the word \textit{deducible} is uncertain: some passages of ``Symbolic Logic'' suggest he was pointing to deduction in formal system, while other lines seem to treat logical/metaphysical necessity.}. To characterise strict implication, Lewis developed several axioms systems of modal logic~\footnote{In these systems, Lewis took $\Diamond$ as primitive.}, presented in his 1932 ``Symbolic Logic''~\cite{Lewis1932-LEWSLA} as $\mathbb{S}\mathbf{1}, \mathbb{S}\mathbf{2}, \mathbb{S}\mathbf{3}, \mathbb{S}\mathbf{4},$ and $\mathbb{S}\mathbf{5}$. Thanks to Lewis' aim of characterising a non-philonian conditional, and of developing a formal system alternative to the one presented in Russel and Whithead's ``Principia Mathematica'' (1910-1913)~\cite{Russel1910Principia}; the modern study of modal logic began and two normal systems still used today ($\mathbb{S}\mathbf{4}$ and $\mathbb{S}\mathbf{5}$) were developed. Moreover, Lewis' uncertain idea that deducibility can be studied with the aid of modal system turns out to be correct with provability logic $\mathbb{GL}$.

\subsubsection{Other Syntactical Developments of Modal Logic}

In the same years, the Polish logician Jan Łukasiewicz laid the foundations of \textit{many-valued logics}, by defining in 1918~\cite{ca44bd83-49fd-3b77-987a-071e647d06f0} a three-valued logic where, in addition to true (=1) and false (=0), he added a third truth value (=1/2) interpreted as `possible'. Nevertheless, this system could derive $ \vDash 1/2 A \land 1/2 B \implies 1/2 (A \land B)$ and a simple interpretation of A and B as ``Schrödinger's cat is alive'' and ``Schrödinger's cat is dead'' shows that this is not a good characterisation of possibility. It is possible that Schrödinger's cat is alive and it is also possible that Schrödinger's cat is dead, but--ignoring quantum mechanics--Schrödinger's cat cannot be both dead and alive.

A short after Lewis' ``Symbolic Logic'', Kurt Gödel published ``An Interpretation of the Intuitionistic Propositional Calculus'' (1933)~\cite{Godel-1933}, and he showed that Lewis modal system $\mathbb{S}4$ could be axiomatised as extensions of the propositional calculus. Since this article, every axiomatisation of a modal system will follow Gödel's advice, by clearly distinguishing the classical basis of the system from the modal axioms and rules.

\subsubsection{Semantical Development of Modal Logic}
Besides this syntactical development of modal logic, there was limited semantical progress, and no satisfactory interpretation of the new theorems nor of the differences between modal systems. A theory of reference, in this sense, was \textit{Leibniz's semantics of possible worlds}. Leibniz drew up his reflection in a metaphysical and theological context next to the problem of \textit{future contingents}. He proposed an innovative solution that connects the evaluation of modal sentences with the metaphysics of possible worlds, by defining a \textit{necessary sentence} as a \textit{sentence that is true in all possible worlds}.

Despite Leibniz not employing consistently his modal intuitions, in 1942 Carnap~\cite{Carnap1942} developed the first modern model-theoretic study of modal logic by formalising \textit{semantics of possible worlds} in his \textit{semantics of state descriptions}. Carnap semantics interpret the box as follows, given M a collection of \textit{state descriptions}: 

\begin{quote}
    `$\Box A$' is true in $S$ \textbf{iff} for every state description $S'$ in $ M$, `$A$' is true in $S'$.
\end{quote}
Carnap's attempt to establish modal semantics was affected by two problems:
\begin{enumerate}
    \item The definition of the truth of a modal sentence in a state description $S$ does not depend on the state description $S$ at all;
    \item This definition yields the result that iterations of the box have no effect ($\Box A \iff \Box \Box A$), contradicting our informal intuition of necessessity. 
\end{enumerate}

In the second half of the 20th century, Arthur Prior~\cite{Prior1957-PRITAM} founded temporal logics semantics, clarifying the truth conditions of the sentence `\textit{it was once the case that}':

\begin{quote}
     `\textit{It was once the case that $A$}' is true at time $t$ \textbf{iff} `$A$' is true at some time $t'$ \textit{earlier} than $t$.
\end{quote}
Notice that this non-vacuous definition mentions a relation of temporal precedence. The introduction of such a \textit{relation} in modal semantics, gave Prior flexibility to define various tense operators~\cite[\S1.1]{ZaltaManuscript-ZALBCI}. 

Saul Kripke~\cite{Kripke1959-KRIACT, Kripke1963-KRISCO} developed \textit{relational} semantics to such an extent that it is well-known as Kripke semantics.
In particular, he  introduced a universe of possible worlds, but he did not follow Carnap and Leibniz in defining the interpretation of the box as a universal quantification over possible worlds:
\begin{quote}
    `$\Box A$' is true at a world $w$ iff `$A$' is true \textit{in every possible world}.
\end{quote}
Following Prior instead, Kripke extended Carnap's semantics by introducing an \textit{accessibility relation} on the possible worlds such that:
\begin{quote}
`$\Box A$' is true at a world $w$ iff `$A$' is true \textit{at every possible world} $w'$ \textit{accessible from} $w$.    
\end{quote}

By exploiting the innovative idea that not every model is accessible from a given world, Kripke's semantics allow us to give an account of the differences between modal systems by analysing their different accessibility relation~\footnote{We will introduce Kripke's semantics and its features in detail in Section~\ref{sec:informal-semnatics}.}. 

From this point on, modal logic has all modern logical instruments at its disposal to become a coherent branch of logic.

\subsubsection{Contemporary Applications of Modal Logic}

Since 1970s, this field has grown to the point that modal logic is a central tool in several scientific and technical disciplines, including knowledge representation, formalisation of reasoning under uncertainty, the study of multi-agent systems, the analysis of computational processes, the verification of consistency of normative corpora, and the modelling of decision-making processes~\cite{DBLP:books/el/07/BBW2007,DBLP:books/cu/BlackburnRV01}.

Various modal operators has turned out to capture interesting abstract properties.
In particular, temporal logics ($\mathbb{TL}$, $\mathbb{LTL}$, $\mathbb{CTL}$, etc.) have found an important application in computer science, and more specifically in formal verification and model checking, where they are used to specify and prove compliance of requirements of hardware or software systems~\cite{clarke1999model,stirling1991modal}.

Furthermore, Provability logic ($\mathbb{GL}$) has succeeded in  properly describing formal provability in mathematical theories~\cite{sep-logic-provability}, as observed in~\cite{boolos1995logic}: 
\begin{quote}
 "The symbolism of modal logic turns out to be an exceeding useful notation for representing the forms of sentences of formal theories [...], and techniques originally devised to study system of modal logic disclose facts of great interest about these notions and their strange property."   
\end{quote}

\newpage 

\section{Modal Logic Syntax}\label{sec:informal-syntax}
We introduce a general syntax that allows us to treat every modal system.
\begin{definition}[\textbf{Alphabet}]

    $\mathcal{L}_{\Box} = \{ \bot; \ \to; \ \Box; \ (; \ )   \} \cup \Phi$ 
\begin{mydefinition}
    \begin{mydefinition2}
        The \textbf{Alphabet} of propositional modal language $\mathcal{L}_{\Box}$ consists of:
\begin{itemize}
    \item an \textbf{infinite denumerable} set $\Phi$ of \textbf{propositional atoms}: $p_0, p_1, p_2, \dots$;
    \item \textbf{logical constants}:
    \begin{itemize}
        \item contradiction symbol: $\bot$;
        \item a classical truth-functional operator for implication: $\rightarrow$;
        \item modal operator: $\Box$;
    \end{itemize}
    \item \textbf{auxiliary symbols}: $(, )$.
\end{itemize}
    \end{mydefinition2}
\end{mydefinition}        
\end{definition}

\begin{convention}[\textbf{Metavariables for formulas and propositional formulas}]
    We will use lowercase letters $p, q, r$ as metavariables for \textbf{propositional atoms} and capital letters $A, B, C$ as metavariables for \textbf{formulae}.
\end{convention}

\begin{definition}[\textbf{Modal Formulas}~\href{https://archive.softwareheritage.org/swh:1:cnt:4a011f69b1180019be5d03c6d5f1cec4550055e8;origin=https://github.com/HOLMS-lib/HOLMS;visit=swh:1:snp:2c0efd349323ed6f8067581cf1f6d95816e49841;anchor=swh:1:rev:1caf3be141c6f646f78695c0eb528ce3b753079a;path=/modal.ml;lines=25-34}{\ExternalLink}]
 $A \in \mathbf{Form}_{\Box} \Coloneqq \bot \ | \ p_i \ | \ A \to B \ | \ \Box A $ 
 \begin{mydefinition}
\begin{mydefinition2}
    The set $\mathbf{Form}_{\Box}$ of modal formulas on $\mathcal{L}_{\Box}$ is defined by induction as follows:
\begin{itemize}
    \item if $p_i \in \Phi$ then $p_i \in \mathbf{Form}_{\Box}$;
    \item $\bot \in \mathbf{Form}_{\Box}$;
    \item if $A \in \mathbf{Form}_{\Box}$ and $B \in \mathbf{Form}_{\Box}$ then $(A \to B) \in \mathbf{Form}_{\Box}$;
\end{itemize}
\end{mydefinition2}
\begin{mydefinition2}
\begin{itemize}
    \item if $A \in \mathbf{Form}_{\Box}$ then $\Box (A)  \in \mathbf{Form}_{\Box}$;
    \item Nothing else is in $\mathbf{Form}_{\Box}$.
\end{itemize}
\end{mydefinition2}
\end{mydefinition}
\end{definition}

Even if our language does not provide all the canonical logical connectives, it allows us to define them as metalinguistic abbreviations.

\begin{convention}[\textbf{Metalinguistic Abbreviations}]\label{cnv:metaling_abbr}\phantom{Allyouneedislove}
    \begin{multicols}{2}
        \begin{enumerate}
            \item $\top \coloneq (\bot \to \bot)$;
            \item $\neg A \coloneq (A \to \bot)$;
            \item $ A \lor B \coloneq (\neg A \to B)$;
            \item $A \land B \coloneq \neg (\neg A \lor \neg B)$;
            \item $A \longleftrightarrow B \coloneq ((A \to B) \land (B \to A)) $;
            \item $\Diamond A \coloneq \neg \ \Box (\neg A)$.
        \end{enumerate}
    \end{multicols}
\end{convention}

We have defined a ``minimal'' modal alphabet in order to ease the comprehension of proofs and definitions. Our alphabet is minimal in the sense that our language takes as primitive a \textit{functionally complete}\footnote{A set of Boolean operators is functionally complete iff it can express all possible truth tables by combining its members into a Boolean expression.} set of only two Boolean operators $\{ \bot; \to \}$ and the modal sign `$\Box$'. Such a choice has some advantages :
\begin{enumerate}
    \item A quick-to-check \textit{induction on the complexity of formulas}. 
    \smallskip
    
    If we want to prove a theorem that is true for all modal formulas, we simply have to show that the theorem holds for `$\bot$' and for all the propositional atoms (base cases), and that the two inductive steps for `$\to$' and `$\Box$' hold. \smallskip
    
    For example, this will allow us to quickly prove Lemma~\ref{lem:truth}, while its proof in HOLMS~\href{https://archive.softwareheritage.org/swh:1:cnt:d364de3158b6405c920115aa0c801af9af49e302;origin=https://github.com/HOLMS-lib/HOLMS;visit=swh:1:snp:2c0efd349323ed6f8067581cf1f6d95816e49841;anchor=swh:1:rev:1caf3be141c6f646f78695c0eb528ce3b753079a;path=/gen_completeness.ml;lines=189-350}{\ExternalLink}--presented in Chaper~\ref{chap:4} (Listing \ref{lst:GEN_TRUTH_LEMMA})--is much longer\footnote{For practical reasons, HOLMS takes every boolean operator as primitive.}.
    
    \item A faster way to \textit{define} properties of modal formulas \textit{by induction}. 
    \smallskip
    
    An inductive definition in a language that takes as primitive all modal operators has three base cases and six inductive steps, while our definitions only have two base cases and two inductive steps. 
    \smallskip
    
    For example, our Definition~\ref{def:truth} of the truth of a sentence is much shorter than HOLMS one (\verb|holds|~\href{https://archive.softwareheritage.org/swh:1:cnt:4a011f69b1180019be5d03c6d5f1cec4550055e8;origin=https://github.com/HOLMS-lib/HOLMS;visit=swh:1:snp:2c0efd349323ed6f8067581cf1f6d95816e49841;anchor=swh:1:rev:1caf3be141c6f646f78695c0eb528ce3b753079a;path=/modal.ml;lines=44-59}{\ExternalLink}). Similarly,  Definition~\ref{def:subformula} of subformula is very brief.
\end{enumerate}

Below, we introduce some conventions about the use of parentheses and associativity of propositional and modal connectives. 
    
\begin{convention}[\textbf{Parentheses conventions and associations}]\phantom{Allyouneedislove}
        \begin{itemize}
            \item The outermost parentheses are omitted;
            \item Parentheses are omitted according to the following order of precedence on connectives:
             $\neg$; $\Box, \Diamond$;  $\land, \lor$; $\to$; $\leftrightarrow$;
            \item Conjunction and disjunction associate to the left.
        \end{itemize}
\end{convention}

Moreover, we formally define the well-known notions of \textit{subformula}, \textit{occurrence} and \textit{subsentence}.  

\begin{definition}[\textbf{Subformula}~\href{https://archive.softwareheritage.org/swh:1:cnt:4a011f69b1180019be5d03c6d5f1cec4550055e8;origin=https://github.com/HOLMS-lib/HOLMS;visit=swh:1:snp:2c0efd349323ed6f8067581cf1f6d95816e49841;anchor=swh:1:rev:1caf3be141c6f646f78695c0eb528ce3b753079a;path=/modal.ml;lines=91-124}{\ExternalLink}]\label{def:subformula}\phantom{Allyounedisle}
    \begin{mydefinition}
        \begin{mydefinition2}
            The predicate of subformula is defined by induction on complexity:
            \begin{itemize}
                \item $A \  \mathbf{sub}  A$;
                \item if $B \to C \ \mathbf{sub}  A \ $ then  $ \ B \ \mathbf{sub} A$ and $C \ \mathbf{sub} A$;
                \item  if $\Box B \ \mathbf{sub} A$ then $B \ \mathbf{sub} A$.
            \end{itemize}
        \end{mydefinition2}
    \end{mydefinition}
\end{definition}

As anticipated, this definition by induction is brief and it guarantees the usual results on subformulas, e.g. \textit{if $\neg B \ \mathbf{sub} A$ then $B \ \mathbf{sub} A$.}

The following fact, discussed in classical textbook such as~\cite[\S 5]{boolos1995logic}, states the intuitive idea  that there is a finite number of subformulas of a given formula.

\begin{fact}[\textbf{Finite number of formulas}~\href{https://archive.softwareheritage.org/swh:1:cnt:4a011f69b1180019be5d03c6d5f1cec4550055e8;origin=https://github.com/HOLMS-lib/HOLMS;visit=swh:1:snp:2c0efd349323ed6f8067581cf1f6d95816e49841;anchor=swh:1:rev:1caf3be141c6f646f78695c0eb528ce3b753079a;path=/modal.ml;lines=138-148}{\ExternalLink},~\href{https://archive.softwareheritage.org/swh:1:cnt:4a011f69b1180019be5d03c6d5f1cec4550055e8;origin=https://github.com/HOLMS-lib/HOLMS;visit=swh:1:snp:2c0efd349323ed6f8067581cf1f6d95816e49841;anchor=swh:1:rev:1caf3be141c6f646f78695c0eb528ce3b753079a;path=/modal.ml;lines=150-157}{\ExternalLink}]\label{fct:finite_formulas}\phantom{Allyouneedislovelove}
    \begin{mydefinition}
        Let $A$ be a modal formula. 
        Then there are only \textbf{finitely} many \textbf{subformula} of $A$.
        \\ Therefore there are only finitely many \textbf{formulas made of subformula} of $A$, and therefore only finitely  many \textbf{sets of formulas made of subformulas} of $A$.
    \end{mydefinition}
\end{fact}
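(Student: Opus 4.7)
The plan is to prove the three successive claims in turn, with the heart of the argument being a straightforward structural induction on $A$ for the first assertion.

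For the first claim, that a formula $A$ has only finitely many subformulas, I would proceed by induction on the construction of $A \in \mathbf{Form}_{\Box}$ as given in the definition of modal formulas. For the two base cases, when $A = \bot$ or $A = p_i$, none of the closure clauses of the definition of $\mathbf{sub}$ apply (since $A$ is neither an implication nor a box), so the only subformula is $A$ itself, giving exactly one. For the inductive case $A = B \to C$, the subformulas are $A$ together with all subformulas of $B$ and all subformulas of $C$; by the induction hypothesis the latter two sets are finite, so their union with $\{A\}$ is finite. Likewise for $A = \Box B$, the subformulas are $\{A\}$ together with the subformulas of $B$, which is finite by induction. To make this rigorous one should verify beforehand, again by an easy induction on $A$, that the inductively defined predicate $\mathbf{sub}$ really is closed under exactly these constructions, i.e.\ that the set $\{B : B \ \mathbf{sub}\ A\}$ coincides with the set I described in each case; this is just unfolding the smallest-closure characterisation of $\mathbf{sub}$.

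For the second claim, that there are finitely many \emph{formulas} made of subformulas of $A$, I read this as a reformulation of the first: the formulas in question are precisely the elements of the (finite) set of subformulas of $A$, so the count is immediate from what has just been proved. For the third claim, once we know that the set $S$ of subformulas of $A$ has some finite cardinality $n$, the collection of subsets of $S$ has cardinality $2^n$, which is again finite; so there are only finitely many sets of formulas made of subformulas of $A$.

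The only mildly delicate point is the preliminary verification that the inductive clauses of the $\mathbf{sub}$ predicate do not generate anything beyond what the structural induction anticipates; since $\mathbf{sub}$ is defined as the least predicate closed under the listed rules, this amounts to observing that the set described case-by-case above is itself closed under those rules, so the minimality of $\mathbf{sub}$ forces equality. Everything else is routine, and the proof is arguably shorter than its statement.
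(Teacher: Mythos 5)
Your proof is correct: the structural induction on $A$ (with the preliminary observation that minimality of the inductively defined $\mathbf{sub}$ predicate pins down the set of subformulas case by case) and the power-set argument for the final clause are exactly the standard route. The paper itself states this as a Fact without proof, deferring to the HOL Light formalisation and to Boolos, so there is nothing to diverge from; your argument is the one the formalisation carries out.
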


\newpage

\begin{definition}[\textbf{Occurrence of propositional atoms}]\phantom{Allyouneed}
    \begin{mydefinition}
        \begin{mydefinition2}
            A propositional atom $p$ \textbf{occurs} in a modal formula $A$ $(p \in A)$ iff
            $p \ \mathbf{sub} A$.
        \end{mydefinition2}
    \end{mydefinition}
\end{definition}

\begin{definition}[\textbf{Subsentence}~\href{https://archive.softwareheritage.org/swh:1:cnt:e261ca08330d5ef66376347407c27fcece1e9f2e;origin=https://github.com/HOLMS-lib/HOLMS;visit=swh:1:snp:2c0efd349323ed6f8067581cf1f6d95816e49841;anchor=swh:1:rev:1caf3be141c6f646f78695c0eb528ce3b753079a;path=/consistent.ml;lines=141-158}{\ExternalLink}]\phantom{Allyouneedislovelove}
    \begin{mydefinition}
        \begin{mydefinition2}
            A modal formula $B$ is a subsentence of a modal formula A $(B \ \mathbf{subs} \ A)$ iff
            $B \ \mathbf{sub} A$ or $ \neg B \ \mathbf{sub} A$.
        \end{mydefinition2}
    \end{mydefinition}
\end{definition}

\section{Modal Logic Semantics}\label{sec:informal-semnatics}

Here we introduce \textit{relational semantics} à la Kripke, which, as anticipated, formalises the concept of truth of a modal formula and permits us to appreciate the various interpretations of the `box' in different modal systems. 

\begin{definition}[\textbf{Kripke's frame}~\href{https://archive.softwareheritage.org/swh:1:cnt:d364de3158b6405c920115aa0c801af9af49e302;origin=https://github.com/HOLMS-lib/HOLMS;visit=swh:1:snp:2c0efd349323ed6f8067581cf1f6d95816e49841;anchor=swh:1:rev:1caf3be141c6f646f78695c0eb528ce3b753079a;path=/gen_completeness.ml;lines=15-16}{\ExternalLink}]\phantom{Allyouneedislovelove}
    \begin{mydefinition}
        \begin{mydefinition2}
        A \textbf{frame} $ \mathcal{F}$ is an ordered pair $\mathcal{F} = \langle W,R \rangle$ where:
        \begin{itemize}
            \item $W$ is a nonempty set of ``possible worlds'', called \textbf{universe}
            \item $R$ is a binary relation over worlds $R \subseteq W \times W$, called \textbf{accessibility relation}. We write $w$ \textbf{sees} $v$ iff $wRv$.
        \end{itemize}
        \end{mydefinition2}
    \end{mydefinition}
\end{definition}

\begin{convention}[\textbf{Metavariables for possible worlds}]
    We will use $w, v, t, u...$ as metavariables on ”possible worlds”.
\end{convention}

\begin{convention}[\textbf{Properties and Finiteness of Frames}]
 A frame $\mathcal{F}$ is said to have some properties of binary relation iff its accessibility relation has those properties.
A frame $\mathcal{F}$ is said to be finite iff its universe is finite.
\end{convention}

Due to the importance of the \textit{accessibility relation} in characterising different classes of frames and--as we will see--modal systems, we report some important properties of binary relations in the Appedix~\ref{app:binary_relations}. 

\begin{definition}[\textbf{Valuation}]\phantom{Allyouneedisloveloveisallyouneed}
    \begin{mydefinition}
        \begin{mydefinition2}
        A \textbf{valuation} $ V$ on $W$ is a binary relation  $V \subseteq \Phi \times W$. \\ We write that $w$ \textbf{verifies} $p$ iff $pVw$.

        \end{mydefinition2}
    \end{mydefinition}
\end{definition}

\begin{remark}
    Sometimes valuations are presented as functions $V:  \Phi \to \mathcal{P}(W)$ that, intuitively, associates each propositional atom with all the worlds in which this atom is true. Here, similarly, we will provide a definition of forcing relation~(\ref{def:truth}) such that a world \textit{verifies} a modal atom iff it is true in that world.
\end{remark}

Starting from the concepts of \textit{relational frame} and \textit{valuation}, we can define a fundamental notion in a model-theoretic perspective:
\newpage
\begin{definition}[\textbf{Kripke's Model}]\phantom{Allyouneedislove}
    \begin{mydefinition}
        \begin{mydefinition2}
          A \textbf{model} is a triple $\langle W,R, V \rangle$, where: 
          \begin{itemize}
              \item $\langle W,R \rangle$ is a frame;
              \item $V$ is a valutation on $W$.
          \end{itemize}
          $\langle W, R, V \rangle$ is said to be \textbf{based} on the frame $\langle W, R \rangle$ and we write $\langle W, R \rangle_*$.
        \end{mydefinition2}
    \end{mydefinition}
\end{definition}

\begin{convention}[\textbf{Denoting Frames and Models}]
            Unless there are clear indications of what $\mathcal{F}$ denotes, $\mathcal{F} \coloneq \langle W, R \rangle$. 
            Unless there are clear indications of what $\mathcal{M}$ denotes, $\mathcal{M} \coloneq \langle W, R, V \rangle$.
\end{convention}


We introduce \textit{forcing relation}, which computes the truth of modal formulas in a given world of a model. This relation guarantees that \textit{a necessary sentence is true in a certain world} iff \textit{it is true in every possible world accessible from it}.

\begin{definition}[\textbf{Truth in a world of a model}~\href{https://archive.softwareheritage.org/swh:1:cnt:4a011f69b1180019be5d03c6d5f1cec4550055e8;origin=https://github.com/HOLMS-lib/HOLMS;visit=swh:1:snp:2c0efd349323ed6f8067581cf1f6d95816e49841;anchor=swh:1:rev:1caf3be141c6f646f78695c0eb528ce3b753079a;path=/modal.ml;lines=44-59}{\ExternalLink}]\phantom{Allyouneedislove}\label{def:truth}
    \begin{mydefinition}
        \begin{mydefinition2}
         The \textbf{truth of a modal formula $A$ in a world $w$ of a model $\mathcal{M}$} \\ $\mathcal{M}, w \vDash A$ is inductively defined on the structure of $A$ as follows:
         \begin{itemize}
             \item if $A = p_i \in \Phi$ then  $\mathcal{M}, w \vDash p_i \iff p_i V w$;
             \item if $A = \bot$ then  $\mathcal{M}, w \not \vDash \bot$ for any $w, \mathcal{M}$;
             \item if $A= B \to C$ then  $\mathcal{M}, w \vDash B \to C \iff \mathcal{M}, w \not \vDash B \ or \ \mathcal{M}, w \vDash C$;
             \item if $A=\Box B$ then $\mathcal{M}, w \vDash \Box B \iff \forall x \in W(wRx \implies \mathcal{M}, x \vDash B)$.
             \end{itemize}          
             In words we also say that ``$A$ is forced by $w$ in $\mathcal{M}$'' whenever $\mathcal{M}, w \vDash A$.
        \end{mydefinition2}
    \end{mydefinition}
\end{definition}  

From this definition of the forcing relation and Convention~\ref{cnv:metaling_abbr}, some lemmas defining forcing of modal formulas with defined connectives immediately follow: 

\begin{lemma}[\textbf{Forcing relation and defined connectives}]\label{lem:forc_conn}\phantom{A}
\begin{mydefinition}
 \begin{enumerate}
             \item $\mathcal{M},w \vDash \neg B \iff \mathcal{M},w \not \vDash B$;
             \item $\mathcal{M},w \vDash B \land C \iff \mathcal{M},w \vDash B \ and \ \mathcal{M},w \vDash C$;
             \item $\mathcal{M},w \vDash B \lor C \iff \mathcal{M},w \vDash B \ or \ \mathcal{M},w \vDash C$;
             \item $\mathcal{M},w \vDash B \leftrightarrow C \iff \mathcal{M},w \vDash B $ \textit{iff} $ \mathcal{M},w \vDash C$;
             \item $\mathcal{M},w \vDash \Diamond B \iff \exists x \in W(wRx \land \mathcal{M},x \vDash B)$
             \end{enumerate}      
\end{mydefinition}
\end{lemma}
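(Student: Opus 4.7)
The plan is to prove each of the five clauses by simply unfolding the corresponding metalinguistic abbreviation from Convention~\ref{cnv:metaling_abbr} and then applying the appropriate clause of Definition~\ref{def:truth}. Since the abbreviations are layered (conjunction uses disjunction and negation; biconditional uses conjunction and implication; diamond uses box and negation), I would proceed in an order that respects these dependencies: first $\neg$, then $\lor$, then $\land$, then $\leftrightarrow$, and finally $\Diamond$. Each clause is a one- or two-step calculation, so no induction on formulas is needed.

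For clause~(1), I unfold $\neg B$ as $B \to \bot$; by the implication clause of Definition~\ref{def:truth}, $\mathcal{M},w \vDash B \to \bot$ iff $\mathcal{M},w \not\vDash B$ or $\mathcal{M},w \vDash \bot$, and since $\mathcal{M},w \not\vDash \bot$ by the bottom clause, this collapses to $\mathcal{M},w \not\vDash B$. For clause~(3), I unfold $B \lor C$ as $\neg B \to C$, apply the implication clause, and then rewrite $\mathcal{M},w \not\vDash \neg B$ using clause~(1) just established, obtaining $\mathcal{M},w \vDash B$ or $\mathcal{M},w \vDash C$. Clause~(2) then follows by unfolding $B \land C$ as $\neg(\neg B \lor \neg C)$ and applying clauses~(1) and~(3) together with the classical equivalence $\neg(\neg X \lor \neg Y) \iff X \text{ and } Y$ at the metalinguistic level. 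Clause~(4) is immediate from the unfolding of $\leftrightarrow$ as a conjunction of two implications, using clause~(2) and the implication clause of Definition~\ref{def:truth} twice.

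Finally, for clause~(5), I unfold $\Diamond B$ as $\neg \Box \neg B$. Applying clause~(1), this amounts to $\mathcal{M},w \not\vDash \Box \neg B$. By the box clause of Definition~\ref{def:truth}, the negation of the universal statement gives $\exists x \in W(wRx \text{ and } \mathcal{M},x \not\vDash \neg B)$, and a final application of clause~(1) rewrites $\mathcal{M},x \not\vDash \neg B$ as $\mathcal{M},x \vDash B$, yielding the desired characterisation.

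I do not anticipate any genuine obstacle: the whole lemma is a routine unfolding exercise at the metalinguistic level, relying only on the definitions already in place and classical reasoning in the metalanguage (in particular the equivalence between $\neg(X \to Y)$ and $X \text{ and } \neg Y$, which is what makes the diamond and disjunction clauses come out correctly). The only point worth flagging is that the argument tacitly uses classical logic in the metalanguage; had the metatheory been constructive, clauses~(2), (3), and~(5) would not be derivable from Definition~\ref{def:truth} in this direct way.
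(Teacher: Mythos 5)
Your proof is correct and is exactly the routine unfolding the paper intends: the lemma is stated there as following "immediately" from Definition~\ref{def:truth} and Convention~\ref{cnv:metaling_abbr} with no explicit proof given, and your clause-by-clause derivation (in dependency order, with the closing remark on classical metareasoning) supplies precisely that justification. No gaps.
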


In addition to forcing, Kripke semantics develops four notions to investigate the validity of modal formulas, ordered in increasing level of generality: 
\newpage
\begin{definition}[\textbf{Validity in a model}] \phantom{Allyoune}
    \begin{mydefinition}
        \begin{mydefinition2}
        A modal formula $A$ is \textbf{valid in a model} $\mathcal{M}$ iff it is true in all worlds of $\mathcal{M}$.
        \[\mathcal{M} \vDash A \ \ \text{   iff   } \ \ \forall w \in W(\mathcal{M},w \vDash A )\]
        \end{mydefinition2}
    \end{mydefinition}
\end{definition}  

\begin{definition}[\textbf{Validity in a frame}~\href{https://archive.softwareheritage.org/swh:1:cnt:4a011f69b1180019be5d03c6d5f1cec4550055e8;origin=https://github.com/HOLMS-lib/HOLMS;visit=swh:1:snp:2c0efd349323ed6f8067581cf1f6d95816e49841;anchor=swh:1:rev:1caf3be141c6f646f78695c0eb528ce3b753079a;path=/modal.ml;lines=61-62}{\ExternalLink}]\phantom{Allyouneed}
    \begin{mydefinition}
        \begin{mydefinition2}
         A modal formula $A$ is \textbf{valid in a frame} $\mathcal{F}$ iff it is valid in all models based on $\mathcal{F}$.
        \[\mathcal{F} \vDash A \ \ \text{   iff   } \ \ \forall \mathcal{M} \ \mathcal{F}_*  (\mathcal{M} \vDash A )\]
        \end{mydefinition2}
    \end{mydefinition}
\end{definition}  

\begin{definition}[\textbf{Validity in a class of frames}~\href{https://archive.softwareheritage.org/swh:1:cnt:4a011f69b1180019be5d03c6d5f1cec4550055e8;origin=https://github.com/HOLMS-lib/HOLMS;visit=swh:1:snp:2c0efd349323ed6f8067581cf1f6d95816e49841;anchor=swh:1:rev:1caf3be141c6f646f78695c0eb528ce3b753079a;path=/modal.ml;lines=66-67}{\ExternalLink}]\phantom{Allyouneed}
    \begin{mydefinition}
        \begin{mydefinition2}
        A formula modal $A$ is \textbf{valid in a class of frames }$\mathfrak{S}$ iff it is valid in any frame in $\mathfrak{S}$.
        \[\mathfrak{S} \vDash A \ \ \text{   iff   } \ \ \forall \mathcal{F} \in \mathfrak{S}  ( \mathcal{F} \vDash A) )\]
        \end{mydefinition2}
    \end{mydefinition}
\end{definition}  

\begin{definition}[\textbf{Validity tout-court or Modal tautology}]\phantom{All}
    \begin{mydefinition}
        \begin{mydefinition2}
        A modal formula $A$ is \textbf{valid tout-court} (is a \textbf{modal tautology}) iff it is valid in the class of all frames.
        \[ \vDash A \ \ \text{   iff   } \ \ \forall \mathcal{F} ( \mathcal{F} \vDash A) )\]
        \end{mydefinition2}
    \end{mydefinition}
\end{definition}  

At this stage, all the tools required to treat the model-theoretic relation of \textit{logical consequence} are now at our disposal.

\begin{definition}[\textbf{Logical Consequence}]\phantom{Allyouneedislovelove}
    \begin{mydefinition}
        \begin{mydefinition2}
        Let $\mathfrak{S}$ be a class of frames. Let $A$ a modal formula and $\mathcal{H}$ a set of formulae. $A$ is a \textbf{logical consequence} of $\mathcal{H}$ w.r.t. $\mathfrak{S}$ iff for every frame $\mathcal{F}$ in $\mathfrak{S}$ and for every world $w$ of a model based on $\mathcal{F}$, if all the formulae in $\mathcal{H}$ are forced by $w$ then $A$ is forced by $w$. 
        \begin{center}
           $ \mathcal{H} \vDash_{\mathfrak{S}} A$ \\ \text{   iff   } \\ $\forall \mathcal{F} \in \mathfrak{S} (\forall \mathcal{M} \ \mathcal{F}_*( \forall w \in W((\forall B \in \mathcal{H} ( \mathcal{M},w \vDash B)) \implies \mathcal{M},w \vDash A)))$
        \end{center}      
        \end{mydefinition2}
    \end{mydefinition}
\end{definition}  

\section{Deducibility in Modal Logics}\label{sec:axiomatic-calulus}

Parallel to this semantic relation, we want to implement a proof-theoretic relation of \textit{derivability} to study implications between modal sentences from a purely syntactical point of view. There are many paths one can follow to formalise a \textit{modal deductive system}, like \textit{ sequent calculi}, \textit{natural deduction} or \textit{axiomatic calculi}.
Nevertheless--as two authors of HOLMS pointed out in a previous work on GL~\cite{DBLP:journals/jar/MaggesiB23}--``when dealing with the very notion of tautology--or \textit{theoremhood}, discarding the complexity or structural aspects of \textit{derivability} in a formal system--it is convenient to focus on axiomatic calculi''.
\medskip

\begin{remark}
    From this point on, we use several different fonts to distinguish between \textit{logics}, \textit{axiomatic calculi}, \textit{classes of frames}, \textit{axioms}, and \textit{sets of formulas}. In Table~\ref{tab:modal_logics} below we clarify in which fonts we type each of them. Together with the typefaces, we report three examples: the parameter S and the two letters K and GL. In particular, for these two last examples we list logics $\mathbb{K}$ and $\mathbb{GL}$, their calculi, their appropriate\footnote{This is a fundamental concept from the next section, defined in~\ref{def:appropriate_class},} classes of frames, their specific axiom schemata and the sets of their specific axioms. 
\end{remark}
\medskip \medskip

\begin{table}[hb]
    \centering
    \begin{tabular}{l|c|c|c|c}
        \toprule
        & \textbf{Font} & \textbf{S} & \textbf{K} & \textbf{GL} \\
        \midrule
        \textbf{Logic} & ``double-struck'' & \( \mathbb{S} \) & \( \mathbb{K} \) & \( \mathbb{GL} \) \\
        \textbf{Calculus} & ``sans-serif'' & \( \mathsf{S} \) & \( \mathsf{K} \) & \( \mathsf{GL} \) \\
        \textbf{Class of frames} & ``gothic'' & \( \mathfrak{S} \) & \( \mathfrak{F} \) & \( \mathfrak{ITF} \) \\
        \textbf{Axiom Schema} & ``math bold-face'' & $\mathbf{S}$ & $\mathbf{K}$ & $\mathbf{GL}$ \\
        \textbf{Set of Formulas} & ``calligraphic''  & \( \mathcal{S} \) & \( \mathcal{K} \) & \( \mathcal{GL} \) \\
        \bottomrule
    \end{tabular}
    \caption{Fonts employed to distinguish different logical notions}
    \label{tab:modal_logics}
\end{table}

\subsubsection{Deductive Systems}\label{sub:deductive system}
Before introducing axiomatic calculi for modal logics, we briefly present the fundamental concept of \textit{deductive system}. Abstractly, a deductive system consists of a set of \textit{initial formal expressions} and a set of \textit{inference rules}, whose purpose is to establish proofs of valid expressions with respect to (w.r.t) a given logic.

A \textit{proof} or \textit{derivation} in such a system is obtained by applying inference rules to the initial expressions and iteratively applying them to the conclusions derived in each step.
A \textit{theorem} or a \textit{lemma} in this context is a formal expression obtained after a finite sequence of proof steps following the procedure described above.

Both \textbf{axiomatic calculi} for modal logics--such as those presented below--and \textbf{labelled sequent calculi} for modal logics--introduced in Section~\ref{sec:labelled-sequent-calculi}--are examples of deductive systems for modal logics. In the first case, the initial set of formal expressions is composed of \textit{axiom schemata}, whereas in the second case, of \textit{initial sequents}.
Axiomatic and sequent calculi also differ in another fundamental aspect: the former are \textit{synthetic}, while the latter \textit{analytic}.

\begin{itemize}
    \item \textit{Synthetic Calculi} 
    \begin{itemize}
     \item Synthetic proofs proceed \textit{top-down}, i.e., from axioms to the formula to be proved;
     \item Conclusions of synthetic rules may lose information from their premises;
     \item Proof search is not guided by the structure of the formula to be proved: proving a theorem requires guessing both the correct instances of axiom schemata and the proper order of inference rule applications.
    \end{itemize}
          
      \item \textit{Analytic Calculi} 
      \begin{itemize}
      \item Analytic proofs proceed \textit{bottom-up}; 
      \item An analytic rule does not lose information: e.g. all formulas in its premises are subformulas of those in the conclusion;
    \item Proof search is guided by the structure of the formula to be proved.  
    \end{itemize}

\end{itemize}

\subsection{Axiomatic Calculi for Modal 
Logics}

Given the alphabet $\mathcal{L}_{\Box}$ and the grammar for modal formulas $\mathbf{Form}_{\Box}$, an \textbf{axiomatic calculus for modal systems} is characterized by:
\begin{itemize}
    \item  A set of modal sentences (\textit{\textbf{Axioms Set}});
    \item A  set of relations under which the axioms are \textit{closed} (\textit{\textbf{Inference Rules}}).
    \begin{mydefinition}
    \begin{mydefinition2}
    A set of modal sentences $\mathcal{S}$ is \textit{\textbf{closed under a rule of inference}} iff \\ it contains all formulas deducible by that rule from the members of $\mathcal{S}$.
    \end{mydefinition2}
\end{mydefinition}
\end{itemize}

Therefore, an axiomatic calculus is a set of \textit{axioms} and \textit{inference rules} that allows to formally define a modal system (logic) and to determine which formulas can be \textit{formally proved} in this deductive system.

\begin{definition}[\textbf{Formal proof}]\phantom{Allyouneedislove}
\begin{mydefinition}
    \begin{mydefinition2}
        A \textit{\textbf{proof}} of a modal sentence $A$ from a set of modal sentences $\mathcal{H}$ in an axiomatic calculus $\mathsf{S}$  is a finite sequence of modal formulas, such that:
        \begin{itemize}
            \item Terminates with $A$;
            \item Each formula in the sequence is:
            \begin{enumerate}
                \item  An \textit{axiom} of $\ \mathsf{S}$; 
                \item An \textit{hypothesis} in $\mathcal{H}$;
                \item Is deducible from earlier formulas \textit{by applying a rule} of $\ \mathsf{S}$.
            \end{enumerate}
        \end{itemize}
    \end{mydefinition2}
\end{mydefinition}
\end{definition}

\begin{definition}[\textbf{Theorem}]\phantom{Allyouneedislove}   
\begin{mydefinition}
    \begin{mydefinition2}
    A modal sentence $A$ is a \textit{\textbf{theorem}} in an axiomatic calculus $\mathsf{S}$ from a set of hypotheses $\mathcal{H}$ $ (\mathcal{H} \vdash_{\mathsf{S}} A)$ iff there is a proof of $A$ from $\mathcal{H}$ in $\mathsf{S}$ .
    \end{mydefinition2}
\end{mydefinition}
\end{definition}

\subsection{A Parametric Calculus for Normal Modal Systems}\label{sub:parametric-calculus}

More specifically, in this introduction to the current version of HOLMS, we focus on \textit{deductive normal systems of propositional modal logic}. Restricting our analysis on these logics has the advantage of not developing many separate axiomatic calculi for all the systems under consideration. Instead, we define a single \textit{parametric}\footnote{With the word \textit{parametric} we refer to results which remain fully independent from the concrete
instantiations of the parameters. This notion will be intrduced in details in Chapter~\ref{chap:3} and is used here to describe our results\textit{as general as possible}.} calculus based on a minimal axiomatic system. 
To understand why such an approach works, we need to clarify what \textit{deductive normal systems of propositional modal logic}--normal modal logics-- are.

\begin{definition}[\textbf{Normal Modal Logic}]\phantom{Allyouneedislove}
    \begin{mydefinition}
        \begin{mydefinition2}
        Let $\mathbb{S}$ be a modal logic. Let $\mathsf{S}$ be an axiomatic calculus that formalises $\mathbb{S}$. Let $\mathcal{S}$ the set of axiom of $\mathsf{S}$.
        $\mathbb{S}$ is a $\textbf{normal modal logic}$ iff:
        \end{mydefinition2}
            \end{mydefinition}
            
            \begin{mydefinition}
            \begin{mydefinition2}
            \begin{enumerate}
                \item $\mathbf{Axioms}:  \ \{\mathbf{taut}; \mathbf{K} \} \subseteq \mathcal{S}$, i.e. $\mathcal{S}$ contains:
                \begin{itemize}
                    \item $\mathbf{taut}$ \ \  complete set of  schemata for classical propositional logic;
                    \item $\mathbf{K}$ \ \ \ \ \  schema of distribution  $\mathbf{K} \coloneq\Box (A \to B) \to (\Box A \to \Box B)$.
                \end{itemize}      
                \item $\mathbf{Inference \ Rules}: \ \ \mathcal{S}$ is closed under:
                \begin{itemize}
                    \item \textbf{Modus \ Ponens} \ \ \ \  \ \ \ $\displaystyle \frac{A \to B \quad A}{B}  \texttt{MP}$
                    \item \textbf{Necessitation \ Rule}\ \ \ $\displaystyle \frac{A}{\Box A} \texttt{RN}$
                    \item \textbf{Substitution \ Rule}\ \ \ $\displaystyle \frac{F}{F_p(A)} \texttt{SUB}$
                \end{itemize}
            \end{enumerate}
        \end{mydefinition2}
    \end{mydefinition}
\end{definition}

\begin{definition}[\textbf{Substitution}]\label{def:substitution}\phantom{Allyouneedislove}
    \begin{mydefinition}
        \begin{mydefinition2}
        Let $F$ be a modal formula. $F_p(A)$, the \textbf{substitution of every occurrence of $p$ with $A$}, is inductively defined on the complexity of the modal formula $A$ as follows:
        \begin{itemize}
            \item if $F= p \in \Phi$ then $F_p(A)= A$;
            \item if $F= q \in \Phi \ and \ q \not =p$ then $F_p(A)= q$;
            \item if $F= \bot$ then $F_p(A)= \bot$;
            \item $(G \to H)_p(A)= (G_p(A) \to H_p(A))$;
            \item $(\Box G )_p(A)= \Box (G_p(A) \ )$;
        \end{itemize}
    \end{mydefinition2}
    \end{mydefinition}
\end{definition}

Per the foregoing definition of a normal system, it is natural to identify a minimal calulus $\mathsf{K}$, which is obtained from any classical propositional calculus by adding the distribution schema $\mathbf{K}$ and the necessitation rule \texttt{RN}. It serves as the minimal logical engine for any axiomatic calculus $\mathsf{S}$ for normal systems. 
Below we report the standard axiomatisation of $\mathsf{K}$~\href{https://archive.softwareheritage.org/swh:1:cnt:d42782345008434be8b43de7feb014732f8821ea;origin=https://github.com/HOLMS-lib/HOLMS;visit=swh:1:snp:2c0efd349323ed6f8067581cf1f6d95816e49841;anchor=swh:1:rev:1caf3be141c6f646f78695c0eb528ce3b753079a;path=/calculus.ml;lines=57-68}{\ExternalLink}:
\[\mathsf{K} = \langle \textnormal{AX}_{\mathsf{K}}= \mathbf{taut} \cup \{ \mathbf{K} \} ; \ \textnormal{RULES}_{\mathsf{K}}= \{\texttt{MP;SUB;RN}\} \rangle.\]

On the basis of this argument, we conceptualise derivability in an axiomatic calculus $\mathsf{S}$ ($\mathcal{H} \vdash_{\mathsf{S}} A$) as derivability in the \textit{minimal axiomatic system} $\mathsf{K}$ that is \textit{modularly extended} by additional axiom schemas in $\mathcal{S}$ ($\mathcal{S.H} \vdash A$). 
Note that such a formal predicate conceptualises an abstract notion of deducibility close to the one discussed in~\cite{fitting2013proof}.

\newpage

\begin{definition}[\textbf{Deducibility from hypotheses in normal systems}~\href{https://archive.softwareheritage.org/swh:1:cnt:d42782345008434be8b43de7feb014732f8821ea;origin=https://github.com/HOLMS-lib/HOLMS;visit=swh:1:snp:2c0efd349323ed6f8067581cf1f6d95816e49841;anchor=swh:1:rev:1caf3be141c6f646f78695c0eb528ce3b753079a;path=/calculus.ml;lines=74-80}{\ExternalLink}]\label{def:deducibility}\phantom{}
    \begin{mydefinition}
        \begin{mydefinition2}
           The ternary predicate $\mathcal{S}.\mathcal{H} \vdash A$, denoting the \textbf{deducibility of a formula} $A$ from a set of hypotheses $\mathcal{H}$ in an axiomatic extension of the logic $\mathbb{K}$ via schemas in the set $\mathcal{S}$, is inductively defined by the following conditions:
\begin{itemize}
    \item For every instance $A$ of axiom schemas for the calculus $\mathsf{K}$, $\mathcal{S}.\mathcal{H} \vdash A$;
    \item For every instance $A$ of schemas in $\mathcal{S}$, $\mathcal{S}.\mathcal{H} \vdash A$;
    \item For every $A \in \mathcal{H}$, $\mathcal{S}.\mathcal{H} \vdash A$;
    \item If $\mathcal{S}.\mathcal{H} \vdash B \rightarrow A$ and $\mathcal{S}.\mathcal{H} \vdash B$, then $\mathcal{S}.\mathcal{H} \vdash A$;
    \item If $\mathcal{S}.\varnothing \vdash A$, then $\mathcal{S}.\mathcal{H} \vdash \Box A$ for any set of formulas $\mathcal{H}$.
\end{itemize}
        \end{mydefinition2}
    \end{mydefinition}
\end{definition}

\begin{convention}
    From this point forward, any mention of a modal system will refer specifically to a deductive normal system of propositional modal logic.
\end{convention}

\subsubsection{Deduction Theorem}\label{sub:deduction-theorem}

Our notion of deducibility $\mathcal{S.H} \vdash A$ slightly differs from the derivability relation presented in classical textbooks $\mathcal{S} \vdash A$. Nevertheless, the two concepts might be assimilated by proving a deduction theorem. 
The proof below follows the same strategy of HOLMS one, and is loosely inspired by the analogous theorem in~\cite[\S 4]{fitting2013proof}.

\begin{theorem}[\textbf{Deduction theorem}~\href{https://archive.softwareheritage.org/swh:1:cnt:d42782345008434be8b43de7feb014732f8821ea;origin=https://github.com/HOLMS-lib/HOLMS;visit=swh:1:snp:2c0efd349323ed6f8067581cf1f6d95816e49841;anchor=swh:1:rev:1caf3be141c6f646f78695c0eb528ce3b753079a;path=/calculus.ml;lines=1015-1028}{\ExternalLink}]\label{lem:deduction-theorem}\phantom{Allyouneedsloveloveisallyouneed}
    \begin{mydefinition}
        For any modal formulas $A,B$ and any sets of formulas $\mathcal{S}, \mathcal{H}$, the following equivalence holds: 
$\mathcal{S}.\mathcal{H} \cup \{ B \} \vdash A \ $ \textnormal{iff}  $ \ \mathcal{S}.\mathcal{H} \vdash B \rightarrow A$.  
    \end{mydefinition}
\end{theorem}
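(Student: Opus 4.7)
The plan is to prove the two directions of the equivalence separately. The right-to-left direction is routine: from $\mathcal{S}.\mathcal{H} \vdash B \rightarrow A$, I invoke monotonicity of deducibility in the hypotheses (a trivial induction on the derivation showing that enlarging $\mathcal{H}$ preserves derivability) to obtain $\mathcal{S}.\mathcal{H} \cup \{B\} \vdash B \rightarrow A$. Since $B$ is itself a hypothesis in $\mathcal{H} \cup \{B\}$, Definition~\ref{def:deducibility} yields $\mathcal{S}.\mathcal{H} \cup \{B\} \vdash B$, and one application of \texttt{MP} closes the case.

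The substantive direction $\mathcal{S}.\mathcal{H} \cup \{B\} \vdash A \implies \mathcal{S}.\mathcal{H} \vdash B \rightarrow A$ I will prove by induction on the derivation of $A$ as defined in Definition~\ref{def:deducibility}. For the three base cases -- $A$ is an instance of an axiom schema of $\mathsf{K}$, $A$ is an instance of a schema in $\mathcal{S}$, or $A \in \mathcal{H}$ -- the clauses of Definition~\ref{def:deducibility} give $\mathcal{S}.\mathcal{H} \vdash A$ directly, and then the propositional tautology $A \rightarrow (B \rightarrow A)$ (an instance of $\mathbf{taut}$) together with \texttt{MP} yields $\mathcal{S}.\mathcal{H} \vdash B \rightarrow A$. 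The remaining base case $A = B$ is handled separately by recognising $B \rightarrow B$ as a tautological instance.

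The \texttt{MP} inductive step is the classical manoeuvre: if $A$ comes from premises $C \rightarrow A$ and $C$, the induction hypothesis gives $\mathcal{S}.\mathcal{H} \vdash B \rightarrow (C \rightarrow A)$ and $\mathcal{S}.\mathcal{H} \vdash B \rightarrow C$, which combine with the propositional tautology $(B \rightarrow (C \rightarrow A)) \rightarrow ((B \rightarrow C) \rightarrow (B \rightarrow A))$ via two applications of \texttt{MP} to deliver $\mathcal{S}.\mathcal{H} \vdash B \rightarrow A$.

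The step that is usually the main obstacle in modal deduction theorems -- the necessitation rule, which in naive formulations forces a side-condition on $B$ not being ``used'' under a $\Box$ -- turns out to be painless here, precisely because of the design of Definition~\ref{def:deducibility}: the \texttt{RN} clause fires \emph{only} when the premise is derivable from the \emph{empty} hypothesis set. So in the case $A = \Box C$ obtained by necessitation, we have $\mathcal{S}.\varnothing \vdash C$, hence $\mathcal{S}.\varnothing \vdash \Box C$, hence $\mathcal{S}.\mathcal{H} \vdash A$ by monotonicity; one more application of $A \rightarrow (B \rightarrow A)$ and \texttt{MP} finishes the induction. Thus the usual side-condition is absorbed into the very definition of $\vdash$, and the deduction theorem holds unconditionally. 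The only real subtlety in carrying this out is ensuring that the monotonicity lemma is available -- I would state and prove it as a preliminary lemma by a straightforward induction on derivations.
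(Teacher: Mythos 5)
Your proof is correct and follows essentially the same route as the paper: the easy direction via a monotonicity lemma plus \texttt{MP}, and the substantive direction by induction on the derivation, with the same case analysis (axioms of $\mathsf{K}$, schemata in $\mathcal{S}$, hypotheses split into $A=B$ and $A\neq B$, the \texttt{MP} step via the two standard propositional tautologies, and the observation that \texttt{RN} is unproblematic because it only fires from the empty hypothesis set). The only cosmetic difference is that the paper packages the forward induction as a ``deletion'' lemma ($\mathcal{S}.\mathcal{H}\vdash A$ and $B\in\mathcal{H}$ imply $\mathcal{S}.\mathcal{H}\setminus B\vdash B\to A$) applied to $\mathcal{H}\cup\{B\}$, whereas you run the induction directly on derivations from $\mathcal{H}\cup\{B\}$ — the content is identical.
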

\begin{proof}
   We shall prove the two sides of bi-implication:
   
\begin{itemize}
    \item[$\implies$] $[\textnormal{if } \ \mathcal{S}.\mathcal{H} \cup B \vdash A \ $ \textnormal{then}  $ \ \mathcal{S}.\mathcal{H} \vdash B \rightarrow A]$. \\ Taken a certain $\mathcal{S}$ and a generic $B$, we shall distinguish between two cases:
    \begin{enumerate}
        \item $B \in \mathcal{H}$ \\
        If $B \in \mathcal{H}$ then $\mathcal{H} = \mathcal{H} \cup B$ and so we have  $\mathcal{S}.\mathcal{H} \vdash A$. 
        The thesis follows by \texttt{MP} on $\mathcal{S.H} \vdash A$ and the instance $\mathcal{S.H} \vdash B \to A$ of Claim~\ref{clm:MLK_axiom_addimp}.
        \begin{claim}\label{clm:MLK_axiom_addimp}
        $\forall  \mathcal{S,H},C,D$ if $\mathcal{S}. \mathcal{H} \vdash C $ then $  \mathcal{S}. \mathcal{H} \vdash D \to C$    
        \begin{claimproof}
            Because $C \to (D \to C)$ is a propositional tautology and then an axiom of $\mathsf{K}$, then by the inductive definition of derivability follows $\mathcal{S.H} \vdash C \to (D \to C)$.
            Then the thesis follows by applying \texttt{MP} to this result and the hypothesis.  \hfill $\lhd$
        \end{claimproof}
        \end{claim}

        \item $B \not \in \mathcal{H}$ 
        \smallskip
        
        \begin{remark}
            $\mathcal{H} = (\mathcal{H} \cup B) \setminus B$ 
        \end{remark}
        \begin{claim}\label{clm:MODPROVES_DEDUCTION_DELETE}
        For any $\mathcal{S}, \mathcal{H}, A, B$ holds that \\
        if $\mathcal{S}. \mathcal{H} \vdash A$ and $B \in \mathcal{H} $ then $\mathcal{S}. \mathcal{H}\setminus B \ \vdash B \to A$.
        \begin{claimproof}
             We want to prove this claim for each A modal formula such that $\mathcal{S.H} \vdash A$. Consequently, we develop a proof by induction on the definition of derivability.
            \begin{enumerate}
                 \item If $A$ is an axiom of $\mathsf{K}$  \\ then, by definition of derivability, $A$ is provable in every system from each set of hypotheses and, in particular, $\mathcal{S.H}\setminus B \vdash A$.  
                Then, the thesis follows by \texttt{MP} and Claim~\ref{clm:MLK_axiom_addimp};
                \item  If $A$ is an axiom schema in $\mathcal{S}$ \\
                then, analogously, for every set of hypothesis $A$ is provable and $\mathcal{S.H}\setminus B \vdash A$.
                 Then, the thesis follows by \texttt{MP} and Claim~\ref{clm:MLK_axiom_addimp};
                  \item If $A$ is an hypothesis in $\mathcal{H}$ we shall distinguish two cases:
                  \begin{enumerate}
                      \item $A=B$ \\
                    Then $A \to A$ is a propositional tautology and then an axiom of $\mathsf{K}$. Then $\mathcal{S.H}\setminus B \vdash A \to A$.
                    \item $A \not = B$ \\
                    Then $A \in \mathcal{H}\setminus B$. \\
                    But then, by definition of derivability, $\mathcal{S.H}\setminus B \vdash A$. \\
                    Then, the thesis follows by \texttt{MP} and Claim~\ref{clm:MLK_axiom_addimp};
                  \end{enumerate}

                   \item If $\mathcal{S.H \vdash A}$ follows from $\mathcal{S.H \vdash C \to A}$ and $\mathcal{S.H \vdash C}$ for a certain C \\
                Then the induction hypothesis for $C \to A$ and $C$ hold: 
                \begin{itemize}
                    \item $\mathcal{S.H}\setminus B \vdash B \to (C \to A)$;
                    \item $\mathcal{S.H} \setminus B \vdash \ B \to C$. 
                \end{itemize} 
                By \texttt{MP} and axioms of $\mathsf{K}$, it follows: \\ $\mathcal{S.H}\setminus B \vdash( B \to (C \to A)) \to (B \to C)$. \\
                Furthermore observe that \\ $(D \to E \to F) \to (D \to E) \to (D \to F) \ $ is a propositional tautology and then an axiom of $\mathsf{K}$ such that $\mathcal{S.H}\setminus B \vdash (B \to C \to A) \to (B \to C) \to (B \to A)$.
                Then $\mathcal{S.H}\setminus B \vdash B \to A$ follows by \texttt{MP}.
                \item[(e)] If $A=\Box C$ and $\mathcal{S.H}\vdash A$ follows from $\mathcal{S.\varnothing \vdash}C$ \\
                Then the induction hypothesis for $C$ holds: \\ $\mathcal{S.\varnothing}\setminus B \vdash B \to C$ and $B \in \varnothing$.\\
                But for no $B$, $B \in \varnothing$ and then simply $\mathcal{S.\varnothing \vdash }C$. \\
                Then, by \texttt{RN}, $\mathcal{S}. \varnothing \vdash \Box C$ and then, by definition of derivability, $\mathcal{S. H} \vdash \Box C.$ 
             \end{enumerate} \hfill $\lhd$
        \end{claimproof}
        \end{claim}
    Then given $\mathcal{H}'=\mathcal{H} \cup B$, we have that:
    $\mathcal{S.H} \cup B \vdash A \ $ and $\ B \in (H \cup B)$. 
    Then it follows from Claim~\ref{clm:MODPROVES_DEDUCTION_DELETE}: \\ $\mathcal{S.(H} \cup B) \setminus B \ \vdash \ B \to A$ that is equal to $\mathcal{S. H} \vdash B \to A$.
    \end{enumerate}  
    
    \item[ $\impliedby $] $[$ if $ \mathcal{S}.\mathcal{H} \vdash B \rightarrow A \ $ \textnormal{then} $\ \mathcal{S}.\mathcal{H} \cup B \vdash A]$. \\ We fix some generic $\mathcal{S,H}, B$.
    
    \begin{remark}\label{rmk:ded_4}
    $\mathcal{S.H} \cup B \vdash B$ holds for the definition of derivability.
    \end{remark}

    \begin{claim}\label{clm:MODPROVES_MONO2}
    For any $\mathcal{S}, \mathcal{H, H}', A$ holds that \\
    if $\mathcal{S}. \mathcal{H} \vdash A$ and $\mathcal{H} \subseteq \mathcal{H}' $ then $\mathcal{S. H}' \vdash  A$.   
    \begin{claimproof}
    We want to prove the \textbf{monotonicity} of derivability relation for each A modal formula such that $\mathcal{S.H} \vdash A$. Consequently, we develop a proof by induction on the definition of derivability.
    \begin{enumerate}
        \item If $A$ is an axiom of $\mathsf{K}$ \\ then, by definition of derivability, $A$ is provable in every system from each set of hypotheses and, in particular, $\mathcal{S.H}'\vdash A$;
        
         \item If $A$ is an axiom schema in $\mathcal{S}$ \\
        then, analogously, for every set of hypothesis $A$ is provable and $\mathcal{S.H}' \vdash A$;
        
         \item If $A$ is an hypothesis in $\mathcal{H}$ \\then, because $\mathcal{H} \subseteq \mathcal{H}'$, $A \in \mathcal{H}'$ and then, by derivability relation,  $\mathcal{S.H}' \vdash A$;

         \item If $\mathcal{S.H \vdash}A$ follows from $\mathcal{S.H \vdash C \to}A$ and $\mathcal{S.H \vdash C}$ for a certain C\\
                then the induction hypothesis for $C \to A$ and $C$ hold: 
                \begin{itemize}
                    \item $\mathcal{S.H'}\vdash C \to A$;
                    \item $\mathcal{S.H'} \vdash C$. 
                \end{itemize} 
        Then, by \texttt{MP}, $\mathcal{S.H'}\vdash A;$

        \item If $A=\Box C$ and $\mathcal{S.H \vdash} \Box C$ follows from $\mathcal{S.\varnothing \vdash}C$ \\
                then by necessitation rule \texttt{RN} $\mathcal{S.H}' \vdash \Box C$.
    \end{enumerate}\hfill $\lhd$
    \end{claimproof}
\end{claim}

Given the hypothesis $\mathcal{S.H} \vdash B \to A$ and $\mathcal{H} \subseteq \mathcal{H} \cup B$, we derive by monotonicity (Claim~\ref{clm:MODPROVES_MONO2}) that $\mathcal{S.H}\cup B \vdash B \to A$.

\end{itemize}
Then by applying \texttt{MP} on this result and on Remark~\ref{rmk:ded_4},  we obtain the thesis $\mathcal{S.H}\cup B \vdash A$. \\ \phantom{A} 
\end{proof}

\smallskip

\begin{remark}
  Thanks to the \textit{deduction theorem}, we can ignore the set of hypotheses by working on $\mathcal{S}. \varnothing \vdash \bigwedge \mathcal{H} \to A$ instead than on $\mathcal{S}. \mathcal{H} \vdash A$. To lighten the script, we will use this trick and we will write $\mathcal{S} \vdash A$ instead of $\mathcal{S}. \varnothing \vdash A$.  
\end{remark}

\begin{convention}
    $\mathcal{S} \vdash A \coloneq \mathcal{S}. \varnothing \vdash A$
\end{convention}

\subsubsection{Soundness of K w.r.t. Relational Semantics}

At this stage, we both have at our disposal the formal notion of \textit{logical consequence} and the one of \textit{syntactical derivability in normal systems}. 

Given $\mathbb{K}$ the minimal normal system with no other axiom than $\mathbf{taut}$ and $\mathbf{K}$, we can observe an important relation between \textit{derivability in} $\mathbb{K}$  and \textit{logical consequence}.

\begin{remark}
    Given our definition of deducibility in a normal system, we will write $\varnothing \vdash A$ to describe deducibility in the system $\mathbb{K}$. In fact, any axiomatic calculus for $\mathbb{K}$ has no additional axiom to $\mathsf{K}$. 
\end{remark}
\newpage
\begin{theorem}[\textbf{Soundness of $\mathbf{K}$ with respect to all frames}]\phantom{}\label{thm:K_soundness}
  \begin{mydefinition}
    Given $A$ a modal formula, \\
    $\varnothing \vdash A \implies \vDash A$ 
  \end{mydefinition}
\end{theorem}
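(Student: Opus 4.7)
The plan is to proceed by induction on the inductive definition of deducibility (Definition~\ref{def:deducibility}), specialised to the case $\mathcal{S}=\varnothing$ and $\mathcal{H}=\varnothing$. Since $\mathcal{S}$ is empty, the second clause (instance of a schema in $\mathcal{S}$) is vacuous; since $\mathcal{H}$ is empty, the third clause (hypothesis) is also vacuous. This leaves three non-trivial cases to handle: axiom instances of $\mathsf{K}$, modus ponens, and necessitation.

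First I would establish that every instance of an axiom schema of $\mathsf{K}$ is a modal tautology. This splits into two sub-claims. The $\mathbf{taut}$ instances: any classical propositional tautology, once interpreted via the forcing relation (Definition~\ref{def:truth}) in an arbitrary world $w$ of an arbitrary model $\mathcal{M}$, reduces to a tautological boolean combination of the truth values of its (modal) subformulas at $w$, hence is forced at $w$; by the definitions of validity in a model, in a frame, and in a class of frames, such a formula is valid tout-court. The $\mathbf{K}$ instances $\Box(A\to B)\to(\Box A\to\Box B)$: fix an arbitrary frame $\mathcal{F}=\langle W,R\rangle$, a model $\mathcal{M}$ based on $\mathcal{F}$, and a world $w\in W$; assume $\mathcal{M},w\vDash\Box(A\to B)$ and $\mathcal{M},w\vDash\Box A$; then for every $x$ with $wRx$, by Definition~\ref{def:truth} we get $\mathcal{M},x\vDash A\to B$ and $\mathcal{M},x\vDash A$, whence $\mathcal{M},x\vDash B$; thus $\mathcal{M},w\vDash\Box B$, as required.

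Next I would show that the two inference-rule cases preserve validity tout-court. For modus ponens, if $\vDash B\to A$ and $\vDash B$, then for every frame $\mathcal{F}$, every model $\mathcal{M}$ based on $\mathcal{F}$, and every world $w$, we have $\mathcal{M},w\vDash B\to A$ and $\mathcal{M},w\vDash B$, so by the clause for implication in Definition~\ref{def:truth} also $\mathcal{M},w\vDash A$, giving $\vDash A$. For necessitation (the fifth clause of Definition~\ref{def:deducibility}), the inductive hypothesis gives $\vDash A$; then for any $\mathcal{F}$, $\mathcal{M}$, $w$ and any $x$ with $wRx$, we have $\mathcal{M},x\vDash A$, hence $\mathcal{M},w\vDash\Box A$, yielding $\vDash\Box A$.

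I do not anticipate any serious obstacle: each case is a direct unfolding of Definition~\ref{def:truth} together with the corresponding validity definitions. The only mild subtlety is making sure the induction is on the derivation (i.e., the inductive definition of $\vdash$) rather than on formula complexity, so that the necessitation clause is handled by the inductive hypothesis applied to the smaller derivation of $A$ from $\varnothing$. Since substitution is absorbed into the schematic reading of $\mathbf{taut}$ and $\mathbf{K}$, it requires no separate treatment.
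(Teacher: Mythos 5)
Your proposal is correct and follows essentially the same route as the paper: induction on the inductive definition of deducibility, checking that the $\mathbf{taut}$ and $\mathbf{K}$ instances are modal tautologies and that \texttt{MP} and \texttt{RN} preserve validity (with the correct care that the inductive hypothesis for the necessitation case is validity tout-court, not truth at a single world). The only divergence is that the paper additionally verifies that substitution preserves validity in a frame via an explicit valuation-transfer lemma, because its presentation of the calculus $\mathsf{K}$ lists \texttt{SUB} among the rules; your observation that Definition~\ref{def:deducibility} has no substitution clause, so no separate case is needed, is also legitimate.
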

  \begin{proof}   
  If something is a theorem of $\mathbb{K}$ ($\varnothing \vdash A$), it is either an axiom of $\mathsf{K}$ or it deduced by an inference rule of $\mathsf{K}$:
  \begin{enumerate}
      \item All the axioms of $\mathsf{K}$ are $\mathbf{modal\  tautologies}$:
      \begin{itemize}
          \item $\mathbf{taut}$: complete set of schemata for propositional tautologies \\
          Because the definition of $\mathbf{truth \ in \ a \ world \ of \ a \ model}$ for propositional (without the `$\Box$') formulas is the same as in classical propositional semantics, $\mathbf{taut}$ are also $\mathbf{modal\  tautologies}$.
          
          \item $\mathbf{K}$: schema of all distribution axioms \\
          RAA: $\exists A, B \subseteq \mathbf{Form}_{\Box} \  \not \vDash \Box ( A \to B) \to (\Box A \to \Box B) $ \\ 
          that is to say, following classical logic Kripke's semantics: \\
          $\exists \mathcal{M} \exists w \in W ( \mathcal{M},w \vDash \Box ( A \to B) \ and \ \mathcal{M},w \vDash \Box A \ and $ \underline{ $ \mathcal{M},w \not \vDash \Box B$}$)$ \\
          But, then it follows: 
          \begin{itemize}
              \item $\mathcal{M},w \vDash \Box ( A \to B) $ implies $\forall x \in W(wRx \implies \mathcal{M}, x \vDash A \to B )$
              \item $\mathcal{M},w \vDash \Box A$ implies  $\forall x \in W(wRx \implies \mathcal{M}, x \vDash A )$
          \end{itemize}
          Then, from the inductive step for `$\to$' of forcing relation, follows: \\ $\forall x \in W(wRx \implies \mathcal{M}, x \vdash B)$ then \underline{${\mathcal{M}, w \vdash \Box B}$}. A contradiction.  $\lightning$ \\
          Then all distribution axioms are $\mathbf{modal\  tautologies}$.
      \end{itemize}
      
      \item All the inference rules of $\mathsf{K}$ preserve $\mathbf{modal\  tautologies}$:
      \begin{itemize}
          \item \texttt{MP} preserves $\mathbf{truth \ in \ a \ world \ of \ a \ model}$ and then $\mathbf{modal \ tautologies}$ \\
          if $\mathcal{M},w \vDash A \to B$ and $\mathcal{M},w \vDash A$ \\
          then because of the inductive definition of $\mathbf{truth \ in \ a \ world \ of \ a \ model}$
          \rule{13,5cm}{0.2pt} 
          $\mathcal{M},w \vDash B$
          
          \item \texttt{RN} preserves $\mathbf{validity \ in \ a \ model}$ and then $\mathbf{modal \ tautologies}$\\
          if $\mathcal{M}\vDash A$ \\
          then due to the definition of $\mathbf{validity \ in \ a \ model}$  $\forall w \in W (\mathcal{M},w \vDash A)$\\ and also $\forall w \in W (\forall x \in W (wRx \implies\mathcal{M},x \vDash A))$
          \\ then $\forall w \in W (\mathcal{M},w \vDash \Box A)$ \\
          \rule{13,5cm}{0.2pt} 
          $\mathcal{M} \vDash \Box A$
          \\ \\
          Observe that \texttt{RN} does not preserve $\mathbf{truth \ in \ a \ world \ of \ a \ model}$ \\
          We can take as counterexample a model $\langle\{ w;x\}; \{ (w,x)\}\rangle$ with a valuation $V= \{(w;p) \}$ that is represented in the following graph:

          \begin{center}
          \begin{tikzpicture}
          \node (w) at (0,0) [circle,fill,inner sep=1.5pt,label=below:$w$] {};
          \node (x) at (4,0) [circle,fill,inner sep=1.5pt,label=below:$x$] {};
          \node at (-0.5,0.5) {\small$\mathcal{M}, w \vDash p$};
          \draw[->] (w) -- node[above] {$R$} (x);
          \end{tikzpicture}
          \end{center}
          In such a model, we have: $\mathcal{M},w \vDash p$ but $\mathcal{M},x \not \vDash p$ thus $\mathcal{M},w \not \vDash \Box p$. 

          \item \label{itm:SUB} \texttt{RN} preserves $\mathbf{validity \ in \ a \ frame}$ and then $\mathbf{modal \ tautologies}$\\
          if $\mathcal{F}\vDash F$ \\
          then for an arbitrary valuation $V$ it holds $\langle W,R,V \rangle \vDash F$.\\
          I consider a valuation $V^*$ on $W$ such that:
          \[V^* =
          \begin{cases} 
          wV^*p & \text{iff } \mathcal{M},w \vDash A \\
          wV^*q & \text{iff } wVq \ with \ q \not = p 
          \end{cases}
          \]
          Then we will define $\mathcal{M}^*=\langle W,R,V^* \rangle$ and then we prove by induction the complexity of $F$ that: 
          \begin{claim}       
          $\forall F \in \mathbf{Form}_{\Box} (\mathcal{M}^*,w \vDash F \iff \mathcal{M},w \vDash F_p(A))$
          
          \begin{claimproof}
              \begin{itemize}
                  \item $F= \bot$ \\
                  $F_p(A)= \bot $ and 
                  $\mathcal{M}^*,w \vDash \bot \iff \mathcal{M},w \vDash \bot$         
                  
                  \item $F \in \Phi$ 
                  \begin{itemize}
                      \item if $F=p$  \\ then $ F_p(A)=A$  and $\mathcal{M}^*,w \vDash A \iff \mathcal{M},w \vDash A$; 
                      \item if $F=q \not = p$ \\ then $ F_p(A)=q$  and $\mathcal{M}^*,w \vDash q \iff wV^*q \iff wVq\iff \mathcal{M},w \vDash q$;
                  \end{itemize}
                                   
                  \item $F= B \to C$ \\
                  $F_p(A)= B_p(A) \to C_p(A) $ and \\
                  $\mathcal{M}^*,w \vDash B \to C \iff \mathcal{M}^*,w \not \vDash B  \ or \ \mathcal{M}^*,w \vDash C$ $\xLeftrightarrow{\text{ind hp}}$ $\mathcal{M},w \not \vDash B_p(A) \ or \ \mathcal{M},w \vDash C_p(A)$ $  \iff \mathcal{M},w \vDash B_p(A) \to C_p(A)$;
                  \end{itemize}
              \begin{itemize}                
                   \item $F= \Box B$ \\
                  $F_p(A)= \Box B_p(A)$ and \\
                  $\mathcal{M}^*,w \vDash \Box B \iff \forall x \in W (\mathcal{M}^*,x \vDash B)$ $\xLeftrightarrow{\text{ind hp}}$ $\forall x \in W (\mathcal{M},x \vDash B_p(A)$ $ \iff \mathcal{M},w \vDash \Box B_p(A)$,
              \end{itemize}\hfill $\lhd$
            \end{claimproof}
            \end{claim}
          Moreover, from $\mathcal{F}\vDash F$ follows $\mathcal{M}^*\vDash F$ and then for every $w \in W$ holds the lemma and $\forall w \in W(\mathcal{M},w \vDash F_p(A))$.       
          
          \rule{13,5cm}{0.2pt} 
          $\mathcal{F} \vDash F_p(A)$
          \\ \\
          Observe that \texttt{SUB} does not preserve $\mathbf{validity \ in \ a \ model}$ \\
          We can take as counterexample a model such that $wVp$ and $not \ wVq$ then  $\mathcal{M} \vDash p$ but $\mathcal{M} \not \vDash p_p(q)=q$.
      \end{itemize}
  \end{enumerate}
\end{proof}

\subsubsection{Examples of Normal Tautologies}
To show some examples of modal tautologies, we prove here two lemmas that hold for each normal system and which will be useful in the following sections. Let us call them with their HOLMS names.

\newpage

\begin{theorem}[\texttt{MLK\textunderscore imp\textunderscore box}~\href{https://archive.softwareheritage.org/swh:1:cnt:d42782345008434be8b43de7feb014732f8821ea;origin=https://github.com/HOLMS-lib/HOLMS;visit=swh:1:snp:2c0efd349323ed6f8067581cf1f6d95816e49841;anchor=swh:1:rev:1caf3be141c6f646f78695c0eb528ce3b753079a;path=/calculus.ml;lines=384-387}{\ExternalLink}]\label{lem:MLK_imp_box}\phantom{A}
    \begin{mydefinition}
        Let $\mathcal{S}$ a set of axiom schemata.
        $\mathcal{S} \vdash A \to B \ \ \implies \ \ \mathcal{S} \vdash \Box A \to \Box B$
    \end{mydefinition}
\end{theorem}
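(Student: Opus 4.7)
The plan is to chain together the necessitation clause of the deducibility relation, the distribution axiom $\mathbf{K}$, and two applications of Modus Ponens. This is essentially the textbook derivation of the rule sometimes called \emph{regularity} or \emph{monotonicity of $\Box$} in $\mathbb{K}$, and it should work uniformly for any extension $\mathcal{S}$ since it only uses the minimal engine $\mathsf{K}$.

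First, I would start from the hypothesis $\mathcal{S} \vdash A \to B$, which by our convention means $\mathcal{S}.\varnothing \vdash A \to B$. Applying the fifth clause of Definition~\ref{def:deducibility} (the necessitation clause, which triggers precisely when the hypothesis set is empty), I obtain $\mathcal{S}.\varnothing \vdash \Box(A \to B)$, i.e.\ $\mathcal{S} \vdash \Box(A \to B)$. Next, since $\Box(A \to B) \to (\Box A \to \Box B)$ is an instance of the axiom schema $\mathbf{K}$ of the minimal calculus $\mathsf{K}$, the first clause of Definition~\ref{def:deducibility} yields $\mathcal{S} \vdash \Box(A \to B) \to (\Box A \to \Box B)$.

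Finally, I would apply \texttt{MP} (the fourth clause of the deducibility definition) to these two facts to obtain $\mathcal{S} \vdash \Box A \to \Box B$, which is the thesis. There is no real obstacle here: everything follows directly from unrolling the inductive definition of deducibility, and no appeal to the deduction theorem or to monotonicity (Claim~\ref{clm:MODPROVES_MONO2}) is required. The only subtlety worth noting is that the necessitation step is only available because our hypothesis has the form $\mathcal{S}.\varnothing \vdash A \to B$; if we had instead started from $\mathcal{S}.\mathcal{H} \vdash A \to B$ with $\mathcal{H} \neq \varnothing$, necessitation could not be applied directly, and the statement would fail in general.
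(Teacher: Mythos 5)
Your proof is correct and follows exactly the paper's own argument: necessitation on $\mathcal{S}\vdash A\to B$ to get $\mathcal{S}\vdash\Box(A\to B)$, an instance of the distribution schema $\mathbf{K}$, and one application of \texttt{MP}. The extra remark about necessitation requiring the empty hypothesis set is accurate and a nice clarification, but the route is the same.
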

\begin{proof} 
    \begin{itemize}
        \item $\mathcal{S} \vdash A \to B \ \ \overset{\texttt{RN}}{\implies} \mathcal{S} \vdash \Box(A \to B)$ 
        \item  $\overset{\mathbf{K}}{\implies} \mathcal{S} \vdash \Box(A \to B) \to (\Box A \to \Box B)$
    \end{itemize}
        Then, by \texttt{MP}, it follows: $\mathcal{S} \vdash \Box A \to \Box B$
\end{proof}

\begin{lemma}[\texttt{MLK\textunderscore box\textunderscore and}~\href{https://archive.softwareheritage.org/swh:1:cnt:d42782345008434be8b43de7feb014732f8821ea;origin=https://github.com/HOLMS-lib/HOLMS;visit=swh:1:snp:2c0efd349323ed6f8067581cf1f6d95816e49841;anchor=swh:1:rev:1caf3be141c6f646f78695c0eb528ce3b753079a;path=/calculus.ml;lines=393-396}{\ExternalLink} and \texttt{MLK\textunderscore box\textunderscore and\textunderscore inv}~\href{https://archive.softwareheritage.org/swh:1:cnt:d42782345008434be8b43de7feb014732f8821ea;origin=https://github.com/HOLMS-lib/HOLMS;visit=swh:1:snp:2c0efd349323ed6f8067581cf1f6d95816e49841;anchor=swh:1:rev:1caf3be141c6f646f78695c0eb528ce3b753079a;path=/calculus.ml;lines=398-401}{\ExternalLink}]\label{lem:MLK_box}\phantom{A}
    \begin{mydefinition}
        Let $\mathcal{S}$ be a set of axiom schemata. 
        $\mathcal{S} \vdash \Box (A \land B) \leftrightarrow (\Box A \land \Box B)$
    \end{mydefinition}
\end{lemma}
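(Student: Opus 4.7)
The plan is to prove the two implications separately and then combine them via a propositional tautology handled by $\mathbf{taut}$ and \texttt{MP}. Both directions rely only on Lemma~\ref{lem:MLK_imp_box} (\texttt{MLK\textunderscore imp\textunderscore box}), the $\mathbf{K}$ schema, and the fact that every propositional tautology is an axiom of $\mathsf{K}$, so the argument is uniform in $\mathcal{S}$.

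For the forward implication $\mathcal{S} \vdash \Box(A \land B) \to (\Box A \land \Box B)$, I would start from the propositional tautologies $(A \land B) \to A$ and $(A \land B) \to B$, both in $\mathbf{taut}$ and thus derivable in $\mathcal{S}$. Applying Lemma~\ref{lem:MLK_imp_box} to each, I obtain $\mathcal{S} \vdash \Box(A \land B) \to \Box A$ and $\mathcal{S} \vdash \Box(A \land B) \to \Box B$. Then the propositional tautology $(P \to Q) \to ((P \to R) \to (P \to (Q \land R)))$, instantiated suitably, together with two applications of \texttt{MP}, yields the desired implication.

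For the backward implication $\mathcal{S} \vdash (\Box A \land \Box B) \to \Box(A \land B)$, I would begin from the propositional tautology $A \to (B \to (A \land B))$. Applying Lemma~\ref{lem:MLK_imp_box} gives $\mathcal{S} \vdash \Box A \to \Box(B \to (A \land B))$. The $\mathbf{K}$ schema provides $\mathcal{S} \vdash \Box(B \to (A \land B)) \to (\Box B \to \Box(A \land B))$, and propositional reasoning (chaining implications) yields $\mathcal{S} \vdash \Box A \to (\Box B \to \Box(A \land B))$, which is propositionally equivalent to the desired $(\Box A \land \Box B) \to \Box(A \land B)$.

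Finally, I would combine the two implications by applying \texttt{MP} to the tautology $(P \to Q) \to ((Q \to P) \to (P \leftrightarrow Q))$ instantiated with $P := \Box(A \land B)$ and $Q := \Box A \land \Box B$. No step looks like a real obstacle: the only point requiring mild care is keeping track of the purely propositional manipulations, all of which are instances of $\mathbf{taut}$ and so immediately derivable by the first clause of Definition~\ref{def:deducibility}. In HOLMS this bookkeeping is presumably discharged by a propositional-tautology tactic, which explains why the formalised proof splits cleanly into the two halves \texttt{MLK\textunderscore box\textunderscore and} and \texttt{MLK\textunderscore box\textunderscore and\textunderscore inv}.
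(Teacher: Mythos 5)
Your proof is correct and uses exactly the same ingredients as the paper's: the tautologies $(A\land B)\to A$, $(A\land B)\to B$ and $A\to(B\to(A\land B))$, Lemma~\ref{lem:MLK_imp_box}, and the $\mathbf{K}$ schema. The only difference is presentational: you assemble the two object-level implications into the biconditional entirely inside the calculus, whereas the paper's write-up argues at the meta-level (assuming $\mathcal{S}\vdash\Box(A\land B)$ and deriving $\mathcal{S}\vdash\Box A$ and $\mathcal{S}\vdash\Box B$, and conversely), so your version is if anything the more faithful rendering of the stated object-level $\leftrightarrow$ and matches the two formalised implications \texttt{MLK\textunderscore box\textunderscore and} and \texttt{MLK\textunderscore box\textunderscore and\textunderscore inv}.
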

\begin{proof} 
Observe that using $\mathbf{taut}$ and \texttt{MP} we can prove: \\
    (0) $S \vdash A \land B \iff S \vdash A \ and \ S \vdash B $ \\
    Then it is necessary to prove the two verses of the implication:
    \begin{itemize}
      \item[\(\to\)]  $\mathcal{S} \vdash \Box (A \land B)$\\
      Moreover, we know that: \\
      (1) $\overset{\text{taut}}{\implies} \mathcal{S} \vdash (A \land B) \to A$  $\overset{\texttt{MLK-imp-box}}{\implies}$ 
      $\vdash \Box (A \land B) \to \Box A$ \\
      (2) $\overset{\text{taut}}{\implies} \mathcal{S} \vdash (A \land B) \to B$  $\overset{\texttt{MLK-imp-box}}{\implies}$ 
      $\vdash \Box (A \land B) \to \Box B$ \\
      $\overset{\text{MP(1,HP)}}{\implies} \mathcal{S} \vdash \Box A$ and also $\overset{\text{MP(2,HP)}}{\implies} \mathcal{S} \vdash \Box B$ then the thesis follows by (0)
      
      \item[\(\leftarrow\)] $S \vdash \Box A \land \Box B$ \\
      (3)  $\overset{\text{taut}}{\implies} \mathcal{S} \vdash (\Box A \land \Box B) \to \Box A $ $\overset{\text{MP(3,HP)}}{\implies} \mathcal{S} \vdash \Box A $ \\
      (4) $\overset{\text{taut}}{\implies} \mathcal{S} \vdash (\Box A \land \Box B) \to \Box B $ $\overset{\text{MP(4,HP)}}{\implies} \mathcal{S} \vdash \Box B $ \\
      (5) $\overset{\text{taut}}{\implies} \mathcal{S} \vdash A \to (B \to (A \land B)) \ \overset{\texttt{MLK-imp-box}}{\implies}$ $\vdash \Box A \to  \Box (B \to (A \land B))  $
      \\(6) $\overset{\mathbf{K}}{\implies} \mathcal{S} \vdash \Box (B \to (A \land B)) \to (\Box B \to \Box (A \land B))$\ \\
      Then we apply repeatedly modus ponens: \\
      (7) $\overset{\text{MP(5,3)}}{\implies} \mathcal{S} \vdash \Box (B \to (A \land B))$ \\
      (8) $\overset{\text{MP(6,7)}}{\implies} \mathcal{S} \vdash  \Box B \to \Box (A \land B)$ \\
      (8)$\overset{\text{MP(8,4)}}{\implies} \mathcal{S} \vdash  \Box (A \land B)$
    \end{itemize}
\end{proof}

\newpage

\section{Normal Systems within the Modal Cube}\label{sec:normal-systems}

In this work, as in HOLMS, we will focus on normal systems and, in particular, on some of the systems within the \textit{modal cube}\footnote{This notion is discussed at the end of the section, in subsection~\ref{sub:modal-cube}.}. We present here the standard axiomatisation of eight well-known modal logics.

\begin{enumerate}
    \item$\mathbb{K}$: is the minimal normal system that has not additional axiom schemata;
     \item $\mathbb{D}$: extends $\mathbb{K}$ with the set of schemata $\mathcal{D}= \{\mathbf{D} \coloneq \Box A \to \Diamond A\}$;
    \item $\mathbb{T}$: extends $\mathbb{K}$ with the set of schemata $\mathcal{T}= \{\mathbf{T}\coloneq \Box A \to A\}$~\href{https://archive.softwareheritage.org/swh:1:cnt:1352cb294724b2251b9346657432700ecbed10d2;origin=https://github.com/HOLMS-lib/HOLMS;visit=swh:1:snp:2c0efd349323ed6f8067581cf1f6d95816e49841;anchor=swh:1:rev:1caf3be141c6f646f78695c0eb528ce3b753079a;path=/t_completeness.ml;lines=8-9}{~\ExternalLink};
    \item $\mathbb{K}\mathbf{4}$: extends $\mathbb{K}$ with the set of schemata $\mathcal{K}4= \{\mathbf{4}\coloneq \Box A \to \Box \Box A\}$~\href{https://archive.softwareheritage.org/swh:1:cnt:32869a1df338d91c4cd4a0cb9e73fb4f1be29991;origin=https://github.com/HOLMS-lib/HOLMS;visit=swh:1:snp:2c0efd349323ed6f8067581cf1f6d95816e49841;anchor=swh:1:rev:1caf3be141c6f646f78695c0eb528ce3b753079a;path=/k4_completeness.ml;lines=8-9}{~\ExternalLink};
    \item $\mathbb{S}\mathbf{4}$: extends $\mathbb{K}$ with the set of schemata $\mathcal{S}4= \{\mathbf{T}; \ \mathbf{4}\}$;
    \item $\mathbb{B}$: extends $\mathbb{K}$with the set of schemata $\mathcal{B}= \{\mathbf{T}; \ \mathbf{B}\coloneq A \to \Box \Diamond A\}$;
    \item $\mathbb{S}\mathbf{5}$: extends $\mathbb{K}$ with the set of schemata $\mathcal{S}5= \{\mathbf{T}; \ \mathbf{5}\coloneq \Diamond A \to \Box \Diamond A\}$;
    \item $\mathbb{GL}$: extends $\mathbb{K}$ with the set of schemata $\mathcal{GL} =\{\mathbf{GL}\coloneq \Box (\Box A \to A) \to \Box A\}$~\href{https://archive.softwareheritage.org/swh:1:cnt:52c9de12454731a9a291b06bc750c0d2d14e1fb4;origin=https://github.com/HOLMS-lib/HOLMS;visit=swh:1:snp:2c0efd349323ed6f8067581cf1f6d95816e49841;anchor=swh:1:rev:1caf3be141c6f646f78695c0eb528ce3b753079a;path=/gl_completeness.ml;lines=11-12}{~\ExternalLink};
\end{enumerate}

It is interesting to examine the
relationships between these systems focusing on which system \textit{extends} another.

\begin{definition}[\textbf{Inclusion between modal systems}]\phantom{} 
\begin{mydefinition}
    \begin{mydefinition2}
        A system $\mathbb{S}'$ \textbf{extends }a system $\mathbb{S}$ if every theorem of $\mathbb{S}$ is a theorem of $\mathbb{S}'$. Formally, we write $\mathbb{S} \subseteq \mathbb{S}'$ if, for any modal formula $A$, $\mathcal{S} \vdash A$ implies $\mathcal{S}' \vdash A$. We write $\mathbb{S} \longrightarrow \mathbb{S}' $ when $\mathbb{S}'$ is a \textbf{proper  extension }of $\mathbb{S}$, i.e., $\mathcal{S} \subseteq \mathcal{S}'$ and $\mathbb{S}'  \subsetneq \mathbb{S}$. 
    \end{mydefinition2}
\end{mydefinition}
\end{definition}

From the definition of normal systems, it follows immediately that every normal logic extends $\mathbb{K}$. Similarly, it is straightforward to draw the arrows from $\mathbb{K}\mathbf{4}$ to $\mathbb{S}\mathbf{4}$ and from $\mathbb{T}$ to $\mathbb{S}\mathbf{4}$, $\mathbb{S}\mathbf{5}$ and $\mathbb{B}$. The diagram in Figure~\ref{fig:firstinclusion} sketch these first inclusions.

 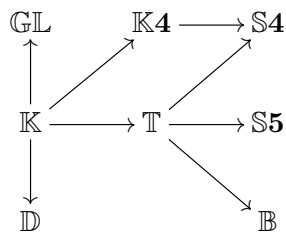
\begin{figure}[h]
     $$\xymatrix{
      \mathbb{GL}  &  \mathbb{K}\mathbf{4}  \ar@{->}[r] & \mathbb{S}\mathbf{4} \\
     \mathbb{K} \ar@{->}[d] \ar@{->}[u] \ar@{->}[ru] \ar@{->}[r] & \mathbb{T} \ar@{->}[rd] \ar@{->}[r] \ar@{->}[ru] & \mathbb{S}\mathbf{5} \\
     \mathbb{D} & & \mathbb{B}
 }
 $$
 
      \caption{An intuitive diagram of the inclusions within the modal cube}
      \label{fig:firstinclusion}
\end{figure}

We provide an improvement of this first sketch with the following Figure~\ref{fig:properinclusions}, which renders count of the well-known proper inclusions between the normal modal logics under analysis.
\begin{figure}[h]
     $$\xymatrix{
     & & \mathbb{GL}  \\
     &  \mathbb{K}\mathbf{4} \ar@{->}[ru] \ar@{->}[r] & \mathbb{S}\mathbf{4} \ar@{->}[r] & \mathbb{S}\mathbf{5} \\
     \mathbb{K} \ar@{->}[ru] \ar@{->}[r] & \mathbb{D} \ar@{->}[r] & \mathbb{T} \ar@{->}[u] \ar@{->}[r] & \mathbb{B} \ar@{->}[u] 
 }
 $$
 \caption{A diagram of the proper inclusion within the cube}
\label{fig:properinclusions}
\end{figure}
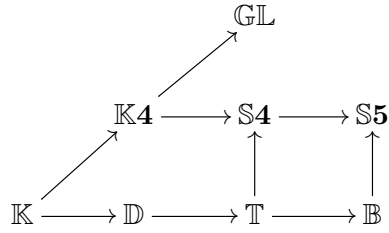
To exemplify the derived inclusions, we prove that $\mathbb{GL}$ extends $\mathbb{K}\mathbf{4}$, i.e. $\mathcal{GL} \vdash \mathbf{4}$, that $\mathbb{T}$ extends $\mathbb{D}$, i.e. $\mathcal{T} \vdash \mathbf{D}$,  and that $\mathbb{S}\mathbf{5}$ extends $\mathbb{B}$, i.e. $\mathcal{S}5 \vdash \mathbf{B}$. To prove that the inclusions are proper we will exploit some results of correspondence theory (see Remark~\ref{rmk:proper_inclusions}).
\newpage

\begin{lemma}[\textbf{$\mathbb{GL}$ extends $\mathbb{K}$4}~\href{https://archive.softwareheritage.org/swh:1:cnt:52c9de12454731a9a291b06bc750c0d2d14e1fb4;origin=https://github.com/HOLMS-lib/HOLMS;visit=swh:1:snp:2c0efd349323ed6f8067581cf1f6d95816e49841;anchor=swh:1:rev:1caf3be141c6f646f78695c0eb528ce3b753079a;path=/gl_completeness.ml;lines=228-236}{\ExternalLink}] \label{lem:GLextendsK4}
\phantom{Allyouneedislove}
\begin{mydefinition}
$\mathbb{K}\mathbf{4} \subseteq \mathbb{GL}$    
\end{mydefinition}
\end{lemma}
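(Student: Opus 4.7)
The plan is to reduce $\mathbb{K}\mathbf{4} \subseteq \mathbb{GL}$ to the single claim that $\mathcal{GL} \vdash \Box A \to \Box \Box A$ for every modal formula $A$. Indeed, a routine induction on Definition~\ref{def:deducibility} shows that any $\mathcal{K}4$-derivation can be turned into a $\mathcal{GL}$-derivation by replacing each appearance of an instance of the $\mathbf{4}$ schema with its $\mathcal{GL}$-derivation: the cases of $\mathsf{K}$-axioms, \texttt{MP}, and \texttt{RN} are shared between the two calculi and pass through by the inductive hypothesis.

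So the substantive task is to prove $\mathcal{GL} \vdash \Box A \to \Box \Box A$. The crucial move, in the style of de Jongh and Sambin, is to apply the $\mathbf{GL}$ schema to the ``diagonal'' formula $B := \Box A \land A$, which yields
\[
\mathcal{GL} \vdash \Box(\Box B \to B) \to \Box B.
\]
Combined with the tautology $B \to \Box A$ and Theorem~\ref{lem:MLK_imp_box}, which gives $\mathcal{GL} \vdash \Box B \to \Box \Box A$, it only remains to derive the antecedent of this $\mathbf{GL}$-instance, i.e.\ to show that $\mathcal{GL} \vdash \Box A \to \Box(\Box B \to B)$.

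To obtain this last implication, I would first prove the ``un-boxed'' version $\mathcal{GL} \vdash A \to (\Box B \to B)$. Assuming $A$ and $\Box B$: $A$ is immediate, while $\Box A$ follows by Theorem~\ref{lem:MLK_imp_box} applied to the tautology $B \to A$ (which gives $\Box B \to \Box A$), so $\Box A \land A = B$ is derivable. The Deduction Theorem (Theorem~\ref{lem:deduction-theorem}) then delivers $\mathcal{GL} \vdash A \to (\Box B \to B)$, and a further application of Theorem~\ref{lem:MLK_imp_box} boxes both sides to give $\mathcal{GL} \vdash \Box A \to \Box(\Box B \to B)$. Chaining this with the $\mathbf{GL}$-instance and with $\mathcal{GL} \vdash \Box B \to \Box \Box A$ closes the argument.

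The only genuine obstacle is spotting the right witness for the $\mathbf{GL}$ schema: naive choices such as $B = A$ or $B = \Box A$ collapse into trivialities like $\Box A \to \Box A$, and one has to notice that the ``self-referential'' conjunction $\Box A \land A$ is exactly what is needed to exploit simultaneously the propositional and the modal content of the assumption $\Box A$. Once this substitution is identified, the remaining steps are entirely routine normality reasoning based on Theorem~\ref{lem:MLK_imp_box} and the Deduction Theorem.
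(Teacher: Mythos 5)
Your proposal is correct and follows essentially the same route as the paper: both reduce the claim to $\mathcal{GL}\vdash\Box A\to\Box\Box A$ and both hinge on instantiating the $\mathbf{GL}$ schema at the diagonal formula $B=\Box A\land A$, then using Theorem~\ref{lem:MLK_imp_box} to obtain $\Box A\to\Box(\Box B\to B)$ and $\Box B\to\Box\Box A$. The only (cosmetic) difference is that you derive the intermediate fact $\mathcal{GL}\vdash A\to(\Box B\to B)$ by hypothetical reasoning plus the Deduction Theorem, where the paper assembles it from explicit tautologies together with Lemma~\ref{lem:MLK_box}.
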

\begin{proof}
We want to prove that for any $B \in \mathbf{Form}_{\Box}$, if $\mathbb{K}\mathbf{4} \vdash B$ then
$\mathcal{GL} \vdash B$. \\
By the axiomatisation of $\mathbb{K}\mathbf{4}$ we know that $B$ is either a theorem of $\mathbb{K}$ or it follows from $\mathbf{4}$. Then all we need to prove is $\mathcal{GL} \vdash \Box A \to \Box \Box A$. 
\begin{enumerate}
    \item $\mathcal{GL}$ $ \vdash A \to \underline{((\Box \Box A \land \Box A)} \to (\Box A \land A))$ \hfill \textit{classical logic} (tautology)
    \item $\mathcal{GL}$ $ \vdash (\underline{\Box \Box A \land \Box A)} \to \Box (\Box A \land A))$ \hfill \textit{normality} (Lemma \ref{lem:MLK_box} on 1)
    \item $\mathcal{GL}$ $ \vdash A \to (  \Box (\Box A \land A) \to (\Box A \land A))$ \hfill \textit{classical logic} on 1,2
    \item $\mathcal{GL}$ $ \vdash \Box A \to \underline{\Box(  \Box (\Box A \land A) \to (\Box A \land A))}$ \hfill  \textit{normality} (Lemma~\ref{lem:MLK_imp_box} on 3) 
    \item $\mathcal{GL}$ $ \vdash \Box \underline{(  \Box (\Box A \land A) \to (\Box A \land A))} \to  \Box (\Box A \land A)$ \hfill $\mathbb{GL}$ \textit{axiom} 
    \item $\mathcal{GL}$ $ \vdash  \Box A \to  \underline {\Box (\Box A \land A)}$ \hfill \textit{classical logic} on 4,5 
    \item $\mathcal{GL}$ $ \vdash  (\Box A \land A) \to \Box A$ \hfill \textit{classical logic} (tautology)
     \item $\mathcal{GL}$ $ \vdash  \underline {\Box (\Box A \land A)} \to \Box \Box A$ \hfill \textit{normality} (Lemma~\ref{lem:MLK_imp_box} on 7)
     \item $\mathcal{GL}$ $ \vdash  \Box A\to \Box \Box A$ \hfill \textit{classical logic} on 6,8
\end{enumerate}
\end{proof}

\begin{lemma}[\textbf{$\mathbb{T}$ extends $\mathbb{D}$}]
\phantom{Allyouneedislove}
\begin{mydefinition}
$\mathbb{D} \subseteq \mathbb{T}$    
\end{mydefinition}
\end{lemma}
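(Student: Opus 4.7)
The plan is to show the only additional axiom schema of $\mathbb{D}$, namely $\mathbf{D} \coloneq \Box A \to \Diamond A$, is derivable in $\mathbb{T}$; since any axiomatic proof in $\mathbb{D}$ either uses axioms of $\mathsf{K}$ (shared with $\mathsf{T}$) or instances of $\mathbf{D}$, this suffices to establish $\mathbb{D} \subseteq \mathbb{T}$, just as in the proof of Lemma~\ref{lem:GLextendsK4}.

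Unfolding the metalinguistic abbreviation from Convention~\ref{cnv:metaling_abbr}, the goal becomes $\mathcal{T} \vdash \Box A \to \neg \Box \neg A$. The key idea is to use axiom $\mathbf{T}$ twice: once on $A$ itself and once on its negation. First I would note that $\mathcal{T} \vdash \Box A \to A$ is an instance of $\mathbf{T}$, and similarly $\mathcal{T} \vdash \Box \neg A \to \neg A$ is another instance of $\mathbf{T}$ (applied to the formula $\neg A$). Contrapositively, via propositional reasoning through $\mathbf{taut}$ and \texttt{MP}, the latter yields $\mathcal{T} \vdash A \to \neg \Box \neg A$, which is $\mathcal{T} \vdash A \to \Diamond A$.

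Chaining the two derivations by a propositional tautology of transitivity of implication (also in $\mathbf{taut}$), we obtain $\mathcal{T} \vdash \Box A \to \Diamond A$, which is exactly the axiom schema $\mathbf{D}$. Since every proof in $\mathbb{D}$ from the empty set of hypotheses can be recast as a proof in $\mathbb{T}$ simply by replacing each use of an instance of $\mathbf{D}$ with its derivation just given (and all other steps being common to both calculi by Definition~\ref{def:deducibility}), we conclude $\mathcal{D} \vdash B$ implies $\mathcal{T} \vdash B$ for every $B \in \mathbf{Form}_{\Box}$, i.e., $\mathbb{D} \subseteq \mathbb{T}$.

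No real obstacle is expected: the whole argument reduces to two instantiations of $\mathbf{T}$ plus classical propositional reasoning, with no need for \texttt{RN} or the distribution axiom $\mathbf{K}$.
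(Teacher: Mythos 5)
Your proof is correct and follows essentially the same route as the paper's: both derive $\mathbf{D}$ in $\mathbb{T}$ by instantiating the $\mathbf{T}$ schema at $A$ and at $\neg A$, contraposing the latter to get $A \to \neg\Box\neg A$, and chaining by transitivity of implication. The only difference is that you spell out explicitly why deriving the schema $\mathbf{D}$ suffices for the inclusion of the whole systems, which the paper leaves implicit.
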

\begin{proof}
Similarly, we must prove $\mathcal{T} \vdash \Box A \to \Diamond A$. Note that $\Diamond A \coloneq \neg \Box \neg A$.
\begin{enumerate}
    \item $\mathcal{T}$ $ \vdash \Box A \to \underline{A}$ \hfill $\mathbb{T} $\textit{ axiom}
    \item $\mathcal{T}$ $ \vdash \Box \neg A \to \neg A$ \hfill $\mathbb{T} $\textit{ axiom}
    \item $\mathcal{T}$ $ \vdash \underline{A} \to \neg \Box \neg A$ \hfill \textit{classical logic} on 2
    \item $\mathcal{T}$ $ \vdash \Box A \to \neg \Box \neg A$ \hfill \textit{classical logic} on 1, 3
    
\end{enumerate}
\end{proof}
\newpage
\begin{lemma}[\textbf{$\mathbb{S}\mathbf{5}$ extends $\mathbb{B}$}]
\phantom{Allyouneedislove}
\begin{mydefinition}
$\mathbb{S}\mathbf{5} \subseteq \mathbb{B}$    
\end{mydefinition}
\end{lemma}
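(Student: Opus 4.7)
The plan is to derive $\mathcal{S}5 \vdash A \to \Box \Diamond A$ by chaining the two axioms $\mathbf{T}$ and $\mathbf{5}$ together with a small amount of classical propositional reasoning, mirroring the style of the two previous inclusion proofs. Unfolding the abbreviation of Convention~\ref{cnv:metaling_abbr}, the target $\mathbf{B}$ reads $A \to \Box \neg \Box \neg A$, so the natural decomposition is to first prove $\mathcal{S}5 \vdash A \to \Diamond A$ and then post-compose it with the $\mathbf{5}$-instance $\Diamond A \to \Box \Diamond A$.

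The first step is to obtain $\mathcal{S}5 \vdash A \to \Diamond A$ from $\mathbf{T}$. I would instantiate the schema $\mathbf{T}$ with $\neg A$ to get $\mathcal{S}5 \vdash \Box \neg A \to \neg A$, and then apply a propositional tautology implementing contraposition plus double-negation elimination, namely $(\Box \neg A \to \neg A) \to (A \to \neg \Box \neg A)$, via \texttt{MP}. By the abbreviation $\Diamond A \coloneq \neg \Box \neg A$, the conclusion is exactly $\mathcal{S}5 \vdash A \to \Diamond A$.

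The second step is immediate: $\mathbf{5}$ is an axiom of $\mathcal{S}5$, hence $\mathcal{S}5 \vdash \Diamond A \to \Box \Diamond A$. To chain the two implications I would use the propositional tautology $(A \to \Diamond A) \to ((\Diamond A \to \Box \Diamond A) \to (A \to \Box \Diamond A))$ and apply \texttt{MP} twice, yielding the desired $\mathcal{S}5 \vdash A \to \Box \Diamond A$.

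I do not expect any genuine obstacle. In particular, neither \texttt{RN} nor the schema $\mathbf{K}$ is needed: $\mathbf{5}$ already carries all the required modal content in its conclusion $\Box \Diamond A$, so the whole argument remains inside the $\mathbf{taut}$-and-\texttt{MP} fragment built on top of $\mathbf{T}$ and $\mathbf{5}$. This is what makes the proof noticeably shorter than the one for $\mathbb{K}\mathbf{4} \subseteq \mathbb{GL}$, where Lemmas~\ref{lem:MLK_imp_box} and~\ref{lem:MLK_box} had to be invoked to push the box through.
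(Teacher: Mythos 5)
Your proposal is correct and follows essentially the same route as the paper's proof: instantiate $\mathbf{T}$ at $\neg A$, contrapose propositionally to get $\mathcal{S}5 \vdash A \to \Diamond A$, invoke the $\mathbf{5}$-instance $\Diamond A \to \Box \Diamond A$, and chain the two by \texttt{MP} on a propositional tautology. Your observation that neither \texttt{RN} nor $\mathbf{K}$ is needed is accurate and consistent with the paper's derivation, which likewise stays in the $\mathbf{taut}$-and-\texttt{MP} fragment.
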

\begin{proof}
We want to prove that for any $C \in \mathbf{Form}_{\Box}$, if $\mathcal{B} \vdash C$ then
$\mathcal{S}5 \vdash C$. \\
By the axiomatisation of $\mathbb{B}$, we know that $C$ is either a theorem of $\mathbb{K}$ or it follows from $\mathbf{T}$ or $\mathbf{B}$. Because $\mathbf{T} \in \mathcal{S}5$, all we need to prove is $\mathcal{S}5 \vdash A \to \Box \Diamond A$. 

\begin{enumerate}
    \item $\mathcal{S}5$ $ \vdash \Box \neg A \to \neg A$ \hfill $\mathbb{T} $\textit{ axiom}
    \item $\mathcal{S}5$ $ \vdash A \to \neg \Box \neg A$ \hfill \textit{classical logic} on 2.
    \item $\mathcal{S}5$ $ \vdash A \to \underline{\Diamond A}$ \hfill  $\Diamond A \coloneq \neg \Box \neg A$
    \item $\mathcal{S}5$ $ \vdash \underline{\Diamond A} \to \Box \Diamond A$ \hfill $\mathbb{S}\mathbf{5} $\textit{ axiom} ($\mathsf{5}$)
    \item $\mathcal{S}5$ $ \vdash A \to \Box \Diamond A$ \hfill \textit{classical logic} on 1. and 2. 
    
\end{enumerate}
\end{proof}

\smallskip

\subsection{Modal Cube}\label{sub:modal-cube}
    The expression \textit{modal cube} commonly refers to some normal systems that form a nested hierarchy.
    This grouping is well-described in Figure~\ref{fig:modal_cube} below, drawn from~\cite[\S 8]{sep-logic-modal}. This diagram expands our understanding of the relationship between the normal systems depicted in Figure~\ref{fig:properinclusions}.

\begin{figure}[ht]
    \centering
    \includegraphics[width= 0.6\textwidth]{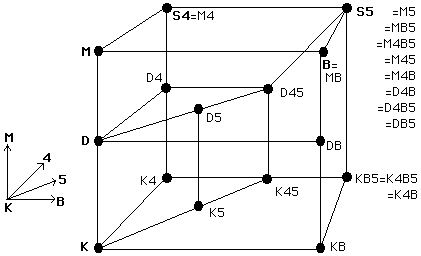}\caption{Map of the proper inclusions between modal logics. Here $\mathbf{M} \coloneq \mathbf{T}$}
    \label{fig:modal_cube}
\end{figure}

The minimal system $\mathbb{K}$ is represented as the core vertex of a cube. At the opposite vertex, lays $\mathbb{S}5$, the strongest theory within the cube. In the middle, there are other interesting extensions of $\mathbb{K}$, such as $\mathbb{D,\ T,\ B,\ K}4$ and $\mathbb{S}4$ . 
    
    Since $\mathbb{S}5$ does not describe every modality of interest, other normal systems outside the cube (which properly extend or are not extended by $\mathbb{S}5$) have been formulated and studied, such as $\mathbb{GL}$. Note that all the systems under our analysis occupy important positions within the cube, except from $\mathbb{GL}$ which is interesting due to its capabilities of describing provability.

\section{Correspondence Theory}\label{sec:correspondence-theory}

Correspondence theory is an important and classical area of modal logic. It analyses and searches for connections between properties of relational frames and modal formulas that hold in exactly the relational frames with those properties \cite{VanBenthem2001}.
By studying these correspondences, we can identify the class of frames that guarantees soundness and completeness w.r.t. relational semantics for each normal logic. 

\subsection{Correspondence Results}

We prove below some important results of the correspondence theory.

\begin{lemma}[\textbf{Correspondence lemmata}]\label{lem:Correspondence}
\phantom{Allyouneedislove}
\begin{mydefinition}
   \begin{enumerate}
      \item $\langle W,R\rangle \vDash \mathbf{D} \coloneq \Box A \to \Diamond  A$ \textnormal{iff} R is serial. 
      \item\label{itm:corr_2} $\langle W,R\rangle \vDash \mathbf{T} \coloneq \Box A \to A$ \textnormal{iff} R is reflexive.~\href{https://archive.softwareheritage.org/swh:1:cnt:1352cb294724b2251b9346657432700ecbed10d2;origin=https://github.com/HOLMS-lib/HOLMS;visit=swh:1:snp:2c0efd349323ed6f8067581cf1f6d95816e49841;anchor=swh:1:rev:1caf3be141c6f646f78695c0eb528ce3b753079a;path=/t_completeness.ml;lines=23-27}{\ExternalLink} 
      \item $\langle W,R\rangle \vDash \mathbf{4} \coloneq \Box A \to \Box \Box A$ \textnormal{iff} R is transitive.~\href{https://archive.softwareheritage.org/swh:1:cnt:32869a1df338d91c4cd4a0cb9e73fb4f1be29991;origin=https://github.com/HOLMS-lib/HOLMS;visit=swh:1:snp:2c0efd349323ed6f8067581cf1f6d95816e49841;anchor=swh:1:rev:1caf3be141c6f646f78695c0eb528ce3b753079a;path=/k4_completeness.ml;lines=27-33}{\ExternalLink} \label{itm:4correspondence} 
      \item $\langle W,R\rangle \vDash \mathbf{5} \coloneq \Diamond A \to \Box \Diamond  A$ \textnormal{iff} R is euclidean.
      \item $\langle W,R\rangle \vDash \mathbf{B} \coloneq  A \to \Box \Diamond A$ \textnormal{iff} R is symmetric.
      \item $\langle W,R\rangle \vDash \mathbf{GL} \coloneq  \Box (\Box A \to A) \to \Box A$ \textnormal{iff} R is transitive and converse well-founded.~\href{https://archive.softwareheritage.org/swh:1:cnt:52c9de12454731a9a291b06bc750c0d2d14e1fb4;origin=https://github.com/HOLMS-lib/HOLMS;visit=swh:1:snp:2c0efd349323ed6f8067581cf1f6d95816e49841;anchor=swh:1:rev:1caf3be141c6f646f78695c0eb528ce3b753079a;path=/gl_completeness.ml;lines=115-129}{\ExternalLink}
    \end{enumerate}  
\end{mydefinition}
\end{lemma}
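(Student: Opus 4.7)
The plan is to prove each of the six correspondence results by establishing the two directions separately. In every case, the right-to-left direction (frame property implies validity of the schema) is the ``semantic'' direction and follows by a direct unfolding of Definition~\ref{def:truth} and Lemma~\ref{lem:forc_conn}. For example, for item~\customref, if $R$ is reflexive and $\mathcal{M},w \vDash \Box A$, then $wRw$ gives $\mathcal{M},w \vDash A$; for item~\customreff, transitivity combined with two nested applications of the forcing clause for $\Box$ yields $\mathcal{M},w \vDash \Box \Box A$ from $\mathcal{M},w \vDash \Box A$; for seriality, euclideanness, and symmetry the argument is analogous and uses Lemma~\ref{lem:forc_conn}(5) for the $\Diamond$-clause.

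The left-to-right direction (validity of the schema implies the frame property) proceeds by contraposition: assume the relational property fails at some witness, and construct a tailored valuation on the frame that falsifies the axiom. The uniform trick is to make a single propositional atom $p$ true exactly at the worlds reachable in the ``right'' way from the witness. Concretely: for $\mathbf{D}$ with a dead-end $w$, use $A = \bot$ and observe $\mathcal{M},w \vDash \Box \bot$ while $\mathcal{M},w \not\vDash \Diamond \bot$; for $\mathbf{T}$ with $\neg wRw$, put $V(p) = \{v : wRv\}$ so $\Box p$ holds at $w$ but $p$ does not; for $\mathbf{4}$ with $wRv$, $vRu$, $\neg wRu$, take $V(p) = \{x : wRx\}$ and check that $\Box p$ holds at $w$ but $p$ fails at $u$, breaking $\Box\Box p$; for $\mathbf{5}$ and $\mathbf{B}$ the valuations are symmetric variants.

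The main obstacle is item~6, the $\mathbf{GL}$ correspondence. The easy half of this item is almost as before: transitivity alone handles the contrapositive falsification when it fails, but one must also rule out converse non-well-foundedness. So, the hard work is: assume $R$ is transitive but admits an infinite ascending chain $w_0 R w_1 R w_2 R \cdots$ (or, more carefully, a nonempty set $X$ with no $R$-maximal element); we must produce a model and a world falsifying $\Box(\Box A \to A) \to \Box A$. The standard move is to let $p$ be true exactly at worlds lying outside such a ``bad'' set $X$, so that at a root just below $X$ one verifies $\Box(\Box p \to p)$ (every $R$-successor inside $X$ has a further successor in $X$ where $p$ fails, and every successor outside $X$ already satisfies $p$) while $\Box p$ fails because some successor lies in $X$. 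Transitivity is essential to propagate the $\Box$-condition through the chain, so this part of the proof will require the most careful bookkeeping.

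Finally, both halves of item~6 should appeal to Definition~\ref{def:subformula}-level unfolding of the diamond abbreviation, and the transitivity half of the ``easy'' direction of $\mathbf{GL}$ can be factored through the $\mathbf{4}$ correspondence once transitivity has been extracted. I would organise the write-up as six short paragraphs (one per item), reusing the contrapositive template for the first five and devoting extra space to item~6, where the construction of the falsifying valuation from an infinite ascending $R$-chain is the only genuinely non-routine step.
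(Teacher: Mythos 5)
Your treatment of items 1--5 and of the ``validity implies converse well-foundedness'' half of item 6 matches the paper's proof: contraposition with a tailored valuation (making $p$ true exactly on the appropriately reachable worlds) for the hard direction, and a direct semantic check for the converse. The minor variations (using $A=\bot$ for seriality, using $V(p)=\{v: wRv\}$ instead of the complement of $\{x\}$ for reflexivity) are harmless, and routing the transitivity half of the $\mathbf{GL}$ correspondence through the $\mathbf{4}$ correspondence via $\mathcal{GL}\vdash\mathbf{4}$ is exactly what the paper does (though note this silently uses that frame-validity is closed under the inference rules, i.e.\ Theorem~\ref{thm:K_soundness}).

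There is, however, a genuine gap in item 6: you never prove the direction ``$R$ transitive and converse well-founded $\implies \mathcal{F}\vDash\mathbf{GL}$'', and your opening paragraph wrongly files it under ``direct unfolding of the forcing clauses.'' Unlike the other five schemata, this direction is not a one-line unfolding; it is the semantic counterpart of L\"ob's theorem. The paper's argument is: suppose $\mathcal{M},w\vDash\Box(\Box p\to p)$ but $\mathcal{M},w\not\vDash\Box p$; form $X=\{x\in W : wRx \text{ and } \mathcal{M},x\not\vDash p\}$, which is nonempty; by converse well-foundedness pick an $R$-maximal $x\in X$; every $R$-successor $k$ of $x$ satisfies $wRk$ by transitivity and $k\notin X$ by maximality, hence $\mathcal{M},k\vDash p$, so $\mathcal{M},x\vDash\Box p$ while $\mathcal{M},x\not\vDash p$ --- contradicting $\Box(\Box p\to p)$ at $w$. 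Both transitivity and converse well-foundedness are essential here, and without this step the biconditional in item 6 is only half proved. Your proposal devotes all its care to the countermodel built from an infinite ascending chain, which is the other implication.
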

\begin{proof} 
For each schema, we will prove the first implication by contraposition and construction of the countermodel, and the second by \textit{reductio ad absurdum}. 
\medskip

\noindent \textbf{Observation:} To prove our claims, e.g. $\mathcal{F} \vDash \Box A \to \Diamond A $ \textit{iff R is serial}, it is sufficient to prove something easier, e.g. $\mathcal{F} \vDash \Box p \to \Diamond p $ \textit{iff R is serial}  where $p$ is a propositional atom. Indeed, the preservation of validity in modal frames by substitution (\ref{itm:SUB} in Lemma~\ref{thm:K_soundness}) guarantees the equivalence of these two statements, and e.g. $\mathcal{F} \vDash \Box A \to \Diamond A \iff \mathcal{F} \vDash \Box p \to \Diamond p$.

\begin{enumerate}
    \item $[\mathcal{F} \vDash \mathbf{D} \iff \mathcal{F}$ is serial$]$, i.e. \\ $[\mathcal{F} \vDash \Box p \to \Diamond p \iff \forall x \in W \ \exists y \in W (xRy)]$
      
      \begin{itemize}
          \item[$\implies$][$R$ \textit{not serial} $\implies \exists \mathcal{M} \ \mathcal{F}_* \ \exists w \in W(\mathcal{M},w \vDash \Box p \ and \ \mathcal{M},w \not \vDash \Diamond p) $]
          
          Because $R$ is not serial, there exists a world $x$ that cannot see any world. In literature, a world $x \in W$ such that $\neg \exists y \in W (xRy)$ is called ``dead-end''. By the inductive definition of forcing, for any dead-end $x$ we have $\mathcal{M}, x \vDash \Box p$ and $\mathcal{M}, x \not \vDash \Diamond p$.  This countermodel is represented in the following graph\footnote{In the following diagrams, blue arrows highlight the absence of a relationship between two worlds. Such a choice is intended to visualise that the relationship under analysis does not have a given property.}:

          \begin{center}
          \begin{tikzpicture}
          \node (x) at (0,0) [circle,fill,inner sep=1.5pt,label=below:$x$] {};
          \node (y) at (4,0) [circle,fill,inner sep=1.5pt,label=below:$\dots$] {};
          \node at (-0.5,0.7) {\small$\mathcal{M},x \vDash \Box p$};
          \node at (-0.5,-0.7) {\small$\mathcal{M},x \not \vDash \Diamond p$};
          \draw[->, blue] (x) -- node[above] {} (y);
          \draw[->, blue] (x) to[out=135, in=225, looseness=30] (x);
          \end{tikzpicture}
          \end{center}
          
          \item[$\impliedby$] [$R$ \textit{is serial} $\implies \langle W,R \rangle  \vDash \Box p \to \Diamond p   ) $] \\
          RAA: We assume that $\exists \mathcal{M} \ \mathcal{F}_* \ \exists w \in W(\mathcal{M},w \vDash \Box p \ and \ $ \underline{$ \mathcal{M},w \not \vDash \Diamond p)$}. \\
          By seriality, there exists a world $j$ such that $wRj$. Given this world $j$ and the ``absurde'' hypotesis  $\mathcal{M}, w\vDash \Box p$, then $\mathcal{M}, j \vDash p$. But, if $\mathcal{M}, j \vDash p$ and $wRj$ then \underline{$w \vDash\Diamond p$}. A contradiction. $\lightning$
      \end{itemize}
       \hfill $\lhd$
    \item $[\mathcal{F} \vDash \mathbf{T} \iff \mathcal{F}$ is reflexive$]$, i.e. $[\mathcal{F} \vDash \Box p \to  p \iff \forall x \in W (xRx)]$
 \begin{itemize}
          \item[$\implies$] [$R$ \textit{not reflexive} $\implies \exists \mathcal{M} \ \mathcal{F}_* \ \exists w \in W(\mathcal{M},w \vDash \Box p \ and \ \mathcal{M},w \not \vDash p   ) $]
          
          Because $R$ is not reflexive, then there exists a world $x$ that cannot see itself.
          Let $V$ be a valuation on $W$ such that $\forall y \in W (pVy \iff y \not = x)$ and $\mathcal{M}= \langle W,R,V \rangle$.
          For such $x$ and $V$, we have:
          \begin{itemize}
              \item  $\mathcal{M}, x \not \vDash p$ because  $\neg pVx$.
              \item $\mathcal{M},x \vDash \Box p$. Indeed, for any world $y$, if $xRy$ then $x \not = y$; thus $pVy$ and $\mathcal{M}, y \vDash p$. 
              
          \end{itemize}
         This countermodel is represented in the following graph:

          \begin{center}
          \begin{tikzpicture}
          \node (x) at (0,0) [circle,fill,inner sep=1.5pt,label=below:$x$] {};
          \node (y) at (4,1) [circle,fill,inner sep=1.5pt,label=below:$y$] {};
          \node (k) at (4,-1) [circle,fill,inner sep=1.5pt,label=below:$\dots$] {};
          \node at (0,-0.7) {\small$\mathcal{M},x \not \vDash p$};
          \node at (5,1) {\small$\mathcal{M},y  \vDash p$};
          \node at (5.0,-1) {\small$\mathcal{M}, \dots  \vDash p$};
          \draw[->] (x) -- node[above] {} (y);
          \draw[->] (x) -- node[above] {} (k);
          \draw[->, blue] (x) to[out=135, in=225, looseness=30] (x);
          \end{tikzpicture}
          \end{center}

          \item[$\impliedby$] [$R$ \textit{is reflexive} $\implies \langle W,R \rangle  \vDash \Box p \to p   ) $] \\  \textbf{RAA}: $\exists \mathcal{M} \ \mathcal{F}_* \ \exists w \in W(\mathcal{M},w \vDash \Box p \ and \ $ \underline{$ \mathcal{M},w \not \vDash p)$}. \\
          Then by reflexivity, $wRw$. Given the ``absurde'' hypothesis $\mathcal{M}, w\vDash \Box p$, then \underline{$\mathcal{M}, w \vDash p$}. A contradiction. $\lightning$
      \end{itemize}
       \hfill $\lhd$
     \item $[\mathcal{F} \vDash \mathbf{4} \iff \mathcal{F}$ is transitive$]$, i.e. \\ $[\mathcal{F} \vDash \Box p \to \Box \Box p \iff \forall x,y,z \in W (xRy \ and \ yRz \implies xRz)]$
     \begin{itemize}
          \item[$\implies$] [$R$ \textit{not transitive} $\implies \exists \mathcal{M} \mathcal{F}_* \ \exists w \in W(\mathcal{M},w \vDash \Box p \ and \ \mathcal{M},w \not \vDash \Box \Box p   ) $]
          
          Because $R$ is not transitive, there exist three worlds $x,y,z$  such that $x R y$ and $yRz$ but $\neg (xRz)$.
          Let $V$ be a valuation on $W$ such that $\forall w \in W (pVw \iff xRw)$ and $\mathcal{M}= \langle W,R,V \rangle$.
          For such $x$ and $V$, we have:
          \begin{itemize}
              \item $\mathcal{M},x \vDash \Box p$. Indeed, for any world $y$, if $xRy$ then $pVy$ and thus $\mathcal{M},y \vDash p$. 
              \item  $\mathcal{M}, x \not \vDash \Box \Box p$. Indeed, $\neg (xRz)$ then $\neg (pVz)$; thus $\mathcal{M}, z \not \vDash p$ and $\mathcal{M}, y \not \vDash \Box p$.
          \end{itemize}
           This countermodel is represented in the following graph:

          \begin{center}
          \begin{tikzpicture}
          \node (x) at (0,0) [circle,fill,inner sep=1.5pt,label=below:$x$] {};
          \node (y) at (4,0) [circle,fill,inner sep=1.5pt,label=below:$y$] {};
          \node (k) at (4,1) [circle,fill,inner sep=1.5pt,label=below:$\dots$] {};
          \node (z) at (4,-1) [circle,fill,inner sep=1.5pt,label=below:$z$] {};
          \node at (5,1) {\small$\mathcal{M}, \dots \vDash p$};
          \node at (5,0) {\small$\mathcal{M},y \vDash p$};
          \node at (5.0,-1) {\small$\mathcal{M}, z \not  \vDash p$};
          \draw[->] (x) -- node[above] {$R$} (k);
          \draw[->] (x) -- node[above] {$R$} (y);
          \draw[->] (y) -- node[right] {$R$} (z);
          \draw[->,Blue] (x) -- node[above] {$R$} (z);
          \end{tikzpicture}
          \end{center}

          \item[$\impliedby$] [$R$ \textit{is transitive} $\implies \langle W,R \rangle  \vDash \Box p \to \Box \Box p   ) $] \\  \textbf{RAA}: $\exists \mathcal{M} \ \mathcal{F}_* \ \exists w \in W(\mathcal{M},w \vDash \Box p \ and \ $ \underline{$ \mathcal{M},w \not \vDash \Box \Box p)$}. \\
          From $\mathcal{M},w \vDash \Box p$, we know that 
          for each world $ j$, if $wRj$ then $\mathcal{M},j \vDash p$. By transitivity, we know that each world $k$ that is seen by $j$ ($jRk$) is also seen by $w$ ($wRk$). Then for such a $k$, $\mathcal{M},k \vDash p$ and thus \underline{$\mathcal{M}, w \vDash \Box \Box p$}. A contradiction. $\lightning$
      \end{itemize}
     
     \hfill $\lhd$
      
    \item $[\mathcal{F} \vDash \mathbf{5} \iff \mathcal{F}$ is euclidean$]$, i.e. \\ $[\mathcal{F} \vDash \Diamond p \to \Box \Diamond p \iff \forall x,y,z \in W (xRy \ and \ xRz \implies yRz)]$
  \begin{itemize}
          \item[$\implies$] [$R$ \textit{not euclidean} $\implies \exists \mathcal{M} \mathcal{F}_* \ \exists w \in W(\mathcal{M},w \vDash \Diamond p \ and \  \mathcal{M},w \not \vDash \Box \Diamond p   ) $]
          
          Because $R$ is not euclidean, there are three worlds $x,y,z$  such that $x R y$ and $xRz$ but $\neg (yRz)$.
          Let $V$ be a valuation on $W$ such that $\forall w \in W(pVw \iff w=y)$ and $\mathcal{M}= \langle W,R,V \rangle$. For such $x$ and $V$, we have:
          \begin{itemize}
              \item  $\mathcal{M},x \vDash \Diamond p$ because  $ pVy$ and thus $\mathcal{M}, y \vDash p$.
              \item $\mathcal{M},x \not \vDash \Box \Diamond p$. Indeed, any world $k$ that is seen by $z$ ($zRk$)  is different from $y$ ($k \not = y$); thus $\neg (pVk)$ and $\mathcal{M},k \not \vDash p$. By the inductive definition of truth, $\mathcal{M},z \not \vDash \Diamond p$.
          \end{itemize}
         This countermodel is represented in the following graph:

          \begin{center}
          \begin{tikzpicture}
          \node (x) at (0,0) [circle,fill,inner sep=1.5pt,label=below:$x$] {};
          \node (y) at (4,1) [circle,fill,inner sep=1.5pt,label=below:$y$] {};
          \node (z) at (4,-1) [circle,fill,inner sep=1.5pt,label=below:$z$] {};
          \node (k) at (8,0) [circle,fill,inner sep=1.5pt,label=below:$k$] {};
          \node (j) at (8,-1) [circle,fill,inner sep=1.5pt,label=below:$\dots$] {};
          \node at (5.0,1) {\small$\mathcal{M}, y   \vDash  p$};
          \node at (9.5,0) {\small$\mathcal{M}, k \not  \vDash p$};
          \node at (9.5,-1) {\small$\mathcal{M}, \dots \not  \vDash p$};
          \draw[->] (x) -- node[above] {} (y);
          \draw[->, Blue] (y) -- node[above] {} (z);
          \draw[->] (x) -- node[right] {} (z);
          \draw[->] (z) -- node[above] {} (k);
          \draw[->] (z) -- node[above] {} (j);
          \end{tikzpicture}
          \end{center}

          \item[$\impliedby$] [$R$ \textit{is euclidean} $\implies \langle W,R \rangle  \vDash \Diamond p \to \Box \Diamond p   ) $] \\  \textbf{RAA}: $\exists \mathcal{M} \ \mathcal{F}_* \ \exists w \in W(\mathcal{M},w \vDash \Diamond p \ and \ $ \underline{$ \mathcal{M},w \not \vDash \Box \Diamond p)$}. \\
          From $\mathcal{M},w \vDash \Diamond p$, we know that there exists a world $ j$, such that $wRj$ and $\mathcal{M},j \vDash p$. 
          Because $\mathcal{F}$ is euclidean, we know that each world $k$ that is seen by $w$ ($wRk$) is also seen by $j$ ($jRk$). Then for each $k$, $\mathcal{M},k \vDash \Diamond p$ and thus \underline{$\mathcal{M}, w \vDash \Box \Diamond p$}. A contradiction. $\lightning$
      \end{itemize}
      \hfill $\lhd$
      
    \item$[\mathcal{F} \vDash \mathbf{B} \iff \mathcal{F}$ is symmetric$]$, i.e. \\ $[\mathcal{F} \vDash  p \to \Box \Diamond p \iff \forall x,y \in W (xRy \implies yRx)]$
    \begin{itemize}
          \item[$\implies$] [$R$ \textit{not symmetric} $\implies \exists \mathcal{M} \mathcal{F}_* \ \exists w \in W(\mathcal{M},w \vDash p \ and \  \mathcal{M},w \not \vDash \Box \Diamond p   ) $]
          
          Because $R$ is not symmetric, there exists a couple of worlds $x,y$  such that $x R y$ but $\neg (yRx)$.
          Let $V$ be a valuation such that  $\forall w \in W (pVw \iff w=x)$ and $\mathcal{M}= \langle W,R,V \rangle$. For such $x$ and $V$, we have:

          \begin{itemize}
              \item $\mathcal{M},x \vDash p$ because $pVx$.
              \item $\mathcal{M},x \not \vDash \Box \Diamond p$. Indeed  each world $k$ that is seen by $y$  ($yRk$) is different from $x$ ($k \not = x)$; thus $\neg (pVk)$ and $\mathcal{M},k \not \vDash p$ from which $\mathcal{M},y \not \vDash \Diamond p$.
          \end{itemize}
         This countermodel is represented in the following graph:

          \begin{center}
          \begin{tikzpicture}
          \node (y) at (0,0) [circle,fill,inner sep=1.5pt,label=below:$y$] {};
          \node (x) at (4,1) [circle,fill,inner sep=1.5pt,label=below:$x$] {};
          \node (k) at (4,0) [circle,fill,inner sep=1.5pt,label=below:$k$] {};
          \node (z) at (4,-1) [circle,fill,inner sep=1.5pt,label=below:$\dots$] {};
          \node at (0,-1) {\small$\mathcal{M},y \not \vDash \Diamond p$};
          \node at (5.0,0) {\small$\mathcal{M}, z \not  \vDash p$};
          \node at (5.0,-1) {\small$\mathcal{M}, \dots \not  \vDash p$};
          \node at (5,1) {\small$\mathcal{M}, x  \vDash p$};
          \draw[->] (y) -- node[above] {} (z);
          \draw[->] (y) -- node[above] {} (k);
         \draw[->, Blue, bend left=30] (y) to node[above] {} (x);
          \draw[->] (x) -- node[above] {} (y);
          \end{tikzpicture}
          \end{center}
          
     \item[$\impliedby$] [$R$ \textit{is symmetric} $\implies \langle W,R \rangle  \vDash p \to \Box \Diamond p   ) $] \\  \textbf{RAA}: $\exists \mathcal{M} \ \mathcal{F}_* \ \exists w \in W(\mathcal{M},w \vDash p \ and \ $ $ \mathcal{M},w \not \vDash \Box \Diamond p)$. \\
          From $\mathcal{M},w  \not \vDash \Box \Diamond p$, we know that there exists a world $ j$, such that $wRj$ and \underline{$\mathcal{M},j \not \vDash \Diamond p$}. 
          By symmetry, also $jRw$ and from $\mathcal{M},w \vDash p$, i follows \underline{$\mathcal{M},j \vDash \Diamond p$}.
          
    \end{itemize}
  
    \item$[\mathcal{F} \vDash \mathbf{GL} \iff \mathcal{F}$ is transitive and converse well-founded$]$, i.e. \\ $[\mathcal{F} \vDash   \Box (\Box p \to p) \to \Box p \iff $ $\mathcal{F}$ is transitive  and $ \forall X \subseteq W(X \not = \varnothing \implies \exists x \in X(\neg \exists y \in X (xRy))) ]$

     \begin{itemize}
          \item[$\implies$] [$\mathcal{F} \vDash \Box (\Box p \to p) \to \Box p \implies \mathcal{F}$ \textit{ transitive and converse well-founded}]
          \begin{enumerate}
              \item $[\mathcal{F}$ \textit{is transitive}$]$ \\ 
              By preservation of validity in modal frames by substitution, from $\mathcal{F} \vDash \Box (\Box p \to p) \to \Box p$  follows that $\mathcal{F} \vDash \Box (\Box A \to A) \to \Box A$. Then, by the definition of derivability and Lemma~\ref{thm:K_soundness} (Soundness of $\mathbb{K}$), all theorems of $\mathbb{GL}$ are valid in $\mathcal{F}$. From Lemma~\ref{lem:GLextendsK4}, $\mathbf{4}$ is a theorem of $\mathbb{GL}$ and then $\mathcal{F} \vDash \mathbf{4}$. Consequently, from the \ref{itm:4correspondence}$^{rd}$ correspondence of this Lemma~\ref{lem:Correspondence}, $\mathcal{F}$  is transitive.
              \item $[\mathcal{F} $ \textit{is converse well-founded}$]$ By contraposition:
              $[R$  not converse WF $\implies \exists \mathcal{M} \mathcal{F}_* \ \exists w \in W(\mathcal{M},w \vDash \Box(\Box p \to p) \ and \  \mathcal{M},w \not \vDash \Box p   ) ]$
              Because $R$ is not converse well-founded, there is a nonempty set $X \subseteq W$ with no R-greatest element ($ \forall x \in X (\exists y \in X (xRy))$). Let $V$ be the valuation on $W$ such that $\forall w \in W(wVp $ iff $w \not \in X)$. Let $\mathcal{M}=\langle W,R,V \rangle$.
              Because $X$ has no R-greatest element we can consider a generic $x \in X$ and a world $y \in W$ such that $xRy$, and then we have:
              \begin{itemize}
                  \item $\mathcal{M},x \not \vDash \Box p$. Indeed $x \in X$ has an R-greater world $k$ such that $k \in X \ and \ xRk$. For such a world $k$: $\neg(pVk)$; thus $\mathcal{M},k \not \vDash p$ and $xRk$.
                  \item $\mathcal{M},w \vDash \Box(\Box p \to p)$, i.e. for any world $y$ if $\mathcal{M}, y \vDash \Box p \to p$. This step is proved by contraposition $\mathcal{M}, y \not \vDash p$ implies $\mathcal{M}, y \not \vDash \Box p$. 
                  If $\mathcal{M}, y \not \vDash p$ then $\neg(pVy)$; thus $y \in X$. If $y$ is a member of $X$ then there exists a world $z \in X$ that is R-greater than $y$ ($z \in X \ and \ yRz$). Then for such a world $z$: $\neg(pVz)$; thus $\mathcal{M},z \not \vDash p$ and $yRz$. Finally, $\mathcal{M},y \not \vDash \Box p$
              \end{itemize}
              
    \item[$\impliedby$] [$R$ \textit{is transitive and conv WF} $\implies \langle W,R \rangle  \vDash \Box (\Box p \to p)\to \Box p   ) $] \\  \textbf{RAA}: $\exists \mathcal{M} \ \mathcal{F}_* \ \exists w \in W($ \underline{$\mathcal{M},w \vDash \Box (\Box p \to p) $} and  $\mathcal{M},w \not \vDash \Box p)$. \\
    We consider a set $X= \{x \in W | wRx \ and \ \mathcal{M},x \not \vDash p\}$. 
    \begin{itemize}
        \item We prove that $X$ is nonempty: Since $\mathcal{M},w \not \vDash \Box p $ there exists a world $j \in W$ such that $wRj$ and $\mathcal{M},j \not \vDash p$. This $j$ is in $X$.
        \item Because $R$ is converse well-founded and $X$ is nonempty, $X$ has an $R$-greatest element $x$ such that $x \in X$ and $\forall y \in X \neg (xRy)$.
        \item We want to prove that \underline{$\mathcal{M},w \not \vDash \Box (\Box p \to p)$}, in particular $\mathcal{M},x \not \vDash \Box p \to p$. Because $x \in X$ then $\mathcal{M},w \not \vDash p$. Moreover, we prove that $\mathcal{M},x \vDash \Box p$: given $k \in W$ such that $xRk$, we have that $k$ is $R$-greater than $x$ and then $k \not \in X$; then, by transitivity, $wRk$ and then $\mathcal{M}, k \vDash p$; thus $\mathcal{M}, x \vDash \Box p$.
    \end{itemize}

          \end{enumerate}
               
    \end{itemize}
  
\end{enumerate} 
\end{proof}

\subsection{Characteristic Frames}

The following Table~\ref{tab:char-prop} summarizes the correspondences between the axiom schemata under analysis and the property of the classes of frames in which the schemata hold.

\begin{table}[h]\label{tab:characteristic}
    \centering
    \begin{tabular}{l | l}
        \toprule
        \textbf{Schema} & \textbf{Characteristic Property} \\
        \midrule
        $\mathbf{D}\coloneq \Box A \to \Diamond A$ & \textbf{serial}: $\forall w \in W \ \exists y \in W (wRy)$\\
        \hline
        $\mathbf{T}\coloneq \Box A \to A$   & \textbf{reflexive}: $\forall w \in W (wRw)$  \\
        \hline
        $\mathbf{4}\coloneq \Box A \to \Box \Box A$ &  \textbf{transitive}: $\forall w,x,z \in W\,(wRx \ and \ xRz \implies wRz)$\\
        \hline 
        $\mathbf{5}\coloneq \Diamond A \to \Box \Diamond A$ & \textbf{euclidean}: $\forall w,x,y \in W (wRx \ and \ wRy \implies yRx)$\\
        \hline
        $\mathbf{B}\coloneq A \to \Box \Diamond A$ & \textbf{symmetric}: $\forall w,x \in W (wRx \implies xRw)$\\
         \hline
        $\mathbf{GL}\coloneq \Box (\Box A \to A) \to \Box A$ & \textbf{transitive}: $\forall w, x, y \in W (wRx \ and \ xRy \implies wRy)$\\
        &  \textbf{CWF}: $\forall X \subseteq W\,(X \not = \varnothing \implies \exists w \in X(\neg \exists x \in X(wRx)))$\\
         \bottomrule
    \end{tabular}
    \caption{Properties characteristic to axiom schemata defining the cube}\label{tab:char-prop}
\end{table}

This figure also suggests rephrasing these specific results in a general setting. To do this, we define a predicate representing the class of frames \textit{characteristic} to a given modal system.

\begin{definition}[\textbf{Characteristic Property}]\phantom{Allyouneed}
\begin{mydefinition}
    \begin{mydefinition}
        A frame property $P$ is \textbf{characteristic} to a modal system $\mathbb{S}$ when the following equivalence holds: every axiom schema in $\mathcal{S}$ is valid in a frame if and only if the frame has the given property $P$.
    \end{mydefinition}
\end{mydefinition}
\end{definition}

\begin{definition}[\textbf{Characteristic Class of frames}~\href{https://archive.softwareheritage.org/swh:1:cnt:d364de3158b6405c920115aa0c801af9af49e302;origin=https://github.com/HOLMS-lib/HOLMS;visit=swh:1:snp:2c0efd349323ed6f8067581cf1f6d95816e49841;anchor=swh:1:rev:1caf3be141c6f646f78695c0eb528ce3b753079a;path=/gen_completeness.ml;lines=44-47}{\ExternalLink}]\label{def:characteristic}\phantom{Allyouneed}
\begin{mydefinition}
    \begin{mydefinition}
     The \textbf{characteristic class} for the system $\mathbb{S}$ consists exactly of the frames satisfying the characteristic property to $\mathbb{S}$.
    \end{mydefinition}
\end{mydefinition}
\end{definition}

As the set of formulas valid in a frame is closed under our inference rules  (Theorem \ref{thm:K_soundness} on soundness of $\mathbb{K}$ w.r.t. every class of frames), it is relatively easy to provide an interesting \textit{characterisation} 
of the notion of characteristic class of frames.

\begin{lemma}[\textbf{Characterisation of the concept of characteristic}~\href{https://archive.softwareheritage.org/swh:1:cnt:d364de3158b6405c920115aa0c801af9af49e302;origin=https://github.com/HOLMS-lib/HOLMS;visit=swh:1:snp:2c0efd349323ed6f8067581cf1f6d95816e49841;anchor=swh:1:rev:1caf3be141c6f646f78695c0eb528ce3b753079a;path=/gen_completeness.ml;lines=82-105}{\ExternalLink}]\label{lem:CHAR_CAR} \phantom{All}
    \begin{mydefinition}
     Let $\mathbb{S}$  be a modal system.  Let $\mathfrak{S}$ be a class of frames. \\ $\mathfrak{S}$ is the \textnormal{\textbf{characteristic  class for}} $\mathbb{S}$ if and only if  $\mathfrak{S}$ consists exactly of the frames in which every theorem of $\mathbb{S}$ is valid.
    \end{mydefinition}
\end{lemma}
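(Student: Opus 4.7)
The plan is to reduce the biconditional on $\mathfrak{S}$ to a single set-equality: writing $\mathfrak{S}_{\text{ax}} \coloneq \{\mathcal{F} : \mathcal{F} \vDash B \text{ for every instance } B \text{ of a schema in } \mathcal{S}\}$ and $\mathfrak{S}_{\text{thm}} \coloneq \{\mathcal{F} : \mathcal{F} \vDash A \text{ whenever } \mathcal{S} \vdash A\}$, it suffices to show $\mathfrak{S}_{\text{ax}} = \mathfrak{S}_{\text{thm}}$. Indeed, by Definition~\ref{def:characteristic}, $\mathfrak{S}$ is characteristic for $\mathbb{S}$ exactly when $\mathfrak{S} = \mathfrak{S}_{\text{ax}}$, and the right-hand side of the lemma asserts $\mathfrak{S} = \mathfrak{S}_{\text{thm}}$; so once the two collections are shown equal, the equivalence is immediate.

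For the inclusion $\mathfrak{S}_{\text{thm}} \subseteq \mathfrak{S}_{\text{ax}}$, I would observe that by the first two clauses of Definition~\ref{def:deducibility}, every axiom instance is deducible from $\varnothing$ in any extension, so any axiom of $\mathcal{S}$ is in particular a theorem; any frame validating all theorems therefore validates all axioms.

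For the converse inclusion $\mathfrak{S}_{\text{ax}} \subseteq \mathfrak{S}_{\text{thm}}$ — the substantive direction — I would fix $\mathcal{F} \in \mathfrak{S}_{\text{ax}}$ and proceed by induction on the inductive definition of the deducibility predicate $\mathcal{S}.\varnothing \vdash A$ (Definition~\ref{def:deducibility}) to show $\mathcal{F} \vDash A$. The base cases split into three subcases: instances of axiom schemata of $\mathsf{K}$ are valid in every frame by Theorem~\ref{thm:K_soundness}; instances of schemata in $\mathcal{S}$ are valid in $\mathcal{F}$ by the hypothesis $\mathcal{F} \in \mathfrak{S}_{\text{ax}}$; the hypothesis clause is vacuous since we are looking at derivations from $\varnothing$. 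For the inductive step of \texttt{MP}, if $\mathcal{F} \vDash B \to A$ and $\mathcal{F} \vDash B$ then truth of $B \to A$ and $B$ at every world of every model based on $\mathcal{F}$ yields truth of $A$ everywhere, as already argued in the proof of Theorem~\ref{thm:K_soundness}. For the inductive step corresponding to \texttt{RN} (the last clause of Definition~\ref{def:deducibility}), if $\mathcal{S}.\varnothing \vdash C$ then by induction $\mathcal{F} \vDash C$, and the preservation of frame-validity under necessitation, again proved in Theorem~\ref{thm:K_soundness}, gives $\mathcal{F} \vDash \Box C$.

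The main obstacle — or, rather, the point that must be handled with care — is the \texttt{RN}-clause: the predicate is formulated so that $\Box C$ is derivable from \emph{any} set of hypotheses whenever $C$ is derivable from $\varnothing$, and one must remember that the relevant preservation property for necessitation is validity in a frame (not truth at a world), which is precisely what the inductive hypothesis delivers. Since Theorem~\ref{thm:K_soundness} has already done this verification for the rules of $\mathsf{K}$, the present lemma ultimately amounts to packaging that argument as a parametric induction on deducibility in the extension of $\mathsf{K}$ by $\mathcal{S}$.
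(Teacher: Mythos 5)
Your proposal is correct and follows essentially the same route as the paper: reduce the claim to the equivalence ``$\mathcal{F}$ validates every instance of a schema in $\mathcal{S}$ iff $\mathcal{F}$ validates every theorem of $\mathbb{S}$,'' prove the substantive direction by induction on the inductive definition of $\mathcal{S}.\varnothing \vdash A$ (using Theorem~\ref{thm:K_soundness} for the $\mathsf{K}$-axioms and the rule cases, and the hypothesis on $\mathcal{F}$ for the $\mathcal{S}$-instances), and observe that the converse is immediate since axioms are theorems. Your treatment is in fact slightly more careful than the paper's, in that you explicitly note the vacuity of the hypothesis clause over $\varnothing$ and that the \texttt{RN} step requires preservation of frame-validity rather than of truth at a world.
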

\begin{proof}
    Let $\mathcal{F}$ a frame. We want to prove that $\mathcal{F}$ belongs to the class of characteristic frames \textbf{iff} every theorem of $\mathcal{S}$  is valid in $\mathcal{F}$.
    \[ [\forall A \in \mathbf{Form}_{\Box} (A \in \mathcal{S} \implies \mathcal{F} \vDash A)\  \mathbf{iff} \ \forall A \in \mathbf{Form}_{\Box} (\mathcal{S} \vdash A \implies \mathcal{F} \vDash A)]\]
         
    \begin{itemize}
        \item[$\implies$]  This verse is proved by induction on the recursive definition of $\mathcal{S} \vdash A$:
        \begin{itemize}
            \item If $A$ is either an instance of an axiom schema for $\mathsf{K}$ or $A$ follows from an inference rule (\texttt{MP}, \texttt{RN} and \texttt{SUB}), Theorem \ref{thm:K_soundness} proves the claim.
            \item  If, instead, $A$ is an instance of an axiom schema in $\mathcal{S}$, the antecedent of the implication proves the claim.
        \end{itemize}
        \item [$\impliedby$] This verse obviously follows. If every theorem of $\mathcal{S}$ is valid in a certain frame, then its axioms- which are also theorems- are valid in that frame. 
    \end{itemize}
\end{proof}

\subsection{Appropriate Frames}

As announced, correspondence theory plays an important role in proving the adequacy theorem. If such a characterisation immediately guarantees the soundness of normal logics w.r.t. characteristic frames, we will focus instead on \textit{finite characteristic} frames to prove completeness. To better handle these frames, we introduce the concept of \textit{appropriate class of frames}.

\begin{definition}[\textbf{Appropriate class of frames}~\href{https://archive.softwareheritage.org/swh:1:cnt:d364de3158b6405c920115aa0c801af9af49e302;origin=https://github.com/HOLMS-lib/HOLMS;visit=swh:1:snp:2c0efd349323ed6f8067581cf1f6d95816e49841;anchor=swh:1:rev:1caf3be141c6f646f78695c0eb528ce3b753079a;path=/gen_completeness.ml;lines=111-114}{\ExternalLink}]\label{def:appropriate_class}\phantom{Allyouneed}
\begin{mydefinition}
\begin{mydefinition2}
    A class of \textbf{finite} frame $\mathfrak{S}$ is said to be \textbf{appropriate} to a system $\mathbb{S}$ iff \\ $\mathfrak{S}$ consists exactly of the frames in which any theorem of $\mathbb{S}$ is valid.
\end{mydefinition2}
\end{mydefinition}
\end{definition}

By our characterisation in Lemma~\ref{lem:CHAR_CAR}, it is immediate to prove that the class of frames appropriate to $\mathbb{S}$ coincides with its characteristic class \textit{restricted to finite frames}. 

\begin{lemma}[\textbf{Characterisation of the concept of appropriate}~\href{https://archive.softwareheritage.org/swh:1:cnt:d364de3158b6405c920115aa0c801af9af49e302;origin=https://github.com/HOLMS-lib/HOLMS;visit=swh:1:snp:2c0efd349323ed6f8067581cf1f6d95816e49841;anchor=swh:1:rev:1caf3be141c6f646f78695c0eb528ce3b753079a;path=/gen_completeness.ml;lines=130-135}{\ExternalLink}]\label{lem:APPR_CAR} \phantom{All}
    \begin{mydefinition}
      Let $\mathfrak{S}$ be a class of frames. Let $\mathcal{S}$  be a modal system. $\mathfrak{S}$ is the \textnormal{\textbf{appropriate  class for}} $\mathcal{S}$ if and only if $\mathfrak{S}$ is the characteristic class for $\mathcal{S}$ restricted to finite frames.
    \end{mydefinition}
\end{lemma}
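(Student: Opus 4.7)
The plan is to show that the two descriptions of the class coincide by unfolding the definitions and invoking the previous characterisation lemma. Both notions are essentially ``the collection of frames in which every theorem of $\mathbb{S}$ is valid'', with the only difference being that \emph{appropriate} builds in the finiteness restriction by hand, while \emph{characteristic restricted to finite frames} imposes it externally.

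First I would unfold Definition~\ref{def:appropriate_class}: $\mathfrak{S}$ is appropriate for $\mathbb{S}$ iff $\mathfrak{S}$ is a class of finite frames and consists exactly of those finite frames in which every theorem of $\mathbb{S}$ is valid. Next, I would apply Lemma~\ref{lem:CHAR_CAR}: the characteristic class for $\mathbb{S}$ is exactly the class of \emph{all} frames in which every theorem of $\mathbb{S}$ is valid. Intersecting this class with the collection of finite frames therefore yields exactly the finite frames in which every theorem of $\mathbb{S}$ is valid.

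For the ($\Rightarrow$) direction I would take an arbitrary finite frame $\mathcal{F}$ and observe that, if $\mathfrak{S}$ is appropriate, then $\mathcal{F} \in \mathfrak{S}$ iff every theorem of $\mathbb{S}$ is valid in $\mathcal{F}$, which by Lemma~\ref{lem:CHAR_CAR} is iff $\mathcal{F}$ belongs to the characteristic class. Hence $\mathfrak{S}$ coincides with the characteristic class restricted to finite frames. The ($\Leftarrow$) direction is the same equivalence read in reverse: if $\mathfrak{S}$ is the characteristic class restricted to finite frames, then by construction $\mathfrak{S}$ contains only finite frames, and for such $\mathcal{F}$ membership is equivalent to the validity of all theorems of $\mathbb{S}$, which is exactly the definition of appropriateness.

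There is no real obstacle: the lemma is essentially a corollary of Lemma~\ref{lem:CHAR_CAR} together with the definitional addition of finiteness. The only care needed is to remember that ``appropriate'' \textbf{restricts} to finite frames \emph{a priori}, so uniqueness is automatic once the characteristic case is settled.
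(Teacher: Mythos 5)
Your proposal is correct and follows exactly the route the paper takes: the paper itself only remarks that the result is ``immediate'' from Lemma~\ref{lem:CHAR_CAR} once one unfolds Definition~\ref{def:appropriate_class}, which is precisely your argument of intersecting the characteristic class (all frames validating every theorem of $\mathbb{S}$) with the finite frames. Nothing is missing; the only subtlety you correctly flag is that ``appropriate'' already builds finiteness into the definition, so both directions reduce to the same equivalence.
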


Correspondence lemmata~(\ref{lem:Correspondence}), together with the soundness of $\mathbb{K}$ w.r.t every class of frames~(\ref{thm:K_soundness}), and the characterisations of characteristic frames~(\ref{lem:CHAR_CAR}) and appropriate frames~(\ref{lem:APPR_CAR}) allow us to arrange all the correspondence results in the Table~\ref{tab:char-appr} below. In particular, this table will stress the characteristic properties and the characteristic/appropriate class for each normal logic under analysis. As usual, we will use \textit{gothic font} for the classes of appropriate frames, e.g. $\mathfrak{RF}$ is the intersection between the characteristic class of frames $\mathfrak{R}$ and the class of finite frames $\mathfrak{F}$.

\begin{table}[h]
    \centering
    \begin{tabular}{l | l | l | l | l}
        \toprule
        \textbf{Logic} & \textbf{Specific axioms}& \textbf{Char. Properties} & \textbf{Char. Cl.} & \textbf{App. Cl.} \\
        \midrule
        $\mathbb{K}$  & $\varnothing$  & None & $\mathfrak{All}$~\href{https://archive.softwareheritage.org/swh:1:cnt:d364de3158b6405c920115aa0c801af9af49e302;origin=https://github.com/HOLMS-lib/HOLMS;visit=swh:1:snp:2c0efd349323ed6f8067581cf1f6d95816e49841;anchor=swh:1:rev:1caf3be141c6f646f78695c0eb528ce3b753079a;path=/gen_completeness.ml;lines=15-16}{\ExternalLink} & $\mathfrak{F}$~\href{https://archive.softwareheritage.org/swh:1:cnt:d364de3158b6405c920115aa0c801af9af49e302;origin=https://github.com/HOLMS-lib/HOLMS;visit=swh:1:snp:2c0efd349323ed6f8067581cf1f6d95816e49841;anchor=swh:1:rev:1caf3be141c6f646f78695c0eb528ce3b753079a;path=/gen_completeness.ml;lines=24-25}{\ExternalLink}\\
        \hline
        $\mathbb{D}$ & $\mathcal{D} = \{ \mathbf{D} \}$  & Seriality & $\mathfrak{Ser}$ & $\mathfrak{SerF}$\\
        \hline
        $\mathbb{T}$  &  $\mathcal{T} = \{ \mathbf{T} \}$ & Reflexivity & $\mathfrak{R}$~\href{https://archive.softwareheritage.org/swh:1:cnt:1352cb294724b2251b9346657432700ecbed10d2;origin=https://github.com/HOLMS-lib/HOLMS;visit=swh:1:snp:2c0efd349323ed6f8067581cf1f6d95816e49841;anchor=swh:1:rev:1caf3be141c6f646f78695c0eb528ce3b753079a;path=/t_completeness.ml;lines=29-34}{\ExternalLink} & $\mathfrak{RF}$~\href{https://archive.softwareheritage.org/swh:1:cnt:1352cb294724b2251b9346657432700ecbed10d2;origin=https://github.com/HOLMS-lib/HOLMS;visit=swh:1:snp:2c0efd349323ed6f8067581cf1f6d95816e49841;anchor=swh:1:rev:1caf3be141c6f646f78695c0eb528ce3b753079a;path=/t_completeness.ml;lines=88-94}{\ExternalLink} \\
        \hline
        $\mathbb{K}\mathbf{4}$ & $\mathcal{K}4 = \{ \mathbf{4} \}$ & Transitivity & $\mathfrak{T}$~\href{https://archive.softwareheritage.org/swh:1:cnt:32869a1df338d91c4cd4a0cb9e73fb4f1be29991;origin=https://github.com/HOLMS-lib/HOLMS;visit=swh:1:snp:2c0efd349323ed6f8067581cf1f6d95816e49841;anchor=swh:1:rev:1caf3be141c6f646f78695c0eb528ce3b753079a;path=/k4_completeness.ml;lines=35-40}{\ExternalLink} & $\mathfrak{TF}$~\href{https://archive.softwareheritage.org/swh:1:cnt:32869a1df338d91c4cd4a0cb9e73fb4f1be29991;origin=https://github.com/HOLMS-lib/HOLMS;visit=swh:1:snp:2c0efd349323ed6f8067581cf1f6d95816e49841;anchor=swh:1:rev:1caf3be141c6f646f78695c0eb528ce3b753079a;path=/k4_completeness.ml;lines=96-102}{\ExternalLink}  \\
        \hline 
         $\mathbb{S}\mathbf{4}$ & $\mathcal{S}4 = \{ \mathbf{T}; \mathbf{4}\}$ & Reflexivity-Transitivity &$\mathfrak{R} \cap \mathfrak{T} $ & $\mathfrak{RF} \cap \mathfrak{TF}$\\
         \hline 
         $\mathbb{B}$ & $\mathcal{B} = \{ \mathbf{T}; \mathbf{B}\}$ &  Reflexivity-Symmetry & $\mathfrak{R} \cap \mathfrak{Sym} $ & $\mathfrak{RF} \cap \mathfrak{SymF}$  \\
         \hline
         $\mathbb{S}\mathbf{5}$ & $\mathcal{S}5 = \{ \mathbf{T}; \mathbf{5}\}$ & Reflexivity-Euclideanity & $\mathfrak{R} \cap \mathfrak{E} $ & $\mathfrak{RF} \cap \mathfrak{EF}$\\
        \hline
         $\mathbb{GL}$ &  $\mathcal{GL} = \{ \mathbf{GL}\}$ & Transitivty- & $\mathfrak{T} \cap \mathfrak{NT} $~\href{https://archive.softwareheritage.org/swh:1:cnt:52c9de12454731a9a291b06bc750c0d2d14e1fb4;origin=https://github.com/HOLMS-lib/HOLMS;visit=swh:1:snp:2c0efd349323ed6f8067581cf1f6d95816e49841;anchor=swh:1:rev:1caf3be141c6f646f78695c0eb528ce3b753079a;path=/gl_completeness.ml;lines=26-32}{\ExternalLink} & $\mathfrak{ITF}$~\href{https://archive.softwareheritage.org/swh:1:cnt:52c9de12454731a9a291b06bc750c0d2d14e1fb4;origin=https://github.com/HOLMS-lib/HOLMS;visit=swh:1:snp:2c0efd349323ed6f8067581cf1f6d95816e49841;anchor=swh:1:rev:1caf3be141c6f646f78695c0eb528ce3b753079a;path=/gl_completeness.ml;lines=141-148}{\ExternalLink} \\
         & & Converse well-foundness & &   \\
         \bottomrule
    \end{tabular}\\
    \footnotesize{The class of reflexive and euclidean frames coincides with the class of equivalent frames, by~\ref{itm:binary_4}$^{th}$  point of Lemma~\ref{lem:binary}.}\caption{Classes characteristic and appropriate to the modal cube}\label{tab:char-appr}
\end{table}

After proving that $\mathfrak{ITF}$ is appropriate to $\mathbb{GL}$, we should remark three important facts that are enlightened by the results displayed in this table. 

\newpage

\begin{remark}[: $\mathfrak{ITF}$ is appropriate to $\mathbb{GL}$]
While for any the other modal systems we report in the column ``Appropriate class'' its characteristic class restricted to finite frames, for $\mathbb{GL}$ we report $\mathfrak{ITF}$--the class of finite transitive irreflexive frames.
The characteristic class to $\mathbb{GL}$ is constituted instead of transitive converse well-founded frames. We should prove that the restriction of this last class to finite frames coincides with $\mathfrak{ITF}$.
\end{remark} 

\begin{lemma}[\textbf{Irreflexive and converse well-founded frames}~\href{https://archive.softwareheritage.org/swh:1:cnt:52c9de12454731a9a291b06bc750c0d2d14e1fb4;origin=https://github.com/HOLMS-lib/HOLMS;visit=swh:1:snp:2c0efd349323ed6f8067581cf1f6d95816e49841;anchor=swh:1:rev:1caf3be141c6f646f78695c0eb528ce3b753079a;path=/gl_completeness.ml;lines=169-178}{\ExternalLink}]\label{lem:ITF-transnt}\phantom{Allyouneed}
\begin{mydefinition}
    Let $\langle W,R\rangle$ be a \textbf{finite  transitive} frame ($\langle W,R\rangle \in \mathfrak{TF}$). \\ Then $\langle W,R\rangle$ is \textbf{converse  well-founded} iff $\langle W,R\rangle$ is \textbf{irreflexive}.
\end{mydefinition}
\end{lemma}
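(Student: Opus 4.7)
The plan is to prove the two directions separately, exploiting the convenient fact that converse well-foundedness on the singleton subset already forces irreflexivity, while the converse direction is where finiteness and transitivity do the real work.

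For the forward implication (converse well-founded $\Rightarrow$ irreflexive), I would not even use transitivity. Suppose for contradiction that $wRw$ for some $w \in W$. Consider $X = \{w\}$, which is a nonempty subset of $W$. By converse well-foundedness, $X$ must contain an $R$-greatest element; but the only candidate is $w$ itself, and $wRw$ with $w \in X$ witnesses that $w$ has an $R$-successor inside $X$, a contradiction.

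For the backward implication (finite, transitive, irreflexive $\Rightarrow$ converse well-founded), take any nonempty $X \subseteq W$ and assume for contradiction that $X$ has no $R$-greatest element, i.e.\ for every $x \in X$ there exists $y \in X$ with $xRy$. Starting from any $x_0 \in X$, I would iteratively pick $x_{n+1} \in X$ such that $x_n R x_{n+1}$, producing an infinite sequence $x_0, x_1, x_2, \dots$ of elements of $X$. Since $W$ (hence $X$) is finite, by the pigeonhole principle there exist indices $i < j$ with $x_i = x_j$. Along the chain $x_i R x_{i+1} R \cdots R x_j$, repeated application of transitivity yields $x_i R x_j$, i.e.\ $x_i R x_i$, contradicting irreflexivity.

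The main obstacle, and the only subtle ingredient, is the existence of the infinite sequence: it requires (dependent) choice to actually pick each $x_{n+1}$, together with the finiteness of $W$ to force a repetition. Everything else is a one-step argument. In a formalised setting such as HOLMS, this is typically handled by iterating the choice functional on the finite carrier, so I would expect that to be the place where the HOL Light formalisation spends most of its effort; the transitivity-plus-pigeonhole collapse into a reflexive loop is then a routine induction on the length of the chain $x_i R \cdots R x_j$.
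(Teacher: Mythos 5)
Your proposal is correct and follows essentially the same route as the paper: the forward direction is the standard singleton argument showing converse well-foundedness alone forces irreflexivity, and the backward direction builds a chain inside a counterexample set $X$ and uses transitivity plus irreflexivity against finiteness. The only cosmetic difference is that you detect a repetition in an infinite sequence via pigeonhole and collapse it to a reflexive loop, whereas the paper argues that all elements of any such chain are distinct and hence $X$ would have to be infinite — two phrasings of the same contradiction.
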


\begin{proof}
    \begin{itemize}
    \item[$\implies$] Follows from the \ref{itm:CWF_IR}$^{th}$ point of Theorem~\ref{lem:binary}, which shows that any converse well-founded relation is irreflexive.
    
        \item[$\impliedby$] Suppose that $R$ is irreflexive. Let $x_1, \dots, x_n$ a sequence of elemets of $W$ such that for all $i < n$, then $x_iRx_{i+1}$. 
        \begin{claim}
            If $i < j$ then $x_i \not = x_j$
            \begin{claimproof}
                NAA: Suppose that $x_i = x_j$. \\
                By $i < j$  and transitivity,$x_iRx_j$ contra irreflexivity. $\lightning$ \hfill $\lhd$
            \end{claimproof}
        \end{claim}
        RAA: Suppose that $R$ is not converse well-founded, i.e. there exists $X$ a nonempty subset of $W$ such that $\forall x \in X (\exists y \in X (xRy))$. Then it is easy to prove by induction that for each natural number $n$ there exists a sequence $x_1, \dots, x_n$ such that for all $i < n$, $x_iRx_{i+1}$. Since $\forall i < n (i<i+1)$ then all the $x_i$ in the sequence are distinct.
        Therefore for each $n\in \mathbb{N}$, we prove that $X$ has at least $n$ elements. But $X \subseteq W$, thus \underline{W is infinite}. A contradiction. $\lightning$
    \end{itemize}
\end{proof}
\begin{remark}[: Strengthen the theory, restricting the class of appropriate frames]
From the addition of new axioms, on the one hand,
we strengthen the theory under analysis, which proves more theorems; on the other hand, this theory will be sound for a more restricted class of frames. In other words, while the set of provable theorems expands, the range of models that validate our axioms becomes narrower.    
\end{remark}

\begin{remark}[: Expressivity Power]\label{rmk:expressivity_power}
As shown in Table~\ref{tab:characteristic}, all the axiom schemata under analysis can be characterised by a property of accessibility relations.
All these characteristic properties are expressible in \textit{first-order language}, except converse well-foundness which requires \textit{second-order language}.

Then, identifying these correspondences allows us to reason about modal theoremhood in many normal systems by analysing validity in frames characterised by classical first-order formulas, which we are more used to dealing with.

\end{remark}
By contrast, \textit{converse well-foundness} clarifies that the modal language has a major definability power than the first-order language. Transitive converse well-foundness is, indeed, a property expressible in modal language ($\exists A \in \mathbf{Form}_{\Box}(\mathcal{F}\vDash A \iff \mathcal{F}$ is transitive and converse well-founded). e.g.  $A= \mathbf{GL}$) and not in first-order language ($\neg \exists A \in \mathbf{FOL}(\mathcal{F}\vDash A \iff \mathcal{F}$ is transitive and converse well-founded)). It is also important to notice that modal logic is less expressive than second-order logic, it is indeed provable ~\cite[\S 4]{boolos1995logic} that both first-order language and modal language cannot express converse well-foundness.

\begin{remark}[: Proper inclusions in the modal cube]\label{rmk:proper_inclusions}
    Thanks to correspondence theory, it is easier to prove that the inclusions between modal systems in the diagram~\ref{fig:properinclusions} are proper. As observed in the previous Remark~\ref{rmk:expressivity_power}, dealing with inclusions of classes characterised by first-order properties is simpler than reasoning with classes of frames in which some modal schemata are valid.
    
    
    For example, it is easy to build up a reflexive, symmetric, non-euclidean frame (e.g. $\langle W = \{w,r,y\}; R= \{(w,w);(y,y);(x,x);(w,y);(y,w);(w,z);(z,w) \} \rangle$). Such a ``counterframe'' shows that from \underline{$\mathfrak{RF}\cap\mathfrak{SymF} \not \subseteq \mathfrak{RF}\cap\mathfrak{EF}$} follows that inclusion between the two systems $\mathbb{B} \subseteq \mathbb{S}\mathbf{5}$ , proved axiomatically, is proper, $\mathbb{S}\mathbf{5} \not \subseteq \mathbb{B}$. 

    By definition, the class appropriate to $\mathbb{S}$
    consists exactly of the finite frames in which the theorems of $\mathbb{S}$ are valid.

    By \textit{negatio ad absurdum}, we assume that $\mathbb{S}\mathbf{5} \subseteq \mathbb{B}$. But then, by the proved inclusion $ \mathbb{B} \subseteq \mathbb{S}\mathbf{5}$, the theorem proved by the two systems are the same, and then the finite frames in which these theorems are valid coincide: 
    \underline{$\mathfrak{RF}\cap\mathfrak{SymF} \not = \mathfrak{RF}\cap\mathfrak{EF}$}. A contradiction. $\lightning$
   
\end{remark}

\section{Adequacy Theorems}
Logical deductive systems are designed to establish which valid formulas can be stated within a given language.  Once logicians have developed such a deductive system, they must verify that it checks two fundamental requirements with respect to a given semantics:
\begin{itemize}
    \item \textbf{Soundness}, ensures that every theorem derived within the system is valid;
    \item \textbf{Completeness},  every valid formula can be proved within the system.
\end{itemize}
Demonstrating \textbf{adequacy} of formal systems-their soundness and completeness-is a central task in logical inquiry. In fact, the adequacy theorems establish a correspondence between the syntactic characterisation of logical consequence via $\vdash$ and its semantic characterisations via $\vDash$.
Soundness and completeness for modal systems w.r.t~relational frames have the following aspects:
\newpage
\begin{definition}[\textbf{Soundness w.r.t relational frames}]\phantom{Allyouneedisove}
    \begin{mydefinition}
    \begin{mydefinition2}
    A modal system $\mathbb{S}$ is \textbf{sound} with respect to a class of frames $\mathfrak{S}$ iff \\ for all $A \in \mathbf{Form}_{\Box}$,
    $\vdash_{\mathbb{S}} A \ \implies \ \mathfrak{S} \vDash A$
    \end{mydefinition2}
    \end{mydefinition}
\end{definition}

\begin{definition}[\textbf{Completeness w.r.t relational frames}]\phantom{Allyouneedisove}
    \begin{mydefinition}
    \begin{mydefinition2}
    A modal system $\mathbb{S}$ is \textbf{complete} with respect to a class of frames $\mathfrak{S}$ iff \\ for all $A \in \mathbf{Form}_{\Box}$,
    $\mathfrak{S} \vDash A \ \implies \ \vdash_{\mathbb{S}} A$
    \end{mydefinition2}
    \end{mydefinition}
\end{definition}

In particular, we want to prove the adequacy theorems for normal systems w.r.t. their appropriate classes of frames.

\subsection{Soundness}\label{sub:soundness}

We should observe that the identification of appropriate classes for our normal logics provides per sè a proof of soundness for each axiomatic calculus w.r.t~a characteristic class of finite relational frames. 

Similarly, it is possible to achieve a similar result for frames that are not necessarily finite and prove soundness w.r.t.~\textit{characteristic} frames. This proof is obtained by considering the fact that the characteristic classes are closed under the deduction rules of our axiomatic proof systems as stated by the characterisation of the characteristic predicate in Lemma~\ref{lem:CHAR_CAR}.

Soundness can then be stated and proven parametrically as follows:

\begin{theorem}[\textbf{Parametric soundness}~\href{https://archive.softwareheritage.org/swh:1:cnt:d364de3158b6405c920115aa0c801af9af49e302;origin=https://github.com/HOLMS-lib/HOLMS;visit=swh:1:snp:2c0efd349323ed6f8067581cf1f6d95816e49841;anchor=swh:1:rev:1caf3be141c6f646f78695c0eb528ce3b753079a;path=/gen_completeness.ml;lines=71-80}{\ExternalLink}]\label{thm:parametric_soundness}\phantom{Allyouneedisloveloveisallyouneed}
\begin{mydefinition}
     Let $\mathbb{S}$ be a normal modal system. Let $\mathcal{S}$ be the set of specific axiom schemata for  $\mathbb{S}$.  Let $\mathfrak{S}$ be the characteristic class of frames for $\mathbb{S}$. \\
     For every $A \in \mathbf{Form}_{\Box}$, if  $\mathcal{S} \vdash A $ then  $\ \mathfrak{S} \vDash A$.
\end{mydefinition}
\end{theorem}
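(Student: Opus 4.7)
The statement is essentially an immediate corollary of the characterisation Lemma~\ref{lem:CHAR_CAR} combined with the definition of the characteristic class of frames. I would unfold the goal: it suffices to fix an arbitrary frame $\mathcal{F} \in \mathfrak{S}$ and an arbitrary modal formula $A$ with $\mathcal{S} \vdash A$, and to show $\mathcal{F} \vDash A$; since $\mathcal{F}$ is arbitrary in $\mathfrak{S}$, this yields $\mathfrak{S} \vDash A$ by the definition of validity in a class of frames.

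By Definition~\ref{def:characteristic}, $\mathcal{F}$ belonging to the characteristic class $\mathfrak{S}$ means exactly that every instance of each axiom schema in $\mathcal{S}$ is valid in $\mathcal{F}$, i.e.\ $\forall B \in \mathbf{Form}_{\Box}\,(B \in \mathcal{S} \implies \mathcal{F} \vDash B)$. This is precisely the antecedent of the $\implies$ direction of Lemma~\ref{lem:CHAR_CAR}, whose conclusion is $\forall B \in \mathbf{Form}_{\Box}\,(\mathcal{S} \vdash B \implies \mathcal{F} \vDash B)$. Instantiating the conclusion at $B = A$ and invoking the hypothesis $\mathcal{S} \vdash A$ gives $\mathcal{F} \vDash A$, as required.

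The heavy lifting has in fact already been carried out in the proof of Lemma~\ref{lem:CHAR_CAR}: it proceeded by induction on the inductive definition of $\mathcal{S} \vdash A$ (Definition~\ref{def:deducibility}), appealing to Theorem~\ref{thm:K_soundness} to discharge the $\mathsf{K}$-axiom base case together with the \texttt{MP}, \texttt{RN}, and \texttt{SUB} inductive steps, and to the characteristic-class assumption on $\mathcal{F}$ to discharge the specific-axiom base case. There is therefore no genuine obstacle here; the main virtue of stating the result separately is modularity: once a characteristic property for the extension schemas of $\mathcal{S}$ has been established (as in Lemma~\ref{lem:Correspondence}), parametric soundness follows uniformly for the whole family of normal systems under analysis, without any system-by-system reproof.
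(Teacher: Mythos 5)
Your proof is correct and follows essentially the same route as the paper: the paper's proof also reduces the statement to an immediate application of Lemma~\ref{lem:CHAR_CAR}, concluding that every frame in the characteristic class validates every theorem of $\mathbb{S}$, hence $\mathfrak{S} \vDash A$. Your write-up merely makes explicit which direction of the characterisation is invoked and where the inductive work was already done, which is a faithful unpacking rather than a different argument.
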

\begin{proof}
    We assume that $\mathcal{S} \vdash A $.
    $\mathfrak{S}$ is the characteristic  class for $\mathbb{S}$, then by Lemma~\ref{lem:CHAR_CAR} $\mathfrak{S}$ consists exactly of the frames in which every theorem of $\mathbb{S}$ is valid. Then every frame $\mathcal{F}$ in $\mathfrak{S}$ proves $A$ and thus $\ \mathfrak{S} \vDash A$.
\end{proof}

From this general result, we obtain as corollaries the proofs of soundness for all the axiom systems formalisable by our deducibility relation. 

\begin{theorem}[\textbf{Soundness within the modal cube}]\phantom{All}
    \begin{mydefinition}
        \begin{enumerate}
            \item For every $A \in \mathbf{Form}_{\Box}$, if  $\mathcal{\varnothing} \vdash A $ then  $\ \mathfrak{All} \vDash A$.~\href{https://archive.softwareheritage.org/swh:1:cnt:ae230138ff15476c8dab9e32606bceca7168285b;origin=https://github.com/HOLMS-lib/HOLMS;visit=swh:1:snp:2c0efd349323ed6f8067581cf1f6d95816e49841;anchor=swh:1:rev:1caf3be141c6f646f78695c0eb528ce3b753079a;path=/k_completeness.ml;lines=21-25}{\ExternalLink}
            \item For every $A \in \mathbf{Form}_{\Box}$, if  $\mathcal{D} \vdash A $ then  $\ \mathfrak{Ser} \vDash A$.
            \item For every $A \in \mathbf{Form}_{\Box}$, if  $\mathcal{T} \vdash A $ then  $\ \mathfrak{R} \vDash A$.~\href{https://archive.softwareheritage.org/swh:1:cnt:1352cb294724b2251b9346657432700ecbed10d2;origin=https://github.com/HOLMS-lib/HOLMS;visit=swh:1:snp:2c0efd349323ed6f8067581cf1f6d95816e49841;anchor=swh:1:rev:1caf3be141c6f646f78695c0eb528ce3b753079a;path=/t_completeness.ml;lines=78-82}{\ExternalLink}
            \item For every $A \in \mathbf{Form}_{\Box}$, if  $\mathcal{K}4 \vdash A $ then  $\ \mathfrak{T} \vDash A$.~\href{https://archive.softwareheritage.org/swh:1:cnt:32869a1df338d91c4cd4a0cb9e73fb4f1be29991;origin=https://github.com/HOLMS-lib/HOLMS;visit=swh:1:snp:2c0efd349323ed6f8067581cf1f6d95816e49841;anchor=swh:1:rev:1caf3be141c6f646f78695c0eb528ce3b753079a;path=/k4_completeness.ml;lines=86-90}{\ExternalLink}
            \item For every $A \in \mathbf{Form}_{\Box}$, if  $\mathcal{S}4 \vdash A $ then  $\ \mathfrak{R}\cap\mathfrak{T} \vDash A$.
            \item For every $A \in \mathbf{Form}_{\Box}$, if  $\mathcal{B} \vdash A $ then  $\ \mathfrak{R}\cap\mathfrak{Sym} \vDash A$.
            \item For every $A \in \mathbf{Form}_{\Box}$, if  $\mathcal{S}5 \vdash A $ then  $\ \mathfrak{R}\cap\mathfrak{E} \vDash A$.
            \item For every $A \in \mathbf{Form}_{\Box}$, if  $\mathcal{GL} \vdash A $ then  $\ \mathfrak{NT} \vDash A$.~\href{https://archive.softwareheritage.org/swh:1:cnt:52c9de12454731a9a291b06bc750c0d2d14e1fb4;origin=https://github.com/HOLMS-lib/HOLMS;visit=swh:1:snp:2c0efd349323ed6f8067581cf1f6d95816e49841;anchor=swh:1:rev:1caf3be141c6f646f78695c0eb528ce3b753079a;path=/gl_completeness.ml;lines=131-135}{\ExternalLink}
        \end{enumerate}
    \end{mydefinition}
\end{theorem}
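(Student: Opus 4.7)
The plan is to observe that every statement (1)--(8) is a direct instance of Parametric Soundness (Theorem~\ref{thm:parametric_soundness}), so the proof reduces to checking that the class of frames appearing in each conclusion is indeed the characteristic class for the system in question. This identification is exactly what the Correspondence Lemmata (Lemma~\ref{lem:Correspondence}) together with the characterisation in Lemma~\ref{lem:CHAR_CAR} supply.

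First I would handle the base case (1): $\mathbb{K}$ has no specific axioms ($\mathcal{S} = \varnothing$), so every frame trivially validates the (empty) set of specific axioms; by Lemma~\ref{lem:CHAR_CAR} combined with the soundness of $\mathsf{K}$ in Theorem~\ref{thm:K_soundness}, the class $\mathfrak{All}$ is characteristic to $\mathbb{K}$, and the thesis follows from Theorem~\ref{thm:parametric_soundness}. Next, for the single-axiom systems (2), (3), (4), and (8), I would simply read off the characteristic property from the appropriate item of Lemma~\ref{lem:Correspondence} --- respectively seriality, reflexivity, transitivity, and transitivity-plus-converse-well-foundness --- and invoke Parametric Soundness to conclude.

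For the compound systems (5)--(7), a short additional step is needed. I would first establish the observation that when $\mathcal{S} = \mathcal{S}_1 \cup \mathcal{S}_2$, a frame $\mathcal{F}$ validates every instance of every schema in $\mathcal{S}$ if and only if it validates every instance of every schema in each $\mathcal{S}_i$; by Lemma~\ref{lem:CHAR_CAR} this forces the characteristic class of the union to coincide with the intersection of the characteristic classes of the parts. Applying this to $\mathbb{S}\mathbf{4}$ ($\{\mathbf{T},\mathbf{4}\}$), $\mathbb{B}$ ($\{\mathbf{T},\mathbf{B}\}$), and $\mathbb{S}\mathbf{5}$ ($\{\mathbf{T},\mathbf{5}\}$) gives precisely the intersections of reflexive, transitive, symmetric, and euclidean frames required by the statement.

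I do not expect any real mathematical obstacle: once the correspondence results and Parametric Soundness are in hand, each case is essentially a bookkeeping step. The one point requiring minor care is case (8), where I must check that the symbol $\mathfrak{NT}$ denotes exactly the class of transitive converse-well-founded frames, i.e.\ the characteristic class of $\mathbb{GL}$ as given by the last item of Lemma~\ref{lem:Correspondence} and tabulated in Table~\ref{tab:char-appr}; with that reading fixed, Parametric Soundness applies directly and completes the proof.
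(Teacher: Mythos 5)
Your proposal is correct and follows essentially the same route as the paper: case (1) from the soundness of $\mathsf{K}$, and the remaining cases as instances of Parametric Soundness (Theorem~\ref{thm:parametric_soundness}) with the characteristic classes read off from the correspondence results in Table~\ref{tab:char-appr}. The only difference is that you spell out the union-of-schemata/intersection-of-classes bookkeeping for $\mathbb{S}\mathbf{4}$, $\mathbb{B}$, and $\mathbb{S}\mathbf{5}$, which the paper leaves implicit in the table.
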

\begin{proof}
    The first point is a direct consequence of Lemma~\ref{thm:K_soundness}. The other theorems, instead, are obtained as corollaries of Theorem~\ref{thm:parametric_soundness}. by instantiating the parameters in $\mathcal{S}$ and $\mathfrak{S}$ following Table~\ref{tab:characteristic}, which states the characteristic class of each system. \\ \phantom{A}
\end{proof}

\subsection{Consistency}\label{sub:consisency}

Having proved that every system $\mathbb{S}$ in the cube is sound w.r.t.~a non-empty class of frames $\mathfrak{S}$ \footnote{In particular, we have proved these results for characteristic classes.}, we can prove consistency of $\mathbb{S}$ with a classical strategy. 

Note that \textit{consistency} is a minimum requirement for any logical deductive system.

\begin{theorem}[\textbf{Consistency within the modal cube}]\phantom{All}
    \begin{mydefinition}
        \begin{multicols}{4}
        \begin{enumerate}
           \item $\mathcal{\varnothing} \not \vdash \bot $~\href{https://archive.softwareheritage.org/swh:1:cnt:ae230138ff15476c8dab9e32606bceca7168285b;origin=https://github.com/HOLMS-lib/HOLMS;visit=swh:1:snp:2c0efd349323ed6f8067581cf1f6d95816e49841;anchor=swh:1:rev:1caf3be141c6f646f78695c0eb528ce3b753079a;path=/k_completeness.ml;lines=48-56}{\ExternalLink} ;
            \item $\mathcal{D} \not \vdash \bot $ ;
            \item $\mathcal{T} \not \vdash \bot $~\href{https://archive.softwareheritage.org/swh:1:cnt:1352cb294724b2251b9346657432700ecbed10d2;origin=https://github.com/HOLMS-lib/HOLMS;visit=swh:1:snp:2c0efd349323ed6f8067581cf1f6d95816e49841;anchor=swh:1:rev:1caf3be141c6f646f78695c0eb528ce3b753079a;path=/t_completeness.ml;lines=139-145}{\ExternalLink} ;
            \item $\mathcal{K}4 \not \vdash \bot $~\href{https://archive.softwareheritage.org/swh:1:cnt:32869a1df338d91c4cd4a0cb9e73fb4f1be29991;origin=https://github.com/HOLMS-lib/HOLMS;visit=swh:1:snp:2c0efd349323ed6f8067581cf1f6d95816e49841;anchor=swh:1:rev:1caf3be141c6f646f78695c0eb528ce3b753079a;path=/k4_completeness.ml;lines=147-153}{\ExternalLink} ;
            \item $\mathcal{S}4 \not \vdash \bot $ ;
            \item $\mathcal{B} \not \vdash \bot $ ;
            \item $\mathcal{S}5 \not \vdash \bot $ ;
            \item $\mathcal{GL} \not \vdash \bot$~\href{https://archive.softwareheritage.org/swh:1:cnt:52c9de12454731a9a291b06bc750c0d2d14e1fb4;origin=https://github.com/HOLMS-lib/HOLMS;visit=swh:1:snp:2c0efd349323ed6f8067581cf1f6d95816e49841;anchor=swh:1:rev:1caf3be141c6f646f78695c0eb528ce3b753079a;path=/gl_completeness.ml;lines=216-222}{\ExternalLink} ;
        \end{enumerate}
    \end{multicols}
    \end{mydefinition}
\end{theorem}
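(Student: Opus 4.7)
I would argue uniformly via the contrapositive of soundness. Fix any system $\mathbb{S}$ in the cube, and let $\mathcal{S}$ be its specific axiom schemata and $\mathfrak{S}$ its characteristic class of frames. Suppose, toward a contradiction, that $\mathcal{S} \vdash \bot$. Parametric Soundness (Theorem~\ref{thm:parametric_soundness}, or Theorem~\ref{thm:K_soundness} in the case of $\mathbb{K}$) then yields $\mathfrak{S} \vDash \bot$: in other words, $\bot$ would be forced at every world of every model based on every frame in $\mathfrak{S}$.

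The contradiction is immediate once we exhibit a single witness frame $\mathcal{F} \in \mathfrak{S}$. Indeed, the inductive clause for $\bot$ in Definition~\ref{def:truth} gives $\mathcal{M}, w \not\vDash \bot$ for every model $\mathcal{M}$ and every world $w$; so choosing any valuation on such an $\mathcal{F}$ and any world of its universe refutes $\mathfrak{S} \vDash \bot$.

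It thus remains to verify, for each of the eight systems, that its characteristic class is inhabited. A singleton universe $W = \{w\}$ equipped with the reflexive loop $R = \{(w,w)\}$ is simultaneously serial, reflexive, symmetric, transitive, and euclidean; consequently it lies in the characteristic class of $\mathbb{K}, \mathbb{D}, \mathbb{T}, \mathbb{K}\mathbf{4}, \mathbb{S}\mathbf{4}, \mathbb{B}$ and $\mathbb{S}\mathbf{5}$. The case of $\mathbb{GL}$ is the only one requiring a different witness, because any reflexive frame immediately fails converse well-foundness; here I would instead take the singleton with empty relation $R = \varnothing$, which is vacuously transitive and irreflexive and hence, by Lemma~\ref{lem:ITF-transnt}, transitive and converse well-founded, and therefore belongs to $\mathfrak{ITF}$. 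The plan thus reduces the entire theorem to eight routine non-emptiness checks, with the only mildly delicate one being the $\mathbb{GL}$ witness just described.
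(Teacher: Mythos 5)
Your proposal is correct and follows essentially the same route as the paper: a \textit{reductio} via soundness, the observation that $\bot$ is never forced, and a reduction to eight non-emptiness checks witnessed by a reflexive singleton and an empty-relation singleton (the paper's $\mathcal{F}_2$ and $\mathcal{F}_1$). Your explicit appeal to Lemma~\ref{lem:ITF-transnt} to justify that the empty-relation singleton is converse well-founded is a small but welcome addition of precision over the paper's treatment of the $\mathbb{GL}$ case.
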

\begin{proof}
   Each point is proved by \textit{negatio ad absurdum}, by following the partially-parametrised strategy below:
   \begin{itemize}
       \item NAA:  We assume that $\mathcal{S} \vdash \bot$;
       \item By soundness w.r.t $\mathfrak{S}$, $\mathfrak{S} \vDash \bot$;
       \item By definitions of \textit{validity in a class of frames} and \textit{in a frame}: \\ $\forall \mathcal{F} \in \mathfrak{S} (  \forall \mathcal{M} \ \ \mathcal{F}_* \ ( \forall w \in W ( \mathcal{M}, w \vDash \bot)))$; 
       \item \textbf{If $\mathfrak{S}$ is nonempty}, then such a $w$ exists: \underline{$\mathcal{M}, w \vDash \bot$};
       \item By definition of forcing, \underline{$\mathcal{M}, w \not \vDash \bot$ for any model $\mathcal{M} $ and world $w$};
       \item  A contradiction. $\lightning$
   \end{itemize}
   To employ this strategy we need to check that every class of characteristic frame is not empty, i.e. to provide an example of a  frame with the characteristic property under analysis,
   We consider the following two frames:
   \begin{center}
    \begin{tikzpicture}
        \draw (0,0) rectangle (2,2);
        \node at (1,2.5) {\(\mathcal{F}_1\)};
        \filldraw (1,1) circle (0.05) node[below] {\( w \)};
        
        \draw (3,0) rectangle (5,2);
        \node at (4,2.5) {\(\mathcal{F}_2\)};
        \filldraw (4,1) circle (0.07) node[below] {\( w \)};
        
        \draw[->] (4,1) .. controls (4.5,1.5) and (3.5,1.5) .. (4,1);
    \end{tikzpicture}
\end{center}
   
   \begin{enumerate}
       \item $\mathfrak{All} \not = \varnothing$ because $\mathcal{F}_1 $ is a frame;
       \item $\mathfrak{Ser} \not = \varnothing$ because $\mathcal{F}_2 $ is serial frame;
       \item $\mathfrak{R} \not = \varnothing$ because $\mathcal{F}_2 $ is a reflexive frame;
       \item $\mathfrak{T} \not = \varnothing$ because $\mathcal{F}_2 $ is a transitive frame;
       \item $\mathfrak{R \cap T} \not = \varnothing$ because $\mathcal{F}_2 $ is a reflexive-transitive frame;
       \item $\mathfrak{R \cap Sym} \not = \varnothing$ because $\mathcal{F}_2 $ is a reflexive-symmetric frame;
       \item $\mathfrak{R \cap E}\not = \varnothing$ because $\mathcal{F}_2 $ is a reflexive-euclidean frame ;
       \item $\mathfrak{ITF}\not = \varnothing$ because $\mathcal{F}_1 $ is an irreflexive-transitive frame ;
   \end{enumerate}
\end{proof}

\subsection{Completeness}\label{sub:completeness}
It is now left to prove the opposite--and much more complex to demonstrate--direction of soundness, the completeness theorems. 

As previously remarked, we present here the proof strategy used in HOLMS to prove the completeness of the normal systems under analysis w.r.t.~their characteristic classes of finite frames.

We will not present here the well-established canonical model method~\cite[\S~4]{DBLP:books/cu/BlackburnRV01}, even though it proves completeness for logics within the cube.
In fact, in HOLMS, we pursue a proof of completeness \textit{as parametric as possible}, whereas this approach cannot be directly applied to GL due to issues related to (in)compactness~\cite[\S~6]{boolos1995logic}.

As will be discussed in more detail in Chapter~\ref{chap:4}, the proof presented here follows the strategy outlined by George Boloos in his classical textbook~\cite[\S 5]{boolos1995logic}. Our demonstration avoids code duplication as far as possible and succeed in further clarifying the \textit{modularity} of Boolos' strategy and in explicitly parametrising it.

\begin{remark}
    We should note that, unlike what we did for soundness, we cannot prove a \textit{parametric} version of the completeness theorem that holds for every normal system. The reason for this will become clear later, as we present the proof sketch. Instead, we will show that each system under analysis is complete with respect to its appropriate class, following the \textit{most general} strategy possible. Such a results will be called \textit{ad-hoc polymorphic} in Chapter~\ref{chap:4} using a classical terminology.
\end{remark}

\subsubsection{Proof Sketch}\label{sub:completenes-proof-sketch}
Let $\mathbb{S}$ be a certain normal system within the cube, then $\mathcal{S}$ is the set of axiom schemata specific to $\mathbb{S}$. Let $\mathfrak{S}$ the set of frames appropriate to $\mathbb{S}$. Let $A$ be a modal formula.

We aim to prove:
\begin{center}
  \textit{Claim.} \textit{If $\mathfrak{S} \vDash A$ then $\mathcal{S} \vdash A$}
\end{center}

We proceed by contraposition; thus, we claim the following statment: 
\begin{center}
    \textit{Claim.}\label{clm:proof_sketch}
    \textit{If $\mathcal{S}. \varnothing \not \vdash A$ then $\mathfrak{S} \not \vDash A$}
\end{center}

By the definition of forcing and validity, this is equivalent to prove:
\begin{center}
    \textit{Claim.} \textit{If $\mathcal{S}. \varnothing \not \vdash A \ \ then \ \ \exists \mathcal{F} \in \mathfrak{S} ( \exists \mathcal{M} \ \mathcal{F}_* (  \exists w \in W ( \mathcal{M}, w \not \vDash A)))$}
\end{center}
   
This means that for each set of axioms $\mathcal{S}$ and for each modal formula $A$, we have to find a countermodel $\mathcal{M}^{\mathcal{S}}_{A}$ inhabiting $\mathfrak{S}$, and a `counterworld' $X^{\mathcal{S}}_{A}$ inhabiting $\mathcal{M}^{\mathcal{S}}_{A}$ such that $\mathcal{M}^{\mathcal{S}}_{A},X^{\mathcal{S}}_{A}\not \vDash A$. 
To do so, we formalise the argument in \cite[\S 5]{boolos1995logic} and implement the following key-strategy:
\begin{itemize}
    \item \textit{Fully parametrised part of the proof}:
\begin{enumerate}
    \item\label{itm:proof_sketch_1} We \textbf{identify a} parametric notion of \textbf{(counter)model} $\mathcal{M}^{\mathcal{S}}_{A}$ in $\mathfrak{S}$ having: 
    \begin{enumerate}
        \item $W^{\mathcal{S}}_{A}$: maximal consistent sets of modal subsenteces of $A$ as worlds;
        \item $R^{\mathcal{S}}_{A}$:  an accessibility relation that verifies two given constraints;
        \item $V^{\mathcal{S}}_{A}$: a valuation such that $V^{\mathcal{S}}_{A}(q,w)$ iff $q$ is a subformula of $ A$ and $q \in w$.
    \end{enumerate}
    Such a model is designed to verify the~\ref{itm:proof_sketch_2}$^{nd}$ point of this sketch.
    \smallskip
    \item \label{itm:proof_sketch_2} We \textbf{prove a \textit{general truth-lemma}} that holds for every countermodel verifying the requirements in~\ref{itm:proof_sketch_1} and independent from the considered $\mathcal{S}$ and $A$. 
    This step allows the reduction of the model-theoretic notion of \textit{forcing} ($ \mathcal{M}^{\mathcal{S}}_{A}, w \vDash A$) to the set-theoretic notion of \textit{membership} ($A \in w$). 
    \smallskip
     \item \label{itm:proof_sketch_3} We \textbf{identify a `counterworld'} $X^{\mathcal{S}}_{A}$ in $\mathcal{M}^{\mathcal{S}}_{A}$ such that $A \not \in X^{\mathcal{S}}_{A}$; thus $\mathcal{M}^{\mathcal{S}}_{A},X^{\mathcal{S}}_{A}\not \vDash A$  by the Lemma in~\ref{itm:proof_sketch_2}.
\end{enumerate}
\item \textit{Unparametrisable (Ad-hoc polymorphic) part of the proof}:
\begin{enumerate}
    \item[I.] Identification for each system $\mathbb{S}$ of its specific countermodel $\mathcal{M}^{\mathcal{S}}_{A}$, and in particular of its accessibility relation $Rel^{\mathcal{S}}_{A}$;
    \item[II.] Verification that this model satisfies the two requirements for $Rel^{\mathcal{S}}_{A}$.
\end{enumerate}
\end{itemize}

\subsubsection{Maximal Consistent Sets of Formulas}\label{sub:maximal-consistency}
As anticipated in the proof sketch, the universe of our countermodel will be composed of maximal consistent formulas. We are going to introduce the two predicates of \textit{consistency} and \textit{maximal consistency}, and some of their properties.

\begin{definition}[$\mathcal{S}$-\textbf{consistency}~\href{https://archive.softwareheritage.org/swh:1:cnt:e261ca08330d5ef66376347407c27fcece1e9f2e;origin=https://github.com/HOLMS-lib/HOLMS;visit=swh:1:snp:2c0efd349323ed6f8067581cf1f6d95816e49841;anchor=swh:1:rev:1caf3be141c6f646f78695c0eb528ce3b753079a;path=/consistent.ml;lines=11-12}{\ExternalLink}]\phantom{Allyouneedisloveloveisallyou}
\begin{mydefinition}
\begin{mydefinition}
    Let $\mathcal{S}$ be a set of axiom schemata. Let $\mathcal{X}$ be a finite set of modal formulas. \\
    $\mathcal{X}$ is $\mathcal{S}$-\textbf{consistent} iff $\mathcal{S} \not \vdash \neg \bigwedge \mathcal{X}$ \footnotemark.
\end{mydefinition}
\end{mydefinition}
\end{definition}
\footnotetext{Note that `$\bigwedge$' is not a primitive symbol of $\mathcal{L}_{\Box}$ but, as usually, it could be defined as the conjunction of all the formulas in $\mathcal{X}$.}

\begin{lemma}[\textbf{False implies not consistency}~\href{https://archive.softwareheritage.org/swh:1:cnt:e261ca08330d5ef66376347407c27fcece1e9f2e;origin=https://github.com/HOLMS-lib/HOLMS;visit=swh:1:snp:2c0efd349323ed6f8067581cf1f6d95816e49841;anchor=swh:1:rev:1caf3be141c6f646f78695c0eb528ce3b753079a;path=/consistent.ml;lines=73-75}{\ExternalLink}]\label{lem:consistency_lemma}\phantom{Allyouneedisloveloveisall}
\begin{mydefinition}
    Let $\mathcal{S}$ be a set of axiom schemata, Let $\mathcal{X}$ be a set of formulas. \\
    If $\bot \in \mathcal{X}$ then $\mathcal{X}$ is not $\mathcal{S}-$consistent.
\end{mydefinition}
\end{lemma}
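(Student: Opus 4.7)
The plan is to unfold the definition of $\mathcal{S}$-consistency and show that, whenever $\bot \in \mathcal{X}$, the formula $\neg \bigwedge \mathcal{X}$ is derivable in any axiomatic extension of $\mathsf{K}$ via $\mathcal{S}$. Recall from Convention~\ref{cnv:metaling_abbr} that $\neg A$ is a metalinguistic abbreviation for $A \to \bot$, so what we must establish is $\mathcal{S} \vdash \bigwedge \mathcal{X} \to \bot$.

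The key observation is that if $\bot$ is one of the conjuncts of $\bigwedge \mathcal{X}$, then $\bigwedge \mathcal{X} \to \bot$ is a purely propositional tautology: classically, any conjunction containing $\bot$ as a conjunct entails $\bot$, since $(C_1 \land \dots \land \bot \land \dots \land C_n) \to \bot$ is an instance of the tautology schema $P \land \bot \to \bot$ together with the projection tautologies for conjunctions. Hence this formula is an instance of $\mathbf{taut}$, and so by the first clause of Definition~\ref{def:deducibility} it belongs to every deducibility relation $\mathcal{S}.\mathcal{H} \vdash$, in particular $\mathcal{S}.\varnothing \vdash \bigwedge \mathcal{X} \to \bot$, which is precisely $\mathcal{S} \vdash \neg \bigwedge \mathcal{X}$.

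By the contrapositive of the definition of $\mathcal{S}$-consistency, this gives that $\mathcal{X}$ is not $\mathcal{S}$-consistent, as required. There is no substantive obstacle here: the only subtlety lies in making precise the convention about $\bigwedge \mathcal{X}$ for finite $\mathcal{X}$ (namely, that it is the iterated conjunction of the elements of $\mathcal{X}$ in some fixed order), so that one can legitimately appeal to a propositional tautology schema. Once this notational point is settled, the proof is a one-line appeal to $\mathbf{taut}$.
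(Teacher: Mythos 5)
Your proposal is correct and takes essentially the same route as the paper: both reduce the claim to the purely propositional observation that $\neg\bigwedge\mathcal{X}$ is derivable via $\mathbf{taut}$ once $\bot$ is recognised as a conjunct of $\bigwedge\mathcal{X}$. The paper merely chunks the tautological reasoning differently (splitting $\bigwedge\mathcal{X}$ as $\bigwedge(\mathcal{X}\setminus\{\bot\})\land\bot$, applying De Morgan, and deriving $\neg\bot$), whereas you appeal directly to the tautology $\bigwedge\mathcal{X}\to\bot$; the substance is identical.
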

\begin{proof}
    We assume that $\bot \in \mathcal{X}$. \\
    Note that, by definition of big conjunction, holds $\bigwedge \mathcal{X}=\bigwedge (\mathcal{X} \setminus \{ \bot \}) \land \bot$. \\
    We claim for $\mathcal{X}$ $\mathcal{S}-$consistency, i.e.  $\mathcal{S} \vdash \neg \bigwedge \mathcal{X}$. This is equivalent to prove $\mathcal{S} \vdash \neg (\bigwedge (\mathcal{X} \setminus \{ \bot \}) \land \bot) $ that--by propositional tautologies-- is equivalent to prove $\mathcal{S}  \vdash \neg\bigwedge (\mathcal{X} \setminus \{ \bot \}) \lor \neg \bot $. But, also by tautologies, $\mathcal{S} \vdash \neg \bot$ and then $\mathcal{S}  \vdash \neg\bigwedge (\mathcal{X} \setminus \{ \bot \}) \lor \neg \bot $, that prove our claim.    
\end{proof}
\newpage

\begin{definition}[$\mathcal{S},A$-\textbf{maximal consistency}~\href{https://archive.softwareheritage.org/swh:1:cnt:e261ca08330d5ef66376347407c27fcece1e9f2e;origin=https://github.com/HOLMS-lib/HOLMS;visit=swh:1:snp:2c0efd349323ed6f8067581cf1f6d95816e49841;anchor=swh:1:rev:1caf3be141c6f646f78695c0eb528ce3b753079a;path=/consistent.ml;lines=82-85}{\ExternalLink}]\phantom{Allyouneedisloveloveisallyou}
\begin{mydefinition}
\begin{mydefinition}
    Let $\mathcal{S}$ be a set of axiom schemata. Let $\mathcal{A}$ be a modal formula. Let $\mathcal{X}$ be a set of modal formulas. 
    $\mathcal{X}$ is $\mathcal{S},A$-\textbf{maximal consistent} iff 
    \begin{itemize}
        \item $\mathcal{X}$ is $\mathcal{S}$-consistent; and
        \item For all modal formulas $B$ that are subformulas of $A$, $B \in \mathcal{X}$ or $\neg B \in \mathcal{X}$.
    \end{itemize}
    We denote by $\textit{MAX}^A_\mathcal{S}$ the set of $\mathcal{S},A$-maximal consistent sets of formulas.
\end{mydefinition}
\end{mydefinition}
\end{definition}

\begin{lemma}[\textbf{Subformulas, negation and maximal consistency}~\href{https://archive.softwareheritage.org/swh:1:cnt:e261ca08330d5ef66376347407c27fcece1e9f2e;origin=https://github.com/HOLMS-lib/HOLMS;visit=swh:1:snp:2c0efd349323ed6f8067581cf1f6d95816e49841;anchor=swh:1:rev:1caf3be141c6f646f78695c0eb528ce3b753079a;path=/consistent.ml;lines=107-110}{\ExternalLink}]\label{lem:maximal_consistent_mem_not}\phantom{A}
\begin{mydefinition}
    Let $\mathcal{S}$ be a set of axiom schemata. Let $A,B$ be a modal formula. \\
    Let $\mathcal{X}$ be a $\mathcal{S},A$-maximal consistent set of formulas. \\
    If $B$ is a subformula of $A$ then $B \in \mathcal{X} \iff \neg B \not \in \mathcal{X}$. 
\end{mydefinition}
\end{lemma}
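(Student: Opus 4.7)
The plan is to prove the two directions of the biconditional separately, with the backward direction being essentially immediate from the definition of $\mathcal{S},A$-maximal consistency, and the forward direction proceeding by contradiction using the consistency requirement.

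For the backward direction ($\neg B \notin \mathcal{X} \implies B \in \mathcal{X}$), I would simply appeal to the maximality condition: since $B$ is a subformula of $A$, by the second clause in the definition of $\mathcal{S},A$-maximal consistency, either $B \in \mathcal{X}$ or $\neg B \in \mathcal{X}$; the hypothesis excludes the second disjunct, so $B \in \mathcal{X}$.

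For the forward direction ($B \in \mathcal{X} \implies \neg B \notin \mathcal{X}$), I would argue by contradiction, assuming both $B \in \mathcal{X}$ and $\neg B \in \mathcal{X}$. Then both $B$ and $\neg B$ appear as conjuncts in $\bigwedge \mathcal{X}$, so by classical propositional reasoning we have $\mathcal{S} \vdash \bigwedge \mathcal{X} \to (B \land \neg B)$, and since $\mathcal{S} \vdash \neg(B \land \neg B)$ is a propositional tautology, we obtain $\mathcal{S} \vdash \neg \bigwedge \mathcal{X}$ by an application of \texttt{MP} together with contrapositive reasoning. This contradicts the $\mathcal{S}$-consistency of $\mathcal{X}$, which is the first clause in the definition of $\mathcal{S},A$-maximal consistency.

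The main (minor) obstacle here is simply being careful with the formal manipulation of $\bigwedge \mathcal{X}$: since the big conjunction is a metalinguistic shorthand, one needs to observe that if $B, \neg B \in \mathcal{X}$, then $\bigwedge \mathcal{X}$ provably implies $B \land \neg B$ in $\mathsf{K}$ (extended by $\mathcal{S}$) using only propositional tautologies about conjunction. Once this observation is made, the rest is a routine use of tautological reasoning and \texttt{MP}, in the same spirit as the proof of Lemma \ref{lem:consistency_lemma}. Note that this direction does not even use the fact that $B$ is a subformula of $A$ — it holds for any $\mathcal{S}$-consistent set, which makes the argument completely symmetric in $B$.
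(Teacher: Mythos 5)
Your proposal is correct and follows essentially the same route as the paper: the backward direction is immediate from the maximality clause, and the forward direction derives $\mathcal{S} \vdash \neg \bigwedge \mathcal{X}$ from the tautology $\neg(B \land \neg B)$ to contradict $\mathcal{S}$-consistency. Your additional observation that the forward direction does not need the subformula hypothesis is accurate, and your care about manipulating $\bigwedge \mathcal{X}$ matches the spirit of the paper's Lemma~\ref{lem:consistency_lemma}.
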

\begin{proof}
    We assume that $B$ is a subformula of $A$ and we should prove the two verses of the bi-implication.
    \begin{itemize}
        \item[$\implies$] By \textit{negatio ad absurdum}.\\
        We assume that both $B$ and $\neg B$ belong to $\mathcal{X}$. But then $\mathcal{S}$ proves the modal tautology $\neg (B \land \neg B)$; because $\mathcal{S}$ proves the negation of two conjuncts then $\mathcal{S} \vdash \neg \bigwedge\mathcal{X}$ , thus $\mathcal{X}$ is not $\mathcal{S}-$consistent and not $\mathcal{S},A-$maximal consistent. A contradiction. $\lightning$

        \item[$\impliedby$] We assume that \underline{$\neg B \not \in \mathcal{X}$}. \\ Because $B$ is a subformula of $A$ and $\mathcal{X}$ is $\mathcal{S},A-$maximal consistent, then $ B \in X$ or $\neg B \in \mathcal{X}$. \\
        RAA: $B \not \in \mathcal{X}$. Then, by the above line, holds \underline{$\neg B \in \mathcal{X}$}. A contradiction. $\lightning$
    \end{itemize}
\end{proof}
\medskip

\begin{lemma}[\textbf{Maximal consistent extension lemma}~\href{https://archive.softwareheritage.org/swh:1:cnt:e261ca08330d5ef66376347407c27fcece1e9f2e;origin=https://github.com/HOLMS-lib/HOLMS;visit=swh:1:snp:2c0efd349323ed6f8067581cf1f6d95816e49841;anchor=swh:1:rev:1caf3be141c6f646f78695c0eb528ce3b753079a;path=/consistent.ml;lines=166-226}{\ExternalLink}]\label{lem:extension}\phantom{Allyouneedislovelovei}
\begin{mydefinition}
    Let $\mathcal{S}$ be a set of axiom schemata. Let $A$ be a modal formula. \\
    Every $\mathcal{X}$ $\mathcal{S}$-consistent set of \textbf{subsentences} of $A$ can be extended to a $\mathcal{S},A$-maximal consistent set of subsentences of $A$. \\ Formally we write: 
    If $\mathcal{X}$ is $\mathcal{S}$-consistent and $\forall x \in\mathcal{X} (x \ \mathbf{subs}\ A) $ then \\  $ \exists \mathcal{Y} \subseteq \mathbf{Form}_{\Box}(\mathcal{X}\subseteq \mathcal{Y}$ and  $\mathcal{Y} \in \textit{MAX}^A_\mathcal{S} \ and \ \forall y \in \mathcal{Y}(y \ \mathbf{subs} \ A)$. 
\end{mydefinition}
\end{lemma}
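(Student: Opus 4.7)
The plan is to exploit Fact~\ref{fct:finite_formulas} to enumerate the subformulas of $A$ and extend $\mathcal{X}$ one formula at a time, deciding at each step whether to add a subformula or its negation. Concretely, let $\{B_1,\dots,B_n\}$ be the (finite) set of subformulas of $A$. I would define a finite chain of sets $\mathcal{X}=\mathcal{X}_0\subseteq \mathcal{X}_1 \subseteq \cdots \subseteq \mathcal{X}_n$ recursively: for each $i \in \{1,\dots,n\}$, set
\[
\mathcal{X}_i \;=\;
\begin{cases}
\mathcal{X}_{i-1}\cup\{B_i\} & \text{if this is $\mathcal{S}$-consistent,}\\
\mathcal{X}_{i-1}\cup\{\neg B_i\} & \text{otherwise,}
\end{cases}
\]
and take $\mathcal{Y}=\mathcal{X}_n$. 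By construction every element of $\mathcal{Y}$ is either a subformula of $A$ or the negation of one, hence a subsentence of $A$; moreover for every subformula $B_i$ of $A$, either $B_i$ or $\neg B_i$ lies in $\mathcal{Y}$, which gives the maximality clause. The inclusion $\mathcal{X}\subseteq \mathcal{Y}$ is immediate since we only ever add formulas along the chain.

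The crux of the proof, and the main obstacle, will be the invariant that each $\mathcal{X}_i$ is $\mathcal{S}$-consistent. I would establish this by induction on $i$, via the following key auxiliary claim: if $\mathcal{Z}$ is $\mathcal{S}$-consistent and $\mathcal{Z}\cup\{B\}$ is not $\mathcal{S}$-consistent, then $\mathcal{Z}\cup\{\neg B\}$ is $\mathcal{S}$-consistent. The proof is by contradiction: if both $\mathcal{Z}\cup\{B\}$ and $\mathcal{Z}\cup\{\neg B\}$ are inconsistent, then $\mathcal{S}\vdash \neg(\bigwedge\mathcal{Z}\wedge B)$ and $\mathcal{S}\vdash \neg(\bigwedge\mathcal{Z}\wedge \neg B)$, so by the propositional tautology $(\neg(P\wedge Q)\wedge\neg(P\wedge\neg Q))\to\neg P$ and two applications of \texttt{MP}, we obtain $\mathcal{S}\vdash \neg\bigwedge\mathcal{Z}$, contradicting the $\mathcal{S}$-consistency of $\mathcal{Z}$. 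Applying this claim at each step of the recursion, if the ``optimistic'' choice $\mathcal{X}_{i-1}\cup\{B_i\}$ fails to be consistent, then the fallback $\mathcal{X}_{i-1}\cup\{\neg B_i\}$ must be consistent, preserving the invariant.

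Finally, I would verify that $\mathcal{Y}=\mathcal{X}_n$ satisfies the three required properties: (i)~$\mathcal{X}\subseteq\mathcal{Y}$ from the chain; (ii)~$\mathcal{Y}$ is $\mathcal{S}$-consistent as the last element of the chain; (iii)~for every subformula $B$ of $A$, $B\in\mathcal{Y}$ or $\neg B\in\mathcal{Y}$, since $B$ appears as some $B_i$ and either $B_i$ or $\neg B_i$ was added at stage~$i$; and (iv)~every element of $\mathcal{Y}$ is a subsentence of $A$, which holds for the hypothesis $\mathcal{X}\subseteq\mathcal{Y}$ by assumption and for the formulas added along the recursion by construction. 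The finiteness provided by Fact~\ref{fct:finite_formulas} is essential here: it guarantees that the enumeration terminates, so that no transfinite argument (nor any use of Zorn's lemma as in the standard Lindenbaum construction) is required --- a design choice that makes the argument particularly well-suited to mechanisation in HOLMS.
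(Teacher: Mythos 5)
Your proposal is correct and follows essentially the same route as the paper: the finite Lindenbaum construction that enumerates the subformulas of $A$ via Fact~\ref{fct:finite_formulas} and extends $\mathcal{X}$ stepwise by adding $B_i$ or $\neg B_i$ according to which choice preserves $\mathcal{S}$-consistency. In fact, you spell out the key consistency-preservation invariant (via the tautology $(\neg(P\wedge Q)\wedge\neg(P\wedge\neg Q))\to\neg P$) more explicitly than the paper, which only asserts that ``each extension step preserves consistency.''
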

\begin{proof} This proof is based on the standard Lindenbaum construction for consistent sets~\cite[pp.~15-16]{Tarski1968-TARUT}.

    Let $\mathcal{X}$ be a \underline{$\mathcal{S}$-consistent} set of \textbf{subsentences} of $A$. 
    Let $\mathcal{U}$ be the set of all the subformulas of $A$. Given the Fact~\ref{fct:finite_formulas} this set is finite and we can enumerate its elements $E_1, \dots, E_n$.

     We define a sequence of sets \(\mathcal{X}_n\) starting from \(\mathcal{X}\):

    \begin{itemize}
        \item Base case: Let \(\mathcal{X}_0 = \mathcal{X}\).
        \newpage
        \item Inductive step:  
        For each $i\le n$:
        \begin{itemize}
            \item If \(\mathcal{X}_i \cup \{E_i\}\) is \(\mathcal{S}\)-consistent, set \(\mathcal{X}_{i+1} = \mathcal{X}_i \cup \{E_i\}\).
            \item Otherwise, set \(\mathcal{X}_{i+1} = \mathcal{X}_i \cup \{\neg E_i\}\).
        \end{itemize}
    \end{itemize}

    We then define the maximal set as $\mathcal{Y} = \bigcup_{i \le n} \mathcal{X}_i=\mathcal{X}_n$.
    By construction:
    \begin{itemize}
    \item $\mathcal{X \subseteq Y}$;
        \item $\mathcal{Y} \in \textit{MAX}_{\mathcal{S}}^A$ beacuse:
        \begin{itemize}
            \item $\mathcal{S}$-consistent ($\mathcal{S} \not \vdash \neg \bigwedge  \mathcal{X}_n$ ), since each extension step preserves consistency;
        
            \item Maximal, because for every subformula \(E\) of \(A\), either \(E \in \mathcal{Y}\) or \(\neg E \in \mathcal{Y}\).
        \end{itemize}
      \item $\forall y \in \mathcal{Y}(y \in \mathcal{X} \ or \ \exists E \in  \mathcal{U}(y=E \ or \  y = \neg E))$  then $\forall y \in \mathcal{Y}(y \ \mathbf{subs} \ A)$  
    \end{itemize}
\end{proof}

\begin{lemma}[\textbf{Nonempty maximal consistent lemma}~\href{https://archive.softwareheritage.org/swh:1:cnt:e261ca08330d5ef66376347407c27fcece1e9f2e;origin=https://github.com/HOLMS-lib/HOLMS;visit=swh:1:snp:2c0efd349323ed6f8067581cf1f6d95816e49841;anchor=swh:1:rev:1caf3be141c6f646f78695c0eb528ce3b753079a;path=/consistent.ml;lines=228-245}{\ExternalLink}]\label{lem:nonempty_max_cons}\phantom{Allyounee}
\begin{mydefinition}
    Let $\mathcal{S}$ be a set of axiom schemata. Let $A$ be a modal formula.
    If $\mathcal{S} \not \vdash A$ then $\{\neg A\}$ is $\mathcal{S}-$consistent and $\exists \mathcal{Y} \subseteq \mathbf{Form}_{\Box}(\neg A \in \mathcal{Y}$ and  $\mathcal{Y} \in \textit{MAX}^A_\mathcal{S} \ and \ \forall y \in \mathcal{Y}(y \ \mathbf{subs} \ A)$. 
\end{mydefinition}
\end{lemma}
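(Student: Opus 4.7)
The plan is to assemble this lemma from two already-proven ingredients: a direct consistency check for $\{\neg A\}$, and one application of the maximal consistent extension lemma (Lemma~\ref{lem:extension}). I would split the argument into two steps matching the two conjuncts of the conclusion.

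First, I would verify that $\{\neg A\}$ is $\mathcal{S}$-consistent. Unfolding the definition of consistency, this amounts to showing $\mathcal{S} \not\vdash \neg\bigwedge\{\neg A\}$, i.e., $\mathcal{S} \not\vdash \neg\neg A$. I would argue by contraposition: if $\mathcal{S} \vdash \neg\neg A$, then since $\neg\neg A \to A$ is a propositional tautology (hence an instance of $\mathbf{taut}$ and so derivable in every extension of $\mathsf{K}$ via $\mathcal{S}$), \texttt{MP} yields $\mathcal{S} \vdash A$, contradicting the hypothesis $\mathcal{S} \not\vdash A$.

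Second, I would feed $\{\neg A\}$ into Lemma~\ref{lem:extension}. A quick check from the definitions shows that $\neg A$ is a subsentence of $A$: by the reflexive clause of Definition~\ref{def:subformula}, $A \ \mathbf{sub} \ A$, and hence $\neg A \ \mathbf{subs} \ A$ by the definition of subsentence. Therefore $\{\neg A\}$ is an $\mathcal{S}$-consistent set of subsentences of $A$, fulfilling the hypotheses of Lemma~\ref{lem:extension}. That lemma then produces a set $\mathcal{Y} \supseteq \{\neg A\}$ with $\mathcal{Y} \in \textit{MAX}^A_\mathcal{S}$ and $\forall y \in \mathcal{Y}(y \ \mathbf{subs} \ A)$. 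Since $\neg A \in \{\neg A\} \subseteq \mathcal{Y}$, we obtain $\neg A \in \mathcal{Y}$, which gives precisely the conclusion.

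The hard part of this lemma is essentially non-existent: it is a small packaging statement whose substance is entirely delegated to Lemma~\ref{lem:extension}, which runs the Lindenbaum-style construction, and to the fact that $\mathsf{K}$ contains full classical propositional logic through $\mathbf{taut}$. The only mild points requiring care are the double-negation step in the consistency argument and the syntactic verification that $\neg A$ qualifies as a subsentence of $A$; both are one-liners straight from the definitions.
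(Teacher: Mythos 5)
Your proof is correct and follows essentially the same route as the paper: establish $\mathcal{S}$-consistency of $\{\neg A\}$ via the propositional tautology identifying $\neg\bigwedge\{\neg A\}$ with $A$ (so that inconsistency would yield $\mathcal{S}\vdash A$), then note $\neg A\ \mathbf{subs}\ A$ and invoke Lemma~\ref{lem:extension}. Your write-up is merely a bit more explicit about the double-negation step; there is nothing to add.
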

\begin{proof}
    We assume that $\mathcal{S} \not \vdash A$.
    \begin{enumerate}
        \item $\{ \neg A\}$ is $\mathcal{S}-$consistent iff $\mathcal{S} \not \vdash \neg \bigwedge \{\neg A\}$  and that hold by hypothesis and modal tautologies $(\mathcal{S} \vdash \neg \bigwedge \{\neg A\} \leftrightarrow A)$ .
        \item  Because  $\{ \neg A\}$ is $\mathcal{S}-$consistent and because $\neg A \ \mathbf{subs} A$, we obtain the thesis by applying Lemma~\ref{lem:extension}.
    \end{enumerate}
\end{proof}

\begin{lemma}[\textbf{Maximal consistent lemma}~\href{https://archive.softwareheritage.org/swh:1:cnt:e261ca08330d5ef66376347407c27fcece1e9f2e;origin=https://github.com/HOLMS-lib/HOLMS;visit=swh:1:snp:2c0efd349323ed6f8067581cf1f6d95816e49841;anchor=swh:1:rev:1caf3be141c6f646f78695c0eb528ce3b753079a;path=/consistent.ml;lines=87-105}{\ExternalLink}]\label{lem:max_cons}\phantom{Allyounee}
\begin{mydefinition}
    Let $\mathcal{S}$ be a set of axiom schemata. Let $A,B$ be modal formulas such that $B \ \mathbf{sub} \ A$. Let $\mathcal{X}$ be a $\mathcal{S},A-$maximal consistent set of formulas such that $A_1, \dots, A_n \in \mathcal{X}$. \\
    If $\mathcal{S} \vdash A_1 \land \dots \land A_n \to B$ then $B \in \mathcal{X}$.
\end{mydefinition}
\end{lemma}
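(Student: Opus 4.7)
The plan is to proceed by \emph{reductio ad absurdum}, reducing the statement to an application of the maximality/consistency dichotomy established in Lemma~\ref{lem:maximal_consistent_mem_not}. So assume, for contradiction, that $B \notin \mathcal{X}$. Since by hypothesis $B \ \mathbf{sub}\ A$ and $\mathcal{X}$ is $\mathcal{S},A$-maximal consistent, Lemma~\ref{lem:maximal_consistent_mem_not} immediately gives $\neg B \in \mathcal{X}$.

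Next I would combine the hypotheses inside $\mathcal{X}$. We have $A_1, \dots, A_n \in \mathcal{X}$ and $\neg B \in \mathcal{X}$, so the formula $A_1 \land \dots \land A_n \land \neg B$ is, modulo commutativity and associativity of conjunction (all propositional tautologies, hence theorems of $\mathsf{K}$ by the definition of derivability), a conjunction of elements of $\mathcal{X}$. Using classical propositional reasoning inside $\vdash$, we get $\mathcal{S} \vdash \bigwedge \mathcal{X} \to (A_1 \land \dots \land A_n \land \neg B)$, since dropping conjuncts from the antecedent is a tautology available in $\mathbf{taut}$.

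Now I would exploit the assumed derivability $\mathcal{S} \vdash A_1 \land \dots \land A_n \to B$. Together with $\mathcal{S} \vdash B \to (\neg B \to \bot)$ (a propositional tautology), two applications of the chaining tautology $(X \to Y) \to (Y \to Z) \to (X \to Z)$ and $\texttt{MP}$ yield $\mathcal{S} \vdash (A_1 \land \dots \land A_n \land \neg B) \to \bot$, and hence $\mathcal{S} \vdash \neg (A_1 \land \dots \land A_n \land \neg B)$. Chaining this with the implication from $\bigwedge \mathcal{X}$ obtained above produces $\mathcal{S} \vdash \neg \bigwedge \mathcal{X}$.

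But this contradicts the $\mathcal{S}$-consistency of $\mathcal{X}$, which is part of the assumption that $\mathcal{X}$ is $\mathcal{S},A$-maximal consistent. The main obstacle is not conceptual but bureaucratic: the manipulation of the big conjunction $\bigwedge \mathcal{X}$ requires carefully invoking enough propositional tautologies (commutativity, associativity, and weakening of conjunctions) to justify that the three relevant conjuncts $A_1, \dots, A_n, \neg B$ can be isolated from $\bigwedge \mathcal{X}$. Once this bookkeeping is done, the contradiction follows by \texttt{MP} and the definition of $\mathcal{S}$-consistency.
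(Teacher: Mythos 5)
Your proof is correct and follows essentially the same route as the paper's: \emph{reductio ad absurdum}, obtaining $\neg B \in \mathcal{X}$ from Lemma~\ref{lem:maximal_consistent_mem_not}, then deriving $\mathcal{S} \vdash \neg(A_1 \land \dots \land A_n \land \neg B)$ by propositional tautologies and concluding $\mathcal{S} \vdash \neg \bigwedge \mathcal{X}$, contradicting consistency. You merely spell out the conjunction-weakening bookkeeping that the paper leaves implicit.
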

\begin{proof}
   Let us proceed by\textit{ reductio ad absurdum}. RAA: We assume $B \not \in \mathcal{X}.$ \\
   By Lemma~\ref{lem:maximal_consistent_mem_not}, $\neg B \in \mathcal{X}$.
   We also know that $\mathcal{S} \vdash A_1 \land \dots \land A_n \to B$,then  by propositional tautologies follows $\mathcal{S} \vdash \neg (A_1 \land \dots \land A_n \land 
   \neg B)$, this means that $\mathcal{S} \vdash \neg \bigwedge \mathcal{X}$; thus $\mathcal{X}$ is not $\mathcal{S},A-$maximal consistent. A contradiction. $\lightning$
   
\end{proof}

\subsubsection{Fully Parametrised Part of the Proof}\label{sub:parametrised-completeness}
We now have all the necessary tools to illustrate the parametric part of the completeness proof and hence we proceed to identify the central parametric notions we are interested in:
\smallskip
\begin{enumerate}
     \item\label{itm:first_parametric} \textbf{Parametric Countermodel}.~\href{https://archive.softwareheritage.org/swh:1:cnt:d364de3158b6405c920115aa0c801af9af49e302;origin=https://github.com/HOLMS-lib/HOLMS;visit=swh:1:snp:2c0efd349323ed6f8067581cf1f6d95816e49841;anchor=swh:1:rev:1caf3be141c6f646f78695c0eb528ce3b753079a;path=/gen_completeness.ml;lines=180-183}{\ExternalLink} Let $\mathcal{M}_{\mathcal{S}}^{A}:=\langle W^A_\mathcal{S},{{R}_{\mathcal{S}}^{A}},{V_{\mathcal{S}}^{A}}\rangle$, where:
    \begin{enumerate}
        \item $\mathbf{W_{\mathcal{S}}^{A}}$, the \textbf{universe}, is the subset of $\textit{MAX}_{\mathcal{S}}^{A} $ made of subsentences of $A$ \\
        $W_{\mathcal{S}}^{A} \coloneq \{\mathcal{X} \in \textnormal{MAX}_{\mathcal{S}}^A \ | \ \forall B \in \mathbf{Form}_{\Box} (B \in \mathcal{X} \implies B \ \mathbf{subs} \ A) \}$.
        
        \begin{remark}
         To say that $\mathcal{M}_{\mathcal{S}}^A$ is a model, we must prove that its universe $W_{\mathcal{S}}^A$ is not empty. Nonempty maximal consistent lemma~(\ref{lem:nonempty_max_cons}) states that if $\mathcal{S} \not \vdash A$ then $\exists \mathcal{Y} \subseteq \mathbf{Form}_{\Box}(\neg A \in \mathcal{Y}$ and  $\mathcal{Y} \in \textit{MAX}^A_\mathcal{S} \ and \ \forall y \in \mathcal{Y}(y \ \mathbf{subs} \ A)$. It is evident that such an $\mathcal{Y}$ belongs to $W_{\mathcal{S}}^A$.
         \end{remark}

        \item $\mathbf{R}_{\mathcal{S}}^A$ is an \textbf{accessibility relation} such that:   
        \begin{enumerate}
            \item\label{itm:c1} $\langle W^A_\mathcal{S}, R_{\mathcal{S}}^{A} \rangle \in \mathfrak{S}$;
            \item\label{itm:c2} For all $B \in \mathbf{Form}_{\Box}$, if $\Box B \ \mathbf{sub} \ A$ then, for every $w\in W^A_\mathcal{S}$:
            \begin{center}
                $\Box B \in w$ $ \ \iff \ $  $ \forall x\in W^A_\mathcal{S} (wR_{\mathcal{S}}^{A}x \implies B \in x)$.
                \end{center}
        \end{enumerate}
        
        \item\label{itm:fully_par-V} $\mathbf{V_{\mathcal{S}}^{A}}$ is a \textbf{valuation} over $W^A_\mathcal{S}$ such that:
        \begin{center}
                    $aV_{\mathcal{S}}^{A}w \ \iff \ a \ \mathbf{sub} \ A$ and $a \in w$.            
        \end{center}
    \end{enumerate}
    \begin{remark}
        We should note that, even if this notion of countermodel is fully \textit{parametrised}, we do not provide a \textit{parametric} accessibility relation ${R_{\mathcal{S}}^{A}}$, as we do for the universe ${W_{\mathcal{S}}^{A}}$ and the valuation ${V_{\mathcal{S}}^{A}}$. Then for each modal system we must identify a standard relation ${Rel_{\mathcal{S}}^{A}}$ verifying the two contraints~\ref{itm:c1} and~\ref{itm:c2}.
    \end{remark}
    
    \item For such a model $\mathcal{M}_{\mathcal{S}}^{A}$ when can prove that the \textbf{\textit{general truth lemma}} holds for every subformula of $A$. Then we can reduce the concept of \textit{validity} of a subformula of $A$ to the easier-to-handle concept of \textit{membership}.

    \begin{lemma}[\textbf{Parametric truth lemma}~\href{https://archive.softwareheritage.org/swh:1:cnt:d364de3158b6405c920115aa0c801af9af49e302;origin=https://github.com/HOLMS-lib/HOLMS;visit=swh:1:snp:2c0efd349323ed6f8067581cf1f6d95816e49841;anchor=swh:1:rev:1caf3be141c6f646f78695c0eb528ce3b753079a;path=/gen_completeness.ml;lines=189-350}{\ExternalLink}]\label{lem:truth}\phantom{Allyouneedislovelove}
    \begin{mydefinition}
       Let  $\mathcal{M}_{\mathcal{S}}^{A}$ be a relational model verifying all the requests in~\ref{itm:first_parametric}.
        Let $A \in \mathbf{Form}_{\Box}$ such that $\mathcal{S} \not \vdash A$. Then for every $w \in W_{\mathcal{S}}^{A}$ and for every $B$ subformula of $A$ the following equivalence holds: \begin{center}
            $\mathcal{M}_{\mathcal{S}}^{A}, w \vDash B$ iff $B \in w$.            
        \end{center}
    \end{mydefinition}
    \end{lemma}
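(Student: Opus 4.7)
The plan is to proceed by structural induction on the subformula $B$ of $A$. This works because the subformula relation (Definition~\ref{def:subformula}) is closed along the inductive clauses: if $C \to D$ or $\Box C$ is a subformula of $A$, then $C$ and $D$ are too, so the induction hypothesis is available exactly where we need it. Throughout, we exploit the fact that $w \in W_{\mathcal{S}}^A$ is $\mathcal{S},A$-maximal consistent, so Lemmas~\ref{lem:consistency_lemma},~\ref{lem:maximal_consistent_mem_not}, and~\ref{lem:max_cons} apply.

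For the base cases: if $B = \bot$, then $\mathcal{M}_{\mathcal{S}}^A, w \not\vDash \bot$ by the definition of forcing, while $\bot \notin w$ follows from Lemma~\ref{lem:consistency_lemma} since $w$ is $\mathcal{S}$-consistent. If $B$ is a propositional atom $p$ (and thus $p \ \mathbf{sub}\ A$), clause~\ref{itm:fully_par-V} on $V_{\mathcal{S}}^A$ immediately yields $\mathcal{M}_{\mathcal{S}}^A, w \vDash p \iff p V_{\mathcal{S}}^A w \iff p \in w$.

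For the implication case $B = C \to D$, both $C$ and $D$ are subformulas of $A$, so the IH applies to them. In the direction $(\Leftarrow)$, assume $C \to D \in w$ and $\mathcal{M}_{\mathcal{S}}^A, w \vDash C$; by IH $C \in w$, and since $\mathcal{S} \vdash ((C \to D) \land C) \to D$ is a propositional tautology, Lemma~\ref{lem:max_cons} gives $D \in w$, hence $\mathcal{M}_{\mathcal{S}}^A, w \vDash D$ by IH, so $\mathcal{M}_{\mathcal{S}}^A, w \vDash C \to D$. In the direction $(\Rightarrow)$, suppose $\mathcal{M}_{\mathcal{S}}^A, w \vDash C \to D$. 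Then by the forcing clause for $\to$ combined with IH we have $C \notin w$ or $D \in w$. In the first subcase, Lemma~\ref{lem:maximal_consistent_mem_not} gives $\neg C \in w$ (here we use that $C \ \mathbf{sub}\ A$), and since $\mathcal{S} \vdash \neg C \to (C \to D)$ is a tautology, Lemma~\ref{lem:max_cons} places $C \to D$ in $w$. In the second subcase, the tautology $\mathcal{S} \vdash D \to (C \to D)$ together with Lemma~\ref{lem:max_cons} does the same job.

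The modal case $B = \Box C$ is where the care taken in setting up $R_{\mathcal{S}}^A$ pays off: by construction~\ref{itm:c2}, the required equivalence $\Box C \in w \iff \forall x \in W_{\mathcal{S}}^A (w R_{\mathcal{S}}^A x \implies C \in x)$ holds whenever $\Box C \ \mathbf{sub}\ A$, which is our assumption. Applying the IH to $C$ rewrites the right-hand side as $\forall x (w R_{\mathcal{S}}^A x \implies \mathcal{M}_{\mathcal{S}}^A, x \vDash C)$, i.e.\ as $\mathcal{M}_{\mathcal{S}}^A, w \vDash \Box C$, and we are done. The main obstacle is not conceptual depth but bookkeeping in the implication step: one must juggle maximal consistency, the appropriate propositional tautologies, and the distinction between $\neg C$ (not itself a subformula of $A$) and $C \ \mathbf{subs}\ A$ handled via Lemma~\ref{lem:maximal_consistent_mem_not}. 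The $\Box$-step, by contrast, is essentially free once the countermodel is built to satisfy~\ref{itm:c2}.
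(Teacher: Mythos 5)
Your proof is correct and follows essentially the same route as the paper: induction on the structure of the subformula, with the base cases handled by Lemma~\ref{lem:consistency_lemma} and the valuation clause, the implication case by Lemma~\ref{lem:maximal_consistent_mem_not} and Lemma~\ref{lem:max_cons} applied to suitable propositional tautologies, and the $\Box$ case reduced directly to requirement~\ref{itm:c2} plus the induction hypothesis. The only (immaterial) difference is organisational: the paper chains a single string of bi-implications through $\neg B \in w \iff (C \in w \text{ and } \neg D \in w)$, whereas you split the implication case into its two directions with slightly different tautologies.
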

    \begin{proof}
        This lemma is proved by induction on the complexity of the modal subformula $B$. We take $w$ as a generic world in the universe of the countermodel $\mathcal{M}_{\mathcal{S}}^{A}$.
        \begin{enumerate}
            \item $B= \bot$
            \begin{itemize}
                \item[$\implies$] If $\mathcal{M}_{\mathcal{S}}^{A}, w \vDash \bot$ then $\bot \in w$. \\
                By contraposition: If $\bot \not \in w$ then $\mathcal{M}_{\mathcal{S}}^{A}, w \not \vDash \bot$ \\
                This implication holds because--by definition of forcing--for every model $\mathcal{M}$ and world $w$, $\mathcal{M},w \not \vDash \bot$.

                \item[$\impliedby$] If $\bot \in w $ then $ \mathcal{M}_{\mathcal{S}}^{A}, w \vDash \bot $ \\ By contraposition: If $\mathcal{M}_{\mathcal{S}}^{A}, w \not \vDash \bot $ then $\bot \not \in w $. \\
                NAA: We assume $\bot \in w $ but then--by the Lemma~\ref{lem:consistency_lemma} on consistent sets--$w$ is not $\mathcal{S}-$consistent then \underline{$w \not \in W_{\mathcal{S}}^A$}. A contradiction. $\lightning$
            \end{itemize}
        \hfill $\lhd $   
        \item $B = p  \in \Phi $, $p$ is hence an atom formula such that $p \ \mathbf{sub} A\ $. \\
        We recall here the definition of the valuation relation in our countermodel: $aV_{\mathcal{S}}^{A}w \ \iff \ a \ \mathbf{sub} \ A$ and $a \in w$. \\
        Then follows $p \in w \ $ iff  $ \ pV_{\mathcal{S}}^{A}w \ $ iff $ \ \mathcal{M}_{\mathcal{S}}^{A}, w \vDash p$.
        \hfill $\lhd$    \\

        \item $B = C  \to D$ \\
            Note that $C, D \ \mathbf{sub} \ B$ and then, by transitivity of subformulas, $C, D \ \mathbf{sub} \ A$.
             Then the following inductive hypotheses hold:
            \begin{itemize}
            \item $\forall w \in W_{\mathcal{S}}^A(C \in w \ $ iff  $ \ \mathcal{M}_{\mathcal{S}}^{A}, w \vDash C)$;
            \item $\forall w \in W_{\mathcal{S}}^A(D \in w \ $ iff  $ \ \mathcal{M}_{\mathcal{S}}^{A}, w \vDash D)$.
        \end{itemize}
        We consider here some modal tautologies: 
        \begin{enumerate}
            \item\label{itm:truth_imp_1} $\mathcal{S} \vdash \neg B \to C$;
            \item\label{itm:truth_imp_2} $\mathcal{S} \vdash \neg B \to \neg  D$;
            \item\label{itm:truth_imp_3} $\mathcal{S} \vdash D \to (\neg C \to \neg B$);
        \end{enumerate}
        Then the following equivalences hold: \\
        \underline{$B \in w$} 
        $\xLeftrightarrow{\text{Lemma}~\ref{lem:maximal_consistent_mem_not}}$ 
        $\neg B \in w$
        $\xRightarrow{\text{MaxCons Lemma}~\ref{lem:max_cons} \text{ on taut} \ref{itm:truth_imp_1} \text{ and } \ref{itm:truth_imp_2}}$ \\
        $\xLeftarrow{\text{MaxCons Lemma}~\ref{lem:max_cons} \text{ on taut} \ref{itm:truth_imp_3}}$ 
        $C \in w$ and $\neg D \in w$ $\xLeftrightarrow{\text{Lemma}~\ref{lem:maximal_consistent_mem_not}}$ 
        $C \in w$ and $D \not \in w$ 
        $\xLeftrightarrow{\text{Inductive HP}}$
        $\mathcal{M}_{\mathcal{S}}^{A}, w \vDash C$ and $\mathcal{M}_{\mathcal{S}}^{A}, w \not \vDash D$
        $\xLeftrightarrow{\text{Forcing}}$
        $\mathcal{M}_{\mathcal{S}}^{A}, w \vDash C \to D$
        $\iff$ \underline{$\mathcal{M}_{\mathcal{S}}^{A}, w \vDash B$}
        \hfill $\lhd$
        \\
        \item $B = \Box C$ \\
        Note that $C \ \mathbf{sub} \ B$ and then, by transitivity of subformulas, $C \ \mathbf{sub} \ A$.
        \\ Then the following inductive hypothesis holds:
        \begin{center}
            $\forall w \in W_{\mathcal{S}}^A(C \in w \ $ iff  $ \ \mathcal{M}_{\mathcal{S}}^{A}, w \vDash C)$.
        \end{center}

        We recall here the second request for our accessibility relation $R_{\mathcal{S}}^{A}$~\ref{itm:c2}:
        \begin{center}
        For all $B \in \mathbf{Form}_{\Box}$, if $\Box B \ \mathbf{sub} \ A$ then, for every $w\in W^A_\mathcal{S}$:
            
                $\Box B \in w$ $ \ \iff \ $  $ \forall x\in W^A_\mathcal{S} (xR_{\mathcal{S}}^{A}w \implies B \in x)$.
        \end{center}
        Then the following equivalences hold:\\
        \underline{$B \in w$} $\iff$ $\Box C \in w$ $\xLeftrightarrow{Request~\ref{itm:c2}}$
        $\forall x\in W^A_\mathcal{S} (xR_{\mathcal{S}}^{A}w \implies C \in x)$
        $\xLeftrightarrow{Inductive \ HP}$
         $\forall x\in W^A_\mathcal{S} (xR_{\mathcal{S}}^{A}w \implies \mathcal{M}_{\mathcal{S}}^{A},x \vDash C )$ 
          $\xLeftrightarrow{forcing \ relation}$ 
        $\mathcal{M}_{\mathcal{S}}^{A}, w \vDash \Box C$
        $\iff$ \underline{$\mathcal{M}_{\mathcal{S}}^{A}, w \vDash B$}
        \hfill $\lhd$ \\
\newpage
        \begin{remark}
        \end{remark}
            This is the exact point where the second request for accessibility relation~\ref{itm:c2} proves to be essential.
        \end{enumerate}
    \end{proof}
\medskip \medskip    
\item Thanks to the precedent lemmas, we prove a \textbf{\textit{countermodel lemma}}, which states the last claim in the proof sketch in Section~\ref{clm:proof_sketch}. Furthermore, having found a countermodel as in~\ref{itm:first_parametric} for the system $\mathbb{S}$ under analysis, completeness for $\mathbb{S}$ follows as a corollary of the countermodel lemma.
\newpage
\begin{lemma}[\textbf{Parametric countermodel lemma}~\href{https://archive.softwareheritage.org/swh:1:cnt:d364de3158b6405c920115aa0c801af9af49e302;origin=https://github.com/HOLMS-lib/HOLMS;visit=swh:1:snp:2c0efd349323ed6f8067581cf1f6d95816e49841;anchor=swh:1:rev:1caf3be141c6f646f78695c0eb528ce3b753079a;path=/gen_completeness.ml;lines=574-593}{\ExternalLink}]\label{lem:parametric_countermodel_lemma}\phantom{Allyouneedislove}
\begin{mydefinition}
    Let $W_{\mathcal{S}}^{A}$ be the universe defined in~\ref{itm:first_parametric}. Let the frame $\langle W_{\mathcal{S}}^{A}, {R_{\mathcal{S}}^{A}} \rangle$ satisfy requirements~\ref{itm:c1} and~\ref{itm:c2}.
    Let $A$ a modal formula. \\
    \begin{center}
        If $\mathcal{S} \not \vdash A$ then ${\langle W_{\mathcal{S}}^{A}, R_{\mathcal{S}}^{A} \rangle}\not \vDash A$.
    \end{center}
\end{mydefinition}
\end{lemma}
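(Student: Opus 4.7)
The plan is to instantiate the frame with the canonical valuation $V_{\mathcal{S}}^A$ described in item~\ref{itm:fully_par-V} of the parametric countermodel construction, and then exhibit a specific world at which $A$ fails. Formally, I would set $\mathcal{M}_{\mathcal{S}}^A := \langle W_{\mathcal{S}}^A, R_{\mathcal{S}}^A, V_{\mathcal{S}}^A \rangle$, where $a\,V_{\mathcal{S}}^A\,w$ iff $a \ \mathbf{sub}\ A$ and $a \in w$. This is a legitimate model based on $\langle W_{\mathcal{S}}^A, R_{\mathcal{S}}^A \rangle$ provided that $W_{\mathcal{S}}^A$ is nonempty; this will be guaranteed by the hypothesis $\mathcal{S} \not\vdash A$ together with Lemma~\ref{lem:nonempty_max_cons}.

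First I would invoke Lemma~\ref{lem:nonempty_max_cons}: from $\mathcal{S} \not\vdash A$ we obtain a set $\mathcal{Y}$ of subsentences of $A$ such that $\neg A \in \mathcal{Y}$ and $\mathcal{Y} \in \textnormal{MAX}_{\mathcal{S}}^A$. By the very definition of $W_{\mathcal{S}}^A$, this $\mathcal{Y}$ belongs to the universe of the model. Next, since $A$ is trivially a subformula of itself and $\mathcal{Y}$ is $\mathcal{S},A$-maximal consistent, Lemma~\ref{lem:maximal_consistent_mem_not} applied to the membership $\neg A \in \mathcal{Y}$ yields $A \notin \mathcal{Y}$.

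At this point I would apply the Parametric Truth Lemma~\ref{lem:truth}, which requires exactly the hypotheses already in force: $\mathcal{S} \not\vdash A$ and a model based on a frame satisfying the two constraints~\ref{itm:c1} and~\ref{itm:c2}. Instantiating the lemma at the subformula $A$ itself and at the world $\mathcal{Y}$, we conclude $\mathcal{M}_{\mathcal{S}}^A, \mathcal{Y} \vDash A \iff A \in \mathcal{Y}$, whence $\mathcal{M}_{\mathcal{S}}^A, \mathcal{Y} \not\vDash A$.

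Finally, I would unfold the definitions of validity to close the argument: from $\mathcal{M}_{\mathcal{S}}^A, \mathcal{Y} \not\vDash A$ we get $\mathcal{M}_{\mathcal{S}}^A \not\vDash A$, and since $\mathcal{M}_{\mathcal{S}}^A$ is based on $\langle W_{\mathcal{S}}^A, R_{\mathcal{S}}^A \rangle$, we obtain $\langle W_{\mathcal{S}}^A, R_{\mathcal{S}}^A \rangle \not\vDash A$, which is the thesis. I do not anticipate any genuine obstacle here, as all the heavy lifting has been delegated to Lemmas~\ref{lem:nonempty_max_cons},~\ref{lem:maximal_consistent_mem_not}, and especially the Parametric Truth Lemma~\ref{lem:truth}; the only mild care needed is to verify that the canonical valuation $V_{\mathcal{S}}^A$ is indeed a well-defined relation on $\Phi \times W_{\mathcal{S}}^A$ and that the world $\mathcal{Y}$ supplied by the extension lemma really witnesses the failure of $A$.
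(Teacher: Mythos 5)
Your proposal is correct and follows essentially the same route as the paper: pick the model $\langle W_{\mathcal{S}}^{A}, R_{\mathcal{S}}^{A}, V_{\mathcal{S}}^{A}\rangle$, extend $\{\neg A\}$ to a maximal consistent world in $W_{\mathcal{S}}^{A}$, deduce $A\notin\mathcal{Y}$, and apply the Parametric Truth Lemma. The only cosmetic differences are that you cite Lemma~\ref{lem:nonempty_max_cons} (which merely packages the paper's direct appeal to Lemma~\ref{lem:extension}) and Lemma~\ref{lem:maximal_consistent_mem_not} where the paper argues directly from $\mathcal{S}$-consistency.
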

\begin{proof}
    Given these hypotheses, we can use $\langle W_{\mathcal{S}}^{A}, R_{\mathcal{S}}^{A}, V_{\mathcal{S}}^{A}  \rangle$ as countermodel and the maximal extension of $\{\neg A \}$ as the desired counterworld. Given $\mathcal{S} \not \vdash A$ then $\{\neg A\}$ is $\mathcal{S}$-consistent; thus by Lemma~\ref{lem:extension} we build $X_{\mathcal{S}}^{A}$ such that $X_{\mathcal{S}}^{A} \in W_{\mathcal{S}}^{A}$ and $\neg A \in X_{\mathcal{S}}^{A}$. Then, by $\mathcal{S}$-consistency, $ A \not \in X_{\mathcal{S}}^{A}$. Finally, by applying Lemma~\ref{lem:truth}, the thesis holds: $\langle W_{\mathcal{S}}^{A}, R_{\mathcal{S}}^{A}, V_{\mathcal{S}}^{A}  \rangle , X_{\mathcal{S}}^{A} \not \vDash A$.
\end{proof}    
\end{enumerate}

\subsubsection{Unparametrisable (Ad-hoc polymorphic) Part of the Proof} \label{sub:unparametrisable-completeness}
As previously remarked, we need now to define for each normal logic a specific accessibility relation on maximal consistent sets verifying the requirements of $\mathcal{M}^A_\mathcal{S}$. 

For each modal system $\mathbb{S}$ in the cube, we then proceed as follows:
\begin{enumerate}
    \item[I.] We \textbf{define the standard accessibility relation} $\mathbf{Rel_{\mathcal{S}}^A}$ specific to each $\mathbb{S} \in\{\mathbb{K;\ D; \ T;}$ $\mathbb{ \ K\textnormal{4}; \ S\textnormal{4};\ B; \ S\textnormal{5};\ GL}\}$. These definitions are listed in Table~\ref{tab:standard_accessibility_reation}.

    \begin{table}[h]
    \centering
    \renewcommand{\arraystretch}{1.3} 
    \begin{tabular}{c|l}

        \textbf{System} & \textbf{Standard Accessibility Relation} \\
        \hline
        $\mathbb{K}$  & $Rel_{\varnothing}^A \coloneq Rel_{\varnothing}^A (w,x)$\footnotemark$ \iff \forall \ \Box B \in w (B \in x)$~\href{https://archive.softwareheritage.org/swh:1:cnt:ae230138ff15476c8dab9e32606bceca7168285b;origin=https://github.com/HOLMS-lib/HOLMS;visit=swh:1:snp:2c0efd349323ed6f8067581cf1f6d95816e49841;anchor=swh:1:rev:1caf3be141c6f646f78695c0eb528ce3b753079a;path=/k_completeness.ml;lines=109-110}{\ExternalLink} \\ \hline
        $\mathbb{D}$  & $Rel_{\mathcal{D}}^A \coloneq Rel_{\mathcal{D}}^A(w,x) \iff \forall \ \Box B \in w (B \in x)$  \\ \hline
        $\mathbb{T}$  & $Rel_{\mathcal{T}}^A \coloneq Rel_{\mathcal{T}}^A (w,x) \iff \forall \ \Box B \in w (B \in x)$~\href{https://archive.softwareheritage.org/swh:1:cnt:1352cb294724b2251b9346657432700ecbed10d2;origin=https://github.com/HOLMS-lib/HOLMS;visit=swh:1:snp:2c0efd349323ed6f8067581cf1f6d95816e49841;anchor=swh:1:rev:1caf3be141c6f646f78695c0eb528ce3b753079a;path=/t_completeness.ml;lines=207-209}{\ExternalLink} \\ \hline
        $\mathbb{K}4$ &  $Rel_{\mathcal{K}4}^A \coloneq Rel_{\mathcal{K}4}^A(w,x) \iff \forall \ \Box B \in w (B \in x \ and \ \Box B \in x)$~\href{https://archive.softwareheritage.org/swh:1:cnt:32869a1df338d91c4cd4a0cb9e73fb4f1be29991;origin=https://github.com/HOLMS-lib/HOLMS;visit=swh:1:snp:2c0efd349323ed6f8067581cf1f6d95816e49841;anchor=swh:1:rev:1caf3be141c6f646f78695c0eb528ce3b753079a;path=/k4_completeness.ml;lines=215-218}{\ExternalLink} \\ \hline
        $\mathbb{S}4$ & $Rel_{\mathcal{S}4}^A \coloneq Rel_{\mathcal{S}4}^A(w,x) \iff \forall \ \Box B \in w (B \in x \ and \ \Box B \in x)$  \\ \hline
        $\mathbb{B}$  & $Rel_{\mathcal{B}}^A \coloneq Rel_{\mathcal{B}}^A (w,x) \iff \forall \ \Box B \in w (B \in x) \ and \ \forall \Box B \in x (B \in w)$ \\ \hline
        $\mathbb{S}5$ & $Rel_{\mathcal{S}5}^A \coloneq Rel_{\mathcal{S}5}^A (w,x) \iff \forall \ \Box B (\Box B \in w \iff \Box B \in x)$ \\ \hline
        $\mathbb{GL}$ & $Rel_{\mathcal{GL}}^A \coloneq Rel_{\mathcal{GL}}^A (w,x) \iff \forall \ \Box B \in w (B \in x \ and \ \Box B \in x)\ $~\href{https://archive.softwareheritage.org/swh:1:cnt:52c9de12454731a9a291b06bc750c0d2d14e1fb4;origin=https://github.com/HOLMS-lib/HOLMS;visit=swh:1:snp:2c0efd349323ed6f8067581cf1f6d95816e49841;anchor=swh:1:rev:1caf3be141c6f646f78695c0eb528ce3b753079a;path=/gl_completeness.ml;lines=298-302}{\ExternalLink} \\
        &  $\ \ \ \ \ \ \ \ \ \ \ \ \ \ \ \ \ \ \ \ \ \ \ \ 
 \ \ \ \ \ \ \ \ \ and \ \exists \Box E \in x (\neg \Box E \in w) $ \\ \hline
    \end{tabular}
    \caption{Normal systems in the cube and their standard accessibility relations}
    \label{tab:standard_accessibility_reation}
\end{table}
\footnotetext{We adopt here the functional writing $R(w,x)$ instead than the relational $wRx$ to improve readability.}

    \item[II.] We prove that $\langle W_*^A, Rel_*^A, V_*^A \rangle$ verifies the two constraints for $\mathcal{M}_*^A$:
    \begin{enumerate}
       \item[1(b).i] \textit{\textbf{Appropriate standard frame lemma}}~(\ref{lem:appr_std_frame}) guarantees that--for each modal system $\mathbb{S}$ within the cube--the first requirement~\ref{itm:c1} holds on the standard frame $\langle W_{\mathcal{S}}^{A}, Rel_{\mathcal{S}}^{A} \rangle$  defined with our standard accessibility relation:
       \begin{center}
           $\langle W_{\mathcal{S}}^{A}, Rel_{\mathcal{S}}^{A} \rangle \in \mathfrak{S}$
       \end{center}
      
       \item[1(b).ii] \textit{\textbf{Accessibility lemma}}~(\ref{lem:accessibility_std_frame}) 
       guarantees that--for each modal system $\mathbb{S}$ within the cube--the second requirement~\ref{itm:c2} holds on the standard frame $\langle W_{\mathcal{S}}^{A}, Rel_{\mathcal{S}}^{A} \rangle$ defined with our standard accessibility relation $Rel_{\mathcal{S}}^{A}$:
       \begin{center}
           For all $B \in \mathbf{Form}_{\Box}$, if $\Box B \ \mathbf{sub} \ A$ then, for every $w\in W^A_\mathcal{S}$:
            
                $\Box B \in w$ $ \ \iff \ $  $ \forall x\in W^A_\mathcal{S} (Rel_{\mathcal{S}}^{A}(w,x)\implies B \in x)$.
       \end{center}
       \end{enumerate}
\medskip

\begin{lemma}[\textbf{Appropriate standard frame lemma}]\label{lem:appr_std_frame}\phantom{Allyouneedislove}
\begin{mydefinition}
\begin{multicols}{2}
    \begin{itemize}
        \item[$\textnormal{1.}$] 
        $\langle W_{\mathcal{\varnothing}}^{A}, Rel_{\mathcal{\varnothing}}^{A} \rangle \in \mathfrak{F}$~\href{https://archive.softwareheritage.org/swh:1:cnt:ae230138ff15476c8dab9e32606bceca7168285b;origin=https://github.com/HOLMS-lib/HOLMS;visit=swh:1:snp:2c0efd349323ed6f8067581cf1f6d95816e49841;anchor=swh:1:rev:1caf3be141c6f646f78695c0eb528ce3b753079a;path=/k_completeness.ml;lines=119-127}{\ExternalLink}
        \item[$\textnormal{2.}$] 
        $\langle W_{\mathcal{D}}^{A}, Rel_{\mathcal{D}}^{A} \rangle \in \mathfrak{SerF}$
        \item[$\textnormal{3.}$]
        $\langle W_{\mathcal{T}}^{A}, Rel_{\mathcal{T}}^{A} \rangle \in \mathfrak{RF}$~\href{https://archive.softwareheritage.org/swh:1:cnt:1352cb294724b2251b9346657432700ecbed10d2;origin=https://github.com/HOLMS-lib/HOLMS;visit=swh:1:snp:2c0efd349323ed6f8067581cf1f6d95816e49841;anchor=swh:1:rev:1caf3be141c6f646f78695c0eb528ce3b753079a;path=/t_completeness.ml;lines=220-259}{\ExternalLink}
        \item[$\textnormal{4.}$] 
        $\langle W_{\mathcal{K}4}^{A}, Rel_{\mathcal{K}4}^{A} \rangle \in \mathfrak{TF}$~\href{https://archive.softwareheritage.org/swh:1:cnt:32869a1df338d91c4cd4a0cb9e73fb4f1be29991;origin=https://github.com/HOLMS-lib/HOLMS;visit=swh:1:snp:2c0efd349323ed6f8067581cf1f6d95816e49841;anchor=swh:1:rev:1caf3be141c6f646f78695c0eb528ce3b753079a;path=/k4_completeness.ml;lines=229-249}{\ExternalLink}
        \item[$\textnormal{5.}$]
        $\langle W_{\mathcal{S}4}^{A}, Rel_{\mathcal{S}4}^{A} \rangle \in \mathfrak{RF \cap TF}$
        \item[$\textnormal{6.}$]
        $\langle W_{\mathcal{B}}^{A}, Rel_{\mathcal{B}}^{A} \rangle \in \mathfrak{RF \cap SymF}$
        \item[$\textnormal{7.}$]
        $\langle W_{\mathcal{S}5}^{A}, Rel_{\mathcal{S}5}^{A} \rangle \in \mathfrak{RF \cap EF}$
        \item[$\textnormal{8.}$]
        $\langle W_{\mathcal{GL}}^{A}, Rel_{\mathcal{GL}}^{A} \rangle \in \mathfrak{ITF}$~\href{https://archive.softwareheritage.org/swh:1:cnt:52c9de12454731a9a291b06bc750c0d2d14e1fb4;origin=https://github.com/HOLMS-lib/HOLMS;visit=swh:1:snp:2c0efd349323ed6f8067581cf1f6d95816e49841;anchor=swh:1:rev:1caf3be141c6f646f78695c0eb528ce3b753079a;path=/gl_completeness.ml;lines=314-342}{\ExternalLink}
    \end{itemize}
\end{multicols}
\end{mydefinition}
\end{lemma}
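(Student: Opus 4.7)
The plan is to handle the eight cases uniformly up to a common finiteness argument, then verify the system-specific relational properties one by one, exploiting the shape of the standard accessibility relation $Rel_{\mathcal{S}}^{A}$ in each case. Finiteness is entirely parametric: since every world in $W_{\mathcal{S}}^{A}$ is a set of subsentences of $A$, Fact~\ref{fct:finite_formulas} gives that $W_{\mathcal{S}}^{A}$ is finite, and this takes care of membership in $\mathfrak{F}$ for all eight items. So what remains, in each case, is to check seriality, reflexivity, transitivity, symmetry, euclideanity, or irreflexivity as appropriate.

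For the cases involving $\mathbf{T}$ (items~3,5,6,7) I would derive reflexivity uniformly: given any $\Box B \in w$, axiom $\mathbf{T}$ yields $\mathcal{S} \vdash \Box B \to B$, and since $B \ \mathbf{sub}\ A$, the Maximal consistent lemma~\ref{lem:max_cons} forces $B \in w$. For $\mathbb{S}4$ I additionally need $\Box B \in w$, which is immediate; for $\mathbb{B}$ the symmetric clause is self-dual at $w$; for $\mathbb{S}5$ the biconditional collapses. Transitivity for $\mathbb{K}4$ and $\mathbb{S}4$ (items~4,5) follows almost tautologically from the design of $Rel_{\mathcal{K}4}^{A}$: if $Rel(w,x)$ and $Rel(x,y)$, then $\Box B \in w$ gives $\Box B \in x$ via the first, and then $B\in y$ and $\Box B \in y$ via the second. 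Symmetry for $\mathbb{B}$ and euclideanity for $\mathbb{S}5$ are immediate from the symmetric and biconditional shapes of their relations.

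For $\mathbb{D}$, seriality is the first genuine obstacle: given $w \in W_{\mathcal{D}}^A$, I must produce a successor in $W_{\mathcal{D}}^A$. I would form the set $\{B \in \mathbf{Form}_{\Box} : \Box B \in w\} \cup \{\neg\bot\}$, show it is $\mathcal{D}$-consistent by contradiction using normality (Lemma~\ref{lem:MLK_box}) and axiom $\mathbf{D}$: a proof of its negation would transport, via \texttt{MLK-imp-box}, into an inconsistency of $w$. I would then invoke the Maximal consistent extension lemma~\ref{lem:extension} to obtain a $\mathcal{D},A$-maximal consistent $x$ of subsentences of $A$, which by construction satisfies $Rel_{\mathcal{D}}^{A}(w,x)$.

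The real core of the proof, and the step I expect to be most delicate, is item~8 for $\mathbb{GL}$, where I must simultaneously secure finiteness, transitivity, and irreflexivity for $Rel_{\mathcal{GL}}^{A}$, and by Lemma~\ref{lem:ITF-transnt} it then lies in $\mathfrak{ITF}$. Irreflexivity is quick: $Rel_{\mathcal{GL}}^{A}(w,w)$ would demand some $\Box E \in w$ with $\neg\Box E \in w$, contradicting $\mathcal{GL}$-consistency by Lemma~\ref{lem:maximal_consistent_mem_not}. Transitivity requires extra care on the existential clause: given $Rel(w,x)$ and $Rel(x,y)$, the universal part transports as in $\mathbb{K}4$; for the witness, I would pick the $\Box E \in y$ with $\neg \Box E \in x$ supplied by $Rel(x,y)$, and argue that $\Box E \in w$ would, via $Rel(w,x)$, place $\Box E \in x$, clashing with $\neg \Box E \in x$; maximal consistency then yields $\neg \Box E \in w$, completing the chain. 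This bookkeeping around the existential witness, together with ensuring the chosen $E$ remains a subsentence of $A$ so that Lemma~\ref{lem:maximal_consistent_mem_not} applies, is the main subtlety I expect.
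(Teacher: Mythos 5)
Your proposal is correct and, for items~1 and~3--8, follows essentially the same route as the paper: finiteness is handled parametrically via Fact~\ref{fct:finite_formulas}, reflexivity in the $\mathbf{T}$-cases comes from $\mathcal{S}\vdash\Box B\to B$ together with the Maximal consistent lemma~\ref{lem:max_cons}, transitivity for $\mathbb{K}4$/$\mathbb{S}4$ is structural, symmetry and euclideanity are immediate from the shape of the relations, and irreflexivity for $\mathbb{GL}$ follows from consistency. Two points of genuine divergence are worth recording. First, for $\mathbb{D}$ (item~2) you build the set $\{B : \Box B\in w\}\cup\{\neg\bot\}$, prove it $\mathcal{D}$-consistent via normality and axiom $\mathbf{D}$, and extend it by Lemma~\ref{lem:extension} to obtain a successor; the paper instead takes $x=w$ and closes the case by appealing to $\Box B\to B$ --- an argument that only goes through for systems containing $\mathbf{T}$, so your construction is the one that actually uses $\mathbf{D}$ and is the more robust route. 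Second, for transitivity of $Rel_{\mathcal{GL}}^{A}$ you take the existential witness from $Rel(x,y)$ (some $\Box E\in y$ with $\neg\Box E\in x$) and then need a maximal-consistency step, with its subformula side condition, to conclude $\neg\Box E\in w$; the paper instead carries forward the witness from $Rel(w,x)$ (some $\Box E_1\in x$ with $\neg\Box E_1\in w$) and simply pushes $\Box E_1$ into $y$ via the universal clause of $Rel(x,y)$, which is slightly more direct and avoids that side condition entirely. Both variants are sound. A last cosmetic remark: your appeal to Lemma~\ref{lem:ITF-transnt} in item~8 is unnecessary, since $\mathfrak{ITF}$ is defined outright as the class of finite irreflexive transitive frames, so proving finiteness, irreflexivity and transitivity already gives membership.
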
  
\begin{proof}
Note that we have proved that the parametric universe $ W_{\mathcal{S}}^{A}$ is non empty and, furthermore, each $ Rel_{\mathcal{S}}^{A}$ is a binary relation over $ W_{\mathcal{S}}^{A}$; thus each couple $\langle W_{\mathcal{S}}^{A},  Rel_{\mathcal{S}}^{A} \rangle$ is a frame. Moreover, this is a finite frame because $W_{\mathcal{S}}^{A}$ is finite: $W_{\mathcal{S}}^{A}$ is a subset of $\textit{MAX}_{\mathcal{S}}^{A}$; $\textit{MAX}_{\mathcal{S}}^{A}$ is a set of sets formulas; from Fact~\ref{fct:finite_formulas}, we know that there are only finitely many \textbf{sets of formulas}.
    \begin{itemize}
        \item[1.] $\langle W_{\mathcal{\varnothing}}^{A}, Rel_{\mathcal{\varnothing}}^{A} \rangle$ is a finite relational frame, then it belongs to $\mathfrak{F}$. 
        
        \hfill $\lhd$
        
        \item[2.] $\langle W_{\mathcal{D}}^{A}, Rel_{\mathcal{D}}^{A} \rangle$ is a finite relational frame. \\ We should prove that it is also \textbf{serial} so that it belongs to $\mathfrak{SerF}$. \\
        By definition of the standard accessibility relation $Rel_{\mathcal{D}}^{A}$, the following equivalence holds:
        \begin{center}
            $\forall w \in W_{\mathcal{D}}^{A}  (\exists x \in W_{\mathcal{D}}^{A}  (Rel_{\mathcal{D}}^{A}(w,x))) $ \\ $\iff $ \\$ \forall w \in W_{\mathcal{D}}^{A}  (\exists x \in W_{\mathcal{D}}^{A}(\forall \Box B \in w(B \in x)))$.
        \end{center}

         We consider $w$ a world in $W_{\mathcal{T}}^{A}$ such that $\Box B \in w$ and we want to find a world $x$ such that $B \in x$. We consider $x=w$, the following hold:
        \begin{enumerate}
            \item  $B \ \mathbf{sub} \ A$;
            \item Because $w \in W_{\mathcal{T}}^{A}$, $w$ is $\mathcal{T},A-$maximal consistent;
            \item From the axiom \textbf{T}, $\mathcal{T} \vdash \Box B \to B$;
            \item $\Box B \in w$;
        \end{enumerate}
         By maximal consistent lemma~\ref{lem:max_cons}, $B\in w=x$.
         
        \hfill $\lhd$

        \item[3.] $\langle W_{\mathcal{T}}^{A}, Rel_{\mathcal{T}}^{A} \rangle$ is a finite relational frame. \\ We should prove that it is also \textbf{reflexive} so that it belongs to $\mathfrak{RF}$. By definition of the standard accessibility relation $Rel_{\mathcal{T}}^{A}$, the following equivalence holds:
        \begin{center}
            $\forall w \in W_{\mathcal{T}}^{A}  (Rel_{\mathcal{T}}^{A}(w,w)) \iff \forall w \in W_{\mathcal{T}}^{A}(\forall \Box B \in w(B \in w))$.
        \end{center}
        We consider $w$ a world in $W_{\mathcal{T}}^{A}$ such that $\Box B \in w$ and we want to prove that $B \in w$. As above, the following hold:
        \begin{enumerate}
            \item  $B \ \mathbf{sub} \ A$;
            \item Because $w \in W_{\mathcal{T}}^{A}$, $w$ is $\mathcal{T},A-$maximal consistent;
            \item From the axiom \textbf{T}, $\mathcal{T} \vdash \Box B \to B$;
            \item $\Box B \in w$;
        \end{enumerate}
        By maximal consistent lemma~\ref{lem:max_cons}, $B\in w$.
        
        \hfill $\lhd$

        \item[4.] $\langle W_{\mathcal{K}4}^{A}, Rel_{\mathcal{K}4}^{A} \rangle$ is a finite relational frame. \\ We should prove that it is also \textbf{transitive} so that it belongs to $\mathfrak{TF}$. 

         By definition of the standard accessibility relation $Rel_{\mathcal{K}4}^{A}$, the following equivalence holds:
        \begin{center}
            $\forall w,x,y \in W_{\mathcal{K}4}^{A}  (Rel_{\mathcal{K}4}^{A}(w,x) \ and \ Rel_{\mathcal{K}4}^{A}(x,y) \implies  Rel_{\mathcal{K}4}^{A}(w,y)) $\\$ \iff  $\\ $ \forall w,y,x \in W_{\mathcal{K}4}^{A}(\forall \Box B \in w(B \in x \ and \ \Box B \in x ) \ and \ \forall \Box B \in x(B \in y \ and \ \Box B \in y) \implies \forall \Box B \in w(B \in y \ and \ \Box B \in y) )$.
        \end{center}
        We consider three worlds $w,x,y$  in $W_{\mathcal{T}}^{A}$ such that:
        \begin{itemize}
            \item $\Box B \in w$;
            \item $\forall \Box B \in w(B \in x \ and \ \Box B \in x) $;
            \item $ \forall \Box B \in x(B \in y \ and \ \Box B \in y)$.
        \end{itemize}
        We want to prove that $B \in y$. \\
        We have that $\Box B \in w$, then both $B,\Box B \in x$; thus both $B, \Box B \in y$.    
        
        \hfill $\lhd$

         \item[5.] $\langle W_{\mathcal{S}4}^{A}, Rel_{\mathcal{S}4}^{A} \rangle$ is a finite relational frame. \\ We should prove that it is also \textbf{transitive} and \textbf{reflexive} so that it belongs to $\mathfrak{RF \cap TF}$. 
         \begin{itemize}
             \item $Rel_{\mathcal{S}4}^{A}$ is \textbf{transitive} and the proof is the same as the one for $Rel_{\mathcal{K}4}^{A}$;
             \item We want to prove that $Rel_{\mathcal{S}4}^{A}$ is \textbf{reflexive.} \\ By definition of the standard accessibility relation $Rel_{\mathcal{S}4}^{A}$, the following equivalence holds:
        \begin{center}
            $\forall w \in W_{\mathcal{S}4}^{A}  (Rel_{\mathcal{S}4}^{A}(w,w))$ \\ $\iff $\\$ \forall w \in W_{\mathcal{S}4}^{A}(\forall \Box B \in w(B \in w \ and \ \Box B \in w ))$.
        \end{center}
        We consider $w$ a world in $W_{\mathcal{S}4}^{A}$ such that $\Box B \in w$ and we want to prove that $B, \Box B \in w$. Then the following hold:
        \begin{enumerate}
            \item  $B, \Box B \ \mathbf{sub} \ A$;
            \item Because $w \in W_{\mathcal{S}4}^{A}$, $w$ is $\mathcal{S}4,A-$maximal consistent;
            \item From the axiom \textbf{T}, $\mathcal{S}4 \vdash \Box B \to B$;
            \item $\Box B \in w$;
        \end{enumerate}
        By maximal consistent lemma~\ref{lem:max_cons}, $B, \Box B \in w$.
         \end{itemize}

        \hfill $\lhd$

        \item[6.] $\langle W_{\mathcal{B}}^{A}, Rel_{\mathcal{B}}^{A} \rangle$ is a finite relational frame. \\ We should prove that it is also \textbf{reflexive} and \textbf{symmetric} so that it belongs to $\mathfrak{RF \cap SymF}$. 
         \begin{itemize}
             \item We want to prove that $Rel_{\mathcal{B}}^{A}$ is \textbf{symmetric}: \\
             By definition of the standard accessibility relation $Rel_{\mathcal{S}4}^{A}$, the following equivalence holds:
        \begin{center}
            $\forall w,x \in W_{\mathcal{B}}^{A}  (Rel_{\mathcal{B}}^{A}(w,x) \implies Rel_{\mathcal{B}}^{A}(x,w) )$ \\ $\iff $\\$ \forall w,x \in W_{\mathcal{B}}^{A}(\forall \Box B \in w(B \in x) \ and \ \forall \Box B \in x(B\in w) \implies \forall \Box B \in x(B \in w) \ and \ \forall \Box B \in w(B\in x) )$.
        \end{center}
        But then the second statement obviously holds and the standard accessibility for $\mathbb{B}$ is symmetric.
        
             \item We want to prove that $Rel_{\mathcal{B}}^{A}$ is \textbf{reflexive.} \\ By definition of the standard accessibility relation $Rel_{\mathcal{B}}^{A}$, the following equivalence holds:
        \begin{center}
            $\forall w \in W_{\mathcal{B}}^{A}  (Rel_{\mathcal{B}}^{A}(w,w) \iff  \forall w \in W_{\mathcal{B}}^{A}(\forall \Box B \in w(B \in w))$.
        \end{center}
        We consider $w$ a world in $W_{\mathcal{B}}^{A}$ such that $\Box B \in w$ and we want to prove that $B \in w$. Then, as for $\mathbb{T}$, the following hold:
        \begin{enumerate}
            \item  $B \ \mathbf{sub} \ A$;
            \item Because $w \in W_{\mathcal{B}}^{A}$, $w$ is $\mathcal{B},A-$maximal consistent;
            \item From the axiom \textbf{T}, $\mathcal{B} \vdash \Box B \to B$;
            \item $\Box B \in w$;
        \end{enumerate}
        By maximal consistent lemma~\ref{lem:max_cons}, $B \in w$.
         \end{itemize}

        \hfill $\lhd$

        \item[7.] $\langle W_{\mathcal{S}5}^{A}, Rel_{\mathcal{S}5}^{A} \rangle$ is a finite relational frame. \\ We should prove that it is also \textbf{euclidean} and \textbf{reflexive} so that it belongs to $\mathfrak{RF \cap EF}$. 
         \begin{itemize}
             \item We want to prove that $Rel_{\mathcal{S}4}^{A}$ is \textbf{euclidean.} \\ By definition of the standard accessibility relation $Rel_{\mathcal{S}4}^{A}$, the following equivalence holds:
        \begin{center}
            $\forall w, x, y \in W_{\mathcal{S}5}^{A}  (Rel_{\mathcal{S}5}^{A}(w,x) \ and \ Rel_{\mathcal{S}5}^{A}(w,y) \implies Rel_{\mathcal{S}5}^{A}(x,y))$ \\ $\iff $\\$ \forall w,x,y \in W_{\mathcal{S}5}^{A}(\forall \Box B (\Box B \in w \iff \ \Box B \in x ) \ and \ \forall \Box B (\Box B \in w \iff \ \Box B \in y ) \implies \forall \Box B (\Box B \in x \iff \ \Box B \in x ))$.
        \end{center}
        But then the second statement obviously holds and the standard accessibility for $\mathbb{B}$ is euclidean.
        
        \item  We want to prove that $Rel_{\mathcal{S}5}^{A}$ is \textbf{reflexive.} \\ By definition of the standard accessibility relation $Rel_{\mathcal{S}5}^{A}$, the following equivalence holds:
        \begin{center}
            $\forall w \in W_{\mathcal{S}5}^{A}  (Rel_{\mathcal{S}5}^{A}(w,w) \iff  \forall w \in W_{\mathcal{S}5}^{A}(\forall \Box B (\Box B \in w \iff \Box B \in w))$.
        \end{center}
         But then the second statement obviously holds and the standard accessibility for $\mathbb{B}$ is reflexive.
        \end{itemize}
        \hfill $\lhd$

        \item[8.] $\langle W_{\mathcal{GL}4}^{A}, Rel_{\mathcal{GL}4}^{A} \rangle$ is a finite relational frame. \\ We should prove that it is also \textbf{transitive} and \textbf{irreflexive} so that it belongs to $\mathfrak{TF}$. 

        \begin{itemize}
            \item We want to prove that $Rel_{\mathcal{GL}}^{A}$ is \textbf{transitive.} \\ 
         By definition of the standard accessibility relation $Rel_{\mathcal{GL}}^{A}$, the following equivalence holds:
        \begin{center}
            $\forall w,x,y \in W_{\mathcal{GL}}^{A}  (Rel_{\mathcal{GL}}^{A}(w,x) \ and \ Rel_{\mathcal{GL}}^{A}(x,y) \implies  Rel_{\mathcal{GL}}^{A}(w,y)) $\\$ \iff  $\\ $ \forall w,y,x \in W_{\mathcal{GL}}^{A}$\\$((\forall \Box B \in w(B \in x \ and \ \Box B \in x ) \ and \ \exists \Box E \in x (\neg \Box E \in w)) \ and$\\$  (\forall \Box B \in x(B \in y \ and \ \Box B \in y) \ and \ and \ \exists \Box E \in y (\neg \Box E \in x)) \implies$\\$ (\forall \Box B \in w(B \in y \ and \ \Box B \in y) \ and \ \ and \ \exists \Box E \in y (\neg \Box E \in w)) )$.
        \end{center}
        We consider three worlds $w,x,y$  in $W_{\mathcal{T}}^{A}$ such that:
        \begin{itemize}
            \item $\Box B \in w$;
            \item $\forall \Box B \in w(B \in x \ and \ \Box B \in x) $; $\exists \Box E_1 \in x (\neg \Box E_1 \in w)$;
            \item $ \forall \Box B \in x(B \in y \ and \ \Box B \in y)$; $ \exists \Box E_2 \in y (\neg \Box E_2 \in x)$;.
        \end{itemize}
        We have that $\Box B \in w$, then both $B, \Box B \in x$; thus both $B, \Box B \in y$. 
        Furthermore, $\neg \Box E_1 \in w$ and $\Box E_1 \in x$ but then $\Box E_1 \in y$, i.e $\exists \Box E_1 \in y (\neg\Box E_1 \in w)$.
        
        \item We want to prove that $Rel_{\mathcal{GL}}^{A}$ is \textbf{irreflexive.}\\
        By definition of the standard accessibility relation $Rel_{\mathcal{GL}}^{A}$, the following equivalence holds:
        \begin{center}
            $\forall w \in W_{\mathcal{GL}}^{A}  \neg(Rel_{\mathcal{GL}}^{A}(w,w))$ \\ $\iff $\\$ \forall w \in W_{\mathcal{GL}}^{A}(\neg (\forall \Box B \in w(B \in w \ and \ \Box B \in w ) \ and \ \exists \Box E \in w (\neg \Box E \in w) ))$.
        \end{center}
        We proceed by \textit{negatio ad absurdum}.  \\ NAA: $\exists w \in W_{\mathcal{GL}}^{A} (Rel_{\mathcal{GL}}^{A}(w,w))$ then $\forall \Box B \in w(B \in w \ and \ \Box B \in w $ and  $\exists \neg \Box E \in w (\Box E \in w) )$. But then both $\Box E, \neg \Box E \in w$; thus \underline{w is $\mathcal{GL}-$inconistent}. A contradiction. $\lightning$\hfill $\lhd$
        \end{itemize}
    \end{itemize}
\end{proof}
\medskip
\medskip

\begin{lemma}[\textbf{Accessibility lemma}]\label{lem:accessibility_std_frame}\phantom{Allyouneedislovelove}
   \begin{mydefinition}
   Let $A, B$ modal formulas such that $\Box B \ \mathbf{sub} \ A$. Let $\mathbb{S} $ be a modal system in \\ $ \{\mathbb{K}$~\href{https://archive.softwareheritage.org/swh:1:cnt:ae230138ff15476c8dab9e32606bceca7168285b;origin=https://github.com/HOLMS-lib/HOLMS;visit=swh:1:snp:2c0efd349323ed6f8067581cf1f6d95816e49841;anchor=swh:1:rev:1caf3be141c6f646f78695c0eb528ce3b753079a;path=/k_completeness.ml;lines=129-150}{\ExternalLink}  $\mathbb{;\ D; \ T}$~\href{https://archive.softwareheritage.org/swh:1:cnt:1352cb294724b2251b9346657432700ecbed10d2;origin=https://github.com/HOLMS-lib/HOLMS;visit=swh:1:snp:2c0efd349323ed6f8067581cf1f6d95816e49841;anchor=swh:1:rev:1caf3be141c6f646f78695c0eb528ce3b753079a;path=/t_completeness.ml;lines=261-282}{\ExternalLink} $\mathbb{; \ K\textnormal{4}}$~\href{https://archive.softwareheritage.org/swh:1:cnt:32869a1df338d91c4cd4a0cb9e73fb4f1be29991;origin=https://github.com/HOLMS-lib/HOLMS;visit=swh:1:snp:2c0efd349323ed6f8067581cf1f6d95816e49841;anchor=swh:1:rev:1caf3be141c6f646f78695c0eb528ce3b753079a;path=/k4_completeness.ml;lines=274-382}{\ExternalLink} $\mathbb{; \ S\textnormal{4};\ B; \ S\textnormal{5};\ GL}$~\href{https://archive.softwareheritage.org/swh:1:cnt:52c9de12454731a9a291b06bc750c0d2d14e1fb4;origin=https://github.com/HOLMS-lib/HOLMS;visit=swh:1:snp:2c0efd349323ed6f8067581cf1f6d95816e49841;anchor=swh:1:rev:1caf3be141c6f646f78695c0eb528ce3b753079a;path=/gl_completeness.ml;lines=344-467}{\ExternalLink}$\}$.
   For every $w\in W^A_\mathcal{S}$:
   \begin{center}
       $\Box B \in w$ $ \ \iff \ $  $ \forall x\in W^A_\mathcal{S} (Rel_{\mathcal{S}}^{A}(w,x) \implies B \in x)$.
   \end{center}
\end{mydefinition}
\end{lemma}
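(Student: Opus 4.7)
The plan is to prove each of the eight equivalences by separating the two directions. The forward implication $(\Rightarrow)$ is essentially by construction: in every one of the eight definitions of $Rel_{\mathcal{S}}^A$ listed in Table~\ref{tab:standard_accessibility_reation}, the first clause explicitly demands that $B\in x$ whenever $\Box B\in w$ and $Rel_{\mathcal{S}}^A(w,x)$. The only system that does not fit this pattern is $\mathbb{S}5$, whose relation only propagates boxed formulas; there I would additionally invoke $\mathbf{T}\in\mathcal{S}5$ together with the maximal consistent lemma~\ref{lem:max_cons} to pass from $\Box B\in x$ to $B\in x$. So the forward half is essentially bookkeeping.

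The real content is the contrapositive of $(\Leftarrow)$: assuming $\Box B\notin w$, construct an $x\in W^A_{\mathcal{S}}$ with $Rel_{\mathcal{S}}^A(w,x)$ and $B\notin x$. For each system I would build a candidate set $\mathcal{Y}_{\mathcal{S}}$ and then invoke the extension Lemma~\ref{lem:extension} to obtain $x$. Concretely I would take
\[
\mathcal{Y}_{\mathcal{S}} \;:=\; \{C : \Box C\in w\}\;\cup\;\Sigma_{\mathcal{S}}\;\cup\;\{\neg B\},
\]
restricted to subsentences of $A$, where $\Sigma_{\mathcal{S}}$ is the system-specific surplus: $\Sigma_{\varnothing}=\Sigma_{\mathcal{D}}=\Sigma_{\mathcal{T}}=\varnothing$; $\Sigma_{\mathcal{K}4}=\Sigma_{\mathcal{S}4}=\{\Box C:\Box C\in w\}$; $\Sigma_{\mathcal{GL}}=\{\Box C:\Box C\in w\}\cup\{\Box B\}$, where the extra $\Box B$ plays the role of the witness $\Box E$ in the GL clause, since $\neg\Box B\in w$ by Lemma~\ref{lem:maximal_consistent_mem_not}; $\Sigma_{\mathcal{B}}=\{\neg\Box C: \neg C\in w,\ \Box C\ \mathbf{subs}\ A\}$, which encodes the symmetric clause $\forall\Box C\in x(C\in w)$; and $\Sigma_{\mathcal{S}5}=\{\Box C:\Box C\in w\}\cup\{\neg\Box C:\neg\Box C\in w\}$, which encodes the biconditional clause.

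The crux is verifying that $\mathcal{Y}_{\mathcal{S}}$ is $\mathcal{S}$-consistent. I would argue by contradiction: if $\mathcal{S}\vdash\neg\bigwedge\mathcal{Y}_{\mathcal{S}}$, then by propositional reasoning one extracts a tautology of the shape $\mathcal{S}\vdash C_1\land\dots\land C_n \to B$ (for $\mathbb{K,D,T}$), or with additional boxed conjuncts on the left (for $\mathbb{K}4,\mathbb{S}4,\mathbb{GL}$), or with extra $\Diamond$-antecedents (for $\mathbb{B}$), etc. Applying Lemma~\ref{lem:MLK_imp_box} and Lemma~\ref{lem:MLK_box} iteratively, the box distributes over the conjunction on the left and we obtain $\mathcal{S}\vdash\Box C_1\land\dots\land\Box C_n\to\Box B$, possibly using axiom $\mathbf{4}$ to absorb the extra boxed conjuncts (for $\mathbb{K}4,\mathbb{S}4,\mathbb{GL}$), axiom $\mathbf{B}$ in the form $\Diamond\Box C\to C$ (for $\mathbb{B}$), or axiom $\mathbf{5}$ (for $\mathbb{S}5$). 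Since all $\Box C_i\in w$, Lemma~\ref{lem:max_cons} forces $\Box B\in w$, contradicting the assumption. For the GL case the additional conjunct $\Box B$ in $\mathcal{Y}_{\mathcal{GL}}$ would be absorbed using the fact that $\Box B\to\Box\Box B$ is derivable in $\mathcal{GL}$ (Lemma~\ref{lem:GLextendsK4}).

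Once $\mathcal{Y}_{\mathcal{S}}$ is consistent, Lemma~\ref{lem:extension} yields an $x\in\textit{MAX}^A_{\mathcal{S}}$ with $\mathcal{Y}_{\mathcal{S}}\subseteq x$ and every element of $x$ a subsentence of $A$; hence $x\in W^A_{\mathcal{S}}$. By construction $\neg B\in x$, so $B\notin x$ by Lemma~\ref{lem:maximal_consistent_mem_not}, and the inclusion $\mathcal{Y}_{\mathcal{S}}\supseteq\{C:\Box C\in w\}\cup\Sigma_{\mathcal{S}}$ directly gives $Rel_{\mathcal{S}}^A(w,x)$. I expect the main obstacle to be the consistency step for the two systems whose relation is not of the simple ``forward-propagating'' shape, namely $\mathbb{B}$ and $\mathbb{S}5$: one must carefully choose the surplus $\Sigma_{\mathcal{S}}$ so that (i) it captures all the non-trivial clauses of $Rel_{\mathcal{S}}^A$, and (ii) the axioms $\mathbf{B}$ or $\mathbf{5}$ suffice to discharge the induced closure condition on $w$. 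For $\mathbb{GL}$ the subtlety is instead the witness $\Box B$: the added conjunct must not destroy consistency, which is where transitivity ($\mathbf{4}$, derivable in $\mathbb{GL}$) becomes indispensable.
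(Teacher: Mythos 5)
Your proposal follows the same route as the paper's proof: the forward direction is read off the definition of each $Rel_{\mathcal{S}}^A$ (with the extra appeal to $\mathbf{T}$ and Lemma~\ref{lem:max_cons} for $\mathbb{S}5$), and the backward direction is by contraposition, building essentially the same candidate sets (your $\mathcal{Y}_{\mathcal{S}}$ coincides with the paper's $\mathcal{X}$ for every system, up to the harmless extra copy of $\{C:\Box C\in w\}$ in the $\mathbb{S}5$ case), establishing $\mathcal{S}$-consistency by \emph{reductio} via $\texttt{RN}$, $\mathbf{K}$ and the system-specific axioms, and then extending to a world of $W^A_{\mathcal{S}}$ by Lemma~\ref{lem:extension} and concluding with Lemma~\ref{lem:maximal_consistent_mem_not} and Lemma~\ref{lem:max_cons}. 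For $\mathbb{K},\mathbb{D},\mathbb{T},\mathbb{K}4,\mathbb{S}4,\mathbb{B}$ and $\mathbb{S}5$ your consistency arguments match the paper's in substance.

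There is, however, a genuine gap in the $\mathbb{GL}$ case. Suppose $\mathcal{GL}\vdash\neg\bigwedge\mathcal{Y}_{\mathcal{GL}}$ with $\mathcal{Y}_{\mathcal{GL}}=\{\neg B,\Box B\}\cup\{C_i,\Box C_i\mid \Box C_i\in w\}$. If, as you describe, one extracts the implication with $B$ alone on the right and necessitates, one obtains $\bigwedge\Box C_i\land\bigwedge\Box\Box C_i\land\Box\Box B\to\Box B$; axiom $\mathbf{4}$ disposes of the $\Box\Box C_i$, but the remaining conjunct $\Box\Box B$ cannot be ``absorbed'' by $\Box B\to\Box\Box B$ --- deriving $\Box\Box B$ that way would require $\Box B$, which is precisely what is assumed \emph{not} to be in $w$. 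The correct move, and the paper's, is to first rearrange the propositional consequence as $\bigwedge C_i\land\bigwedge\Box C_i\to(\Box B\to B)$, necessitate and distribute to get $\bigwedge\Box C_i\to\Box(\Box B\to B)$ (using $\mathbf{4}$ only on the $\Box\Box C_i$), and then apply the L\"ob schema $\Box(\Box B\to B)\to\Box B$ to conclude $\bigwedge\Box C_i\to\Box B$, whence $\Box B\in w$ by Lemma~\ref{lem:max_cons}. Your sketch never invokes the $\mathbf{GL}$ axiom at all, yet this is the one point of the entire completeness argument where it is indispensable; transitivity alone does not suffice.
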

\begin{proof}
    Notice that the $Rel_{\mathcal{D}}^A$ and $Rel_{\mathcal{T}}^A$ are defined as  $Rel_{\mathcal{\varnothing}}^A$, while $Rel_{\mathcal{S}4}^A$ is the same as $Rel_{\mathcal{K}4}^A$. Then it will be sufficient to prove this lemma for two of these standard relations and is proof will apply to the other systems.  
    
    \begin{itemize}
        \item[$\textnormal{1.}$] $[\forall w\in W^A_\mathcal{\varnothing} (\Box B \in w \ \iff \  \forall x\in W^A_\mathcal{\varnothing} (Rel_{\mathcal{\varnothing}}^{A}(w,x) \implies B \in x))]$.
        \\ Let $w$ a world in $W^A_\mathcal{\varnothing}$.
        \begin{itemize}
            \item[$\implies$] We assume that $\Box B \in w$.\\
            By definition, if $Rel_{\mathcal{S}}^{A}(w,x)$ then $\forall \Box  B \in w(B \in x)$, thus $B \in x$.

            \item[$\impliedby$] We proceed by \textit{contraposition}: \\
            $[\Box B \not \in w \implies \exists x\in W^A_\mathcal{\varnothing} (Rel_{\mathcal{\varnothing}}^{A}(w,x) \ and \  B \not \in x))$. \\
            Assuming \underline{$\Box B \not \in w$}, we consider: \\$\mathcal{X}= \{ \neg B\} \cup \{C \in \mathbf{Form}_{\Box} \ |\ \Box C \in w \}$.
       \newpage     
            \begin{claim}
                $\mathcal{X}$ is $\mathcal{\varnothing}$-consistent
            \begin{claimproof}         
                We proceed by \textit{reductio ad absurdum}. \\ RAA: $\mathcal{X}$ is not $\mathcal{\varnothing}$-consistent \\
                Then the following equivalences hold: 
                $\mathcal{\varnothing} \vdash \neg \bigwedge \mathcal{X}$ $\iff$ \\ $\mathcal{\varnothing} \vdash \neg (\neg B \land \bigwedge_{i=1}^{n} C_i)$ where $\Box C_1, \dots , \Box C_n \in w$ $\xLeftrightarrow{taut}$ \\ $\mathcal{\varnothing} \vdash C_1 \land \dots \land  C_n \to B$ $\xRightarrow{\texttt{RN}, \mathbf{K}}$ $\mathcal{\varnothing} \vdash \Box C_1 \land \dots \land \Box C_n \to \Box B$  and $\Box B \ \mathbf{sub} \ A$ and $\forall i \le n (\Box C_i \in w)$ $\xRightarrow{Max \ Cons~\ref{lem:max_cons}}$ \underline{$\Box B \in w$}. $\lightning$ \hfill $\lhd$
            \end{claimproof}
            \end{claim}
                 
            We extend $\mathcal{X}$ by Lemma~\ref{lem:nonempty_max_cons} that states  $\exists \mathcal{Y} \subseteq \mathbf{Form}_{\Box}(\neg A \in \mathcal{Y}$ and  $\mathcal{Y} \in \textit{MAX}^A_\mathcal{S} \ and \ \forall y \in \mathcal{Y}(y \ \mathbf{subs} \ A)$. Then:
            \begin{itemize}
                \item $\mathcal{Y} \in W^A_\mathcal{S}$;
                \item By $\mathcal{Y} \in \textit{MAX}^A_\mathcal{S}$ and $\neg B \in \mathcal{Y}$, then $B \not \in \mathcal{Y}$;
                \item $\forall \Box C \in w (C \in \mathcal{X \subseteq Y} )$, i.e. $Rel_{\mathcal{\varnothing}}^{A}(w,x)$.    
            \end{itemize} 
        \end{itemize}
        \hfill $\lhd$
        
        \item[$\textnormal{2.}$] $[\forall w\in W^A_{\mathcal{K}4} (\Box B \in w \ \iff \  \forall x\in W^A_{\mathcal{K}4} (Rel_{\mathcal{K}4}^{A}(w,x) \implies B \in x))]$.
        \\ Let $w$ a world in $W^A_{\mathcal{K}4}$.
        \begin{itemize}
            \item[$\implies$] We assume that $\Box B \in w$.\\
            By definition, if $Rel_{\mathcal{K}4}^{A}(w,x)$ then $\forall \Box  B \in w(B, \Box B \in x)$, thus $B \in x$.
            
            \item[$\impliedby$] We proceed by \textit{contraposition}: \\
            $[\Box B \not \in w \implies \exists x\in W^A_{\mathcal{K}4} (Rel_{{\mathcal{K}4}}^{A}(w,x) \ and \  B \not \in x))$. \\
            Assuming \underline{$\Box B \not \in w$}, we consider: \\ $\mathcal{X}= \{\neg B\} \cup \{C, \Box C \in \mathbf{Form}_{\Box} \ |\ \Box C \in w \} $.

            \begin{claim}
                $\mathcal{X}$ is $\mathcal{K}4$-consistent
            \begin{claimproof}         
                We proceed by \textit{reductio ad absurdum}. \\ RAA: $\mathcal{X}$ is not $\mathcal{K}4$-consistent \\
                Then the following equivalences hold: 
                $\mathcal{K}4 \vdash \neg \bigwedge \mathcal{X}$ $\iff$ \\ $\mathcal{K}4 \vdash \neg (\neg B \land \bigwedge_{i=1}^{n} C_i \land \bigwedge_{i=1}^{n} \Box C_i)$ st $\Box C_1, \dots , \Box C_n \in w$ $\xLeftrightarrow{taut}$ $\mathcal{K}4 \vdash C_1 \land \dots \land  C_n \land  \Box C_1 \land \dots \land  \Box C_n \to B$ $\xRightarrow{\texttt{RN}, \mathbf{K}}$ $\mathcal{K}4 \vdash \Box C_1 \land \dots \land \Box C_n \land \Box \Box C_1 \land \dots \land  \Box \Box C_n \to \Box B$.
                $\mathcal{K}4 \vdash \Box C \to \Box\Box C$, then $\mathcal{K}4 \vdash \Box C_1 \land \dots \land \Box C_n \to \Box B$                
                \\ and $\Box B \ \mathbf{sub} \ A$ and $\forall i \le n (\Box C_i \in w)$ $\xRightarrow{Max \ Cons~\ref{lem:max_cons}}$ \underline{$\Box B \in w$}. $\lightning$ \hfill $\lhd$
            \end{claimproof}
            \end{claim}
                 
            We extend $\mathcal{X}$ by Lemma~\ref{lem:nonempty_max_cons} that states  $\exists \mathcal{Y} \subseteq \mathbf{Form}_{\Box}(\neg A \in \mathcal{Y}$ and  $\mathcal{Y} \in \textit{MAX}^A_\mathcal{S} \ and \ \forall y \in \mathcal{Y}(y \ \mathbf{subs} \ A)$. Then:
            \begin{itemize}
                \item $\mathcal{Y} \in W^A_\mathcal{S}$;
                \item By $\mathcal{Y} \in \textit{MAX}^A_\mathcal{S}$ and $\neg B \in \mathcal{Y}$, then $B \not \in \mathcal{Y}$;
                \item $\forall \Box C \in w (C, \Box C \in \mathcal{X \subseteq Y} )$, i.e. $Rel_{\mathcal{K}4}^{A}(w,x)$.    
            \end{itemize} 
             
        \end{itemize}
        \hfill $\lhd$
     \newpage   
        \item[$\textnormal{3.}$] $[\forall w\in W^A_{\mathcal{B}} (\Box B \in w \ \iff \  \forall x\in W^A_{\mathcal{B}} (Rel_{\mathcal{B}}^{A}(w,x) \implies B \in x))]$.
        \\ Let $w$ a world in $W^A_{\mathcal{B}}$.
        \begin{itemize}
            \item[$\implies$] We assume that $\Box B \in w$.\\
            By definition, if $Rel_{\mathcal{B}}^{A}(w,x)$ then $\forall \Box  B \in w(B \in x) \ and \ \forall \Box  B \in x(B \in w)$, thus $B \in x$.
            \item[$\impliedby$] We proceed by \textit{contraposition}: \\
            $[\Box B \not \in w \implies \exists x\in W^A_\mathcal{B} (Rel_{\mathcal{B}}^{A}(w,x) \ and \  B \not \in x))$. \\
            Assuming \underline{$\Box B \not \in w$}, we consider: \\ $\mathcal{X}= \{ \neg B\} \cup \{C \in \mathbf{Form}_{\Box} \ |\ \Box C \in w \} \cup \{\neg \Box E \in \mathbf{Form}_{\Box} \ |\ \neg  E \in w \}$.
            \begin{claim}
                $\mathcal{X}$ is $\mathcal{B}$-consistent
            \begin{claimproof}         
                We proceed by \textit{reductio ad absurdum}. \\ RAA: $\mathcal{X}$ is not $\mathcal{B}$-consistent \\
                Then the following equivalences hold: 
                $\mathcal{B} \vdash \neg \bigwedge \mathcal{X}$ $\iff$ \\ $\mathcal{B} \vdash \neg (\neg B \land \bigwedge_{i=1}^{n} C_i \land \bigwedge_{j=1}^{m} \neg \Box E_j)$ where $\Box C_1, \dots , \Box C_n \in w$ and $\neg E_1, \dots , \neg E_m \in w$ $\xLeftrightarrow{taut}$ \\ $\mathcal{B} \vdash C_1 \land \dots \land  C_n \land  \neg \Box  E_1 \land  \dots \land \neg \Box E_m \to B$ $\xRightarrow{\texttt{RN}, \mathbf{K}}$ $\mathcal{B} \vdash \Box C_1 \land \dots \land \Box C_n \land  \Box \neg \Box  E_1 \land  \dots \land \Box \neg \Box E_m \to \Box B$. $\mathcal{B} \vdash \neg E \to \Box \Diamond \neg E$ and $\Diamond \neg E \coloneq \neg \Box  E$ then $\mathcal{B} \vdash \Box C_1 \land \dots \land \Box C_n \land   \neg   E_1 \land  \dots \land \neg  E_m \to \Box B$ \\ and $\Box B \ \mathbf{sub} \ A$ and $\forall i \le n (\Box C_i \in w)$ and $\forall j \le n (\neg E_j \in w)$  $\xRightarrow{Max \ Cons~\ref{lem:max_cons}}$ \underline{$\Box B \in w$}. $\lightning$ \hfill $\lhd$
            \end{claimproof}
            \end{claim}
                 
            We extend $\mathcal{X}$ by Lemma~\ref{lem:nonempty_max_cons} that states  $\exists \mathcal{Y} \subseteq \mathbf{Form}_{\Box}(\neg A \in \mathcal{Y}$ and  $\mathcal{Y} \in \textit{MAX}^A_\mathcal{S} \ and \ \forall y \in \mathcal{Y}(y \ \mathbf{subs} \ A)$. Then:
            \begin{itemize}
                \item $\mathcal{Y} \in W^A_\mathcal{S}$;
                \item By $\mathcal{Y} \in \textit{MAX}^A_\mathcal{S}$ and $\neg B \in \mathcal{Y}$, then $B \not \in \mathcal{Y}$;
                \item $\forall \Box C \in w (C \in \mathcal{X \subseteq Y} )$ and $\forall \Box E \in \mathcal{X \subseteq Y} (E \in w )$ i.e. $Rel_{\mathcal{B}}^{A}(w,x)$. 
                \begin{claimproof}
                By \textit{absurd} \underline{$\Box E \in \mathcal{Y} $} and $E \not \in w$ then, by $\mathcal{B},A-$maximal consistency of $w$, $\neg E  \in w$, then--by the definition of $\mathcal{X}$--\underline{$\neg \Box E \in \mathcal{Y}$}; thus $\mathcal{Y}$ is not $\mathcal{B},A-$maximal consistent. A contraction. $\lightning$ \hfill $\lhd$ 
                \end{claimproof}
            \end{itemize} 
        \end{itemize}
        \hfill $\lhd$
        
        \item[$\textnormal{4.}$] $[\forall w\in W^A_{\mathcal{S}5} (\Box B \in w \ \iff \  \forall x\in W^A_{\mathcal{S}5} (Rel_{\mathcal{S}5}^{A}(w,x) \implies B \in x))]$.
        \\ Let $w$ a world in $ W^A_{\mathcal{S}5}$.
        \begin{itemize}
            \item[$\implies$] We assume that $\Box B \in w$. \\
            By definition, if $Rel_{\mathcal{S}5}^{A}$(w,x) then $\forall \Box  B (\Box  B \in w \iff \Box B \in x)$, thus $\Box B \in x$. Furthermore, $x$ is $\mathcal{S}5-$maximal consistent, $B \ \mathbf{sub} \ A$ and $\mathcal{S}5 \vdash \Box B \to B$, then by Lemma~\ref{lem:max_cons} $ B \in x$
            
            \item[$\impliedby$] We proceed by \textit{contraposition}: \\
            $[\Box B \not \in w \implies \exists x\in W^A_\mathcal{S}5 (Rel_{\mathcal{S}5}^{A}(w,x) \ and \  B \not \in x))$. \\
             Assuming \underline{$\Box B \not \in w$}, we consider : $\mathcal{X} =$ \\ {$ \{ \neg B\} \cup \{\Box C \in \mathbf{Form}_{\Box} \ |\ \Box C \in w \} \cup \{\neg \Box E \in \mathbf{Form}_{\Box} \ |\ \neg \Box  E \in w \}$}.
\newpage            \begin{claim}
                $\mathcal{X}$ is $\mathcal{S}5$-consistent
            \begin{claimproof}         
                We proceed by \textit{reductio ad absurdum}. \\ RAA: $\mathcal{X}$ is not $\mathcal{S}5$-consistent \\
                Then the following equivalences hold: 
                $\mathcal{S}5 \vdash \neg \bigwedge \mathcal{X}$ $\iff$ \\ $\mathcal{S}5 \vdash \neg (\neg B \land \bigwedge_{i=1}^{n} \Box C_i \land \bigwedge_{j=1}^{m} \neg \Box E_j)$  where $\Box C_1, \dots , \Box C_n \in w$ and $\neg \Box E_1, \dots , \neg \Box E_m \in w$  $\xLeftrightarrow{taut}$ \\ $\mathcal{S}5 \vdash \Box C_1 \land \dots \land \Box C_n \land  \neg \Box  E_1 \land  \dots \land \neg \Box E_m \to B$ $\xRightarrow{\texttt{RN}, \mathbf{K}}$ $\mathcal{S}5 \vdash \Box \Box C_1 \land \dots \land \Box \Box C_n \land  \Box \neg \Box  E_1 \land  \dots \land \Box \neg \Box E_m \to \Box B$.
                $\mathcal{S}5 \vdash \Diamond \neg E \to \Box \Diamond \neg E$ and $\Diamond \neg E \coloneq \neg \Box  E$  and, because the class correspondent to $\mathbb{S}5$ is the set of equivalent frames (that are also transitive), $\mathcal{S}5 \vdash \Box C \to \Box \Box C$. Then $\mathcal{S}5 \vdash \Box  C_1 \land \dots \land  \Box C_n \land   \neg \Box  E_1 \land  \dots \land  \neg \Box E_m \to \Box B$
                and $\Box B \ \mathbf{sub} \ A$ and $\forall i \le n (\Box C_i \in w)$ and $\forall j \le n (\neg \Box E_j \in w)$ $\xRightarrow{Max \ Cons~\ref{lem:max_cons}}$ \underline{$\Box B \in w$}. $\lightning$ \hfill $\lhd$
            \end{claimproof}
            \end{claim}
                 
            We extend $\mathcal{X}$ by Lemma~\ref{lem:nonempty_max_cons} that states  $\exists \mathcal{Y} \subseteq \mathbf{Form}_{\Box}(\neg A \in \mathcal{Y}$ and  $\mathcal{Y} \in \textit{MAX}^A_\mathcal{S} \ and \ \forall y \in \mathcal{Y}(y \ \mathbf{subs} \ A)$. Then:
            \begin{itemize}
                \item $\mathcal{Y} \in W^A_\mathcal{S}$;
                \item By $\mathcal{Y} \in \textit{MAX}^A_\mathcal{S}$ and $\neg B \in \mathcal{Y}$, then $B \not \in \mathcal{Y}$;
                \item $\forall \Box C \in w (\Box C \in \mathcal{X \subseteq Y} )$, and $\forall \Box E \in \mathcal{Y} ( \Box E\in w )$      
                i.e. $Rel_{\mathcal{S}5}^{A}(w,x)$.  
                \begin{claimproof}
                By \textit{absurd} \underline{$\Box E \in \mathcal{Y} $} and $\Box E \not \in w$ then, by $\mathcal{S}5,A-$maximal consistency of $w$, $\neg \Box E  \in w$, then--by the definition of $\mathcal{X}$--\underline{$\neg \Box E \in \mathcal{Y}$}; thus $\mathcal{Y}$ is not $\mathcal{S}5,A-$maximal consistent.  $\lightning$ \hfill $\lhd$
                \end{claimproof}
            \end{itemize} 
        \end{itemize}
        \hfill $\lhd$
      
        \item[$\textnormal{5.}$] $[\forall w\in W^A_{\mathcal{GL}} (\Box B \in w \ \iff \  \forall x\in W^A_{\mathcal{GL}} (Rel_{\mathcal{GL}}^{A}(w,x) \implies B \in x))]$.
        \\ Let $w$ a world in $W^A_{\mathcal{GL}}$.
        \begin{itemize}
            \item[$\implies$] We assume that $\Box B \in w$.\\
            By definition, if $Rel_{\mathcal{GL}}^{A}(w,x)$, then $\forall \Box  B \in w(B, \Box B \in x) $ \textit{and} \\ $ \exists \Box E \in x (\neg \Box E \in w)$, thus $B \in x$.
            
            \item[$\impliedby$] We proceed by \textit{contraposition}: \\
            $[\Box B \not \in w \implies \exists x\in W^A_{\mathcal{GL}} (Rel_{{\mathcal{GL}}}^{A}(w,x) \ and \  B \not \in x))$. \\
            Assuming \underline{$\Box B \not \in w$}, we consider : \\ $\mathcal{X}= \{ \neg B, \Box B \} \cup \{C,\Box C \in \mathbf{Form}_{\Box} \ |\ \Box C \in w \}$.
            \medskip
            \begin{claim}
                $\mathcal{X}$ is $\mathcal{GL}$-consistent
            \begin{claimproof}         
                We proceed by \textit{reductio ad absurdum}. \\ RAA: $\mathcal{X}$ is not $\mathcal{GL}$-consistent \\
                Then the following equivalences hold: 
                $\mathcal{GL} \vdash \neg \bigwedge \mathcal{X}$ $\iff$ \\ $\mathcal{GL} \vdash \neg (\neg B \land \Box B \land \bigwedge_{i=1}^{n} C_i \land \bigwedge_{i=1}^{n} \Box C_i)$ st $\Box C_1, \dots , \Box C_n \in w$ $\xLeftrightarrow{taut}$ $\mathcal{GL} \vdash C_1 \land \dots \land  C_n \land  \Box C_1 \land \dots \land  \Box C_n \to(\Box B \to B)$ $\xRightarrow{\texttt{RN}, \mathbf{K}}$ $\mathcal{GL} \vdash \Box C_1 \land \dots \land \Box C_n \land \Box \Box C_1 \land \dots \land  \Box \Box C_n \to \Box (\Box B \to B)$ 
                \end{claimproof}
                \begin{claimproof}[Continue..] \\
                
                $\xRightarrow{\mathbf{4}, \mathbf{GL}}$ 
                $\mathcal{GL} \vdash \Box C_1 \land \dots \land \Box C_n \to \Box B$                
                and  $\Box B \ \mathbf{sub} \ A$ and \\ $\forall i \le n (\Box C_i \in w)$ $\xRightarrow{Max \ Cons~\ref{lem:max_cons}}$ \underline{$\Box B \in w$}. $\lightning$ $\lhd$
            \end{claimproof}
            \end{claim}
                 
            We extend $\mathcal{X}$ by Lemma~\ref{lem:nonempty_max_cons} that states  $\exists \mathcal{Y} \subseteq \mathbf{Form}_{\Box}(\neg A \in \mathcal{Y}$ and  $\mathcal{Y} \in \textit{MAX}^A_\mathcal{S} \ and \ \forall y \in \mathcal{Y}(y \ \mathbf{subs} \ A)$. Then
            \begin{itemize}
                \item $\mathcal{Y} \in W^A_\mathcal{S}$;
                \item By $\mathcal{Y} \in \textit{MAX}^A_\mathcal{S}$ and $\neg B \in \mathcal{Y}$, then $B \not \in \mathcal{Y}$;
                \item $\forall \Box C \in w (C, \Box C \in \mathcal{X \subseteq Y} )$, and $\exists  \Box E \in \mathcal{Y} ( \neg \Box E\in w )$ (it is sufficient to take $E=B$),  i.e. $Rel_{\mathcal{GL}}^{A}(w,x)$.  \hfill $\lhd$
            \end{itemize} 
        \end{itemize}
    \end{itemize}
\end{proof}       
\medskip
\medskip

\begin{remark}
   Notice that, although this proof is not parametric, it exhibits a modular structure. This modularity will facilitate the development with the aid  assistant HOL Light of our \textit{ad hoc polymorphic} approach to this part of proof, presented in Chapter~\ref{chap:4}.
\end{remark}
\end{enumerate}

\subsubsection{Proving Completeness}
Thanks to these lemmata, we are able to apply Lemma~\ref{lem:parametric_countermodel_lemma} to obtain completeness w.r.t appropriate frames for each $\mathbb{S}$ among $\mathbb{K, \ D, \ T, \ K\textnormal{4}, \ S\textnormal{4}, \ B. \ S\textnormal{5}}$, and $\mathbb{GL}$.

\begin{theorem}[\textbf{Completeness within the modal cube}]\phantom{All}
    \begin{mydefinition}
        \begin{enumerate}
            \item For every $A \in \mathbf{Form}_{\Box}$, if $\ \mathfrak{F} \vDash A$ then $\mathcal{\varnothing} \vdash A $.~\href{https://archive.softwareheritage.org/swh:1:cnt:ae230138ff15476c8dab9e32606bceca7168285b;origin=https://github.com/HOLMS-lib/HOLMS;visit=swh:1:snp:2c0efd349323ed6f8067581cf1f6d95816e49841;anchor=swh:1:rev:1caf3be141c6f646f78695c0eb528ce3b753079a;path=/k_completeness.ml;lines=194-211}{\ExternalLink}
            \item For every $A \in \mathbf{Form}_{\Box}$, if  $\ \mathfrak{SerF} \vDash A$ then $\mathcal{D} \vdash A $.
            \item For every $A \in \mathbf{Form}_{\Box}$, if  $\ \mathfrak{RF} \vDash A$ then $\mathcal{T} \vdash A $.~\href{https://archive.softwareheritage.org/swh:1:cnt:1352cb294724b2251b9346657432700ecbed10d2;origin=https://github.com/HOLMS-lib/HOLMS;visit=swh:1:snp:2c0efd349323ed6f8067581cf1f6d95816e49841;anchor=swh:1:rev:1caf3be141c6f646f78695c0eb528ce3b753079a;path=/t_completeness.ml;lines=321-338}{\ExternalLink}
            \item For every $A \in \mathbf{Form}_{\Box}$, if  $\ \mathfrak{TF} \vDash A$ then $\mathcal{K}4 \vdash A $.~\href{https://archive.softwareheritage.org/swh:1:cnt:32869a1df338d91c4cd4a0cb9e73fb4f1be29991;origin=https://github.com/HOLMS-lib/HOLMS;visit=swh:1:snp:2c0efd349323ed6f8067581cf1f6d95816e49841;anchor=swh:1:rev:1caf3be141c6f646f78695c0eb528ce3b753079a;path=/k4_completeness.ml;lines=426-444}{\ExternalLink}
            \item For every $A \in \mathbf{Form}_{\Box}$, if  $\ \mathfrak{RF}\cap\mathfrak{TF} \vDash A$ then   $\mathcal{S}4 \vdash A $.
            \item For every $A \in \mathbf{Form}_{\Box}$, if  $\ \mathfrak{RF}\cap\mathfrak{SymF} \vDash A$ then  $\mathcal{B} \vdash A $.
            \item For every $A \in \mathbf{Form}_{\Box}$, if  $\ \mathfrak{RF}\cap\mathfrak{EF} \vDash A$ then  $\mathcal{S}5 \vdash A $.
            \item For every $A \in \mathbf{Form}_{\Box}$, if  $\ \mathfrak{ITF} \vDash A$ then  $\mathcal{GL} \vdash A $.~\href{https://archive.softwareheritage.org/swh:1:cnt:52c9de12454731a9a291b06bc750c0d2d14e1fb4;origin=https://github.com/HOLMS-lib/HOLMS;visit=swh:1:snp:2c0efd349323ed6f8067581cf1f6d95816e49841;anchor=swh:1:rev:1caf3be141c6f646f78695c0eb528ce3b753079a;path=/gl_completeness.ml;lines=510-528}{\ExternalLink}
        \end{enumerate}
    \end{mydefinition}
\end{theorem}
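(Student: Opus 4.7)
The plan is to prove all eight statements by a uniform strategy via contraposition, combining the \emph{parametric} countermodel construction (Lemma~\ref{lem:parametric_countermodel_lemma}) with the \emph{ad hoc polymorphic} ingredients developed in Section~\ref{sub:unparametrisable-completeness}. Specifically, for each system $\mathbb{S} \in \{\mathbb{K}, \mathbb{D}, \mathbb{T}, \mathbb{K}\mathbf{4}, \mathbb{S}\mathbf{4}, \mathbb{B}, \mathbb{S}\mathbf{5}, \mathbb{GL}\}$ with set of specific schemata $\mathcal{S}$ and appropriate class $\mathfrak{S}$, I would assume by contraposition that $\mathcal{S}\not\vdash A$ and then exhibit a frame in $\mathfrak{S}$ that does not validate $A$.

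First I would invoke the standard accessibility relation $Rel_{\mathcal{S}}^A$ specific to $\mathbb{S}$ (as tabulated in Table~\ref{tab:standard_accessibility_reation}) and form the standard frame $\langle W_{\mathcal{S}}^A, Rel_{\mathcal{S}}^A\rangle$. Since $\mathcal{S}\not\vdash A$, Lemma~\ref{lem:nonempty_max_cons} guarantees that $W_{\mathcal{S}}^A$ is nonempty (it contains a maximal consistent extension of $\{\neg A\}$ of subsentences of $A$), so this is a genuine frame. The Appropriate Standard Frame Lemma~\ref{lem:appr_std_frame} then ensures requirement~\ref{itm:c1}, namely that $\langle W_{\mathcal{S}}^A, Rel_{\mathcal{S}}^A\rangle \in \mathfrak{S}$; and the Accessibility Lemma~\ref{lem:accessibility_std_frame} ensures requirement~\ref{itm:c2}, namely the equivalence $\Box B \in w \iff \forall x(Rel_{\mathcal{S}}^A(w,x)\Rightarrow B\in x)$ for every $\Box B \ \mathbf{sub}\ A$.

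At this stage all hypotheses of the Parametric Countermodel Lemma~\ref{lem:parametric_countermodel_lemma} are verified, so from $\mathcal{S}\not\vdash A$ we can conclude $\langle W_{\mathcal{S}}^A, Rel_{\mathcal{S}}^A\rangle \not\vDash A$. Since this frame belongs to $\mathfrak{S}$, this witnesses $\mathfrak{S}\not\vDash A$, completing the contrapositive. Each of the eight items then follows by instantiating $\mathcal{S}$ and $\mathfrak{S}$ according to Table~\ref{tab:char-appr}.

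I do not expect a substantial obstacle here, because the hard mathematical work has already been encapsulated in the lemmas of Subsections~\ref{sub:parametrised-completeness}--\ref{sub:unparametrisable-completeness}; the present theorem is essentially an assembly step. The only delicate point is to remember that completeness is \emph{not} fully parametric: for every system the construction relies on the system-specific $Rel_{\mathcal{S}}^A$, and the two lemmas supplying requirements~\ref{itm:c1} and~\ref{itm:c2} must be applied in the correct instance. Once those are in place, the eight corollaries fall out uniformly from one and the same contrapositive schema, which is exactly the \emph{ad hoc polymorphic} shape announced in the proof sketch.
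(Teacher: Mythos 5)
Your proposal is correct and follows essentially the same route as the paper's own proof: contraposition from $\mathcal{S}\not\vdash A$, verification of requirements~\ref{itm:c1} and~\ref{itm:c2} via Lemma~\ref{lem:appr_std_frame} and Lemma~\ref{lem:accessibility_std_frame}, and then a direct application of the parametric countermodel Lemma~\ref{lem:parametric_countermodel_lemma}. The extra remark about nonemptiness of $W_{\mathcal{S}}^{A}$ via Lemma~\ref{lem:nonempty_max_cons} is a harmless (and welcome) addition that the paper leaves implicit.
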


\begin{proof}
    Let $\mathbb{S}$ be a normal system within the cube. Let $\mathfrak{S}$ be its appropriate class of frames.  Then we aim to prove that: 
    \begin{center}
        If $\mathfrak{S} \vDash A$ then $\mathcal{S} \vdash A$
    \end{center}
    We proceed by \textit{contraposition} by assuming $\mathcal{S} \not \vdash A$.
     We have proved, respectively in Lemma~\ref{lem:appr_std_frame} and Lemma~\ref{lem:accessibility_std_frame}, that $\langle W_{\mathcal{S}}^{A}, {Rel_{\mathcal{S}}^{A}} \rangle$ satisfies the requirements~\ref{itm:c1} and~\ref{itm:c2}.
     Then we apply the parametric countermodel Lemma~\ref{lem:parametric_countermodel_lemma} and we obtain $\langle W_{\mathcal{S}}^{A}, {Rel_{\mathcal{S}}^{A}} \rangle \not \vdash A$, thus--being $\langle W_{\mathcal{S}}^{A}, {Rel_{\mathcal{S}}^{A}} \rangle \in \mathfrak{S}$--$\mathfrak{S} \not \vDash A$.   
\end{proof}

\newpage
\section{Bisimulation Theory}\label{sec:bisimulation}

In this section, we define the fundamental concepts of bisimulation theory and we prove some important results. This theory--developed in classic textbooks, e.g.~\cite{Stirling_2011}--will play an important role in HOLMS proof of completeness, in Section~\ref{sec:generalising-completeness} below. 

\begin{definition}[\textbf{Bisimulation}~\href{https://archive.softwareheritage.org/swh:1:cnt:4a011f69b1180019be5d03c6d5f1cec4550055e8;origin=https://github.com/HOLMS-lib/HOLMS;visit=swh:1:snp:2c0efd349323ed6f8067581cf1f6d95816e49841;anchor=swh:1:rev:1caf3be141c6f646f78695c0eb528ce3b753079a;path=/modal.ml;lines=258-265}{\ExternalLink}]\phantom{Allyouneedislove}
    \begin{mydefinition}
    \begin{mydefinition2}
    Let $\langle W,R, V \rangle$, $\langle W',R', V' \rangle$ be two models. \\ The function $f:W \longrightarrow W'$ is a \textbf{bisimulation} iff:
    \begin{enumerate}
        \item $f$ preserves the valuations: $\forall w \in W( \forall p \in \Phi (pVw \iff p V' f(w)))$;
        \item $f$ preserves $R$: $\forall w,x \in W(wRx \implies f(w) R' f(x))$;
        \item $f$ preserves $R'$: $\forall w \in W \ \forall y \in W' ( \ f(w)R'y \implies $ \\ \phantom{f preserves R' and forallw in w fora}$ \exists x \in W ( f(x)=y \ and \ wRx) )$
    \end{enumerate}
    \end{mydefinition2}
    \end{mydefinition}
\end{definition}
\medskip \medskip

\begin{definition}[\textbf{Bisimiliarity between worlds}~\href{https://archive.softwareheritage.org/swh:1:cnt:4a011f69b1180019be5d03c6d5f1cec4550055e8;origin=https://github.com/HOLMS-lib/HOLMS;visit=swh:1:snp:2c0efd349323ed6f8067581cf1f6d95816e49841;anchor=swh:1:rev:1caf3be141c6f646f78695c0eb528ce3b753079a;path=/modal.ml;lines=286-288}{\ExternalLink}]\phantom{Allyouneedislove}
\begin{mydefinition}
\begin{mydefinition2}
    Let $\langle W,R, V \rangle$, $\langle W',R', V' \rangle$ be two models. Let $w \in W$ and $w' \in W$.\\
    $\langle W,R, V \rangle, w $ and $\langle W',R', V' \rangle, w'$ are \textbf{bisimilar} iff 
    \begin{itemize}
        \item There exist a bisimulation $f:W \longrightarrow W'$ between the two models.
        \item $f(w)= w'.$
    \end{itemize} 
\end{mydefinition2}    
\end{mydefinition}
\end{definition}
\medskip \medskip

\begin{theorem}[\textbf{Forcing relation respects bisimilarity}~\href{https://archive.softwareheritage.org/swh:1:cnt:4a011f69b1180019be5d03c6d5f1cec4550055e8;origin=https://github.com/HOLMS-lib/HOLMS;visit=swh:1:snp:2c0efd349323ed6f8067581cf1f6d95816e49841;anchor=swh:1:rev:1caf3be141c6f646f78695c0eb528ce3b753079a;path=/modal.ml;lines=295-299}{\ExternalLink}]\label{thm:forc-bisim}\phantom{Allyouneedislove}
    \begin{mydefinition}
         Let $\mathcal{M}$, $\mathcal{M'}$ be two models. Let $w \in W$ and $w' \in W$. \\If $\mathcal{M}, w $ and $\mathcal{M'}, w'$ are bisimilar then \\ $\forall A \in \mathbf{Form}_{\Box}(\mathcal{M},w \vDash A \iff \mathcal{M'},w' \vDash A)$.
    \end{mydefinition}
\end{theorem}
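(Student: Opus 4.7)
The plan is to prove this by structural induction on the modal formula $A$, with $w \in W$ varying over all worlds and the bisimulation $f$ fixed (so the inductive hypothesis quantifies over all pairs $(x, f(x))$, not just the distinguished pair $(w, w')$). First I would unpack the hypothesis: since $\mathcal{M}, w$ and $\mathcal{M}', w'$ are bisimilar, there is a bisimulation $f : W \to W'$ with $f(w) = w'$, so it suffices to prove the stronger statement
\[
\forall A \in \mathbf{Form}_{\Box}\ \forall u \in W \ (\mathcal{M}, u \vDash A \iff \mathcal{M}', f(u) \vDash A).
\]
The original claim follows by instantiating $u := w$ and using $f(w) = w'$.

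For the base cases, when $A = \bot$ both sides are false by Definition~\ref{def:truth}; when $A = p \in \Phi$, the equivalence $\mathcal{M}, u \vDash p \iff pVu \iff pV'f(u) \iff \mathcal{M}', f(u) \vDash p$ is exactly clause (1) of the definition of bisimulation. The implicative step $A = B \to C$ is routine: the truth condition for $\to$ is a boolean combination of the truth conditions for $B$ and $C$ at the same world, so the two inductive hypotheses combine directly.

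The key case, and the one I expect to be the main obstacle, is $A = \Box B$. It requires both of the forth/back clauses (2) and (3) simultaneously, one for each direction of the biconditional. For the forward direction, assume $\mathcal{M}, u \vDash \Box B$ and pick an arbitrary $y' \in W'$ with $f(u) R' y'$; clause (3) supplies an $x \in W$ with $uRx$ and $f(x) = y'$, whence $\mathcal{M}, x \vDash B$ and, by the inductive hypothesis applied at $x$, $\mathcal{M}', y' \vDash B$. For the backward direction, assume $\mathcal{M}', f(u) \vDash \Box B$ and pick $x \in W$ with $uRx$; clause (2) gives $f(u) R' f(x)$, so $\mathcal{M}', f(x) \vDash B$, and the inductive hypothesis at $x$ yields $\mathcal{M}, x \vDash B$. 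No new lemmas beyond Definition~\ref{def:truth} and the three bisimulation clauses are needed, so the only delicate point is making sure the induction is set up with $u$ universally quantified rather than fixed, since otherwise the $\Box$ case would not have a usable inductive hypothesis at the successor worlds $x$ and $f(x)$.
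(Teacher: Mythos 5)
Your proof is correct and follows essentially the same route as the paper: induction on the complexity of $A$ with the bisimulation $f$ fixed, using clause (1) for atoms and clauses (2) and (3) for the two directions of the $\Box$ case (which the paper isolates as its Claims 1 and 2). Your explicit strengthening of the inductive hypothesis to quantify over all worlds $u$ is exactly the right way to justify the paper's otherwise tacit application of the inductive hypothesis at the successor worlds $x$ and $f(x)$.
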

\begin{proof} Let $A$ be a modal formula.
    We prove this statement by induction on the complexity of $A$. Let $f$ be the bisimulation such that $f(w)=w'$.
    \begin{itemize}
        
        \item If $A= \bot$ \\ $\bot$ is not forced by any world and model, both hold: $ \mathcal{M},w \not \vDash \bot$ and $\mathcal{M'},w' \not \vDash \bot$, Then, by Lemma~\ref{lem:forc_conn}, $\mathcal{M},w \vDash \bot \iff \mathcal{M'},w' \vDash \bot$.
        
        \item If $A=p \in \Phi$ \\
        $\mathcal{M},w \vDash p$ $\xLeftrightarrow{forcing}$  $pVw$ $\xLeftrightarrow{bisimilarity}$ $pV'w'$ $\xLeftrightarrow{forcing}$ $\mathcal{M'},w' \vDash p$.
\newpage
        \item If $A=B \to C $. Induction hypothesis holds for $B$ and $C$. \\
        $\mathcal{M},w \vDash B \to C$ $\xLeftrightarrow{forcing}$  $\mathcal{M},w \not \vDash B $ or $\mathcal{M},w \vDash C$   $\xLeftrightarrow{Inductive\ HP}$ $\mathcal{M'},w' \not \vDash B $ or $\mathcal{M'},w' \vDash C$ $\xLeftrightarrow{forcing}$ $\mathcal{M'},w' \vDash B \to C$.

        \item If $A= \Box B $. Induction hypothesis holds for $B$. \\
        $\mathcal{M},w \vDash \Box B $ $\xLeftrightarrow{forcing}$  $\forall x \in W( wRx \implies \mathcal{M},x \vDash B) $ $\xLeftrightarrow{Inductive\ HP}$ $\forall x \in W( wRx \implies \mathcal{M},f(x) \vDash B) $ $\xLeftrightarrow{Claim \ 1,2}$      
        $\forall y \in W'( w'R'y \implies \mathcal{M},y \vDash B) $ $\xLeftrightarrow{forcing}$ $\mathcal{M'},w' \vDash \Box B $.

        \begin{claim}
            If $\forall x \in W( wRx \implies \mathcal{M},f(x) \vDash B) $ then \\ $\forall y \in W'( w'R'y \implies \mathcal{M},y \vDash B) $.
        \begin{claimproof}
        Let $y \in W'$. We suppose both $\forall x \in W (wRx \implies \mathcal{M},f(x) \vDash B )$ and $w'Ry$.  Notice that, by bisimilarity, $\forall y \in W' (w'Ry \implies \exists z \in W (wRz \ and \ f(z)=y)$.  By the second hypothesis and bisimilarity, we obtain $wRz \ and \ f(z)=y$; thus $w'R'y \implies wRz \implies \mathcal{M}, f(z) \vDash B \iff \mathcal{M}, y \vDash B$. 
        \hfill $\lhd$
        \end{claimproof}
        \end{claim}

        \begin{claim}
            If $\forall y \in W'( w'R'y \implies \mathcal{M},y \vDash B) $ then \\
            $\forall x \in W( wRx \implies \mathcal{M},f(x) \vDash B) $.
        \begin{claimproof}
        Let $x \in W$. We suppose both $\forall y \in W'( w'R'y \implies \mathcal{M},y \vDash B) $ and $wRx$.  By the second hypothesis and bisimilarity, we obtain $w'R'f(x)$; thus $wRx \implies w'R'f(x) \implies \mathcal{M}, f(x) \vDash B$. 
        \hfill $\lhd$
        \end{claimproof}
        \end{claim}
    \end{itemize}
\end{proof}
\medskip \medskip

\begin{theorem}[\textbf{Bisimulation preserves validity over frames }~\href{https://archive.softwareheritage.org/swh:1:cnt:4a011f69b1180019be5d03c6d5f1cec4550055e8;origin=https://github.com/HOLMS-lib/HOLMS;visit=swh:1:snp:2c0efd349323ed6f8067581cf1f6d95816e49841;anchor=swh:1:rev:1caf3be141c6f646f78695c0eb528ce3b753079a;path=/modal.ml;lines=301-305}{\ExternalLink}]\phantom{Allyouneed}
    \begin{mydefinition}
         Let $\mathcal{F}$, $\mathcal{F'}$ be two frames. \\
         If for every model $\mathcal{M}$ based on $\mathcal{F}$ and for every world $w \in W$ there exists a  model $\mathcal{M}'$ based on $\mathcal{F}'$ and a world $w' \in W'$ such that $\mathcal{M}, w $ and $\mathcal{M'}, w'$ are bisimilar then $\forall A \in \mathbf{Form}_{\Box}(\mathcal{F'}\vDash A \implies \mathcal{F} \vDash A)$.
    \end{mydefinition}
\end{theorem}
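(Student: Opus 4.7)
The plan is to unfold the definitions of validity in a frame down to forcing at a world, and then transport the forcing across the bisimulation using the previous Theorem (``Forcing relation respects bisimilarity''). The statement to prove is an implication between validities, so I would fix an arbitrary modal formula $A$ and assume $\mathcal{F}' \vDash A$; the goal then becomes $\mathcal{F} \vDash A$, i.e. $\forall \mathcal{M} \ \mathcal{F}_* \ \forall w \in W (\mathcal{M},w \vDash A)$.

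First I would fix a model $\mathcal{M}$ based on $\mathcal{F}$ and a world $w \in W$, so it suffices to show $\mathcal{M},w \vDash A$. By the hypothesis of the theorem, applied to this $\mathcal{M}$ and $w$, there exist a model $\mathcal{M}'$ based on $\mathcal{F}'$ and a world $w' \in W'$ such that $\mathcal{M},w$ and $\mathcal{M}',w'$ are bisimilar. Now the previous Theorem~\ref{thm:forc-bisim} guarantees that, for every modal formula and in particular for our $A$, the equivalence $\mathcal{M},w \vDash A \iff \mathcal{M}',w' \vDash A$ holds.

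It remains only to observe that the right-hand side of this equivalence is supplied by the assumption $\mathcal{F}' \vDash A$: unfolding the definitions of validity in a frame and validity in a model, $\mathcal{F}' \vDash A$ yields $\mathcal{M}' \vDash A$ (since $\mathcal{M}'$ is based on $\mathcal{F}'$) and hence $\mathcal{M}',w' \vDash A$. Thus $\mathcal{M},w \vDash A$, and since $\mathcal{M}$ and $w$ were arbitrary, $\mathcal{F} \vDash A$.

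No step here poses any real obstacle: the work has already been done in Theorem~\ref{thm:forc-bisim}, and the present theorem is essentially a packaging of that result at the level of frames. The only mildly delicate point is the asymmetry of the hypothesis, which only demands that every pointed model of $\mathcal{F}$ be bisimilar to \emph{some} pointed model of $\mathcal{F}'$ (and not vice versa); this is exactly what makes the conclusion an implication $\mathcal{F}' \vDash A \implies \mathcal{F} \vDash A$ rather than an equivalence, and I would make sure the quantifier structure is instantiated in the correct direction when invoking the hypothesis.
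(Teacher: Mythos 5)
Your proof is correct and follows exactly the same route as the paper's: fix a pointed model of $\mathcal{F}$, use the hypothesis to obtain a bisimilar pointed model of $\mathcal{F}'$, import its forcing of $A$ from the assumption $\mathcal{F}'\vDash A$, and transfer it back via Theorem~\ref{thm:forc-bisim}. Your closing remark on the direction of the quantifiers is a sensible precaution but introduces nothing beyond what the paper already does.
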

\begin{proof}
    Let $A$ be a modal formula. \\
    We assume $\mathcal{F'} \vDash A$ then $\forall \mathcal{M'} \ \mathcal{F'}_*(\forall w' \in W' (\mathcal{M'},w' \vDash A))$.
\newpage    
    \begin{claim}
        $\forall \mathcal{M} \ \mathcal{F}_*(\forall w \in W (\mathcal{M},w \vDash A))$

    \begin{claimproof}
    Let $\mathcal{M}$ be a $\mathcal{F}$-based model and $w \in W$. For each $\mathcal{M},w$ there exists a bisimilar $\mathcal{M'},w'$. Then by hypothesis, we know that for these holds $\mathcal{M'},w' \vDash A$. But then, by Lemma~\ref{thm:forc-bisim} $\mathcal{M},w \vDash A$. \hfill $\lhd$
    \end{claimproof} 
    \end{claim}
    
    By definition of validity, $\mathcal{F} \vDash A$ and the thesis holds.
\end{proof}
\newpage
    
\begin{theorem}[\textbf{Bisimulation preserves validity over classes}~\href{https://archive.softwareheritage.org/swh:1:cnt:4a011f69b1180019be5d03c6d5f1cec4550055e8;origin=https://github.com/HOLMS-lib/HOLMS;visit=swh:1:snp:2c0efd349323ed6f8067581cf1f6d95816e49841;anchor=swh:1:rev:1caf3be141c6f646f78695c0eb528ce3b753079a;path=/modal.ml;lines=307-316}{\ExternalLink}]\phantom{A}
    \begin{mydefinition}
         Let $\mathfrak{S}$, $\mathfrak{S'}$ be two classes of frames. \\
         If for every frame $\mathcal{F} \in \mathfrak{S}$ for every model $\mathcal{M}$ based on $\mathcal{F}$ and for every world $w \in W$ there exists a frame $\mathcal{F'} \in \mathfrak{S'}$, a model $\mathcal{M}'$ based on $\mathcal{F}'$ and a world $w' \in W'$ such that $\mathcal{M}, w $ and $\mathcal{M'}, w'$ are bisimilar then \\ $\forall A \in \mathbf{Form}_{\Box}(\mathfrak{S'}\vDash A \implies \mathfrak{S} \vDash A)$.
    \end{mydefinition}
\end{theorem}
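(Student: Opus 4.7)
The plan is to unpack the definition of validity in a class of frames on both sides and to transport truth from $\mathcal{M}',w'$ back to $\mathcal{M},w$ via the earlier bisimilarity theorem. I would start by fixing an arbitrary modal formula $A$ and assuming $\mathfrak{S'} \vDash A$, whose expansion gives that for every $\mathcal{F}' \in \mathfrak{S}'$, every $\mathcal{M}'$ based on $\mathcal{F}'$ and every $w' \in W'$, one has $\mathcal{M}',w' \vDash A$. The goal becomes to show that for an arbitrary $\mathcal{F} \in \mathfrak{S}$, an arbitrary $\mathcal{M}$ based on $\mathcal{F}$ and an arbitrary $w \in W$, we have $\mathcal{M},w \vDash A$.

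The key step is to apply the hypothesis of the theorem: for the fixed triple $(\mathcal{F},\mathcal{M},w)$ there exist $\mathcal{F}' \in \mathfrak{S}'$, a model $\mathcal{M}'$ based on $\mathcal{F}'$ and a world $w' \in W'$ such that $\mathcal{M},w$ and $\mathcal{M}',w'$ are bisimilar. Applying $\mathfrak{S'} \vDash A$ to this particular $\mathcal{F}',\mathcal{M}',w'$ yields $\mathcal{M}',w' \vDash A$. Then Theorem~\ref{thm:forc-bisim} (forcing respects bisimilarity) immediately gives $\mathcal{M},w \vDash A$. Since $\mathcal{F},\mathcal{M},w$ were arbitrary, unfolding the definition of validity in a class of frames in the opposite direction concludes $\mathfrak{S} \vDash A$.

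There is no real obstacle here: the argument is essentially the same as the previous theorem on preservation of validity over frames, lifted one level up to classes. The only thing to be careful about is the order of quantifiers --- the hypothesis provides the witnesses $\mathcal{F}',\mathcal{M}',w'$ depending on $\mathcal{F},\mathcal{M},w$, so one must fix the latter first and then produce the former, rather than attempting a uniform choice. In fact, the single-frame version just proved could be invoked as a black box, since the hypothesis implies in particular that, for the fixed $\mathcal{F}$, every $(\mathcal{M},w)$ on $\mathcal{F}$ is bisimilar to some $(\mathcal{M}',w')$ on some $\mathcal{F}' \in \mathfrak{S}'$ with $\mathcal{F}' \vDash A$; but invoking Theorem~\ref{thm:forc-bisim} directly is at least as quick.
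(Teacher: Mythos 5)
Your proof is correct and follows essentially the same route as the paper: fix $A$, assume $\mathfrak{S'}\vDash A$, take an arbitrary $\mathcal{F}\in\mathfrak{S}$, $\mathcal{M}$, $w$, obtain the bisimilar $\mathcal{F}'\in\mathfrak{S'}$, $\mathcal{M}'$, $w'$ from the hypothesis, and transport $\mathcal{M}',w'\vDash A$ back to $\mathcal{M},w\vDash A$ via Theorem~\ref{thm:forc-bisim}. Your remark about the quantifier order (fixing $\mathcal{F},\mathcal{M},w$ before producing the witnesses) is exactly the point the paper's proof also relies on.
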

\begin{proof}
    Let $A$ be a modal formula. \\
    We assume $\mathfrak{S'} \vDash A$ then $\forall \mathcal{F'} \in \mathfrak{S'} (\mathcal{F'} \vDash A)$.
    \begin{claim}
        $\forall \mathcal{F} \in \mathfrak{S}_*(\mathcal{F}\vDash A)$

    \begin{claimproof}
    Let $\mathcal{F}$ be a frame in $\mathfrak{S}$. For each 
 $\mathcal{F} \in \mathfrak{S} $, $\mathcal{M},w$ there exists a frame $\mathcal{F'}\in \mathfrak{S'}$ and a $\mathcal{M'}$ $\mathcal{F}$-based and a world $w' \in W'$ bisimilar. Then by hypothesis, we know that for these holds $\mathcal{M'},w' \vDash A$. But then, by Lemma~\ref{thm:forc-bisim} $\mathcal{M},w \vDash A$; thus by definition of validity, $\mathcal{F} \vDash A$ \hfill $\lhd$
    \end{claimproof} 
    \end{claim}
    
    By definition of validity over classes of frames, $\mathfrak{S} \vDash A$ and the thesis holds.
\end{proof}

\section{An Introduction to Labelled Sequent Calculi}\label{sec:labelled-sequent-calculi}

In Section~\ref{sec:axiomatic-calulus}, we introduced an axiomatic calculus for normal modal logics, laying the foundation for a formal approach to normal reasoning. In this section, instead, we shift our focus to a different class of \textit{deductive systems}: sequent calculi.

Sequent calculi play a crucial role in proof theory, providing a syntactic framework particularly well-suited for \textit{automated theorem proving}. Since they also offer a \textit{structural} and \textit{rule-based} approach to modal logics, the labelled sequent calculi $\mathsf{G3K, \ G3KT, \ G3K4}$, and $\mathsf{G3KGL}$ will be utilised in Chapter~\ref{chap:4} to develop a \textit{principled decision procedure} within HOLMS for the logics $\mathbb{K}$, $\mathbb{T}$, $\mathbb{K}4$, and $\mathbb{GL}$.

Here, we begin with a general introduction to sequent calculi and an overview of their formal properties. We then examine Sara Negri's calculi for normal modal logics~\cite{negri2005proof, negri2014proofs, negri2011proof}, focusing on $\mathsf{G3K}$ for the minimal modal logic $\mathbb{K}$ and its \textit{modular extensions} $\mathsf{G3K*}$ for the logics $\mathsf{*}$ within the modal cube. Additionally, we discuss $\mathsf{G3KGL}$ for provability logic. All the calculi under analysis allow a \textit{terminating proof strategy}, ensuring a \textit{decision procedure} for their respective logics.

\subsection{Sequent Calculi}

Sequent calculi were introduced by Gentzen in~\cite{Gentzen1935-GENUBD-3, Gentzen1935-GENUBD-4}, who initiated structural proof theory and the study of analytic deductive systems. Later, Ketonen~\cite{Ketonen1945-KETUZP} and Kleene~\cite{Ke52} refined these calculi, developing systems well-known as $\mathsf{G3}$-style calculi. 

After defining the key notions of sequent and derivation and introducing \textit{root-first} proof search, we provide a first example of a $\mathsf{G3}$-style sequent calculus--$\mathsf{G3cp}$ for propositional logic--and we identify the main desiderata for these calculi.

\subsubsection{Key Concepts of Sequent calculi}

Sequent calculi are deductive systems—sets of formal expressions (\textit{initial sequents}) closed under some inference rules—built upon a syntactic object, the sequent:

\begin{definition}[\textbf{Sequent}]\phantom{Allyouneedisloveloveisallyouneed}
    \begin{mydefinition}
        \begin{mydefinition2}
             A \textbf{sequent} is a formal expression of the form: 
             \begin{center}
                 $ \Gamma \Rightarrow \Delta$
             \end{center}
                 where $\Gamma, \ \Delta$ are finite multiset--finite lists modulo permutations--of modal formulas $\mathbf{Form}_{\Box}$  and:
             \begin{itemize}
                 \item $\Rightarrow $ reflects the deducibility relation at the metalevel in the object language;
                 \item$\Gamma$ is called \textit{antecedent};
                 \item $\Delta$ is called consequent.
             \end{itemize}
        \end{mydefinition2}
    \end{mydefinition}
\end{definition}

\noindent The \textit{intended semantic meaning} of this syntactic object is the following: 
\begin{center}
    $\bigwedge \Gamma \implies \bigvee \Delta$
\end{center}
where both `$\bigwedge$', `$\bigvee$' are non-primitive logical operators defined as the finite object-conjunction and--respectively-- object-disjunction of all the formulas contained in the multisets, and `$\implies$' is the usual metalinguistic symbol for implication.

We, then, introduce the fundamental concept of derivation in sequent calculus:

\begin{definition}[\textbf{Derivation}]\phantom{Allyouneedisloveloveisallyouneed}
    \begin{mydefinition}
        \begin{mydefinition2}
A \textbf{derivation} in a $\mathsf{G3}$-style sequent calculus $\mathsf{C}$ of $\Gamma \Rightarrow \Delta $ is a finite rooted tree of sequents such that:
\begin{itemize}
    \item Its leaves are \textbf{initial sequents} or \textbf{zero-ary rules};
    \item Each node of the tree is obtained from the sequents directly above using an \textbf{inference rule} of the calculus;
 \item Its root is the \textbf{end sequent} $\Gamma \Rightarrow \Delta $
\end{itemize}

\noindent A sequent $\Gamma \Rightarrow \Delta $ is \textbf{derivable} in a $\mathsf{G3}$-style sequent calculus $\mathsf{C}$ and we formally write $\mathsf{C} \vdash \Gamma \Rightarrow \Delta $ iff there is a derivation of $\Gamma \Rightarrow \Delta $ in $\mathsf{C}$.
        \end{mydefinition2}
    \end{mydefinition}
\end{definition}

\noindent To derive a given sequent $\Gamma \Rightarrow \Delta$ in a calculus $\mathsf{C}$, we perform a \textit{\textbf{root-first proof search}}. In this approach, we treat $\Gamma \Rightarrow \Delta$ as the end sequent (\textit{root}) and progressively work towards deriving its subformulas by applying the inference rules in a bottom-up manner, until we reach either an initial sequent or a zero-ary rule. An example of this procedure is provided by Figure~\ref{fig:G3K-prove-k}.

Next, we introduce a series of concepts, serving as a sort of glossary, to support the analysis of sequent calculus.

\begin{definition}[\textbf{Context, Main and Actives formulas}]\phantom{Allyoun}
    \begin{mydefinition}
        \begin{mydefinition2}
        Let \texttt{R} be an inference rule of a $\mathsf{G3}$-style sequent calculus $\mathsf{C}$. We distinguish between:
         \begin{itemize}
             \item \textbf{Context}: formulas that remain unchanged during the application of \texttt{R};
             \item \textbf{Main formula}: a formula that occurs in the conclusion of \texttt{R}, is not in the context and contains the logical connective naming \texttt{R};
        \end{itemize}
        \end{mydefinition2}
        \begin{mydefinition2}
        \begin{itemize}
              \item \textbf{Active formula}: a formula that occurs in the premise(s) of \texttt{R} and not in the context.
         \end{itemize}   
        \end{mydefinition2}
    \end{mydefinition}
\end{definition}

\begin{definition}[\textbf{Ammissibility of a rule}]\phantom{Allyouneedislovelove}
    \begin{mydefinition}
        \begin{mydefinition2}
            A rule \texttt{R} is \textbf{admissible} in a $\mathsf{G3}$-style sequent calculus $\mathsf{C}$ iff its conclusion is derivable in $\mathsf{C}$ whenever its premises are derivable in $\mathsf{C}$.
        \end{mydefinition2}
    \end{mydefinition}
\end{definition}

\begin{definition}[\textbf{Invertibility of a rule}]\phantom{Allyouneedisloveloveis}
    \begin{mydefinition}
        \begin{mydefinition2}
            A rule is \textbf{invertible} in a $\mathsf{G3}$-style sequent calculus $\mathsf{C}$ iff  
            its premises are derivable in $\mathsf{C}$ whenever conclusion is derivable in $\mathsf{C}$,
        \end{mydefinition2}
    \end{mydefinition}
\end{definition}

\subsubsection{A Sequent Calculus for Propositional Logic}

Figure~\ref{tab:G3cp} introduces $\mathsf{G3cp}$ sequent calculus for propositional logics by outlining its \textit{initial sequent} and \textit{inference rules}.

\begin{table}[h]
\centering
\begin{tabular}{ll}
 \arrayrulecolor{Blue}\hline
 \\[2ex]
\textbf{Initial sequents:} & \\[3ex]
$w : p, \Gamma \Rightarrow \Delta, w : p$ &  \\[4ex]
\textbf{Propositional rules:}  & \\[3ex]
$\infer[\texttt{L}\bot]{w : \bot, \Gamma \Rightarrow \Delta}{} $ \\[3ex]
$ \infer[\texttt{L}\neg]{w : \neg A, \Gamma \Rightarrow \Delta}{\Gamma \Rightarrow \Delta, w : A} $ & $ 
\infer[\texttt{R}\neg]{\Gamma \Rightarrow \Delta, w : \neg A}{w : A, \Gamma \Rightarrow \Delta}$ \\[3ex]
$\infer[\texttt{L}\land]{w : A \land B, \Gamma \Rightarrow \Delta}{w : A, w : B, \Gamma \Rightarrow \Delta} $ & $ 
\infer[\texttt{R}\land]{\Gamma \Rightarrow \Delta, w : A \land B}{\Gamma \Rightarrow \Delta, w : A \quad\quad\quad \Gamma \Rightarrow \Delta, w : B}$ \\[3ex]
$\infer[\texttt{L}\lor]{w : A \lor B, \Gamma \Rightarrow \Delta}{w : A, \Gamma \Rightarrow \Delta \quad\quad\quad w : B, \Gamma \Rightarrow \Delta}\quad\quad $ & $ 
\infer[\texttt{R}\lor]{\Gamma \Rightarrow \Delta, w : A \lor B}{\Gamma \Rightarrow \Delta, w : A, w : B}$ \\[3ex]
$\infer[\texttt{L}\to]{w : A \to B, \Gamma \Rightarrow \Delta} {\Gamma \Rightarrow \Delta, w : A \quad\quad\quad w : B, \Gamma \Rightarrow \Delta}$ & $ 
\infer[\texttt{R}\to]{\Gamma \Rightarrow \Delta, w : A \to B}{w : A, \Gamma \Rightarrow \Delta, w : B}$ \\[4ex]
 \arrayrulecolor{Blue}\hline
\end{tabular}
\caption{Sequent Calculus $\mathsf{G3cp}$}
\label{tab:G3cp}
\end{table}

\subsubsection{Desiderata for Sequent Calculi}

Below, we list some important \textit{desiderata} for $\mathsf{G3}$-style sequent calculi. A well-designed $\mathsf{G3}$-style calculus satisfies these criteria, ensuring efficient \textit{automation of decision procedures} through a \textit{terminating} \textit{root-first} proof search. 

\begin{enumerate}
    \item \textbf{Analyticity} \\
    As outlined in~\ref{sub:deductive system}, the distinction between analytic and synthetic deductive systems highlights that analytic proofs proceed \textit{bottom-up}. In an analytic derivation, each formula is a subformula of the formulas appearing lower in the derivation branch, adhering to the \textit{subformula principle}. 

    \begin{remark}
        There are also \textit{weaker} forms of analyticity--and subformula principle--that can still ensure the termination of the proof search.
    \end{remark}
 
     Furthermore, the proof search in analytic calculi is \textit{guided} by the structure of the formula to be proved, meaning no guesses are needed, as each inference step is determined by the logical structure of the formula.
     
    \begin{remark}
    While analyticity and the various forms of the subformula principle play a crucial role in proof search, it does not necessarily guarantee termination in general.
    \end{remark}

    \item \textbf{Modularity}\\
    The modularity of a system measures its capacity to represent different logical systems in a \textit{flexible} manner. Specifically, a modular system can be \textit{extended} or \textit{adapted} to accommodate various logics by \textit{adding} or \textit{adjusting} inference rules and staring expressions as needed, while maintaining a coherent and consistent overall framework.    
    This property enables the \textit{systematic exploration} of various logical systems within a \textit{unified framework}, making it easier to analyse and compare different logics.

    \item \textbf{Invertibility of all the rules}\\
    This property prevents the need of \textit{backtracking}--retracing steps when a wrong path is taken--during the proof search process. In other words, it ensures that at each step of the proof search \textit{no information is discarded}, so that the construction of a single derivation tree is sufficient for determining derivability.    

    \item \textbf{Admissibility of structural rules} \\
    Inference rules can be divided into two categories, logical and structural rules.
       \begin{itemize}
           \item \textbf{Logical rules}:  enable the introduction of a logical operator $\circ$ either on the left (\texttt{L}$ \circ$) or right (\texttt{R}$\circ$) side of a sequent.
           \item \textbf{Structural rules}: do not introduce logical operators but rather manipulate the structure of sequents. They commonly include:
           \begin{enumerate}
               \item \textbf{Cut rule}: allows the elimination of formulas by combining two derivations that share those formulas.  \texttt{Cut} is closely related to axiomatic rule \texttt{MP}, as it allows us to eliminate an intermediate step of a derivation.

               \[\infer[\texttt{Cut}_\mathsf{G3cp}]{\Gamma, \  \Gamma' \Rightarrow \Delta, \ \Delta'}{\Gamma \Rightarrow \Delta, \ A \ \ \ \ \ \ \  A, \ \Gamma' \Rightarrow \Delta'}\]

                  \texttt{Cut} does not preserve the \textit{subformula property} because its application combines two derivations that share a common formula by entirely removing that formula from the proof. This can result in a situation where a formula in the premises no longer satisfies the subformula principle, as it no longer appears directly in the derivation structure or as a subformula of another formula in the proof.

                 For this reason, if \texttt{Cut} is taken as a primitive rule, it is desirable to prove that it is \textit{eliminable}—that is, any proof using \texttt{Cut} can be transformed into a proof that does not rely on it. This result, known as \textbf{cut elimination}, ensures that the system remains \textit{cut-free}. 

                 \begin{remark}
                     Cut elimination is fundamental in proof theory, as it \textit{often} has the subformula property as an immediate consequence. It brings us closer to guaranteeing \textit{analiticity} and \textit{termination property}.
                 \end{remark}
               \medskip \medskip
               
               \item \textbf{Weakening}: permits the addition of formulas to either side of the sequent without affecting the derivation.
               \[\infer[\texttt{LW}_\mathsf{G3cp}]{A, \ \Gamma \Rightarrow \Delta}{\Gamma \Rightarrow \Delta} \ \ \ \ \ \ \ \ \ \infer[\texttt{RW}_\mathsf{G3cp}]{\Gamma \Rightarrow \Delta, \ A}{\Gamma \Rightarrow \Delta}\]
               \medskip
               
               \item \textbf{Contraction}: allows the removal of duplicated formulas from the sequent, ensuring that no formula occurs more than once in a sequent if unnecessary.
               \[\infer[\texttt{LC}_\mathsf{G3cp}]{A, \ \Gamma \Rightarrow \Delta}{A, \ A, \ \Gamma \Rightarrow \Delta} \ \ \ \ \ \ \ \ \ \infer[\texttt{RC}_\mathsf{G3cp}]{\Gamma \Rightarrow \Delta, \ A}{\Gamma \Rightarrow \Delta, \ A, \ A}\]
           \end{enumerate}     

           \begin{remark}
           In the following pages, we will present a \textit{\textbf{logical variant}} of the sequent calculi under analysis, which does not include primitive structural rules, but for which the \textit{admissibility of structural rules} has been proved. For example, we have already presented the logical variant of $\mathsf{G3cp}$ for which the structural rules have proved to be admissible.
           This method both ensures the \textit{flexibility of the system} and the \textit{analyticity} of the calculus.
           
           An opposite but equivalent method is to present the \textit{general variant}--where both structural and logical rules are taken as primitive--and prove the eliminability of the structural rules, e.g. cut elimination.
           \end{remark}
           
        \end{itemize}
        
     \item \textbf{Termination} \\
     Proof search in a decidable system must not lead to infinite derivations. If the final state of a procedure generates a derivation, then the end sequent is a \textit{theorem}; otherwise is generally possible to extract a \textit{refutation} of the sequent from the failed proof search. 
     
     \begin{remark}
     Since we aim to provide sequent calculi that ensure decision procedures for normal logics, it is crucial to present calculi that allow \textit{terminating proof search}. This will be an important point in the subsequent discussion.
     \end{remark}

\end{enumerate}

\subsection{Sequent Calculi for Modal Logics}\label{sub:labelled-sequent}

As Sara Negri emphasizes in her seminal article introducing sequent calculi that satisfy the desiderata mentioned above for $\mathbb{GL}$ and the entire modal cube:

\begin{quote}
``The possibility of a systematic development of a proof theory of modal logic in terms of Gentzen sequent calculus has been looked at with overall skepticism. The question has been stepped over very often in the literature by choosing Hilbert type proof systems.''~\cite{negri2005proof}
\end{quote}

Thanks to this and earlier contributions~\cite{Vigano2000-VIGLNL}, the landscape has significantly changed with the introduction of \textit{internalisation techniques} for embedding semantic notions within sequent calculi for non-classical logics. This approach enriches $\mathsf{G3}$-style calculi by incorporating elements of relational semantics into the proof system.  
There are two main strategies for internalisation:  

\begin{enumerate}
    \item [(a)] \textbf{Explicit Internalisation}: enriches the \textit{language} of the calculus itself to explicitly represent semantics elements.\\
    $\triangleleft$ Usually satisfy the basic desiderata of $\mathsf{G3}$-style systems;  \\
    $\triangleright$  The termination of proof search is sometimes difficult to prove and, in many cases, is not optimal in terms of complexity.

    \item [(b)] \textbf{Implicit Internalisation}: enriches the \textit{structure} of the sequents by introducing additional structural connectives beyond the standard `$\Rightarrow$' and `,’.  
    \\
    $\triangleleft$ Enables efficient decision procedures.  \\
    $\triangleright$  Difficult to design and generally lacks \textit{modularity}.
\end{enumerate}

In what follows, we will focus on the sequent calculi \textit{modularly} developed by Sara Negri for the modal cube and $\mathbb{GL}$. These calculi, based on \textit{explicit internalisation} techniques, offer a systematic and structured approach to modal proof theory, ensuring all the desiderata above and hence decision procedures.

We begin by presenting the enriched syntax and the $\mathsf{G3}$-style calculus for the minimal normal modal logic $\mathbb{K}$, denoted as $\mathsf{G3K}$. Next, we introduce a method for extending this system to encompass a broader class of normal modal logics, leading to the systems $\mathsf{G3K*}$ covering the entire cube. Finally, we discuss the sequent calculus $\mathsf{G3KGL}$, specifically designed for provability logic $\mathbb{GL}$ but also a \textit{modular extension} of $\mathsf{G3K}$.

\subsubsection{The Enriched Syntax}

Since we focus on explicit internalised calculi that extend the language of modal logic $\mathcal{L}_{\Box}$ with symbols representing semantic concepts ($\mathcal{L}_{LS}$), we have to introduce an enriched syntax representing Kripke semantics ($\mathbf{Form}_{LS}$).

In particular, modal formulas are \textbf{labelled} by additional items, e.g $x:A$, to represent within the syntax the forcing relation $\mathcal{M},w \vDash A$. 
To properly handle the semantics of modal operators, sequents may also include \textbf{relational atoms}, such as 
$xRy$, explicitly encoding accessibility relations between worlds. Due to this syntactic structure, these calculi are commonly known as \textit{\textbf{labelled sequent calculi}}.

\begin{definition}[\textbf{Alphabet}]
    $\mathcal{L}_{LS} = \mathcal{L}_{\Box} \cup L \cup \{R\}$ 
\begin{mydefinition}
    \begin{mydefinition2}
        The \textbf{alphabet} of propositional modal language $\mathcal{L}_{\Box}$ consists of:
\begin{itemize}
    \item the alphabet $\mathcal{L}_{\Box}$;
    \item an \textbf{infinite denumerable} set $L$ of \textbf{labels}: $w, x, y, \dots$;
    \item a binary predicate $R$.
\end{itemize}
    \end{mydefinition2}
\end{mydefinition}        
\end{definition}

\begin{convention}[\textbf{Metavariables for  $\mathcal{L}_{LS}$ formulas}]
    We will use Greek letters $\varphi, \psi, \delta, \dots$ as metavariables for  $\mathcal{L}_{LS}$ formulas.
\end{convention}

\begin{definition}[\textbf{Formulas} of  $\mathcal{L}_{LS}$ ]
 $\varphi \in \mathbf{Form}_{LS} \Coloneqq x:A \ | \ xRy $ 
 \begin{mydefinition}
\begin{mydefinition2}
    The set $\mathbf{Form}_{LS}$ of formulas on $\mathcal{L}_{LS}$ is defined by induction as follows:
\begin{itemize}
    \item if $A \in \mathcal{L}_{\Box}$ and $x \in L$ then $x:A\in \mathbf{Form}_{LS}$ $($\textbf{labelled formulas}$)$;
    \item if  $x,y \in L$  then $xRy \in \mathbf{Form}_{LS}$ $($\textbf{relational atoms}$)$;
    \item nothing else is in $\mathbf{Form}_{LS}$.
\end{itemize}
\end{mydefinition2}
\end{mydefinition}
\end{definition}

\begin{definition}[\textbf{Labelled Sequent}]\phantom{Allyouneedisloveloveisallyouneed}
    \begin{mydefinition}
        \begin{mydefinition2}
             A \textbf{labelled sequent} is a formal expression of the form $ \Gamma \Rightarrow \Delta$, \\ where $\Gamma, \ \Delta$ are finite multiset of formulas in $\mathbf{Form}_{LS}$.
        \end{mydefinition2}
    \end{mydefinition}
\end{definition}

\begin{remark}[: historical note on sequent Calculi ]
    The origins of labelled calculi trace back to the 1950s with Kanger’s work~\cite{Kanger1957-KANPIL}, later refined by Kripke~\cite{Kripke1963-KRISAO-2} and further developed by Gabbay~\cite{Gabbay1996-GABLDS}. However, it was with Negri’s work~\cite{negri2005proof} that labelled sequent calculi became an established and well-structured methodology within proof theory. Significant extensions to other non-classical logics, along with refinements, have been contributed by both Sara Negri~\cite{negri2017proof, negri2021proof} and Francesca Poggiolesi~\cite{poggiolesi2010gentzen, poggiolesi2016natural}.
\end{remark}

\subsubsection{A Labelled Sequent Calculus for K}
Table~\ref{tab:G3K} outlines the initial sequents and the logic rules of the labelled sequent calculus $\mathsf{G3K}$ for the minimal normal logic $\mathbb{K}$.

All the inference rules are obtained by the inductive Definition~\ref{def:truth} of forcing:  while the propositional rules are straightforward, the modal rules governing the $\Box$ operator require a reflection. The forcing condition for  $\Box$ is given by:
\medskip
\begin{center}
    $\mathcal{M}, w \vDash \Box A $ iff $\forall y (xRy \implies\mathcal{M}, x \vDash A) $\footnote{$\mathcal{M}$ is a model and $w$ is a world inhabiting $\mathcal{M}$.}.
\end{center}

This condition leads to the following modal rules:
\medskip \medskip

\begin{tabular}{cc}
   $\infer[\texttt{L}\Box]{ w R x, w : \Box A, \Gamma \Rightarrow \Delta}{x : A, w R x, w:\Box A, \Gamma \Rightarrow \Delta} \quad \quad \quad \quad \quad \quad $ & $ 
\infer[\texttt{R}\Box \ (x!)]{\Gamma \Rightarrow \Delta, w : \Box A}{w R x, \Gamma \Rightarrow \Delta, x : A}$  
\end{tabular}
\smallskip \medskip

\begin{remark}
    The main formula `$x: \Box A$' is repeated in \texttt{L}$\Box$ in order to make the rule invertible.
\end{remark}
\medskip\medskip

\begin{table}[h]
\centering
\begin{tabular}{ll}
 \arrayrulecolor{Blue}\hline 
 \\[3ex]
\textbf{Initial sequents:} & \\[3ex]
$w : p, \Gamma \Rightarrow \Delta, w : p$ &  \\[4ex]
\textbf{Propositional rules:}  & \\[3ex]
$\infer[\texttt{L}\bot]{w : \bot, \Gamma \Rightarrow \Delta}{} $ \\[3ex]
$ \infer[\texttt{L}\neg]{w : \neg A, \Gamma \Rightarrow \Delta}{\Gamma \Rightarrow \Delta, w : A} $ & $ 
\infer[\texttt{R}\neg]{\Gamma \Rightarrow \Delta, w : \neg A}{w : A, \Gamma \Rightarrow \Delta}$ \\[3ex]
$\infer[\texttt{L}\land]{w : A \land B, \Gamma \Rightarrow \Delta}{w : A, w : B, \Gamma \Rightarrow \Delta} $ & $ 
\infer[\texttt{R}\land]{\Gamma \Rightarrow \Delta, w : A \land B}{\Gamma \Rightarrow \Delta, w : A \quad\quad \Gamma \Rightarrow \Delta, w : B}$ \\[3ex]
$\infer[\texttt{L}\lor]{w : A \lor B, \Gamma \Rightarrow \Delta}{w : A, \Gamma \Rightarrow \Delta \quad\quad\quad w : B, \Gamma \Rightarrow \Delta}\quad\quad $ & $ 
\infer[\texttt{R}\lor]{\Gamma \Rightarrow \Delta, w : A \lor B}{\Gamma \Rightarrow \Delta, w : A, w : B}$ \\[3ex]
$\infer[\texttt{L}\to]{w : A \to B, \Gamma \Rightarrow \Delta} {\Gamma \Rightarrow \Delta, w : A \quad\quad\quad w : B, \Gamma \Rightarrow \Delta}$ & $ 
\infer[\texttt{R}\to]{\Gamma \Rightarrow \Delta, w : A \to B}{w : A, \Gamma \Rightarrow \Delta, w : B}$ \\[4ex]
\textbf{Modal rules}: & \\[3ex]
$\infer[\texttt{L}\Box]{ w R x, w : \Box A, \Gamma \Rightarrow \Delta}{x : A, w R x, w:\Box A, \Gamma \Rightarrow \Delta} $ & $ 
\infer[\texttt{R}\Box \ (x!)]{\Gamma \Rightarrow \Delta, w : \Box A}{w R x, \Gamma \Rightarrow \Delta, x : A}$ \\[4ex] 
\end{tabular}
\footnotesize{$(x!)$ means that the label $x$ does not occur in $\Gamma, \Delta$. }
\begin{tabular}{cc}
\phantom{AAAAAAAAAAAAAAAAAAAAAAAAAAAAAAAAAAAAAAAAAAAAAAAAAAAA} \\
 \arrayrulecolor{Blue}\hline
\end{tabular}
\caption{Labelled Sequent Calculus $\mathsf{G3K}$}
\label{tab:G3K}
\end{table}

\newpage
\noindent Reference articles by Sara Negri~\cite[\S 4,6]{negri2005proof} establishes the following results for $\mathsf{G3K}$:
\begin{itemize}
    \item Structural rules--cut, weakening and contraction--are \textit{admissible} in $\mathsf{G3K}$;
    \item All the rules of $\mathsf{G3K}$ are \textit{invertible};
    \item $\mathsf{G3K}$ is \textit{complete} w.r.t~axiomatic calculus    because everything that is provable in $\mathsf{K}$ ($ \mathcal{S. \varnothing} \vdash A$) is also derivable in $\mathsf{G3K}$ ($\mathsf{G3K}\vdash \Rightarrow A$):
    \begin{enumerate}
        \item The distribution axiom $\mathbf{K}$ is derivable in $\mathsf{G3K}$. Below we provide proof of this fact, which is also an example of derivation in $\mathsf{G3K}$.
        \begin{figure}[h]
            \centering
            \[\infer[\text{R}\to]{\Rightarrow \ x: \Box( A \to B)  \to (\Box A \to \Box B)}{\infer[\text{R}\to]{  \ x: \Box (A \to B)\ \Rightarrow \ x:\Box A \to \Box B}{\infer[\text{R}\Box]{x: \Box A, \ x:\Box( A \to  B) \ \Rightarrow \ x: \Box B  }{\infer[\text{L}\Box \times 2]{xRy, \ x:\Box A, \ x: \Box(A \to  B) \Rightarrow y:B\ }{ \infer[\text{L}\to \ \ ]{y: A, \ y:A \to B, xRy, \ x:\Box A, \ x: \Box(A \to  B) \Rightarrow y:B}{\infer[]{\textcolor{Blue}{y:A}, xRy, \dots \Rightarrow y:B, \ \textcolor{Blue}{y:A}}{} \quad \quad \quad \infer[]{\textcolor{Blue}{y:B}, \ y:A, xRy, \dots \Rightarrow \textcolor{Blue}{y:B}}{}}}}}}\]
            \caption{Derivation of dstribution schema in $\mathsf{G3K}$ }
            \label{fig:G3K-prove-k}
        \end{figure}
        
Since is provable by induction that sequents  $x:A, \ \Gamma \Rightarrow \Delta, \ x:A$ with $A$ an arbitrary modal formula are derivable in $\mathsf{G3K}$, the above root-first prove search is a derivation of $\mathbf{K}$.

\medskip
        
        \item The necessitation rule \texttt{RN} is admissible in $\mathsf{G3K}$ and modus ponens \texttt{MP} is proved to be admissible in $\mathsf{G3K}$ by cut rule and the  rule inverse to \texttt{R}$\to$. \smallskip
        \[\infer[\texttt{RN}]{\Rightarrow x: \Box A}{\Rightarrow x:A} \quad\quad\quad\quad \infer[\texttt{MP}]{\Rightarrow x: B}{\Rightarrow x:A \to B \quad\quad\Rightarrow x:A}\]
    \end{enumerate}

 \medskip   
    \item $\mathsf{G3K}$ is \textit{complete} w.r.t~relational semantics because for any modal formula $A$:
    \begin{center}
        $\mathfrak{F}\vDash A \implies \mathcal{S. \varnothing} \vdash A \implies \mathsf{G3K}\vdash \Rightarrow A$.
    \end{center}
    That is, since the axiomatic calculus is sound and complete for Kripke frames, and $\mathsf{G3K}$ is complete w.r.t~ the axiomatic calculus, $\mathsf{G3K}$ is also complete w.r.t. Kripke semantics. Similarly one can prove the \textit{soundness} of this system;
    \item \textit{Analyticity} of $\mathsf{G3K}$ hold in a less strict version, \textit{\textbf{subterm principle}}:
    \begin{mydefinition}
        \textit{All terms $($variables and worlds$)$ in a derivation are either eingenvariables or terms in the conclusion}.
    \end{mydefinition}
    \item $\mathsf{G3K}$  allows a \textit{terminating proof search} with an effective bound on proof length;
    \item $\mathsf{G3K}$ provides a \textit{decision procedure} for theoremhood in $\mathbb{K}$;
    \item $\mathsf{G3K}$ is \textit{modular}, since the following Fact~\ref{fact:modularity-G3K} has been proved in~\cite[\S 6]{negri2008structural}.
\end{itemize}

\subsubsection{Labelled Sequent Calculi for the Systems within the Cube}

 The sequent calculus $\mathsf{G3K}$ is a \textit{modular} system, as demonstrated by Negri and von Plato in~\cite[\S 6]{negri2008structural} and reported in Fact~\ref{fact:modularity-G3K}. This modularity allows it to be extended with appropriate rules to formalise the class of ``geometric'' logics. 

\begin{definition}[\textbf{Geometric Property}]\phantom{Allyouneedislove}
    \begin{mydefinition}
        \begin{mydefinition2}
          A property of binary relation is said to be \textbf{geometric} iff \\ it has shape $\forall \vec{x} (A \to B)$, where $A,B$ are first-order formulas that do not contain universal quantifiers or implications.
        \end{mydefinition2}
    \end{mydefinition}
\end{definition}

\begin{fact}[\textbf{Modularity of $\mathsf{G3K}$ on ``geometric systems''}]\label{fact:modularity-G3K}\phantom{A}
    \begin{mydefinition}
     For any extension of $\ \mathbb{K}$ whose \textit{characteristic} property is (co)geometric,
     there exists an adequate w.r.t~relational semantics
     calculus that is an extension of $\mathsf{G3K}$ with appropriate (co)geometric rules.
    \end{mydefinition}
\end{fact}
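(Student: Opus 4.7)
The plan is to follow the methodology of Negri and von Plato (cited by the paper), since the result is labelled as a Fact precisely because its proof lives in the literature. The key observation is that every geometric formula admits a canonical translation into inference rules that preserve all the desirable structural properties of $\mathsf{G3K}$.

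First, I would make the translation precise. Every geometric formula over the accessibility predicate $R$ can be normalised into
$$\forall \vec{x}\bigl(P_1 \land \ldots \land P_n \to \exists \vec{y}_1\, M_1 \lor \ldots \lor \exists \vec{y}_k\, M_k\bigr),$$
where the $P_i$ are atomic relational formulas and each $M_j$ is a conjunction of atomic formulas over $\vec{x}, \vec{y}_j$. To such a formula one associates the \textit{geometric rule scheme}
$$\frac{M_1, P_1, \ldots, P_n, \Gamma \Rightarrow \Delta \quad \cdots \quad M_k, P_1, \ldots, P_n, \Gamma \Rightarrow \Delta}{P_1, \ldots, P_n, \Gamma \Rightarrow \Delta}$$
with the side condition that the labels in each $\vec{y}_j$ are fresh. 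The cogeometric case is dual. For the systems in the modal cube listed in Table~\ref{tab:char-prop}, this mechanically produces the standard rules for reflexivity, transitivity, symmetry, seriality, and euclideanity, since all these are geometric.

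Second, I would verify that adding such rules to $\mathsf{G3K}$ preserves the good proof-theoretic properties. Height-preserving admissibility of weakening and label substitution follows by a straightforward induction on derivation height, since geometric rules neither introduce logical connectives nor consume labelled formulas. Height-preserving admissibility of contraction uses invertibility of all logical rules, which is inherited from $\mathsf{G3K}$ because the new rules do not disturb logical context. Cut admissibility extends Negri's argument for $\mathsf{G3K}$ by adding cases where a cut formula meets a geometric rule: since geometric rules only manipulate relational atoms and never have labelled formulas as principal, the cut can always be permuted upward uniformly.

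Third, adequacy w.r.t.~the class of frames satisfying the geometric property splits into two halves. Soundness is immediate: interpreting a sequent via the forcing relation, the rule is the syntactic transcription of an implication valid in every model of the property, so validity is preserved from premises to conclusion. For completeness, I would show that the characteristic axiom schema corresponding to the property is derivable in the extended calculus, invoking the correspondence results of Section~\ref{sec:correspondence-theory} together with completeness of $\mathsf{G3K}$ w.r.t.~the axiomatic calculus $\mathsf{K}$; the extended calculus then derives precisely the theorems of the extended axiomatic system.

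The main obstacle is the cut-admissibility argument. While geometric rules are syntactically gentle, the proof must carefully handle their interaction with $\texttt{R}\Box$, which also manipulates relational atoms and introduces fresh labels. The delicate point is showing that fresh eigenvariables introduced by $\texttt{R}\Box$ can be uniformly substituted when cuts are permuted through geometric rule instances; this is what forces the preliminary proof of height-preserving label substitution and is the technical heart of the modularity claim.
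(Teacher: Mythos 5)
The paper states this result as a \emph{Fact} and offers no proof of its own, deferring entirely to Negri and von Plato (\cite[\S 6]{negri2008structural}); your reconstruction --- normalising the geometric property, generating the geometric rule scheme with fresh eigenvariables, establishing height-preserving substitution/weakening/contraction and cut admissibility, and then obtaining adequacy by deriving the characteristic schema and simulating the axiomatic calculus --- is precisely the standard argument from that source, so the approach is correct and matches the one the paper relies on. The only ingredient you omit is Negri's \emph{closure condition} on the set of geometric rules (instances in which principal relational atoms become duplicated under substitution must have their contracted instances added to the system), which is what actually secures height-preserving admissibility of contraction on relational atoms; for the specific rules of the modal cube this condition is immediately verified, so nothing essential is missing from your sketch.
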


In~\cite[\S 2]{negri2005proof}, Sara Negri also provided a procedure to construct the (co)geometric rule corresponding to a given (co)geometric property. Since the properties of seriality, reflexivity, transitivity, euclideanity and symmetry are all geometric, we can construct their corresponding rules, as illustrated in Table~\ref{tab:corresponding-rules}.

\begin{table}[h]
    \centering
    \renewcommand{\arraystretch}{1.35} 
    \begin{tabular}{l | l | l }
        \toprule
         & \textbf{Characteristic Property} & \textbf{Corresponding Rule} \\
        \midrule
        $\mathbf{D}$ & \rule{0pt}{2.6ex} \textbf{serial}: $\forall w \exists y  (wRy)$ & 
        $\infer[\texttt{Ser}]{\Gamma \Rightarrow \Delta}{w R y, \Gamma \Rightarrow \Delta}$   \\
        \midrule
        $\mathbf{T}$ & \rule{0pt}{2.6ex} \textbf{reflexive}: $\forall w(wRw)$ & 
        $\infer[\texttt{Ref}]{\Gamma \Rightarrow \Delta}{w R w, \Gamma \Rightarrow \Delta}$  \\
        \midrule
        $\mathbf{4}$ &  \rule{0pt}{2.6ex} \textbf{transitive}: $\forall w,x,z(wRx \land xRz \Rightarrow wRz)$ & 
        $\infer[\texttt{Trans}]{w R x, xRz, \Gamma \Rightarrow \Delta}{wRz, w R x, xRz, \Gamma \Rightarrow \Delta}$  \\ 
        \midrule
        $\mathbf{5}$ & \rule{0pt}{2.6ex} \textbf{euclidean}: $\forall w,x,y (wRx \land wRy \Rightarrow yRx)$ & 
        $\infer[\texttt{Eucl}]{w R x, wRy, \Gamma \Rightarrow \Delta}{yRx, w R x, wRy, \Gamma \Rightarrow \Delta}$\\
       \midrule
        $\mathbf{B}$ & \rule{0pt}{2.6ex} \textbf{symmetric}: $\forall w,x (wRx \Rightarrow xRw)$ & 
        $\infer[\texttt{Sym}]{w R x, \Gamma \Rightarrow \Delta}{xRw, wRx, \Gamma \Rightarrow \Delta}$\\
        \midrule
        & \rule{0pt}{2.6ex} \textbf{irreflexive}: $\forall w (\neg wRw)$ & 
        $\infer[\texttt{Irref}]{w R w, \Gamma \Rightarrow \Delta}{}$\\
        \bottomrule
    \end{tabular}
    \caption{Geometric rules correspondent to geometric characteristic properties.}
    \label{tab:corresponding-rules}
\end{table}

By extending $\mathsf{G3K}$ with the appropriate rules as showed in Table~\ref{tab:systems-calculi}, we can derive labelled sequent calculi for the entire modal cube. The \textit{admissibility of structural rules} in these calculi has been proved in a unified manner, along with the \textit{invertibility} of all the rules and \textit{subterm principle}. 

\begin{table}[h]
    \centering
    \begin{tabular}{|l|l|l|}
        \hline
        \textbf{Logic} & \textbf{Calculus} &\textbf{Definition} \\
        \hline
        $\mathbb{D}$ & $\mathsf{G3KD}$ &$\mathsf{G3K}$ + \texttt{Ser} \\
        $\mathbb{T}$ & $\mathsf{G3KT}$ & $\mathsf{G3K}$ + \texttt{Refl} \\
        $\mathbb{K}4$ & $\mathsf{G3KK4}$ & $\mathsf{G3K}$ + \texttt{Trans}\\
        $\mathbb{S}4$ & $\mathsf{G3KS4}$ & $\mathsf{G3K}$ + \texttt{Refl} + \texttt{Trans}\\
        $\mathbb{B}$ & $\mathsf{G3KB}$ & $\mathsf{G3K}$ + \texttt{Refl} + \texttt{Sym} \\
        $\mathbb{S}5$ & $\mathsf{G3KS5}$ & $\mathsf{G3K}$ + \texttt{Refl} + \texttt{Eucl}\\
        \hline
    \end{tabular}
    \caption{Logics within the cube and correspondent labelled sequent calculi $\mathsf{G3K*}$}
    \label{tab:systems-calculi}
\end{table}

Each of these calculi easily proves the derivability of its specific schema(ta), and thus its \textit{completness} w.r.t. relational semantics.

Importantly, proving the \textit{termination} of these calculi provides a clear argument for the \textit{decidability} of the corresponding logics, along with a construction of \textit{countermodels} in case the proof search fails. Due to the structural correspondence between the proof tree generated during a backward proof search and the relational level, proving the subformula property for these calculi--\textit{subterm property}--becomes equivalent to showing that such a root generates only a finite number of new labels.

Table~\ref{tab:modal-systems-properties} summarises the validity of these desiderata for each logic under consideration.

\begin{table}[h]
    \centering
    \renewcommand{\arraystretch}{1.5} 
    \begin{tabular}{|l|l|l|l|l|l|}
        \hline
        \textbf{Logic} & \textbf{Calculus}  & \textbf{Struct.Ad.} & \textbf{Subterm} & \textbf{Termination} & \textbf{Decision} \\
        \hline
        $\mathbb{D}$ & $\mathsf{G3K}$+ \texttt{Ser} &  \checkmark & \checkmark & \checkmark~\cite{DBLP:conf/lics/GargGN12} & \checkmark~\cite{DBLP:conf/lics/GargGN12} \\
        \hline
        $\mathbb{T}$ & $\mathsf{G3K}$+ \texttt{Refl} & \checkmark & \checkmark & \checkmark & \checkmark \\
        \hline
        $\mathbb{K}4$ & $\mathsf{G3K}$+ \texttt{Trans}& \checkmark & \checkmark & \checkmark~\cite{DBLP:conf/lics/GargGN12} & \checkmark~\cite{DBLP:conf/lics/GargGN12} \\
        \hline
       $ \mathbb{S}4$ & $\mathsf{G3K}$+\texttt{Refl}+\texttt{Trans}&  \checkmark & \checkmark & \checkmark & \checkmark \\
        \hline
        $\mathbb{B}$ & $\mathsf{G3K}$+\texttt{Refl}+\texttt{Sym} &  \checkmark & \checkmark & \checkmark & \checkmark \\
        \hline
       $ \mathbb{S}5$ & $\mathsf{G3K}$+ \texttt{Refl}+\texttt{Eucl} &  \checkmark & \checkmark & \checkmark & \checkmark \\
        \hline
    \end{tabular}
    \caption{Properties of Modal Systems}
    \label{tab:modal-systems-properties}
\end{table}

Although it is not possible to propose a simple terminating proof search for $\mathsf{G3KD}$ and $\mathsf{G3KK4}$~\cite{negri2005proof}, the literature~\cite{DBLP:conf/lics/GargGN12}  has provided modular terminating \textit{proof strategies}, specifically tailored to prevent loops, ensure termination, and provide decision procedures for each system in the modal cube.

\subsubsection{A labelled sequent calculus for GL}

$\mathbb{GL}$ \textit{characteristic} property is converse well-foundness-transitivity, or alternatively finiteness-irreflexivity-transitivity. Since \textit{finiteness} and \textit{noetherianity} both are intrinsically second-order\footnote{See Remark~\ref{rmk:expressivity_power}.} and then not (co)geometric, we cannot define a labelled calculus for $\mathbb{GL}$ ``naive'' extending $\mathsf{G3K}$ with (co)geometric rules.

In~\cite[\S 5]{negri2005proof}, Sara Negri provides a \textit{modular adaptation} of $\mathsf{G3K}$ by relying on the following characterisation of the `$\Box$' that holds for models based on frames in $\mathfrak{ITF}$:

\begin{center}
    $\mathcal{M}, w \vDash \Box A $ iff $\forall y (xRy \ and \ \mathcal{M}, y \vDash \Box A \implies\mathcal{M}, x \vDash A) $\footnote{$\mathcal{M}$ is a model based on a frame in  $\mathfrak{ITF}$ and $w$ is a world inhabiting $\mathcal{M}$.}.
\end{center}

In such a perspective, this condition suggests to modify the \texttt{R}$\Box$ rule as follows:
\medskip \medskip

\begin{tabular}{cc}
   $\infer[\texttt{L}\Box]{ w R x, w : \Box A, \Gamma \Rightarrow \Delta}{w : A, w R x, x:\Box A, \Gamma \Rightarrow \Delta} \quad  \quad $ & $ 
\infer[\texttt{R}\Box_{Lob} \ (x!)]{\Gamma \Rightarrow \Delta, w : \Box A}{w R x, x:\Box A,  \Gamma \Rightarrow \Delta, x : A}$  
\end{tabular}
\smallskip \medskip \medskip

The resulting sequent calculus $\mathsf{G3KGL \coloneq G3K}\ - \ $\texttt{R}$\Box \ + \  $\texttt{R}$\Box_{Lob} \ + \ $\texttt{Trans}  $+$ \texttt{Irrefl} consists of the initial sequents, the propositional rules and $\texttt{L}\Box$, together with the new modal rule $\texttt{R}\Box_{Lob}$, transitive and irreflexive inference rules.

Negri's seminal article~\cite{negri2005proof} establishes the \textit{admissibility of structural rules} in $\mathsf{G3KGL}$, proves the \textit{invertibility} of all its rules, and demonstrates the \textit{completeness} of the system. Furthermore, her subsequent paper~\cite{negri2014proofs} ensures a \textit{terminating} proof search by introducing \textit{saturation conditions} on sequents, preventing the application of useless rules (Tait-Schütte-Takeuti reduction).

If the \textit{proof search} fails, a \textit{countermodel} can be constructed from the root of the derivation tree. Specifically, we consider the top sequent of an open branch, assume that all $\mathcal{L}_{LS}$-formulas in its antecedent are true, and that all the labelled formulas in its consequent are false.
\chapter{An Introduction to HOL Light}\label{chap:2}

Starting from its name and its beginnings, HOLMS is meant to be a HOL Light Library. Therefore, this chapter will introduce the theorem prover HOL Light and its features. 

\section{What is HOL Light?}

HOL Light is an interactive theorem prover that uses high-order logic, that is to say, it is a computer program developed to assist the user in proving mathematical theorems in a formal way using high-order logic. There are several versions of HOL, which originated with the Edinburgh LCF project in the eighties. In the following sections, we will describe the advantages of HOL and, in particular, of HOL Light.

For a more detailed introduction, see HOL Light manual \cite{harrisonmanual}, its tutorial \cite{harrisontutorial} and its reference manual \cite{harrisonreference} which are all presented in the HOL webpage \cite{hol-light-webpage}.

\subsection{Theorem Provers}
We are used to thinking about many computer programs dealing with numerical problems, such as pocket calculators or computer algebra systems,  as calculating the answer to problems. Although, you can also look at them as systems for producing theorems on demand in a certain class. Indicating with the symbol $\vdash $ assertions that are mathematical theorems, we could switch the perspective:
\[ the\;answer\;to\;the\;problem\;of\;adding\;3\;and\;2\;is\;5 \;\;\;\;\longrightarrow \;\;\;\;
   \vdash3+2=5
\]
A theorem prover does exactly that, it produces theorems on user demand.
In \cite{harrisontutorial}, Harrison observes a traditional division in theorem proving between: \begin{enumerate}
    \item \textit{Fully automated systems} which try to prove theorems without user assistance, such as NQTHM \cite{BoyerMooore1979NQTHM}.
    \item \textit{Proof checker} that simply checks the formal correctness of the proof generated by the user, such as AUTOMATH \cite{deBrujin1980AUTOMATH}.
\end{enumerate}

\subsection{HOL}

HOL, instead, is an \textit{interactive} theorem prover, a combination of the two approaches: with the advantage that the user can direct the overall structure of the proof, like in a \textit{proof checker}; but he doesn't need to worry about the details of the many subproofs using a sort of \textit{automated system}.

Therefore, compared with calculators or computer algebra systems (CASs) and among theorem provers, the different versions of HOL have three great advantages:
\begin{enumerate}
    \item \textbf{Wide mathematical range}: The theorems produced by HOL cover a wide mathematical range, e.g. involving infinite sets and \textit{quantifiers} like "there exists some set of natural numbers that..." or "for any real number...".  Conversely, calculators and CASs' theorems are mainly equations with implicitly universally quantified variables. 
    
    \item \textbf{Reliability:} HOL is characterized by rigorous proofs and unequivocal theorems. By contrast, calculators and CASs are often open to doubt and often leave out side conditions. HOL is relative-reliable because one simply needs to have confidence in the small amount of primitive inference rules (logical core) to be sure of the correctness of its theorems. In this sense, HOL is reliable relative to the logical core.
    
    \item \textbf{Extensibility:} Like good calculators and CASs but unlike many theorem provers, HOL is programmable. Consequently, the user can prove theorems automatically with the available functions, but he can also extend the system and produce new functions by implementing them in terms of the original ones. 
\end{enumerate}

It is important to note that these advantages come at the price of a uniform mechanism of derivation that is less efficient and optimized than calculators and CASs for typical problems.

As described in detail in Harrison's appendix \cite{harrisontutorial}, relative-reliability and extensibility of the system are the main characteristics of these theorems provers since their beginnings as Edinburgh LCF \cite{Gordon1979EdinburghLCFy} and its evolution supporting high order logic HOL. This advantageous combination was unique in LCF's times, and it was made possible by the development of a metalanguage suited to the task of theorem proving. Thus in the LCF project, the new programming language ML (Meta Language) was born. It will be introduced in the following section, trying to suggest why the metalanguage ML provides reliability and extensibility.

Several theorem provers in active use are derived from the original HOL developed by Mike Gordon in the 1980s, as the following graph shows (Figure~\ref{HOL_evolution}). A recent offspring of the HOL-based family is the fully verified interactive theorem prover for higher-order logic, Candle~\cite{abrahamsson_et_al:LIPIcs.ITP.2022.3}.
For a more comprehensive introduction to HOL-based systems, see \cite{Gordon1993-GORITH}.

\begin{figure}
    \centering
    \includegraphics[width= 0.6\textwidth]{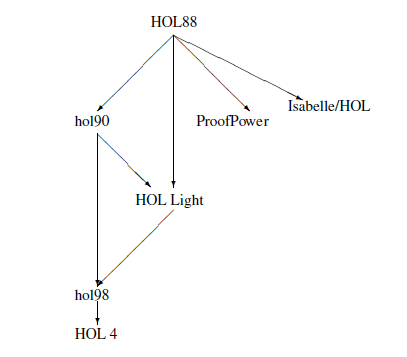} 
    \caption{Schema of the evolution of the theorem prover HOL since 1988 in \cite{harrisontutorial}.}
    \label{HOL_evolution}
\end{figure}

\subsection{HOL Light}

In \cite{harrisonmanual} HOL Light is described as a
tower of functions. At the bottom of this tower lay a small set of primitive operations (logical core) that ultimately produce all theorems. To make the system more user-friendly, higher-level functions are defined in terms of primitive functions, then these functions are
themselves used to build up additional layers, and so on. Because HOL Light is programmable, any user can add pieces to the tower and, because theorems are ultimately derived from the logical core, the system is relative-reliable, and it cannot lead to prove "false theorems". 

The HOL Light logical core stands for its small size and reliability, but also for its expressive power, which is comparable to and, sometimes, more expressive than other HOL versions. Moreover, HOL Light has been used for significant applications in formal verification\footnote{See \cite{harrisonfloatingpoint} for a formal verification of
floating-point algorithms using HOL Light and \cite{KeplerConj} for a formal proof of the Kepler conjecture in HOL Light and Isabelle}. 

The version of ML used in HOL Light is OCaml \cite{ocaml-light}.

\section{OCaml Metalanguage and HOL Light Language}

In this section, we will introduce how the meta-programming language OCaml and the object language/theorem prover HOL Light relate. First, we will describe OCaml, stressing its basic features and pointing out it is a functional programming language based on $\lambda$-calculus with simple types and a polymorphic type system. Afterwards, we will describe the interaction between OCaml and HOL Light and, finally, we will introduce the basic syntax of HOL.

\subsection{OCaml as a Functional Programming Language with Imperative Features}

ML is a family of higher-order functional programming languages featuring a polymorphic type system\footnote{See  \ref{polymorphism} and \cite{Milner1978Polymorphism}} and some traditional imperative features. As previously pointed out, ML was developed as an implementation language for LCF, but, since then, ML has become independent from HOL Light and is well known as a general-purpose language with many dialects, such as PolyML, MLton, CakeML. OCaml, which stands for Objective-CAML (Categorical Abstract Meta Language), is one of them.

To understand the peculiarities of ML and OCaml we need to stress the difference between imperative and functional programming languages:
\begin{itemize}
    \item \textbf{Imperative/ Procedural Programming Language} \\
    These programs consist of a series of instructions to change the values of a collection of variables (\textit{state}). \\ 
    If $\sigma$ is the initial state and $\sigma'$ is the state after the execution, the program is the set of instructions that leads from $\sigma$ to $\sigma'$ by \textit{assignment} commands \verb|v:= E|. These assignments could be executed in a \textit{sequential} manner, conditionally (\verb|if|) and repeatedly (\verb|while|).
    If the program is deterministic, there exists a function f such that $f(\sigma) = \sigma'$. \textit{Program execution}: $\sigma \longrightarrow \sigma_1 \longrightarrow \sigma_2 \longrightarrow ... \longrightarrow \sigma'.$
    \\ 
    C++, Phyton and Java were designed to support imperative programming, but they now support some functional programming features. 
    \\ 
    $\triangleright$ Provides poor facilities in using functions in a sophisticated way. \\
    $\triangleright$ Imposes a rigid order of execution.
    
    \item \textbf{Functional Programming Language} \\
    A functional program is an expression and its \textit{execution} consists in the evaluation of the expression.
    \\
    In some sense this approach emphasizes the idea of a function $f$ such that $f(\sigma) = \sigma'$, since the basic mechanism is the application of functions to arguments and the program is meant to be an expression corresponding to this $f$; however the concepts of \textit{state}, \textit{assignments} and\textit{ sequential execution} are here totally meaningless. Instead of looping, functional languages can use recursive functions.
    \\ 
    LISP is a well-know example of a functional programming language.
    \\  
    $\triangleleft$ Functions can be treated here as simple objects, they can be \textit{arguments} and \textit{returns} of others functions. \\
    $\triangleleft$ Since an evaluation doesn't affect other evaluations, subexpressions can be evaluated in every order. Then, comprehending a program and debugging is easier, and it is also possible to develop parallel implementation.
\end{itemize}

OCaml is a functional programming language with some imperative features, like \textit{variables} and \textit{assignments}. Therefore OCaml allows a more flexible use of the functions and here programs can often be expressed in a very concise and fancy way, using higher-order functions.

\subsection{OCaml Top-level Basics}
Before focusing on HOL Light, it is important to understand the basic features of the OCaml top-level loop, focusing only on the purely functional fragment. We will analyze two things that you can do in OCaml \textit{(evaluate expressions }and \textit{make definitions}), and what we meant in a previous paragraph when we said that OCaml is a \textit{polymorphic type system}.  But first, we will see what an OCaml interaction looks like.

\subsubsection{Interacting with OCaml}
After OCaml is loaded,  it presents its prompt\footnote{The prompt is a character indicating the computer is ready to accept input.} (\#) and you can type in expressions, followed by two semicolons (;;); then OCaml will evaluate them and print the result. The OCaml system sits in a read-eval-print loop, which, in computing jargon, means that OCaml repeatedly reads an expression, evaluates it, and prints the result. For example, OCaml can be used as a pocket calculator, writing after the prompt the expression to evaluate:
 
\begin{lstlisting}
# 3 + 2;;
val it : int = 5
\end{lstlisting}
Observe that OCaml keywords will be printed in green.

\subsubsection{Evaluate expressions}\label{Evaluate expressions}
Looking closer at the output of the previous example, when you ask Ocaml to evaluate an expression, it returns:
\begin{itemize}
    \item The result of evaluating the expression (\verb|5|);
    \item The \textit{type} of the expression, which it has inferred automatically from the type of the built-in expression \verb|+| (\verb|int|);
    \item A name for the result of the evaluation automatically introduced (\verb|it|)
\end{itemize}

Since OCaml is a functional language, expressions are allowed to be functions. It is important to note that in OCaml and in HOL every function takes just a single argument; therefore it is common in functional programming as in $\lambda$-calculus, to use \textit{currying} to address the problem. This technique is named after the logician Haskell Curry \cite{Curry1930currify} and consists in translating a function that takes multiple arguments into a sequence of families of functions, each taking a single argument, e.g. if $f:(A\times B)\longrightarrow C$ then curryed $f: A \longrightarrow (B \longrightarrow C)= f_x(f')$ with $f':A\longrightarrow B$ and $f_x:B \longrightarrow C$. Moreover, observe that function application associates to the left, i.e. $f g x$ means $(f g)(x)$ and not
$f(g(x))$.

\begin{lstlisting}
# fun x -> x + 1;;
val it : int -> int = <fun>

#((fun x -> (fun y -> x + y)) 3) 2;;
it : int = 5
\end{lstlisting}

\subsubsection{Make definitions}
As suggested before, the main purpose of a functional programming language is to evaluate expressions, but it is not convenient to evaluate a complex program in one go. Instead, useful subexpressions can be evaluated and bound to names using a definition of the form: \verb|let <name> = <expression>|.
\begin{lstlisting}
# let main_number = 42;;
val main_number : int = 42

# let square x = x * x;;
val square : int -> int = <fun>

# let not_so_main = square main_number;;
val not_so_main : int = 1764
\end{lstlisting}
OCaml allows you to define recursive functions including the keyword \textit{rec}:
\begin{lstlisting}
#let rec fact n = if n = 0 then 1
                  else n * fact(n - 1);;
fact : int -> int = <fun>
#fact 5;;
it : int = 120
\end{lstlisting}
One may ask if definitions of this kind are taking us out of the purely functional fragment. The answer is not: each binding is only done once when the system analyses the input and it cannot be repeated or modified, consequently, it is not as a variable assignment nor as a modification of a \textit{state}. The binding can be overwritten by a new definition using the same name, but this is not an assignment in the usual sense, since the sequence of events is only connected with the compilation process and not with the dynamics of program execution. The overwritten binding must be considered a totally new definition, using the same name as before.

\subsubsection{Types and Polymorphism}\label{polymorphism}
As we have seen, OCaml automatically deduces the type of an expression, but it 
can also support functions without a fixed type, e.g. the identity function could have arguments of type \verb|int| but also of type \verb|string|. This is possible because OCaml is a polymorphic type system, i.e. a system that displays types involving \textit{type variables}. For example, we could define the identity as a polymorphic type function $id:\alpha \longrightarrow \alpha$ given $\alpha$ a type variable.
\begin{lstlisting}
# let id= fun x -> x;;
val id : 'a -> 'a = <fun>
# id true;;
val it : bool = true
# id 42;;
val it : int = 42
# id "happy";;
val it : string = "happy"
\end{lstlisting}
Because of OCaml's automatic type inference, it will allocate a type to an expression as general as possible, but users can also manually annotate types writing \verb|:<type>|.

\begin{lstlisting}
#let id = fun (x:int) -> x;;
id : int -> int = <fun>
\end{lstlisting}

\subsection{OCaml as HOL Light Metalanguage}
\subsubsection{Object Language and Metalanguage}\label{sub:metalanguage}
A \textit{metalanguage} $\mathcal{M}$ is a language that allows you to describe and talk about another language, which is called \textit{object language} $\mathcal{L}$; e.g. giving a basic lesson of Italian to English-spoken students, I would say that "\textit{Zio} means \textit{fratello di uno dei genitori}", using English as a metalanguage to describe Italian (object language) and assuming that they already know the meaning of all the words in italics except \textit{zio}. Expressions in a metalanguage are often distinguished from those of the object language by using italics or quotation marks. Similarly, a \textit{meta-theory} is a theory that analyzes and describes an \textit{object theory} and whose language is a metalanguage.

For example, if $\mathcal{L}$ is the language of arithmetic and I would use the \textit{square function} that is not in $\mathcal{L}_{ar}$, then the definition should be given in a meta-theory, which usually is some kind of informal set theory.

To give an example closer to HOLMS, we could say in the modal language $\mathcal{L}_{mod}$ that:   $ \vdash_{\mathcal{K} 4} \Box A \longrightarrow \Box \Box A$; but to enunciate the necessitation rule for $\mathcal{K} 4$ we need use to a metalanguage $\mathcal{M}_{mod}$: $\vdash \forall A. (\vdash_{\mathcal{K} 4} A \implies \vdash_{\mathcal{K} 4} \Box A)$. Notice that $\longrightarrow$ is the symbol for implication in $\mathcal{L}_{mod}$ while $\implies$ is the symbol for implication in $\mathcal{M}_{mod}$ and if $\vdash_{\mathcal{K} 4}$ is the symbol that denotes a theorem of the theory $\mathcal{K} 4$, then $\vdash$ denotes a theorem of the metatheory.

Similarly, OCaml is the meta (programming) language of HOL Light.

\subsubsection{In which sense OCaml is HOL Light metalanguage?}
HOL is a theorem prover and it can also be seen as a suite of tools for evaluating expressions involving \textit{terms} (representing logical or mathematical assertions) and \textit{theorems} (representing proved assertions). 
In fact, \textit{terms} and \textit{theorems} also with \textit{types}, are the key logical notions in HOL Light. Because OCaml is HOL metalanguage, recursive OCaml types are defined here to represent each HOL key concept: 

    \begin{center}
       \begin{tabular}{c|c}
        \textbf{HOL Light Key Concepts} & \textbf{Correspondent OCaml Type}  \\
        \hline
            Types     &  \verb|hol_type|   \\ 
            Terms     &  \verb|term|       \\ 
            Theorems  &  \verb|thm|      \\ 
        \end{tabular}
    \end{center}
    
Observe the object-meta distinction here: \textit{abstract} objects on the right are OCaml (meta) types of data structures representing HOL (object)  \textit{concret} logical entities on the left.

When, in the following paragraph, we will study HOL Basics, we should distinguish carefully between abstract and concrete syntax. 
\begin{itemize}
    \item Abstract Syntax of a term/type \\
    It is an OCaml data structure indicating how the term/type is built up from its components. HOL deals with this syntax internally. \\
    $\downarrow$ HOL's \textit{prettyprinter} automatically translates the abstract syntax into concrete syntax that is user-friendly and also more similar to common mathematical notation.
    
    \item Concrete Syntax of a term/type \\
    It is a linear sequence of characters. This syntax is easier to manipulate for humans and its translation to OCaml data structure has to follow abstract syntax's rules. \\
    $\uparrow$ HOL's \textit{quotation parser} automatically translates the concrete syntax into the one the system uses (abstract).
\end{itemize}

Moreover, observe that abstract type \verb|thm| is an OCaml type defined with primitive rules (logical core) as its only constructors. Consequently, a term is a HOL theorem if and only if the underlying \textit{machine code} of logical inference could derive it from the primitive rules. Here the \textbf{relative-reliability} of HOL Light stands, i.e. since one has confidence in the small amount of code underlying primitive inference rules, he can be sure that all theorems have properly been proved because of their type. As announced the system is also \textbf{extensible}, indeed the user can write custom procedures on the concrete level that implement higher-level\footnote{The level of the proof is "higher" if compared with the primitive inference rules, because these proof procedures are derived from the logical core.} proof procedures, while the system decomposes them internally to the primitive inference rules. 

To discuss the HOLMS-related example, look at this schema:
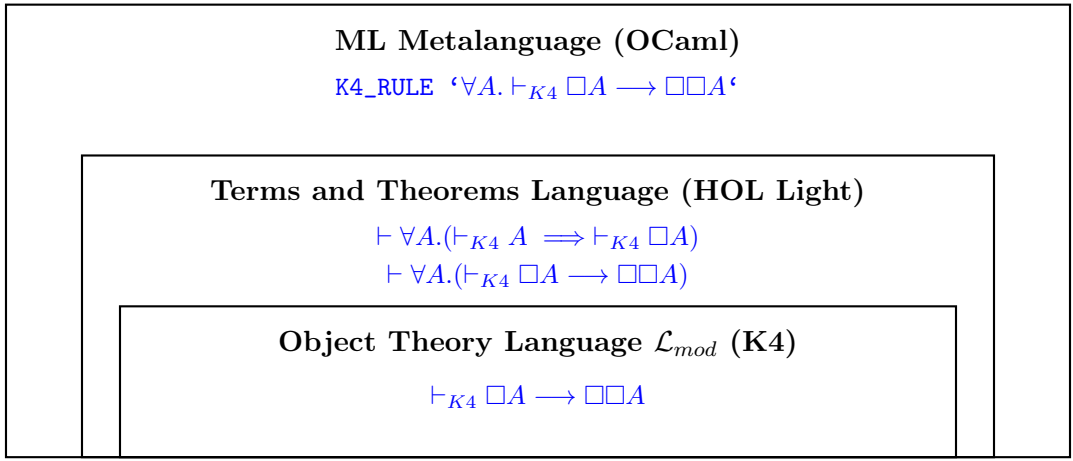
\begin{figure}[h]
    \centering
\begin{tikzpicture}
    \tikzstyle{contenitore} = [draw, thick, inner sep=10pt]
    \tikzstyle{testo} = [align=center, font=\bfseries\normalsize]
    \tikzstyle{formula} = [text=blue, font=\ttfamily\small]

    \node[contenitore, minimum width=14cm, minimum height=6cm] (ML) at (0, -0.5) {};
    \node[testo] at (0, 2) {ML Metalanguage (OCaml)};
    \node[formula] at (0, 1.4) {\verb|K4_RULE| `$\forall A. \vdash_{K4} \Box A \longrightarrow \Box \Box A$`};

    \node[contenitore, minimum width=12cm, minimum height=4cm] (HOLLight) at (0, -1.5) {};
    \node[testo] at (0, -0) {Terms and Theorems Language (HOL Light)};
    \node[formula] at (0, -0.6) { $\vdash \forall A. (\vdash_{K4} A \implies \vdash_{K4} \Box A)$ };
    \node[formula] at (0, -1.1){$\vdash \forall A. (\vdash_{K4} \Box A \longrightarrow \Box \Box A )$ };

    \node[contenitore, minimum width=11cm, minimum height=2cm] (HOLMS) at (0, -2.5) {};
    \node[testo] at (0, -2) {Object Theory Language $\mathcal{L}_{mod}$ (K4)};
    \node[formula] at (0, -2.7) {$\vdash_{K4} \Box A \longrightarrow \Box \Box A$};

\end{tikzpicture} \caption{HOLMS object theories and metatheories} \label{tik:meta}
\end{figure}

In the HOLMS project, we want to formalize inside HOL Light different modal logics, such as K4. As observed in the previous example, the object theory K4 needs to refer to a metalanguage to enunciate its necessitation rule and to a meta-theory to assume it, but it also requires a meta-theory to affirm and prove that the axiom \textit{four} holds for every proposition A. Here we use HOL Light language as K4 metalanguage. Moreover, as stressed in the previous lines, HOL works thanks to its metatheory OCaml. Here we could apply to a HOL term describing a theorem of K4 the customized rule \verb|K4_RULE|.

\subsubsection{Interacting with HOL Light is interacting with OCaml}

 HOL Light-hereinafter just 'HOL'-has no separate user interface, the user first loads all the HOL infrastructure and then he works inside the OCaml interpreter with a large number of theorems and tools for proving theorems loaded in.  As emphasized in the HOL tutorial \cite{harrisontutorial}, the use of the implementation language as an interaction environment yields a system that is open and extensible cleanly and uniformly.

 After HOL is loaded, OCaml presents its prompt\footnote{The prompt is a character indicating the computer is ready to accept input.} (\#):
\begin{lstlisting}
>        Camlp5 parsing version 7.10

#
\end{lstlisting}

To describe interaction with HOL Light and OCaml. we will use this kind of box and we will write expressions to evaluate after the prompt and then HOL Light evaluation.

Now you could ask yourself what Camlp5 is and why above there is not written: \\ \verb|OCaml parsing version **|. Camlp5 is a \textit{preprocessor} and \textit{pretty-printer} for the OCaml programming language and it is loaded with\verb| OCaml version 4.05.0|. It serves as a tool for writing syntax extensions and custom \textit{parsers/printers} for OCaml. So we have exactly what we need to start using HOL Light. 

\subsection{Basics of HOL Light}
In this subsection, we will describe HOL concrete syntax\footnote{To read an introduction to the abstract syntax look at section I of \cite{harrisonmanual}}, i.e. the syntax accepted by HOL's \textit{parser}. At first, we will focus on $\lambda$-calculus and then we will discuss term-syntax, type-syntax and finally we will treat theorems.

\subsubsection{HOL's logic and  $\lambda$-calculus}
Functional languages are based on the computational model of $\lambda$ calculus\footnote{To focus on $\lambda$-calculus see \cite{Barendregt1985LamdaCalc} and \cite{HindelyLambdaCalc}}. Consequently, HOL's logic, as OCaml one before, is based on $\lambda$-calculus with simple type, a formalism invented by Alonzo Church \cite{Church1940TypedLambdaCalc}. All HOL terms, as $\lambda$-calculus ones, are built up from constants and variables using application and abstraction. 
\\ \\
Building-blocks of terms
\begin{itemize}
    \item Constants \\
    Entity whose value is fixed, e.g. $[]$ (the empty list), $\top$ and $\bot$ (false). They could be primitive, such as equality, or intended to be abbreviations for other terms, defined before they can be used in terms.
    \item Variables \\
     Entity without a fixed value, that can have any name, e.g. n, x, p.
\end{itemize}
Constructors of terms
\begin{itemize}
    \item Application \\
    The common mathematical operation of applying a function f to an argument t $f(t)$. In $\lambda$-calculus as in HOL, application associates to the left ($f g x =(f g)(x) \neq f(g(x))$) and we use currying\footnote{We introduced currying in \ref{Evaluate expressions}.}  to treat function with multiple arguments.
    \item Abstraction \\
    Given x a variable and t a term, one can construct the lambda-abstraction $\lambda x. t$, which means \textit{the function of x that yields t}. (In HOL's ASCII\footnote{ASCII stands for American Standard Code for Information Interchange. It is a standard character encoding that represents text using binary codes.} concrete syntax: \verb|\x.t|). Note that $\lambda$ is a variable-binding operator. \\ $\triangleleft$ Abstraction allows one to treat functions as first-order objects and to write anonymous function-valued expressions without naming them.
    \\ $\triangleleft$ Abstraction makes variable dependencies and binding explicit; by contrast in abstract mathematics one often writes $f(x)$ where one really means $\lambda x. f(x)$.
\end{itemize}

Observe that $\lambda$-calculus requires the beta-conversion rule as its unique axiom. \\ \\
\textit{Given t, s terms and $t[s/x]$ the appropriate replacement of each instance of x in t by s. \textbf{Beta Conv:}} $ \vdash (\lambda x. t) s = t[s/x]$. \\ \\
In this precise sense, abstraction is the converse operation to application.

So far, we have described $\lambda$-calculus without types; but this version of the calculus is exposed to a form of Russel paradox.
\\ \\
$\vdash (\lambda x. \neg (xx)) (\lambda x. \neg (xx)) \xlongequal{\text{bconv}} \neg (xx)[(\lambda x. \neg (xx))/x] = \neg ((\lambda x. \neg (xx)/x)(\lambda x. \neg (xx)/x)) $
\\ \\
So given $z = (\lambda x. \neg (xx))$, $ \vdash zz = \neg(zz)$: we have derived that something is equal to its own logical negation. As in other versions of the paradox, we have reached this contradictory result by the conjoint use of negation and self-application. 
To avoid self-application and Russel paradox, we will switch to a typed $\lambda$-calculus, using something similar to Russel's theory of types. So, in $\lambda$-calculus with simple type as in HOL, every term has a unique type which is either one of the basic types or the result of applying a type constructor to other types. This solution avoids self-application because given a function $f:A \to B$ this will be applied only on argument of the right type $:A$.

\subsubsection{Terms Syntax}
A HOL term represents a mathematical assertion like \verb|x + 3 = y| or just some
mathematical expression like \verb|x + 3|.
Regarding HOL syntax, the only primitive constant is the equality relation; all the other terms are variables or terms built by abstraction and application.
\begin{itemize}
    \item HOL primitive constant: \verb|=| (the only one)
    \item HOL terms constructors: application \verb|ts|,  abstraction \verb|\x.t|
\end{itemize}

For example, the negation of a propositional atom \verb|~p| is represented by the application of the logical constant \verb|~| to a term \verb|p:bool|. 

In HOL you don't need to assume logical constants as primitive, indeed, following $\lambda$-calculus that is an equational calculus, you can define them using the equality relation and the two term-constructors\footnote{While it is more common to take few additional logical constants such as $\forall$ and $\implies$, a definition of the logical constants in terms of equality has been known since 1963 \cite{Henkin1963lambdadef}) as primitive}. 

    \begin{center}
       \begin{tabular}{c|c|l}
        \textbf{HOL Notation} & \textbf{Logic Notation } &  \textbf{$\lambda$-calculus def} \\
        \hline 
            \verb|T|    &  $\top$   & ($\lambda x. x) = (\lambda x. x)$\\ 
            \verb|/\|     & $ \land$ & $\lambda p. \lambda q. (\lambda f. f \, p \, q) = (\lambda f. f \, \top \, \top)$\\ 
             \verb|==>|  &  $\Rightarrow$   &  $\lambda p. \lambda q. p \land q = p $ \\ 
              \verb|!|  &  $\forall$  & $ \lambda P. P = \lambda x. \top $ \\
              \verb|?|  &  $\exists$  & $ \lambda P. \forall Q. (\forall x. (P(x) \Rightarrow Q)) \Rightarrow Q$       \\
              \verb|\/|  &  $ \lor $  & $ \lambda p. \lambda q. \forall r. (p \Rightarrow r) $       \\
              \verb|F|  &  $\bot$  & $ \lambda p. \lambda q. \forall r. (p \Rightarrow r) \Rightarrow (q \Rightarrow r)$       \\
              \verb|~|  &  $ \neg$  & $\lambda t. t \Rightarrow \bot$       \\
              \verb|?!|  &  $\exists !$  & $\lambda P. \exists P \land \forall x. \forall y. P \, x \land P \, y \Rightarrow (x = y)$       \\
        
        \end{tabular}
    \end{center}

In HOL terms are entered rather like strings, enclosed within \textit{backquotes}. Quotations are then expanded by a \textit{filter}, thanks to its \textit{term parser} and its \textit{type inferencer}. This filter analyzes the syntactical structure of terms and, because in a typed system every term corresponds to a unique type, it also deduces types for the term as a whole and all its subterms. \\

\begin{lstlisting}
# `x+3`;;
val it : term = `x + 3`
\end{lstlisting}

In the example, the \textit{type inferencer} deduces from the constants \verb|3: num| and $+:\alpha \longrightarrow \alpha \longrightarrow \alpha$, both the types of x\verb|:num| and x+3\verb|:num|. 
However, the filter can print two kinds of error messages: the first is a \textit{failure message}, which means that the user tried to enter a term that cannot be typed; the second is a \textit{warning of inventing type variables}, which indicates that there is not enough type information to fix the types of all subterms.

\begin{lstlisting}
# 42 = true;;
  42 = true
       ^^^^
Error: This expression has type bool but an expression  was expected of type int

# `K4`;;
Warning: inventing type variables
val it : term = `K4`
\end{lstlisting}

For each kind of syntactical constructor \verb|var, const, comb, abs|, OCaml has a function for creating HOL terms (\verb|mk_<>|), one for breaking them apart (\verb|dest_<>| and \verb|rator|, \verb|rand| for applications), and one for testing their
structure (\verb|is_<>|). For example, we look at OCaml function for combinations (applications):

\begin{lstlisting}
# mk_comb(`P:A->bool`,`x:A`);;
val it : term = `P x`

# dest_comb (`x + y`);;
val it : term * term = (`(+) x`, `y`)
# rator (`x + y`);;
val it : term = `(+) x`
# rand (`x + y`);;
val it : term = `y`

# is_abs (`x + y`);;
val it : bool = false
# is_comb (`x + y`);;
val it : bool = true

\end{lstlisting}

\subsubsection{Types}
As we said, every term has a unique type indicating what sort of
mathematical entity it is, e.g. a boolean value (\verb|:bool|), a real number(\verb|:real|), a set of real functions, etc. 
In HOL types are preceded by colons, e.g. \verb|~p:bool|.

Every type is either one of the basic types or, since HOL is a polymorphic extension of Church's system, a \textit{type variable}, or the result of applying a type constructor to other types, e.g. \verb|:num|, \verb|:int|, \verb|:real|, $:\alpha \to$\verb|bool|. 

\begin{itemize}
    \item HOL basics types: \verb|bool| (the only one)
    \item HOL type operators: function space constructor $\longrightarrow$
\end{itemize}

Observe that it is possible to have a term variable that first occurs with a certain type and then with a different one, then OCaml provides an useful function to ask the the type of an HOL term
(\verb|type_of <term>|) and it also allows the user to assign an arbitrary type to a term writing \verb|<term>:<type>|.

\begin{lstlisting}
# type_of `\x:A. f:B`;;
val it : hol_type = `:A->B`
\end{lstlisting}

We now look at the types of primitive terms and term constructors. Given A, B \textit{types}, $\alpha$ a \textit{type variable}, \verb|t|, \verb|s| \textit{terms} and \verb|x| a \textit{term variable}:

\begin{center}    
\begin{tabular}{l|l|l}
\textbf{name} & \textbf{type} & \textbf{notes} \\
     \hline
    Equality relation  & $=: \alpha \longrightarrow \alpha$ & - \\
    Application & $((t:A \longrightarrow B)(s:A)):B$ & function needs the right type to be applied\\
    Abstraction & $(\lambda (x:A). (t:B)):A \longrightarrow B$ & - \\
\end{tabular}
\end{center}

Note that functions may only be applied to arguments of the right type: only a function of type $f:A \longrightarrow B$ may be applied to an argument of type A.

In HOLMS we will also use parametric types, i.e. types depending on others types, defined in the theory of list and in the theory of pairs\footnote{These theories are presented in \ref{Theories} }.
\[\]
\begin{tabular}{l|l|l}
\textbf{name} & \textbf{type} & \textbf{what it is?} \\
     \hline
     $[$ $]$& \verb|:(A)list| & List of object of type A  \\
    (,) & \verb|:A # B |& Cartesian
product of objects of type A and B\\
\end{tabular}
\[\]
For example, we will use parametric types inside our definition of the validity of a formula in a certain world of a certain modal model (\verb|holds|). 

\begin{lstlisting}
# type_of `holds`;;
Warning: inventing type variables
val it : hol_type =
  `:(?131589->bool)#(?131589->?131589->bool)->((char)list->?131589->bool)->form->?131589->bool`
\end{lstlisting}

Observe that, in asking the type of \verb|holds|, we didn't specify its parameter, consequently, OCaml invented a type variable and printed a failure message.

As for terms, OCaml has functions for creating HOL type variables (\verb|mk_vartype|) and composite types (\verb|mk_type|), those for breaking them apart (\verb|dest_vartype| and \verb|dest_type|) and those for testing their structure (\verb|is_type| and \verb|is_vartype|).

\subsubsection{Theorems}
We usually think in terms of \textit{truth} or \textit{falsity} of a formula, but in a theorem prover, we deal only with \textit{provability}.
A HOL theorem has OCaml type \verb|thm| and asserts that some boolean-typed term, what we usually call \textit{formula}, is valid ($\vdash \phi:bool$) or that it follows from a finite list of assumptions ($\psi_1, ..., \psi_n \vdash \phi:bool$).

Moreover, OCaml provides functions to break a theorem into its hypothesis and conclusions (\verb|dest_thm|, \verb|concl| and \verb|hyp|).

\begin{lstlisting}
val add2and3 : thm = |- 3 + 2 = 5

# dest_thm add2and3;;
val it : term list * term = ([], `3 + 2 = 5`)
\end{lstlisting}

In the following section, we will introduce the deductive system that allows us to derive theorems in HOL Light.

\section{HOL Light deductive system}

\subsection{The Primitive System}
At this point, we are ready to present HOL Light's deductive system. i.e. the logical core we announced as simpler and more elegant than in any other HOL version. Given a formal language and a syntax for formulas, a deductive system consists of axioms and rules of inference, used to derive theorems of the system.

    \subsubsection{HOL Light Formal Language:}
    Language of terms and types \\
    $\mathcal{L}_\textit{HOL Light}= $ \textit{Numerable set of term variables }$\cup$ \textit{Numerable set of type variables } $\cup$ 
    $\{= \ ; \ \cdot \ ; \ \lambda \ \}$ $\cup$ 
    $\{:bool \ ; \ :\to \}$
\\ \\
    Here we write $\cdot$ for application and $\lambda$ for abstraction, but HOL uses $\setminus$ for abstraction and no symbol for application, it simply considers \verb|ts| as $t \cdot s$.
    
     \subsubsection{HOL Light Syntax for Formulas:}
     Concrete syntax for\textbf{ boolean-typed terms}, given $T$ concrete syntax for types and $S$ concrete syntax for terms:\\
    $ T:= \textit{bool} \ | \ T_1 \to T_2$ \\
    $s:= x:T \ | \ s_1=s_2:bool \ | \
         (\lambda x:T_1.\ s:T_2 ):T_1 \to T_2 \ | \
        ( s_1:T_1 \to T_2 ) \cdot (\ s_2:T_1 ):T_2$

    \subsubsection{HOL Axioms:}
      \begin{enumerate}
          \item \verb|ETA_AX| ($\eta$-conversion) $\vdash (\lambda x. \ s \cdot x)= s$ \\
          This axiom, together with the $\beta$-conversion rule, allows HOL to prove all $\lambda$-calculus theorems.
          \item \verb|SELECT_AX| $\vdash \forall x. P(x) \implies P(\epsilon x. P(x))$ \\
          Where P is a boolean typed-term with a free variable x and $\epsilon$ is \textit{Hilbert choice operator} (\verb|@| in HOL notation), a new logical constant of polymorphic type $\epsilon: (\alpha \to bool) \to \alpha$. The underlying idea is that\textit{ $\epsilon$ picks out something satisfying P whenever there is something to pick}.
          This axiom is a version of the \textit{axiom of global choice}, so it isn't harmless at all, and it also makes HOL logic \textit{classical}, proving the \textit{law of the excluded middle}.
          \item \verb|INFINITY_AX| $\vdash \exists f:ind\to ind. \ (\forall x_1 \ x_2. \ (f \cdot x_1 = f \cdot x_2 )) \implies (x1=x2 \land \neg(\forall y. \ \exists x.\ y= f \cdot x))$ \\
          Where \verb|ind| (individuals) is a new type, that we assert to be infinite, using \textit{Dedekind-Pierce definition}. That is, we assert the existence of a function from the type of individuals to itself that is injective but not surjective, such a mapping is impossible if the type is finite.
      \end{enumerate}

\subsubsection{Primitive Rules of Inference}
\begin{enumerate}
    \item Reflexive rule: Equality is reflexive. \\
    \[\frac{}{\vdash t = t }
    \hfill \texttt{REFL}
    \]

    \item Transitive rule: Equality is transitive.
    \[
    \frac{\Gamma \vdash s = t \quad \Delta \vdash t = u}{\Gamma \cup \Delta \vdash s = u} 
    \hfill \texttt{TRANS}
    \]

    \item Rule for Combination: If the types of an application agree, equal functions applied to equal arguments give equal results.
    \[
    \frac{\Gamma \vdash s = t \quad \Delta \vdash u = v}{\Gamma \cup \Delta \vdash s\cdot u = t\cdot v} 
    \hfill \texttt{MK\_COMB}
    \]

    \item Rule for Abstraction: If two expressions involving x are equal, then the functions that take x to those values are equal.
    \[
    \frac{\Gamma \vdash s = t}{\Gamma \vdash (\lambda x . s) = (\lambda x . t)} 
    \hfill \texttt{ABS}
    \]
    \[\ x \not \in FreeVar(\Gamma)\]

    \item $\beta$-conversion rule: Combination and abstraction are converse operations.
    \[\frac{}{\vdash (\lambda x . t) \cdot x = t }
    \hfill \texttt{BETA}
    \]

    \item Assume rule: From any \verb|p:bool| we can deduce \verb|p|.
    \[\frac{}{\{p\} \vdash p}
    \hfill \texttt{ASSUME}
    \]

    \item Modus ponens for an equational calculus: Connects equality with deduction.
    \[
    \frac{\Gamma \vdash p = q \quad \Delta \vdash p}{\Gamma \cup \Delta \vdash q}
    \hfill \texttt{EQ\_MP}
    \]

    \item Equality as logical equivalence rule: It also connects equality and deduction, saying that equality on the boolean type represents logical equivalence.
    \[
    \frac{\Gamma \vdash p \quad \Delta \vdash q}{(\Gamma - \{q\}) \cup (\Delta - \{p\}) \vdash p = q}
    \hfill \texttt{DEDUCT\_ANTISYM\_RULE}
    \]

    \item Substitution rule for Terms: This rule expresses the fact that variables are to be interpreted as schematic, i.e. if $p$ is true for variables $x_1, ..., x_n$, then we can replace those variables by any terms of the \textit{same type} and still get something true.
    \[
    \frac{\Gamma[x_1, \ldots, x_n] \vdash p[x_1, \ldots, x_n]}{\Gamma[t_1, \ldots, t_n] \vdash p[t_1, \ldots, t_n]}
    \hfill \texttt{INST}
    \]

    \item Substitution rule for Types: Same rule as before, but for substitution of type variables rather than term variables.
    \[
    \frac{\Gamma[\alpha_1, \ldots, \alpha_n] \vdash p[\alpha_1, \ldots, \alpha_n]}{\Gamma[\gamma_1, \ldots, \gamma_n] \vdash p[\gamma_1, \ldots, \gamma_n]}
    \hfill \texttt{INST\_TYPE}
    \]
\end{enumerate}

\subsection{Definitions}
From this simple foundation, that counts only three axioms and ten rules, all the HOL mathematics and its applications are developed by definitional extension.

Indeed, in addition to the deductive system, HOL Light includes two principles of definition\footnote{The four principles of definitions, two for types and two for terms, are printed in blue in all the examples of interaction with HOL Light}, which allow the user to extend the set of constants and the set of types in a way guaranteed to preserve consistency. At this point, we could clearly observe \textit{relative-reliability } and \textit{extensibility } of the system and we could say, from a logical perspective, that HOL is an evolving sequence of logical
systems, each a conservative extension of previous ones.

\begin{itemize}
    \item For terms: \textbf{Rule of constant} (\verb|new_basic_definition| and \verb|new_definition|) \\
    \verb|new_basic_definition| allows to introduce a \textit{new constant} \verb|c| and a new axiom $\vdash \verb|`c=t`|$, if no previous definition for \verb|c| has been introduced and \verb|t| is any term without free (term or type)
variables. 

\begin{lstlisting}
# new_basic_definition `true = T`;;
it : thm = |- true = T
\end{lstlisting}

$\triangleleft$ All the definitions of the logical connectives above are
introduced in this way. \\
$\triangleleft$ This definition is made on \textit{object-level}, indeed the constant and its defining axiom exist in the object logic.\\
$\triangleleft$ This principle of definition is very harmless, it simply associates a name (a constant) to a term.

\verb|new_definition| extends quite soon this principle to permit definitions of functions with the arguments on the left, including pairs and tuples of argument. For example, in HOLMS as in modal logic, we say that a formula holds in a certain frame iff it holds in every world of every model of that frame:
\begin{lstlisting}
let holds_in = new_definition
  `holds_in (W,R) p <=> !V w:W. w IN W ==> holds (W,R) V p w`;;
#     val holds_in : thm =
  |- !W R p. holds_in (W,R) p <=> (!V w. w IN W ==> holds (W,R) V p w)
# 
\end{lstlisting}

$\triangleleft$ Constant definition preserves consistency and the \textit{relative-reliability } of the system, because this \textit{extension} could be avoided simply by writing the definitional expansion out in full.

\item For terms: \textbf{Definition by Induction} (\verb|new_inductive_definition|) \\
HOL has some limited support for \textit{primitive recursive}\footnote{\textit{Primitive recursive function} is the set of functions that contains \textit{constants}, \textit{successor} and \textit{projections} and that is closed under \textit{composition} and \textit{primitive recursion}. \textit{PR definition} allows definitions of PR functions and predicates with a PR characteristic function. This set is also closed under \textit{bounded minimization} and \textit{bounded quantification},} definitions, but it doesn't provide any functions for \textit{general recursive}\footnote{\textit{General Recursive definition} allows the definition of PR  functions and predicates, but also via \textit{unbounded minimization}.} definitions\footnote{Observe that is possible to define general
recursive functions using theorems from the theory of wellfoundedness, which is one of the theories developed inside HOL Light (\ref{Theories})}.
 HOL supports the definition of \textit{inductive predicates } (or sets) in a more convenient way by using \verb|new_inductive_definition|. This function requires an inductive base and an inductive step and then returns three theorems:
\begin{enumerate}
    \item \verb|<>_RULES|  assures the predicate is close under inductive base and steps.
    \item \verb|<>_INDUCT| assures this is the smallest set/predicate that satisfy \verb|1.|
    \item \verb|<>_CASES| give a cases analysis theorem.
\end{enumerate}

 For example, if we want to define the set of even numbers (the predicate that says that a number is even):

\begin{lstlisting}
# let Even_RULES,Even_INDUCT,Even_CASES =
new_inductive_definition
`(Even(0)) /\
(!n. Even(n) ==> Even (n+2))`;;

val Even_RULES : thm = |- Even 0 /\ (!n. Even n ==> Even (n + 2))
val Even_INDUCT : thm =
  |- !Even'. Even' 0 /\ (!n. Even' n ==> Even' (n + 2))
             ==> (!a. Even a ==> Even' a)
val Even_CASES : thm = |- !a. Even a <=> a = 0 \/ (?n. a = n + 2 /\ Even n)
\end{lstlisting}

In HOLMS we use \verb|new_inductive_definition| to define a parametrized modal calculus and, in particular, the predicate $\verb|[S.H|\mid \sim$ \verb|p]| that formalizes $H \vdash_{S} p$, given \verb|S| a particular modal logic, \verb|H| a set of hypotheses and \verb|p| a modal formula.

\begin{lstlisting}
let MODPROVES_RULES,MODPROVES_INDUCT,MODPROVES_CASES =
  new_inductive_definition
  `(!H p. KAXIOM p ==> [S . H |~ p]) /\
   (!H p. p IN S ==> [S . H |~ p]) /\
   (!H p. p IN H ==> [S . H |~ p]) /\
   (!H p q. [S . H |~ p --> q] /\ [S . H |~ p] ==> [S . H |~ q]) /\
   (!H p. [S . {} |~ p] ==> [S . H |~ Box p])`;;
   
val MODPROVES_RULES : thm =
  |- !S. (!H p. KAXIOM p ==> [S . H |~ p]) /\
         (!H p. p IN S ==> [S . H |~ p]) /\
         (!H p. p IN H ==> [S . H |~ p]) /\
         (!H p q. [S . H |~ p --> q] /\ [S . H |~ p] ==> [S . H |~ q]) /\
         (!H p. [S . {} |~ p] ==> [S . H |~ Box p])
         
val MODPROVES_INDUCT : thm =
  |- !S MODPROVES'.
         (!H p. KAXIOM p ==> MODPROVES' H p) /\
         (!H p. p IN S ==> MODPROVES' H p) /\
         (!H p. p IN H ==> MODPROVES' H p) /\
         (!H p q.
              MODPROVES' H (p --> q) /\ MODPROVES' H p ==> MODPROVES' H q) /\
         (!H p. MODPROVES' {} p ==> MODPROVES' H (Box p))
         ==> (!a0 a1. [S . a0 |~ a1] ==> MODPROVES' a0 a1)
         
val MODPROVES_CASES : thm =
  |- !S a0 a1.
         [S . a0 |~ a1] <=>
         KAXIOM a1 \/
         a1 IN S \/
         a1 IN a0 \/
         (?p. [S . a0 |~ p --> a1] /\ [S . a0 |~ p]) \/
         (?p. a1 = Box p /\ [S . {} |~ p])

\end{lstlisting}

\item For types: \textbf{Type definition} (\verb|new_basic_type_definition|) \\
\verb|new_basic_type_definition| is HOL primitive function for performing definitions of new types and types constructors. Given:
\begin{enumerate}

    \item Any subset of an existing type $\gamma$ marked out by its characteristic predicate $P:\gamma \to bool$;
    \item  A theorem asserting that P is nonempty,
\end{enumerate}
One can define a new type or, if $ \gamma$ contains type variables, a type constructor $\delta$ in bijection with this subset.

    \includegraphics[width= 0.8\textwidth]{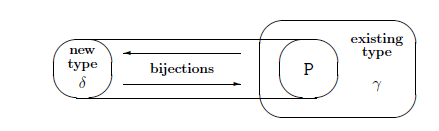} 
\label{new_type_definition}

 In the below example, we want to define a new type \verb|single| in bijection with the 1-element subset of bool containing just \verb|T|, where P is  $\setminus$\verb|x. x = T|:

\begin{lstlisting}
#let th1 = BETA_CONV `(\x. x = T) T`;;
th1 : thm = |- (\x. x = T) T = T = T
#let th2 = EQ_MP (SYM th1) (REFL `T`);;
th2 : thm = |- (\x. x = T) T

# new_basic_type_definition "single" ("mk_single","abs_single") th2;;
it : thm * thm =
|- mk_single (abs_single a) = a,
|- (\x. x = T) r = abs_single (mk_single r) = r 
\end{lstlisting}

Here the user gives the desired name for the new type (\verb|single|) and for the bijections that map
between the old and new types (\verb|"mk_single"| and \verb|"abs_single"|), and, finally, a theorem asserting that the chosen subset of the existing type contains some object (\verb|th2|). 
Then HOL returns two theorems, implying that the chosen bijection maps 1-1 the new type and the chosen subset of the old one.

$\triangleleft$ As for the rule of constants, type definitions don't compromise \textit{relative-reliability} and consistency of the system, indeed these definitions could be avoided by incorporating appropriate set constraints into theorems.

\item For types: \textbf{Recursive definition for types} (\verb|define_type|) \\
HOL also supports \verb|define_type|, one of the most useful and complicated derived rules. It allows users to define \textit{recursive types} with some restrictions. Similarly to \verb|new_inductive_definition|, it returns two theorems:
\begin{enumerate}
    \item \verb|<>_INDUCT| a kind of induction theorem for the new type
    \item \verb|<>_RECURSION| a justification of definition by primitive recursion
\end{enumerate}

In HOLMS we define a recursive type for modal formulas (\verb|:form|), as below:

\begin{lstlisting}
 let form_INDUCT,form_RECURSION = define_type
  "form = False
    | True
    | Atom string
    | Not form
    | && form form
    ||| form form
    | --> form form
    | <-> form form
    | Box form";;

val form_INDUCT : thm =
  |- !P. P False /\ P True /\
         (!a. P (Atom a)) /\
         (!a. P a ==> P (Not a)) /\
         (!a0 a1. P a0 /\ P a1 ==> P (a0 && a1)) /\
         (!a0 a1. P a0 /\ P a1 ==> P (a0 || a1)) /\
         (!a0 a1. P a0 /\ P a1 ==> P (a0 --> a1)) /\
         (!a0 a1. P a0 /\ P a1 ==> P (a0 <-> a1)) /\
         (!a. P a ==> P (Box a)) 
         ==> (!x. P x)
         
val form_RECURSION : thm =
  |- !f0 f1 f2 f3 f4 f5 f6 f7 f8.
       ?fn. fn False = f0 /\ fn True = f1 /\
         (!a. fn (Atom a) = f2 a) /\
         (!a. fn (Not a) = f3 a (fn a)) /\
         (!a0 a1. fn (a0 && a1) = f4 a0 a1 (fn a0) (fn a1)) /\
         (!a0 a1. fn (a0 || a1) = f5 a0 a1 (fn a0) (fn a1)) /\
         (!a0 a1. fn (a0 --> a1) = f6 a0 a1 (fn a0) (fn a1)) /\
         (!a0 a1. fn (a0 <-> a1) = f7 a0 a1 (fn a0) (fn a1)) /\
         (!a. fn (Box a) = f8 a (fn a))
\end{lstlisting}

\end{itemize}

\subsection{Derived Rules}
The following rules are derived from primitive ones and from HOL axioms and extend HOL deductive system in a conservative way.

\subsubsection{Logical Rules}
Logical rules are derived from primitive rules and axioms obtained by logical constants definitions. Most of these rules are introduction and elimination rules of \textit{natural deduction},  that we report here abstracting away a bit from the implementation in HOL:

\begin{itemize}
    \item Rules for conjunction \\
    We take these rules as an example.
    \begin{enumerate}
        \item Introduction rule \\
        \[\frac{\Gamma \vdash p \quad \Delta \vdash q}{\Gamma \cup \Delta \vdash p \land q} \texttt{ CONJ}\]
        The introduction rule for conjunctions (\verb|CONJ|) takes two theorems and gives a new theorem conjoining them. 
        \begin{lstlisting}
# let th1 = CONJ (REFL `T`) (ASSUME `p /\ q`);;
val th1 : thm = p /\ q |- (T <=> T) /\ p /\ q
        \end{lstlisting}
        \item Elimination rules \\
        \[\frac{\Gamma \vdash p \land q}{\Gamma \vdash p} \texttt{ CONJUNCT1}\]
        \\
        \[\frac{\Gamma \vdash p \land q}{\Gamma \vdash q} \texttt{ CONJUNCT2}\]

        The elimination rules take a theorem with a conjunction as conclusion and give new theorems for the left and right arms. \verb|CONJ_PAIR| gives a pair of both, while \verb|CONJUNCTS| transform a conjunctive theorem into a list of theorems.

        \begin{lstlisting}
# CONJUNCTS th1;;
val it : thm list = [p /\ q |- T <=> T; p /\ q |- p; p /\ q |- q]
# CONJUNCT1 th1;;
val it : thm = p /\ q |- T <=> T
        \end{lstlisting}
    \end{enumerate}   
    
    \item Rules for top
    \begin{enumerate}
        \item Introduction rule \\
       \[\frac{\Gamma \vdash p}{\Gamma \vdash p = \top} \texttt{ EQT\_INTRO}\]
        \item Elimination rule \\
        \[\frac{\Gamma \vdash p = \top}{\Gamma \vdash p} \texttt{ EQT\_ELIM}\]       
    \end{enumerate}

    \item Rules for implication
    \begin{enumerate}
        \item Introduction rule \\
       \[\frac{\Gamma \vdash q}{\Gamma \setminus \{p\} \vdash p \implies q} \texttt{ DISH}\]
        \item Elimination rules \\
        \[\frac{\Gamma \vdash p \implies q}{\Gamma \cup \{p\} \vdash q} \texttt{ UNDISH}\] \\
        \[\frac{\Gamma \vdash p \implies q \quad \Delta \vdash p}{\Gamma \cup \Delta \vdash q} \texttt{ MP}\]
    \end{enumerate}

        \item Rules for universal quantifier
    \begin{enumerate}
        \item Introduction rule \\
       \[\frac{\Gamma \vdash p}{\Gamma \vdash \forall x. p} \texttt{ GEN}\]
       \[x \not \in FreeVar(\Gamma)\]
       In HOL implementation this rule requires generic typed-term and a theorem.
        \item Elimination rule \\
        \[\frac{\Gamma \vdash \forall x. p}{\Gamma \vdash p[t/x]} \texttt{ SPEC}\] 
        In HOL implementation this rule requires a term to specify the quantifier and a   theorem. If necessary, it automatically renames variables to avoid capture, by adding prime characters.

        \begin{lstlisting}
 # SPEC `y:num` (ASSUME `!x. ?y. y > x`);;
val it : thm = !x. ?y. y > x |- ?y'. y' > y
        \end{lstlisting}

    \end{enumerate}

    \item Rules for existential quantifier
    \begin{enumerate}
        \item Introduction rule \\
       \[\frac{\Gamma \vdash p[t/x]}{\Gamma \vdash \exists x. p } \texttt{ EXISTS}\]
       The ML invocations for this rule are relatively complicated; the function requires the desired form of the result and the term \verb|t| to choose. 
       \begin{lstlisting}
# EXISTS(`?x:num. x = x`,`1`) (REFL `1`);;
val it : thm = |- ?x. x = x           
       \end{lstlisting}
        \item Elimination rule \\
        \[\frac{\Gamma \vdash q}{\Gamma \setminus \{p\} \vdash (\exists x. p) \implies q} \texttt{ CHOOSE}\]       
        \[x \not \in FreeVar(\Gamma \cup q) \]
    \end{enumerate}

    \item Rules for disjunction
    \begin{enumerate}
        \item Introduction rules \\
         \[\frac{\Gamma \vdash p}{\Gamma \vdash p \lor q} \texttt{ DISJ1}\] \\
         \[\frac{\Gamma \vdash q}{\Gamma \vdash p \lor q} \texttt{ DISJ2}\]
        \item Elimination rule \\
         \[\frac{\Gamma \vdash r \quad \Gamma' \vdash r \quad \Delta \vdash p \lor q}{(\Gamma \setminus \{p\}) \cup (\Gamma' \setminus \{q\}) \cup \Delta \vdash r} \texttt{ DISJ\_CASES}\]      
    \end{enumerate}

    \item Rules for negation
    \begin{enumerate}
        \item Introduction rule \\
        \[\frac{\Gamma \vdash p \implies \perp}{\Gamma \vdash \neg p} \texttt{ NOT\_INTRO}\]
        \item Elimination rule \\
        \[\frac{\Gamma \vdash \neg p}{\Gamma \vdash p \implies \perp} \texttt{ NOT\_ELIM}\]      
    \end{enumerate}

    \item Rules for bottom
    \begin{enumerate}
        \item Introduction rule \\
        \[\frac{\Gamma \vdash \neg p}{\Gamma \vdash p \equiv \perp} \texttt{ EQF\_INTRO}\]
        \item Elimination rule \\
        \[\frac{\Gamma \vdash p \equiv \perp}{\Gamma \vdash \neg p} \texttt{ EQF\_ELIM}\]
    \end{enumerate}

\end{itemize}

\subsubsection{Conversions}
Conversions are HOL derived rules of type \verb|:term| $\to$ \verb|thm|, that given a term $t$ return a theorem $\vdash t = t'$. Therefore conversions can be considered rules for transforming a term into an equal one, giving a theorem to justify the transformation. \\
$\triangleleft$ Conversions are convenient for implementing handy-derived rules. \\ 
$\triangleleft$ Conversions can be used as useful building blocks for larger transformations. \\ 

HOL has several built-in conversions:

\begin{itemize}
    \item \verb|REFL| also known as \verb|ALL_CONV| \\
     A sort of \textit{identity conversion} that trivially works on any term $t \to \ \vdash t=t$
     \item \verb|BETA_CONV| \\
     This conversion reduces an application where the first term is an abstraction to a $\beta$-redex $(\lambda x. s[x]) \ t \ \to \ \lambda x. s[x]) \ t = s[t]$
     \item \verb|NUM_RED_CONV| \\
     HOL supports some special conversions to particular theories. For example, \verb|NUM_RED_CONV| evaluate the result of an arithmetic operation on two numerals. Given $t * s$ an arithmetic operation and $w$ its result,  $t * s \to t * s = w $. 
\end{itemize}

You can consider these build-in conversions as conversion building blocks, and you can create composite conversions by using:

\begin{enumerate}
    \item \textbf{Conversionals} \\
    Conversionals allow the user to define composite conversions. Below we report a list of the most important conversionals:

\begin{itemize}
    \item \verb|THENC| \\
    \verb|(conv_1 THENC conv_2) t| applies first \verb|conv_1| to \verb|t| and afterwards \verb|conv_2|.
    \begin{lstlisting}
# (BETA_CONV THENC NUM_RED_CONV) `(\x. x + 3) 2`;;
val it : thm = |- (\x. x + 3) 2 = 5
    \end{lstlisting}

    \item  \verb|REPEATC| \\
    This conversional allows one to use a conversion repeatedly until it fails (maybe zero times).

    \item \verb|TRY_CONV| \\
    This conversional tries a conversion and returns a reflexive theorem if it fails.

    \item \verb|ORELSE_CONV| \\
    \verb|(conv_1 ORELSE conv_2) t| tries to apply \verb|conv_1| to the term, but if it fails, tries to apply \verb|conv_2|. For example, using this conversional it is possible to create a conversion that solves many arithmetical operations:
    \begin{lstlisting}
# let NUM_AME_CONV = NUM_ADD_CONV ORELSEC NUM_MULT_CONV ORELSEC NUM_EXP_CONV;;
    \end{lstlisting}
 \end{itemize}

 \item \textbf{Depth Conversionals} \\
 Depth conversionals allow users to apply conversions to subterms, in particular:

 \begin{itemize}
     \item \verb|RAND_CONV| \\
     Applies a conversion to the rand of a combination.

     \item \verb|RATOR_CONV| \\
     Applies a conversion to the rator of a combination.

     \item \verb|ABS_CONV| \\
     Applies a conversion to the body of an abstraction.

    \item \verb|BINDER_CONV| \\
     Applies a conversion to the body of a binding operation.

    \item \verb|BINOP_CONV| \\
     Applies a conversion to both arguments of a binary operation.

    \item \verb|ONCE_DEPTH_CONV| and \verb|DEPTH_CONV| \\
    These conversionals apply conversion in a more \textit{automatic} way to applicable subterms than the previous ones. \\
     \verb|ONCE_DEPTH_CONV| applies a conversion to the first applicable term encountered in a top-down traversal of the term, no deeper terms are examined. \\
     \verb|DEPTH_CONV| applies conversions recursively in a bottom-up fashion.
     \begin{lstlisting}
# ONCE_DEPTH_CONV NUM_AME_CONV `1 + (2 + 3)`;;
val it : thm = |- 1 + 2 + 3 = 1 + 5

# DEPTH_CONV NUM_AME_CONV `1 + (2 + 3)`;;
val it : thm = |- 1 + 2 + 3 = 6
     \end{lstlisting}
 \end{itemize}
 
\end{enumerate}

\subsubsection{Matching} 
In this subsection we will present derived rules that automatically determine how to instantiate variables to apply to the case in analysis. These rules are very useful, indeed it is more convenient to delegate HOL the task of choosing the correct instantiation for free and universally quantified variables (\textit{matching}). HOL can perform matching at the first-order level and a limited form of high-order matching:

\begin{itemize}
    \item \textbf{First Order Matching} \\
    For example,  the primitive inference rule \verb|MP| allows one to derive from $\vdash p \implies q$ and $\vdash p$ to $\vdash q$, but to match the user needs exactly the same $p$ (not something that is equal to p).

    \begin{lstlisting}
 # LT_IMP_LE;;
val it : thm = |- !m n. m < n ==> m <= n
# MP LT_IMP_LE (ARITH_RULE `1 < 2`);;
Exception: Failure "dest_binary".
# MP (SPECL [`1`; `2`] LT_IMP_LE) (ARITH_RULE `1 < 2`);;
val it : thm = |- 1 <= 2
\end{lstlisting}
\vspace{-5mm}
\begin{enumerate}
    \item Matching (\verb|MATCH_MP thm thm|)\\
    This rule is a version of \verb|MP| that instantiates free and universally quantified variables to apply \textit{modus ponens}.
    \begin{lstlisting}
# MATCH_MP LT_IMP_LE (ARITH_RULE `1 < 2`);;
val it : thm = |- 1 <= 2
    \end{lstlisting}

    \item Rewriting  (\verb|REWRITE_CONV thm tm| and \verb|REWRITE_CONV (thm)list tm|) \\
     \verb|REWRITE_CONV thm tm| takes an equational theorem $\vdash s = t$ or a bi-implicational theorem $\vdash s\iff t$, perhaps universally quantified, and a term $s'$ and, if $s'$ matches to $s$, it returns the corresponding theorem $\vdash s' = t'$.
\begin{lstlisting}
# LE_LT;;
val it : thm = |- !m n. m <= n <=> m < n \/ m = n
# REWR_CONV LE_LT `1<=2`;;
val it : thm = |- 1 <= 2 <=> 1 < 2 \/ 1 = 2
\end{lstlisting}

HOL allows to \verb|REWR_CONV| in association with depth conversions, to rewrite a term as much as possible.  \\

    \verb|REWRITE_CONV (thm)list tm| is a stronger version of \verb|REWR_CONV| that could take a whole list of theorems and extract rewrites from the theorems and repeatedly apply them to terms. Moreover, it uses a list of handy rewrites (\verb|basic_rewrites()|) to get rid of trivial propositional clutter.

    \begin{lstlisting}
# basic_rewrites();;
val it : thm list =
  [|- FST (x,y) = x; |- SND (x,y) = y; |- FST x,SND x = x;
   |- (if x = x then y else z) = y; |- (if T then t1 else t2) = t1;
   |- (if F then t1 else t2) = t2; |- ~ ~t <=> t; |- (@y. y = x) = x;
   |- x = x <=> T; |- (T <=> t) <=> t; |- (t <=> T) <=> t;
   |- (F <=> t) <=> ~t; |- (t <=> F) <=> ~t; |- ~T <=> F; |- ~F <=> T;
   |- T /\ t <=> t; |- t /\ T <=> t; |- F /\ t <=> F; |- t /\ F <=> F;
   |- t /\ t <=> t; |- T \/ t <=> T; |- t \/ T <=> T; |- F \/ t <=> t;
   |- t \/ F <=> t; |- t \/ t <=> t; |- T ==> t <=> t; |- t ==> T <=> T;
   |- F ==> t <=> T; |- t ==> t <=> T; |- t ==> F <=> ~t; |- (!x. t) <=> t;
   |- (?x. t) <=> t; |- (\x. f x) y = f y; |- x = x ==> p <=> p]
    \end{lstlisting}

    \verb|PURE_REWRITE_CONV (thm)list tm| is the same rule as \verb|REWRITE_CONV (thm)list tm| but it doesn't use \verb|basic_rewrites()|.

    Observe that HOL also provides \verb|ORDERED_REWRITE_CONV| to avoids rewriting loops of the kind $\vdash s= t \longrightarrow \vdash t=s \longrightarrow \vdash s= t \longrightarrow ...$ .

    \item Simplifying (\verb|SIMP_CONV thm tm|).  \\
This rule takes a conditional theorem of the form $\vdash p \implies s = t$, perhaps universally quantified, and a term $s'$ and, if $s'$ matches to $s$, it returns the corresponding theorem $\vdash p' \implies s' = t'$. The conversion is
recursively applied to the hypothesis $p'$, trying to reduce it to $\top$ and so eliminate it.

\begin{lstlisting}
# DIV_LT;;
it : thm = |- !m n. m < n ==> (m DIV n = 0)
# SIMP_CONV[DIV_LT; ARITH] `3 DIV 7 = 0`;;
it : thm = |- (3 DIV 7 = 0) = T
\end{lstlisting}

Here \verb|SIMP_CONV| uses as theorems for rewriting \verb|DIV_LT| and a set of rewrites to do basic arithmetic (\verb|ARITH|), but also simplification accumulates context, so when traversing a term $p \implies q $ and descending to $q$, additional rewrites are derived from $p$.
    
\end{enumerate}

    \item \textbf{High Order Matching} \\
    If HOL provided only first-order matching, the scope of rewriting would be rather restricted and simple schematic theorems with universal high-order quantifiers need to be instantiated manually. Instead, HOL implementation of HO matching develops a matching as general as possible while keeping it deterministic. More specifically it allows higher order matches of $P x_1 ... x_n$ where P is an instantiable HO variable, only if it can decide how to instantiate each $x_i$.
    
   \begin{lstlisting}
# NOT_FORALL_THM;;
it : thm = |- !P. ~(!x. P x) = (?x. ~P x)
# REWR_CONV NOT_FORALL_THM `~(!n. n < n - 1)`;;
it : thm = |- ~(!n. n < n - 1) = (?n. ~(n < n - 1))
   \end{lstlisting}

   In the above example \verb|Px| is instantiated by \verb| x < x - 1| so it is clear how to instatiate each $x_i$ and HOL will perform this rewrite even if it not a perfect instantiation of the left side of the theorem.

\end{itemize}

\subsubsection{Rules for Specific Theories}
As well as these general-purpose rules (logical rules, conversions and matching rules), HOL provides some special ones for particular theories. 

\begin{itemize}
    \item \textbf{Linear Arithmetic over Naturals} (\verb|ARITH_RULE|) \\
    This rule proves trivial theorems of linear arithmetic over the natural numbers, which is a decidable theory.

    \begin{lstlisting}
# ARITH_RULE `x < y ==> 4 * x + 3 < 4 * y`;;
it : thm = |- x < y ==> 4 * x + 3 < 4 * y
    \end{lstlisting}

    \item \textbf{Propositional Logic} (\verb|TAUT|) \\
    This rule proves tautologies automatically. Propositional logic is a decidable theory, so TAUT will provide a theorem iff the given term is a tautology.
    \begin{lstlisting}
# TAUT `(p ==> q) \/ (q ==> p)`;;
it : thm = |- (p ==> q) \/ (q ==> p)
    \end{lstlisting}
\newpage
    \item \textbf{Modal Logics }\\
    One of the purposes of HOLMS is to provide new rules to decide if a modal formula is valid in a certain decidable modal logic. As we will present later, since now we have derived in HOLMS three rules that will decide if a certain formula is a modal tautology and fail if it is not :
    \begin{enumerate}
        \item \verb|K_RULE| to decide if a formula is valid in $\mathcal{K}$
        \begin{lstlisting}
# K_RULE `!p q. [{} . {} |~ Box (p --> q) --> Box p --> Box q]`;;
0..0..1..2..3..7..12..solved at 19
val it : thm = |- !p q. [{} . {} |~ Box (p --> q) --> Box p --> Box q]

# K_RULE `!p . [{} . {} |~ Box p --> Box Box p]`;;
0..0..1..2..5..8..11...143..Exception: Failure "solve_goal: Too deep".
        \end{lstlisting}
        \item \verb|K4_RULE| to decide if a formula is valid in $\mathcal{K}4$
        \begin{lstlisting}
# K4_RULE `!p q. [K4_AX . {} |~ Box (p --> q) --> Box p --> Box q]`;;
0..0..1..2..3..7..12..solved at 19
val it : thm = |- !p q. [K4_AX . {} |~ Box (p --> q) --> Box p --> Box q]

# K4_RULE `!p. [K4_AX . {} |~ Box p --> Box (Box p)]`;;
0..0..1..2..5..8..11..14..17..solved at 26
val it : thm = |- !p. [K4_AX . {} |~ Box p --> Box Box p]

# K4_RULE `!p q. [K4_AX . {} |~  Box (p --> q) && Box p --> Box q]`;;
0..0..1..2..3..7..12..solved at 19
val it : thm = |- !p q. [K4_AX . {} |~ Box (p --> q) && Box p --> Box q]

# K4_RULE `!p. [K4_AX . {} |~ (Box (Box p --> p) --> Box p)]`;;
0..0..1..2..3..13..1557763..Exception: Failure "solve_goal: Too deep".
            
        \end{lstlisting}

        \item  \verb|GL_RULE| to decide if a formula is valid in $\mathcal{GL}$
        \begin{lstlisting}
# GL_RULE `!p. [GL_AX . {} |~ Box p --> Box (Box p)]`;;
0..0..1..7..15..23..37..51..107..solved at 116
val it : thm = |- !p. [GL_AX . {} |~ Box p --> Box Box p]

# GL_RULE `!p q. [GL_AX . {} |~  Box (p --> q) && Box p --> Box q]`;;
0..0..1..6..11..19..32..solved at 39
val it : thm = |- !p q. [GL_AX . {} |~ Box (p --> q) && Box p --> Box q]

# GL_RULE `!p. [GL_AX . {} |~ Box (Box p --> p) --> Box p]`;;
0..0..1..6..11..25..2115288..Exception: Failure "solve_goal: Too deep".
        \end{lstlisting}
    \end{enumerate}
\end{itemize}

\section{Proving Theorems}

Since we have described HOL language, its syntax and its deductive system, we can observe that HOL Light constitutes a systematic development of nontrivial mathematics from its foundations. So HOL, in some sense, represents something similar to logicist hope in the foundation of mathematics \cite{Russel1910Principia} and to Hilbert finitism \cite{Hilbert1926ÜberInfinite}. Unlike these formal systems, HOL is based on higher-order logic, which is more expressive than first-order logic by allowing quantification over predicates and functions, but semantically incomplete, in the sense that there is a $\psi$ such that it is true ($\vDash \psi$) and high-order logic can't prove it ($\nvdash \psi$). Therefore, the concept of HOL proof will not capture the idea of truth, as hoped by Hilbert's deductivism, but it is an interesting system to study, not only for those interested in theorem proving or functional programming languages, but also from a philosophical perspective. 

In this section, we will show how users actually prove theorems in HOL Light. 

\subsection{Tactics and Goalstack}
HOL provides two different mechanisms to prove theorems, here we will focus on the tactics mechanism that allows the user to perform interactive theorem proving.
\begin{itemize}
    \item \textbf{Rules Mechanism} \\
    This mechanism performs proofs step by step in a forward manner, applying rules to the set of hypotheses and then to conclusions of the applied rules. \\
    Hypotheses \\
    ... $\ \downarrow$ \textit{rules applications} \\
    Conclusion
    
    \item \textbf{Tactics Mechanism} \\
    This mechanism performs proofs in a mixture of forward and backward manner, starting with the main \textbf{goal} (theorem to prove) and reducing it, by applying tactics, to simpler \textbf{subgoals} until these can trivially be solved. A \textbf{tactic} is an OCaml function that takes a goal and reduces it to a list of subgoals, keeping track of how to construct a proof of the main goal if the user succeeds in proving the subgoal. So the user can keep applying tactics in a backward fashion, and HOL automatically reverses the user's one to a standard proof deduced by primitive inference rules. \\
    \end{itemize}

    Trivial subgoals \\
    ... $\big \uparrow $ \textit{apply tactics} \\
    Simpler subgoals $\Bigg \downarrow \Bigg \downarrow \Bigg\downarrow \Bigg\downarrow$ \textit{proving subgoals,} \textit{HOL reconstructs the forward proof}\\ 
    ... $\big \uparrow $ \textit{apply tactics} \\
    Conclusion (goal)

    \subsubsection{Goalstack}
    The goalstack allows the user to perform tactic steps, but also to retract and correct them. In this sense, the goalstack makes the proof interactive. The goalstack provides five kinds of commands:
    \begin{itemize}
        \item \verb|g <tm to prove>;;| \\
        Sets up the initial goal.
        \item  \verb|e <hol_tactic to apply>;;| \\
        Applies a tactic to the current goal.
        \item  \verb|b ();;| \\
        Retract a tactic step and restore the previous goal.
        \item  \verb|r n;;| \\
        Allows the user to analyse the subgoal in the n position.
        \item  \verb|top.thm();;| \\
        When the goalstack returns \verb|it : goalstack = No subgoals|, it means that proof has been completed and then the user can generate the desired theorem, using \verb|top.thm()|.        
    \end{itemize}

In the below example,  we perform a simple proof of a trivial arithmetical fact, by using each goal stack tool and some tactics: \verb|DISCH_TAC| (correspondent to \verb|DISCH|), \verb|CONJ_TAC| (correspondent to \verb|CONJ|), \verb|ASM_REWRITE_TAC[]| (that rewrites the goal using basics rewrites and the assumption of the theorem) and \verb|NUM_REDUCE_TAC| (solve simple subgoals using arithmetical conversions).

    \begin{lstlisting}
# g `x = 3 /\ y = 2 ==> x+y = 5 /\ x-y < 5`;;
val it : goalstack = 1 subgoal (1 total)
`x = 3 /\ y = 2 ==> x + y = 5 /\ x - y < 5`

# e DISCH_TAC;;
val it : goalstack = 1 subgoal (1 total)
  0 [`x = 3 /\ y = 2`]
`x + y = 5 /\ x - y < 5

# e CONJ_TAC;;
val it : goalstack = 2 subgoals (2 total)
  0 [`x = 3 /\ y = 2`]
`x - y < 5`
  0 [`x = 3 /\ y = 2`]
`x + y = 5`

# b ();;
val it : goalstack = 1 subgoal (1 total)
  0 [`x = 3 /\ y = 2`]
`x + y = 5 /\ x - y < 5`

# e (ASM_REWRITE_TAC[]);;
val it : goalstack = 1 subgoal (1 total)
  0 [`x = 3 /\ y = 2`]
`3 + 2 = 5 /\ 3 - 2 < 5`

# e CONJ_TAC;;
val it : goalstack = 2 subgoals (2 total)
  0 [`x = 3 /\ y = 2`]
`3 - 2 < 5`
  0 [`x = 3 /\ y = 2`]
`3 + 2 = 5`

# r 1;;
val it : goalstack = 1 subgoal (2 total)
  0 [`x = 3 /\ y = 2`]
`3 - 2 < 5`

# e NUM_REDUCE_TAC;;
val it : goalstack = 1 subgoal (1 total)
  0 [`x = 3 /\ y = 2`]
`3 + 2 = 5`

# e NUM_REDUCE_TAC;;
val it : goalstack = No subgoals

# let FIRST_THM = top_thm();;
val FIRST_THM : thm = |- x = 3 /\ y = 2 ==> x + y = 5 /\ x - y < 5
    \end{lstlisting}

\subsubsection{Tactics}\label{sec:tactics}
Many HOL basics tactics are a bottom-up version of HOL derived rules, such as the one used above. For example \verb|CONJ_TAC| is the reverse version of the rule \verb|CONJ| that takes two theorems and returns their conjunction, while \verb|CONJ_TAC| splits the conjunctive goal into two subgoals to prove.
\[ {\Gamma \vdash ^? p \land q \ \ } (\texttt{CONJ\_TAC} {\ \Rrightarrow) \ \Gamma \vdash p, \ \Gamma \vdash q}\]

\[\frac{\Gamma \vdash p \quad \Delta \vdash q}{\Gamma \cup \Delta \vdash p \land q} \texttt{ CONJ}\]

Matching tactics, such as \verb|MP_TAC| and \verb|MATCH_MP_TAC|, are more complicated, indeed the differences between the two are not only that the second performs matching while the first doesn't. Here we report a schematic description of the behavior of the two tactics: \\ 

$\Gamma \vdash ^? q \ \ $ \verb|MATCH_MP_TAC <thm after matching| $\ \vdash p \implies q$ \verb|>| $\ \Rrightarrow) \ \Gamma \vdash^? p, \ $

\begin{lstlisting}
#g `0 <= SUC n`;;
it : goalstack = 1 subgoal (1 total)
`0 <= SUC n`

#e (MATCH_MP_TAC LT_IMP_LE);;
it : goalstack = 1 subgoal (1 total)
`0 < SUC n`
\end{lstlisting}

$\Gamma \vdash ^? q \ \ $ \verb|MP_TAC <thm| $\ \vdash p' \implies q'$ \verb|>| $\ \Rrightarrow) \ \Gamma \vdash ^? (p' \implies q') \implies q$ 

\begin{lstlisting}
#g `0 <= SUC n`;;
it : goalstack = 1 subgoal (1 total)
`0 <= SUC n`
#e(MP_TAC LT_IMP_LE);;
it : goalstack = 1 subgoal (1 total)
`(!m n. m < n ==> m <= n) ==> 0 <= SUC n`
\end{lstlisting}

Other important tactics, are the ones that solve goals completely, such as \verb|ACCEPT_TAC| which is used with theorems with the same conclusion as the goal, and its matching version \verb|MATCH_ACCEPT_TAC|.

It is also possible to compose tactics using \textbf{tacticals}:
\begin{itemize}
    \item \verb|THEN| executes two tactics in sequence, observe that if the first tactic splits the goal into many subgoals then the second tactic is applied to each of them.
    \item \verb|THENL| first executes a tactic and then a list of tactics, observe that the list of tactics is applied only to the first subgoal if the first tactic splits the goal into different subgoals.
    \item \verb|REPEAT| executes repeatedly a certain tactic.
\end{itemize}

The last tactic we present is \verb|MESON_TAC[(thm)list]|, a very useful tactic that can dispose of a lot of first-order reasoning and a list of theorems given by the user. It also has a limited ability to do higher order and equality reasoning. Observe that, because first-order logic is undecidable\footnote{In the precise sense that the problem of deciding the validity of a first-order formula is undecidable}, it is impossible to discern if \verb|MESON_TAC[]| diverges or if it takes a "too" long computation time when it returns \verb|Failure "solve_goal: Too deep"|. Nevertheless, \verb|MESON_TAC[]| often succeeds on valid formulas.

\subsection{Theories Developed in HOL Light} \label{Theories}
Because in HOLMS we want to prove several theorems of modal logic using HOL Light, it is important to understand HOL Light infrastructure but also to know which theories have already been implemented in HOL. In particular, it is useful to know which theorems have been proved and their names, whereby we may use these proved facts to implement HOLMS.

In HOLMS we will use the following HOL-implemented theories:

\begin{enumerate}
    \item \textbf{Theory of Pairs} \\
    We use \textit{theory of pairs} in HOLMS to develop Kripke's frames.
    
    HOL uses the type constructor \verb|prod| to build up Cartesian product parametric type (\verb|:| $\alpha$ \verb|#| $\beta$) presented below. For example, \verb|:num # num| the type of pairs of natural numbers, while in HOLMS \verb|W->bool # W->W->bool| is the type of a modal frame.  Observe that n-tuples can be built by iterating the pairing operation and that the pair type constructor and pair term function are both right-associative. The most important theorems and functions for the theory of pairs are:

\begin{verbatim}
PAIR_EQ = |- !x y a b. (x, y = a, b) <=> (x = a) /\ (y = b)

PAIR_SURJECTIVE = |- !p. ?x y. p = x, y

pair_INDUCT = |- (!x y. P (x, y)) ==> (!p. P p)

pair_RECURSION = |- !PAIR'. ?fn. !a0 a1. fn (a0, a1) = PAIR' a0 a1

FORALL_PAIR_THM = |- !P. (!p. P p) <=> (!p1 p2. P (p1,p2))

EXISTS_PAIR_THM |- !P. (?p. P p) <=> (?p1 p2. P (p1,p2))

IN_ELIM_PAIR_THM |- !P a b. a,b IN {x,y | P x y} <=> P a b
\end{verbatim}

This last theorem brings together pairs theory and set theory and gives a characterization of pairs that is more similar to the one we are used to.

   \item \textbf{Theory of Sets} \\
    This theory is used in HOLMS to describe sets of certain Kripke's frames. 

    In HOL sets are just predicates, but the usual set operations are defined as below:
    
\[
\begin{array}{|l|c|l|}
\hline
\texttt{x IN s} & x \in s & \texttt{|- !P x. x IN P <=> P x} \\
\hline
\texttt{s SUBSET t} & s \subseteq t & \texttt{|- !s t. s SUBSET t = (!x. x IN s ==> x IN t)} \\
\hline
\texttt{EMPTY} & \emptyset & \texttt{EMPTY = |- EMPTY = ($\setminus$ x. F)} \\
\hline
\texttt{UNIV} & \textit{none} & \texttt{|- UNIV = ($\setminus$x. T)} \\
\hline
\texttt{x INSERT s} & \{x\} \cup s & \texttt{x INSERT s = $\{$y | y IN s \/ (y = x)$\}$} \\
\hline
\texttt{x DELETE s} & s - \{x\} & \texttt{|- !s x. s DELETE x = $\{$y | y IN s $\land \sim$ (y = x)$\}$} \\
\hline
\texttt{s UNION t} & s \cup t & \texttt{|- !s t. s UNION t = $\{$x | x IN s $\lor$ x IN t $\}$} \\
\hline
\texttt{UNIONS s} & \bigcup s & \texttt{|- !s. UNIONS s = $\{$x | ?u. u IN s $\land$ x IN u$\}$} \\
\hline
\texttt{s INTER t} & s \cap t & \texttt{|- !s t. s INTER t = $\{$x | x IN s $\land$ x IN t$\}$} \\
\hline
\texttt{INTERS s} & \bigcap s & \texttt{|- !s. INTERS s =$\{$x | !u. u IN s ==> x IN u$\}$} \\
\hline
\texttt{s DIFF t} & s - t & \texttt{|- !s t. s DIFF t =$\{$x | x IN s $\land \sim$(x IN t)$\}$} \\
\hline
\end{array}
\]

Moreover, HOL provides several pre-proved theorems about set and cardinality. In HOLMS we will also load the HOL Light library \verb|Library/card.ml|, where many other theorems about cardinality have been proved. 

\begin{verbatim}
CARD_CLAUSES =
  |- (CARD EMPTY = 0) /\
     (!x s.
       FINITE s
       ==> (CARD (x INSERT s) =
             if x IN s then CARD s else SUC (CARD s)))

HAS_SIZE = |- !s n. s HAS_SIZE n = FINITE s /\ (CARD s = n)

CARD_SUBSET_LE =
  |- !a b. FINITE b /\ a SUBSET b /\ CARD b <= CARD a ==> (a = b)

FINITE_RECURSION =
  |- !f b.
       (!x y s. ~(x = y) ==> (f x (f y s) = f y (f x s)))
       ==> (ITSET f EMPTY b = b) /\
           (!x s.
             FINITE s
             ==> (ITSET f (x INSERT s) b =
                   if x IN s then ITSET f s b else f x (ITSET f s b)))
\end{verbatim}

    \item\label{itm:theory-of-lists} \textbf{Theory of Lists} \\
    This theory is used in HOLMS where modal worlds are defined as (form)lists.

    HOL provides the definition of a recursive type of list and various standard list combinators defined by recursion. The parametric type for lists \verb|:(| $\alpha$ \verb|)list| was presented below. Here we list some important theorems about lists:

    \begin{verbatim}
APPEND =
|- (!l. APPEND [] l = l) /\
(!h t l. APPEND (CONS h t) l = CONS h (APPEND t l))

REVERSE =
|- (REVERSE [] = []) /\ (REVERSE (CONS x l) = APPEND (REVERSE l) [x])

LENGTH =
|- (LENGTH [] = 0) /\ (!h t. LENGTH (CONS h t) = SUC (LENGTH t))

MAP = |- (!f. MAP f [] = []) /\
(!f h t. MAP f (CONS h t) = CONS (f h) (MAP f t))

LAST = |- LAST (CONS h t) = (if t = [] then h else LAST t)

REPLICATE = |- (REPLICATE 0 x = []) /\
(REPLICATE (SUC n) x = CONS x (REPLICATE n x))

NULL = |- (NULL [] = T) /\ (NULL (CONS h t) = F)

FORALL = |- (FORALL P [] = T) /\
(FORALL P (CONS h t) = P h /\ FORALL P t)

EX = |- (EX P [] = F) /\ (EX P (CONS h t) = P h \/ EX P t)

ITLIST =
|- (ITLIST f [] b = b) /\ (ITLIST f (CONS h t) b = f h (ITLIST f t b))

MEM = |- (MEM x [] = F) /\ (MEM x (CONS h t) = (x = h) \/ MEM x t)
    \end{verbatim}
    
    \item \textbf{Well-founded Relations} \\
    In HOLMS theorems about well-founded relations are used to define Kripke's Frames for GL, since the accessibility relation for GL is converse well-founded and transitive\footnote{To treat reflexive, irreflexive, symmetric and transitive relations we load in HOLMS the library \verb|Library/rstc.ml|}.
    
    HOL includes the following definition of well-founded relation and it also proves that this definition is equivalent to other important properties, such as admissibility of complete induction.

    \begin{verbatim}
WF =
|- WF (<<) = (!P. (?x. P x) ==> (?x. P x /\ (!y. y << x ==> ~P y)))

WF_IND =
|- WF (<<) = (!P. (!x. (!y. y << x ==> P y) ==> P x) ==> (!x. P x))     
    \end{verbatim}

    \item \textbf{Number Theories} \\
    We won't use these theories in HOLMS, but we can not fail to mention that HOL proves many theorems about naturals, integers and reals. 
\end{enumerate}

\chapter{Formalising Modal Logic within HOL Light}\label{chap:3}

In this chapter, we introduce various research on the formalisation of modal logic within HOL Light. 

We start by defining \textbf{\textit{deep} and \textit{shallow embeddings}} within theorem provers, and then we show an implementation of these two kinds of embedding for modal logics developed by John Harrison in the HOL tutorial~\cite[\S 20]{harrisontutorial}.

Secondly, we present early work by two developers of HOLMS on the \textbf{implementation of Gödel-Löb provability logic in HOL Light}.

Finally, we describe in detail the\textbf{ HOLMS project}, its first released version and the current status of the library.

\section{Embeddings for Modal Logics}

As noticed in Chapter~\ref{sub:metalanguage} and illustrated in Figure~\ref{tik:meta}, formalising a modal logic within HOL Light involves treating HOL Light as a \emph{metatheory}, which analyses and describes the modal system as an \emph{object theory}.

When representing formal systems— such as logics or programming languages— within HOL-based proof assistants, two primary techniques are commonly used: deep and shallow embedding ~\cite{boulton-tpcd}. These two approaches differ in how they encode syntax and semantics of the object theory within the host HOL-based metatheory:
\begin{itemize}
    \item[(a)] \emph{\textbf{Deep Embedding}} or \emph{Semantic Embedding}
    \begin{itemize}
        \item The syntax of the embedded system is \textit{explicitly represented} with an \textit{inductive user-defined HOL type};
        \item The semantics of the embedded system is defined by \textit{interpretation functions} on the embedded syntax.
    \end{itemize}
Key characteristics of this approach:
\vspace{-3mm}
\begin{itemize}
    \item[$\triangleleft$] Neatly distinguished syntax and semantics;
    \item[$\triangleleft$] Since the syntax is explicitly defined, it is possible to reason about classes of programs and classes of logical object theories.
\end{itemize}
\newpage
    
    \item[(b)] \emph{\textbf{Shallow Embedding}} or \emph{Syntactic Embedding}
    \begin{itemize}
        \item The embedded language constructs are directly represented as HOL functions, predicates, or logical formulas, then the syntax of the embedded system is \textit{represented} with an \textit{OCaml type};
        \item The semantics of the embedded system is inherited from HOL.
    \end{itemize}
\end{itemize}

Key characteristic of this approach:
\begin{itemize}
    \item[$\triangleleft$] 
Avoid to explicitly setting up types of abstract syntax and semantic functions: the user only defines semantic operators in the logic and then the interface handles the mapping between HOL programs and their semantic representations.
\end{itemize}

\subsection{Embeddings for Modal Logic presented in HOL tutorial}

Harrison's tutorial~\cite[\S 20]{harrisontutorial} illustrates, as an example of the embedding of logics inside HOL, deep and shallow embeddings of propositional modal logics. He also proves some important technical results and Correspondence Lemma~\ref{lem:Correspondence}.
\medskip  

\noindent 
\begin{minipage}{0.45\textwidth}  
    \textbf{Deep Embedding}
    \begin{itemize}
        \item Modal Language: \\
        Definition of new HOL light terms of inductive type string to represent logical connectives;
        \item Modal Syntax:\\
        Definition of an inductive type \verb|:form|, as follows:
         \begin{lstlisting}
 # let form_INDUCT,form_RECURSION = 
    define_type
    "form = False
          | True
          | Atom string
          | Not form
          | && form form
          | || form form
          |--> form form
          | <-> form form
          | Box form
          | Diamond form";;
    \end{lstlisting}
\end{itemize}
\end{minipage}
\hfill 
\begin{minipage}{0.45\textwidth}  
 \vspace{-75mm} \textbf{Shallow Embedding}
    \begin{itemize}
        \item Modal Language: \\
        Definition of new HOL light terms of inductive type string to represent logical connectives;

        \item Modal Syntax: \\
        No explicit definitions of the embedded syntax;
    \end{itemize}
\end{minipage}

\noindent\begin{minipage}{0.45\textwidth}
\begin{itemize}
\vspace{-20mm}
        \item Modal Semantics: \\
        Definition of the interpretation function \verb|holds|, that follows the forcing relation in Definition~\ref{def:truth}:
         \begin{lstlisting}
# let holds = define
 `(holds (W,R) V False w 
  <=> F) /\
  ... /\
  (holds (W,R) V (Atom a) w 
   <=> V a w) /\
  ... /\
 (holds (W,R) V (p--> q) w 
 <=> holds (W,R) V p w 
  ==> holds (W,R) V q w) /\
    ... /\
  (holds (W,R) V (Box p) w 
 <=> !w'. w' IN W /\ R w w' 
  ==> holds (W,R) V p w') /\
 ...;;
    \end{lstlisting}
    
\item Modal Logics: \\
Harrison provides three examples of modal logics: Linear Temporal Logic ($\mathbb{LTL}$), $\mathbb{K}4$ and $\mathbb{GL}$. We focus on $\mathbb{LTL}$, since Harrison provides also a shallow embedding for this system.

Harrison model-theoretic defines $\mathbb{LTL}$ as the set of all the frames such that: their worlds are natural numbers, thought of as units of time, and a world is accessible precisely from earlier ones, including itself. 
\begin{lstlisting}
 # let LTL = new_definition
 `LTL(W,R) <=> 
 (W = UNIV) /\ 
 !x y:num. R x y <=> x <= y';;
\end{lstlisting}
   \end{itemize}

\end{minipage}
\hfill
\noindent \begin{minipage}{0.45\textwidth}
\begin{itemize}
    \item Modal Semantics:
\begin{itemize}

    \item[I.] Fixed $\mathbb{LTL}$ as an example, Harrison directly provides the semantics map $\mathbb{N} \to $\verb|bool| for propositional connectives:
    \begin{lstlisting}
 # let false_def = 
   define `False =
   \t:num. F';;
 ...
 # let imp_def = define 
   `p--> q = \t:num. p t ==> q t';;
    \end{lstlisting}
    
    Harrison remarks that these connectives are just special type instances of set operations, e.g \verb|&&| and \verb|INTER|. He describes connectives as ``lifting'' to predicates of the HOL propositional connectives.

    \item[II.] The following listing, instead, contains the semantics map $\mathbb{N} \to $\verb|bool| of the defined four modal operators connectives:
    \begin{lstlisting}
 # let forever = define 
   `forever p = 
   \t:num. !t'. t <= t' 
           ==> p t'';;
 # let sometime = 
 define `sometime p = 
   \t:num. ?t'. t <= t' 
            /\ p t'';;
 # let next = define 
  `next p = 
   \t:num. p(t + 1)';;
 # let until = define
   `p until q =
   \t:num. ?t'. t <= t' 
   /\ (!t''. t <= t'' /\ t'' < t' ==> p t'') /\ q t'';;
    \end{lstlisting}
\end{itemize}
\end{itemize}
\end{minipage}

\section{Mechanising GL within HOL Light}

The initial contributions of two of the to-be HOLMS developers to the formalisation of modal logic within HOL--Light consist of providing this proof assistant of a novel tool for automated modal reasoning (automated theorem proving theorems and countermodel construction) in Gödel-Löb Provability Logic~\cite{ maggesi_et_al:LIPIcs.ITP.2021.26, DBLP:journals/jar/MaggesiB23}.

The source code for this work has been made initially available from an online Git repository~\href{https://github.com/jrh13/hol-light/blob/master/GL/misc.ml} and has been included in the official HOL distribution since 2023. Consequently, we may say that this work has extended HOL Light’s capabilities to verify modal statements for $\mathbb{GL}$, and--in parallel\footnote{Solovay's theorem proves that Gödel-Löb Provability Logic is adequate w.r.t. arithmetic, i.e. in $\mathbb{GL}$ the modal operator $\Box$ characterises provability in arithmetic, while $\Diamond$ characterises consistency in arithmetic.}--to verify (modal counterparts of) statements on provability in arithmetical theories extending Peano arithmetic~\cite{boolos1995logic}.

\subsubsection{Axiomatic Calculus and Frames Correspondent to GL}

This previous work developed Harrison's embedding of syntax and semantics for modal logics~\cite{harrisontutorial}, by defining in HOL Light an axiomatic calculus for GL and the predicate of a frame being finite-irreflexive-transitive frames (encoding $\mathfrak{ITF}$).
Additionally, they implemented new results, including a polymorphic bisimulation lemma that formalises Lemma~\ref{thm:forc-bisim}.

\begin{lstlisting}[caption={Encoding an axiomatic system for GL in HOL Light}]
 #let GLaxiom_RULES,GLaxiom_INDUCT,GLaxiom_CASES =
  new_inductive_definition
  `(!p q. GLaxiom (p--> (q--> p))) /\
   (!p q r. GLaxiom ((p--> q--> r)--> (p--> q)--> (p--> r))) /\
   (!p. GLaxiom (((p--> False)--> False)--> p)) /\
   (!p q. GLaxiom ((p <-> q)--> p--> q)) /\
   (!p q. GLaxiom ((p <-> q)--> q--> p)) /\
   (!p q. GLaxiom ((p--> q)--> (q--> p)--> (p <-> q))) /\
      GLaxiom (True <-> False--> False) /\
   (!p. GLaxiom (Not p <-> p--> False)) /\
   (!p q. GLaxiom (p && q <-> (p--> q--> False)--> False)) /\
   (!p q. GLaxiom (p || q <-> Not(Not p && Not q))) /\
   (!p q. GLaxiom (Box (p--> q)--> Box p--> Box q)) /\
   (!p. GLaxiom (Box (Box p--> p)--> Box p))';;

 # let GLproves_RULES,GLproves_INDUCT,GLproves_CASES = new_inductive_definition
  `(!p. GLaxiom p ==>  |-- p) /\
   (!p q. |-- (p--> q) /\ |-- p ==> |-- q) /\
   (!p. |-- p ==>  |-- (Box p))';;
\end{lstlisting}
\vspace{-7mm}
\begin{lstlisting}[caption={Encoding GL characteristic  frames in HOL Light}]
 #let TRANSNT = new_definition
 `TRANSNT (W:W->bool,R:W->W->bool) <=>
  ~(W = {}) /\
  (!x y:W. R x y ==> x IN W /\ y IN W) /\
  (!x y z:W. x IN W /\ y IN W /\ z IN W /\ Rxy/\ Ryz ==> Rxz)/\
  WF (\x y. Ryx)';;

  TRANSNT_EQ_LOB
 |- !W:W->bool R:W->W->bool.
       (!x y:W. R x y ==> x IN W /\ y IN W) 
       ==> ((!x y z. x IN W /\ y IN W /\ z IN W /\ Rxy/\ Ryz 
                     ==> Rxz)/\
             WF (\xy.Ryx) <=>
             (!p. holds_in (W,R) (Box(Box p--> p)--> Box p)))
  
\end{lstlisting}
\vspace{-7mm}
\begin{lstlisting}[caption={Encoding GL appropriate frames in HOL Light}]
 # let ITF = new_definition
 `ITF (W:W->bool,R:W->W->bool) <=>
  ~(W = {}) /\
  (!x y:W. Rxy ==> x IN W /\ y IN W) /\
  FINITE W /\
  (!x. x IN W ==> ~ Rxx)/\
  (!x y z. x IN W /\ y IN W /\ z IN W /\ Rxy /\ R y z ==> R x z)';;

 ITF_NT
 |- !W R:W->W->bool. ITF(W,R) ==> TRANSNT(W,R)
\end{lstlisting}

\subsubsection{Adequacy Theorems for GL w.r.t. Relational Semantics}

The \verb|GL| library provides a formal proof in HOL Light of the adequacy theorems for $\mathbb{GL}$ w.r.t.~its appropriate frames ($\mathfrak{ITF}$ see Table~\ref{tab:char-appr}). 
The proof strategy, implemented with the support of the theorem prover, follows the approach outlined by George Boolos in his classic textbook~\cite[\S 4-5]{boolos1995logic}.

\begin{lstlisting}[caption={HOLMS Adequacy Theorems w.r.t. ITF}]
 GL_ITF_VALID
 |- !p. |-- p ==> ITF |= p

 COMPLETENESS_THEOREM
 |- !p. ITF |= p ==> (|-- p)
\end{lstlisting}

This completeness result allows in principle, thanks to finite model property, to develop a decision procedure for $\mathbb{GL}$-theoremhood ($\mathcal{GL} \vdash A$). A first naive--and then inefficient and unsuccessful--procedure is developed as follows: 

\begin{lstlisting}
 # let NAIVE_GL_TAC : tactic = 
 MATCH_MP_TAC GL_COUNTERMODEL_FINITE_SETS THEN
 REWRITE_TAC[valid; FORALL_PAIR_THM; holds_in; holds;
 ITF; GSYM MEMBER_NOT_EMPTY] THEN 
 MESON_TAC[];;
 
 let NAIVE_GL_RULE tm = prove(tm, REPEAT GEN_TAC
 THEN GL_TAC);; 
\end{lstlisting}

\subsubsection{Automated Theorem Prover and Countermodel constructor for GL}
To develop a working and efficient decision procedure, the completeness theorem is leveraged to construct a shallow embedding of a labeled sequent calculus for $\mathbb{GL}$.

This approach enables an automated proof search via a new HOL Light tactic.  \verb|GL_TAC| systematically applies sequent rules guided by the syntactic structure of the modal formula under examination, stated as a HOL Light verification goal.

The proof search either results in a theorem about provability in GL (\verb|GL_RULE|) or identifies a countermodel to the input formula (\verb|GL_BUILD_COUNTERMODEL|).

Figure~\ref{fig:schema} has been presented in~\cite{DBLP:conf/overlay/BilottaMBQ24} to summarise the overall procedure behind the implementation of this strategy in \texttt{GL} library, and hence in HOL Light proof assistant.

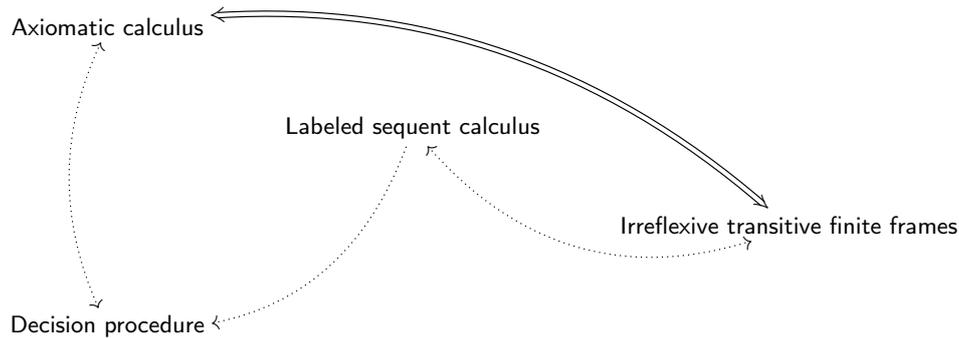
\begin{figure}[hb]
\begin{footnotesize}

$$\xymatrix{
    \textsf{Axiomatic calculus} \ar@/^2.8pc/@{<=>}[ddrr] & \\
    & \textsf{Labeled sequent calculus}\ar@/^2.3pc/@{.>}[ddl] & \\
    & & \textsf{Irreflexive transitive finite frames}\ar@/^2.3pc/@{<.>}[ul]  \\
	\textsf{Decision procedure}\ar@/^1.2pc/@{<.>}[uuu]    
}
$$\end{footnotesize}

    \caption{Dotted arrows represent formalisations/implementations depending on the shallow embedding of the labeled sequent calculus via our HOL Light rule for proof-search. The double arrow connecting the (formalised) axiomatic calculus and the (formalised) relational models based on irreflexive transitive finite frames denotes the fully formalised soundness and completeness theorems for Gödel-Löb logic.}
    \label{fig:schema}
\end{figure}

\section{HOLMS Project}\label{sub:HOLMS-project}

After developing \verb|GL| library, the two authors realised that the method implemented for $\mathbb{GL}$ is not \textit{per se} limited to this specific system:

\begin{itemize}
    \item Harrison's \textit{deep embedding of modal syntax and semantics} is general and allows us to treat each modal system;

    \item As showed~\ref{sub:parametric-calculus}, it is easy to define--and then to deep embed--the parametric notion of \textit{deducibility} in a normal system $\mathbb{S}$ from a set of hypotheses $\mathcal{H}$ ($\mathcal{S.H} \vdash A$). It is also possible to embed a deducibility relation defined for non-normal modal systems.

    \item The proof of \textit{adequacy theorems} sketched by Boolos' monograph~\cite[\S 4-5]{boolos1995logic} applies to all the modal logics within the cube;

    \item ``The implementation of a \textit{decision procedure} via proof-search in labeled calculi based on formalised completeness results may follow the general procedure of Figure~\ref{fig:HOLMS-schema} for any system in the modal cube~\cite{sep-logic-modal} and, modulo substantial investments, can cover any logic endowed with structurally well-behaved labeled sequent calculi, beyond the class of normal modal systems.''~\cite{DBLP:conf/overlay/BilottaMBQ24}

\end{itemize}

\begin{figure}[ht]
\begin{footnotesize}

$$\xymatrix{
    \textsf{Axiomatic calculus} \ar@/^2.8pc/@{<=>}[ddrr] & \\
    & \textsf{Labeled sequent calculus}\ar@/^2.3pc/@{.>}[ddl] & \\
    & & \parbox{3.5cm}{\textsf{Frame class} \textsf{characterising modal schemas}}\ar@/^2.3pc/@{<.>}[ul]  \\
	\textsf{Decision procedure}\ar@/^1.2pc/@{<.>}[uuu]    
}
$$\end{footnotesize}
    \caption{The implementation methodology behind HOLMS. Dotted arrows represent formalisations/implementations depending on the shallow embedding of the labeled sequent calculus via a HOL Light rule for proof-search. The double arrowconnecting the (formalised) axiomatic calculus and the (formalised) relational models in the class of frames corresponding to the modal schemas in the calculus--denotes the fully formalised soundness and completeness theorems for the logic.}
    \label{fig:HOLMS-schema}
\end{figure}
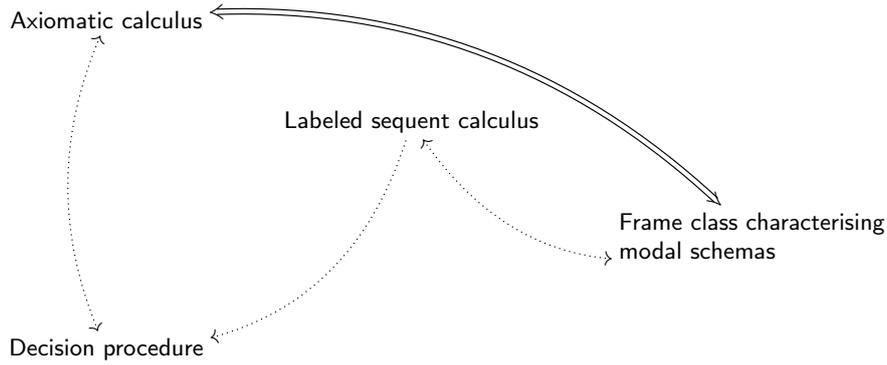 

As a result of these considerations, the HOLMS project was born and developers have (small) increased in number to explore the flexibility of the original approach to the mechanisation of modal logical reasoning.

Our methodology focuses on making as \textit{polymorphic} as possible the existing code for $\mathbb{GL}$ and porting it into a general framework for modal logic.

The ultimate goal of the HOLMS project— ``\textbf{HO}L Light \textbf{L}ibrary for \textbf{M}odal \textbf{S}ystems''— is to equip the HOL Light proof assistant with a general mechanism for automated theorem proving and countermodel construction for modal logics.

\subsection{The First Version of HOLMS presented at Overlay 2024}

The first--embryo--version of HOLMS was presented in our communication paper~\cite{DBLP:conf/overlay/BilottaMBQ24} at the ``Overlay 2024'' conference and its source code was archived in November 2024 on Software Heritage~\href{https://archive.softwareheritage.org/swh:1:rev:c15874a079d2ae24f420656d28c172e0c7258bed;origin=https://github.com/HOLMS-lib/HOLMS;visit=swh:1:snp:02b01d860dd78a254b705f5620846b4a5c0e0491}{\ExternalLink}.

We start generalising our approach by implementing within HOL Light the minimal system $\mathbb{K}$, its \textit{soundness} and \textit{completeness} theorems, and developing an automated theorem prover (\verb|K_TAC| and \verb|K_RULE|~\href{https://archive.softwareheritage.org/swh:1:cnt:ba5e8db8f660a197b3e275528a8f2ec81493c1a4;origin=https://github.com/HOLMS-lib/HOLMS;visit=swh:1:snp:02b01d860dd78a254b705f5620846b4a5c0e0491;anchor=swh:1:rev:c15874a079d2ae24f420656d28c172e0c7258bed;path=/k_decid.ml;lines=43-87}{\ExternalLink}) and countermodel constructor for it (\verb|K_BUILD_COUNTERMODEL|~\href{https://archive.softwareheritage.org/swh:1:cnt:ba5e8db8f660a197b3e275528a8f2ec81493c1a4;origin=https://github.com/HOLMS-lib/HOLMS;visit=swh:1:snp:02b01d860dd78a254b705f5620846b4a5c0e0491;anchor=swh:1:rev:c15874a079d2ae24f420656d28c172e0c7258bed;path=/k_decid.ml;lines=88-106}{\ExternalLink}).

In the following section, we describe some lines of the code already parametrised in this preliminary version.

\subsubsection{First parametrised results}
Since Harrison's deep embedding of syntax and semantics~\cite[\S 20]{harrisontutorial} can treat every modal logic, we started our parametrisation by generalising the \textit{deducibility predicate} for $\mathbb{GL}$ \verb+|-- A+ to all normal modal systems. We formalised this notion in a \textit{parametric} ternary relation \verb|[S.H ||\verb|~ A]|, which embed in HOL Light the \textit{abstract} Definition~\ref{def:deducibility}.
\newpage
\begin{lstlisting}[caption={HOLMS Parametric Deducibility Relation}]
   #let KAXIOM_RULES,KAXIOM_INDUCT,KAXIOM_CASES = new_inductive_definition
  `(!p q. KAXIOM (p --> (q --> p))) /\
   (!p q r. KAXIOM ((p --> q --> r) --> (p --> q) --> (p --> r))) /\
   (!p. KAXIOM (((p --> False) --> False) --> p)) /\
   (!p q. KAXIOM ((p <-> q) --> p --> q)) /\
   (!p q. KAXIOM ((p <-> q) --> q --> p)) /\
   (!p q. KAXIOM ((p --> q) --> (q --> p) --> (p <-> q))) /\
   KAXIOM (True <-> False --> False) /\
   (!p. KAXIOM (Not p <-> p --> False)) /\
   (!p q. KAXIOM (p && q <-> (p --> q --> False) --> False)) /\
   (!p q. KAXIOM (p || q <-> Not(Not p && Not q))) /\
   (!p q. KAXIOM (Box (p --> q) --> Box p --> Box q))'`;;

   #let MODPROVES_RULES,MODPROVES_INDUCT,MODPROVES_CASES =
  new_inductive_definition
  `(!H p. KAXIOM p ==> [S . H |~ p]) /\
   (!H p. p IN S ==> [S . H |~ p]) /\
   (!H p. p IN H ==> [S . H |~ p]) /\
   (!H p q. [S . H |~ p --> q] /\ [S . H |~ p] ==> [S . H |~ q]) /\
   (!H p. [S . {} |~ p] ==> [S . H |~ Box p])'`;;
\end{lstlisting}

The library also provides many lemmata that holds by normality~\href{https://archive.softwareheritage.org/swh:1:cnt:d42782345008434be8b43de7feb014732f8821ea;origin=https://github.com/HOLMS-lib/HOLMS;visit=swh:1:snp:02b01d860dd78a254b705f5620846b4a5c0e0491;anchor=swh:1:rev:c15874a079d2ae24f420656d28c172e0c7258bed;path=/calculus.ml;lines=169-843}{\ExternalLink}, a substitution lemma~\href{https://archive.softwareheritage.org/swh:1:cnt:d42782345008434be8b43de7feb014732f8821ea;origin=https://github.com/HOLMS-lib/HOLMS;visit=swh:1:snp:02b01d860dd78a254b705f5620846b4a5c0e0491;anchor=swh:1:rev:c15874a079d2ae24f420656d28c172e0c7258bed;path=/calculus.ml;lines=883-887}{\ExternalLink} and a deduction theorem~\href{https://archive.softwareheritage.org/swh:1:cnt:d42782345008434be8b43de7feb014732f8821ea;origin=https://github.com/HOLMS-lib/HOLMS;visit=swh:1:snp:02b01d860dd78a254b705f5620846b4a5c0e0491;anchor=swh:1:rev:c15874a079d2ae24f420656d28c172e0c7258bed;path=/calculus.ml;lines=1015-1028}{\ExternalLink}.

Moreover, this version of HOLMS introduced a partial parametrisation of the proof of adequacy theorems. Below, we list some of the concepts that were fully \textit{parametrised} in that release. However, the source code also contained numerous ad \textit{hoc parametric} notions, such as \verb|GEN_STANDARD_MODEL|~\href{https://archive.softwareheritage.org/swh:1:cnt:f12ff7e3946938d5fa88228be9b5686826d595a1;origin=https://github.com/HOLMS-lib/HOLMS;visit=swh:1:snp:02b01d860dd78a254b705f5620846b4a5c0e0491;anchor=swh:1:rev:c15874a079d2ae24f420656d28c172e0c7258bed;path=/gen_completness.ml;lines=48-51}{\ExternalLink} and \verb|GEN_STANDARD_REL|~\href{https://archive.softwareheritage.org/swh:1:cnt:f12ff7e3946938d5fa88228be9b5686826d595a1;origin=https://github.com/HOLMS-lib/HOLMS;visit=swh:1:snp:02b01d860dd78a254b705f5620846b4a5c0e0491;anchor=swh:1:rev:c15874a079d2ae24f420656d28c172e0c7258bed;path=/gen_completness.ml;lines=222-226}{\ExternalLink}, which were not fully parameterised but still contributed to facilitating the completeness proofs.

\begin{itemize}
    \item The \textit{parametric} concepts of $\mathcal{S}$-consistency and $\mathcal{S},A$-consistency--which are fundamental to prove completeness--were embedded in HOL Light, together with some related important theorems (formalising Lemmata~\ref{lem:consistency_lemma}-~\ref{lem:max_cons});
    \begin{lstlisting}[caption={HOLMS Parametric Consistency and Maximal Consistency}]
 #let CONSISTENT = new_definition
 `CONSISTENT S (l:form list) <=> ~[S . {} |~ Not (CONJLIST l)]';;   

 #let MAXIMAL_CONSISTENT = new_definition
 `MAXIMAL_CONSISTENT S p X <=>
  CONSISTENT S X /\ NOREPETITION X /\
  (!q. q SUBFORMULA p ==> MEM q X \/ MEM (Not q) X)`;;
    \end{lstlisting}
    \vspace{-7mm}
    \item The set of all relational frames (\verb|FRAME|) is defined in HOLMS;
    \begin{lstlisting}[caption={HOLMS Kripke Frames}]
 #let FRAME_DEF = new_definition
 `FRAME = {(W:W->bool,R:W->W->bool) | ~(W = {}) /\ (!x y:W. R x y ==> x IN W /\ y IN W)}';;
    \end{lstlisting}
    \vspace{-7mm}
    \item A parametric version of \textit{general truth lemma}--Lemma~\ref{lem:truth} , i.e. one fundamental step of completeness proof--is proved within HOL Light~\href{https://archive.softwareheritage.org/swh:1:cnt:f12ff7e3946938d5fa88228be9b5686826d595a1;origin=https://github.com/HOLMS-lib/HOLMS;visit=swh:1:snp:02b01d860dd78a254b705f5620846b4a5c0e0491;anchor=swh:1:rev:c15874a079d2ae24f420656d28c172e0c7258bed;path=/gen_completness.ml;lines=53-217}{\ExternalLink};
    \begin{lstlisting}[caption={HOLMS Parametric Truth Lemma}]
 GEN_TRUTH_LEMMA 
 |-!P S W R p V q.
  ~ [S . {} |~ p] /\
  GEN_STANDARD_MODEL P S p (W,R) V /\
  q SUBFORMULA p
    ==> !w. w IN W ==> (MEM q w <=> holds (W,R) V q w)
    \end{lstlisting}
\end{itemize}

\subsection{The Current Version of HOLMS}\label{sub:current-HOLMS}

In the present version of the library, we further refine the identification of the truly \emph{polymorphic} aspects of our previous implementation of HOLMS. This enables us to clearly distinguish between those parts of our code that are \textit{parametric polymorphic} and those that are \textit{ad-hoc polymorphic}. 
Adopting the classical terminology from~\cite{strachey2000fundamental}, we refer to:
\begin{itemize}
    \item \textit{Parametric Polymorphism} for code that remains fully independent from the concrete instantiations of the parameters;
    \item \emph{Ad-hoc Polymorphism} for code with details more closely tied to the particular logic under investigation.
\end{itemize}

The current version of the code has reached the result of \emph{parameterising} as much as possible the completeness proof. 
To achieve this,  key \textit{parametric} concepts from modal correspondence theory have been embedded in HOLMS, together with the \textit{ad hoc polymophic} Correspondence Lemma~\ref{lem:Correspondence}.
In particular, our work leverages known results about correspondent frames
(summarised in Table~\ref{tab:char-appr}) to instantiate the \textit{parametric} Countermodel Lemma~\ref{lem:parametric_countermodel_lemma}. 
Thanks to this refinement, which enables a fully modular and uniform proof of adequacy theorems for normal modal systems, two additional object theories--$\mathbb{T}$ and $\mathbb{K}4$--have been seamlessly integrated into HOLMS.

Our library, hence, proves the \emph{adequacy} theorems for the minimal system $\mathbb{K}$ and its extensions $\mathbb{\ T, \ K}4$, as well as a reshaping of the original proof for $\mathbb{GL}$ to reduce it to a direct application of the general result.

Moreover, it offers a \emph{principled decision procedure}--and, by extension, automated theorem provers and countermodel constructors--for $\mathbb{K}$ and $\mathbb{T}$ based on the automated proof-search in the associated labeled sequent calculus. This procedure is implemented trough HOL Light tactics similar to those developed for $\mathbb{GL}$ in~\cite{DBLP:journals/jar/MaggesiB23}. 
The corresponding implementation for $\mathbb{K}4$ provides a \textit{semidecision} procedure. These implementations serve as evidence for the feasibility of our new approach to automated modal reasoning in general-purpose proof assistants, as illustrated in Figure~\ref{fig:HOLMS-schema}.

\begin{remark}
As argued in~\cite[\S 1]{maggesi_et_al:LIPIcs.ITP.2021.26}, this formalisation leaves the original HOL Light tools unmodified, ensuring that HOLMS is \textit{foundationally safe}. By relying solely on the original formal infrastructure, our library can be easily \textit{ported to any HOL-based proof assistant} equipped with a similar automation toolbox.
\end{remark}

\subsubsection{HOLMS Source Code}
In its current version, HOLMS consists of just under six thousand lines of OCaml code, organised into a \textbf{git repository}~\href{https://github.com/HOLMS-lib/HOLMS}{\ExternalLink} composed by seventeen files \verb|.ml|.

This second released version of the source code, introduced in the present thesis and submitted to the ``FSCD 2025''  conference, was archived in February 2025 on Software Heritage ~\href{https://archive.softwareheritage.org/swh:1:dir:6dbea8a51b27f3b4de7808eb0bc2fa82706962c4;origin=https://github.com/HOLMS-lib/HOLMS;visit=swh:1:snp:2c0efd349323ed6f8067581cf1f6d95816e49841;anchor=swh:1:rev:1caf3be141c6f646f78695c0eb528ce3b753079a}{\ExternalLink}.

Our \textbf{project web page}~\cite{hol-light-webpage}~\href{https://holms-lib.github.io/}{\ExternalLink} gives a brief overview of HOLMS library, its evolution and related publications.
The \textbf{readme file}~\href{https://holms-lib.github.io/}{\ExternalLink}, instead, highlights the specificities of this second version of HOLMS, and provides a usage guide for our library at its current status.

\subsubsection{Mapping HOLMS repository}

Below, we briefly present a sort of map of the seventeen \verb|.ml| files contained in HOLMS repository and of their roles in the implementation of modal logics.

\medskip

\begin{mydefinition}
\begin{itemize}
    \item[I.] \textbf{Top Level File} \verb|make.ml|~\href{https://archive.softwareheritage.org/swh:1:cnt:42773492356633a84a16bc55762c54ae8bf1f968;origin=https://github.com/HOLMS-lib/HOLMS;visit=swh:1:snp:2c0efd349323ed6f8067581cf1f6d95816e49841;anchor=swh:1:rev:1caf3be141c6f646f78695c0eb528ce3b753079a;path=/make.ml}{\ExternalLink} \\
    This script loads (\verb|needs|)  two \textbf{libraries} necessary to treat the cardinality of the domains and the properties of the accessibility relationships: \begin{itemize}
        \item \verb|"Library/card.ml"|~\href{https://archive.softwareheritage.org/swh:1:cnt:42773492356633a84a16bc55762c54ae8bf1f968;origin=https://github.com/HOLMS-lib/HOLMS;visit=swh:1:snp:2c0efd349323ed6f8067581cf1f6d95816e49841;anchor=swh:1:rev:1caf3be141c6f646f78695c0eb528ce3b753079a;path=/make.ml;lines=15}{\ExternalLink}: this library defines concepts and proves theorems related to \textit{cardinality}, which is the mathematical concept of measuring the size of sets.
        \item \verb|"Library/rstc.ml"|~\href{https://archive.softwareheritage.org/swh:1:cnt:42773492356633a84a16bc55762c54ae8bf1f968;origin=https://github.com/HOLMS-lib/HOLMS;visit=swh:1:snp:2c0efd349323ed6f8067581cf1f6d95816e49841;anchor=swh:1:rev:1caf3be141c6f646f78695c0eb528ce3b753079a;path=/make.ml;lines=16}{\ExternalLink}: this library defines concepts and proves theorems related to \textit{reflexive}, \textit{symmetric} and \textit{transitive} closures.
    \end{itemize}
    
    Thereafter, all the other OCaml files are loaded (\verb|loadt|), except from \verb|tests.ml|. 
    
    We do not load this file because it contains some examples of decision procedures, which are not essential for the library itself but are useful for users who want to analyse how these procedures work. In general, test files in a library serve as a way to verify the correctness of implemented functions and provide usage examples. They are generally kept separate from the main code to avoid unnecessary overhead during execution~\cite[\S 17]{Madhavapeddy_Minsky_2022}.
 \medskip
 
    \item[II.] \textbf{Defining Syntax, Semantics and a Calculus for normal modal logic}
    \begin{itemize}
        \item[a.] \verb|misc.ml|~\href{https://archive.softwareheritage.org/swh:1:cnt:d100c0780569a67ceb5f1802e928df774c6a6107;origin=https://github.com/HOLMS-lib/HOLMS;visit=swh:1:snp:2c0efd349323ed6f8067581cf1f6d95816e49841;anchor=swh:1:rev:1caf3be141c6f646f78695c0eb528ce3b753079a;path=/misc.ml}{\ExternalLink}: Contains \textit{miscellaneous} utilities. \verb|misc|-files typically collect auxiliary functions, general utilities, and reusable code. This file was already present in the original git library \verb|GL|~\href{https://github.com/jrh13/hol-light/blob/master/GL/misc.ml}{\ExternalLink}.
        
        \item[b.] \verb|modal.ml|~\href{https://archive.softwareheritage.org/swh:1:cnt:4a011f69b1180019be5d03c6d5f1cec4550055e8;origin=https://github.com/HOLMS-lib/HOLMS;visit=swh:1:snp:2c0efd349323ed6f8067581cf1f6d95816e49841;anchor=swh:1:rev:1caf3be141c6f646f78695c0eb528ce3b753079a;path=/modal.ml}{\ExternalLink}: Provides a \textbf{deep embedding} of the \textbf{syntax} of propositional modal logic and of \textbf{relational semantics}.

        \item[c.] \verb|calculus.ml|~\href{https://archive.softwareheritage.org/swh:1:cnt:d42782345008434be8b43de7feb014732f8821ea;origin=https://github.com/HOLMS-lib/HOLMS;visit=swh:1:snp:2c0efd349323ed6f8067581cf1f6d95816e49841;anchor=swh:1:rev:1caf3be141c6f646f78695c0eb528ce3b753079a;path=/calculus.ml}{\ExternalLink}: Implements a parametric \textbf{axiomatic calculus} for \textbf{normal} modal logics.
    \end{itemize}
\medskip
    
    \item[III.] \textbf{Introducing maximal consistent lists to prove completeness}
    \begin{itemize}
        \item[a.]  \verb|conjlist.ml|~\href{https://archive.softwareheritage.org/swh:1:cnt:3d890d9e1ac7d24532b2645b6fb9ad5b160c7ba0;origin=https://github.com/HOLMS-lib/HOLMS;visit=swh:1:snp:2c0efd349323ed6f8067581cf1f6d95816e49841;anchor=swh:1:rev:1caf3be141c6f646f78695c0eb528ce3b753079a;path=/conjlist.ml}{\ExternalLink}:  
        \begin{itemize}
            \item Defines conjunctions on lists of formulas (\verb|CONJLIST|~\href{https://archive.softwareheritage.org/swh:1:cnt:3d890d9e1ac7d24532b2645b6fb9ad5b160c7ba0;origin=https://github.com/HOLMS-lib/HOLMS;visit=swh:1:snp:2c0efd349323ed6f8067581cf1f6d95816e49841;anchor=swh:1:rev:1caf3be141c6f646f78695c0eb528ce3b753079a;path=/conjlist.ml;lines=11-13}{\ExternalLink});
            \item Proves some related parametric results.
        \end{itemize}
        
        \item[b.] \verb|consistent.ml|~\href{https://archive.softwareheritage.org/swh:1:cnt:e261ca08330d5ef66376347407c27fcece1e9f2e;origin=https://github.com/HOLMS-lib/HOLMS;visit=swh:1:snp:2c0efd349323ed6f8067581cf1f6d95816e49841;anchor=swh:1:rev:1caf3be141c6f646f78695c0eb528ce3b753079a;path=/consistent.ml}{\ExternalLink}:
        \begin{itemize}
            \item Defines parametric predicates $\mathcal{S}$-\textbf{consistent} (\verb|CONSISTENT S|~\href{https://archive.softwareheritage.org/swh:1:cnt:e261ca08330d5ef66376347407c27fcece1e9f2e;origin=https://github.com/HOLMS-lib/HOLMS;visit=swh:1:snp:2c0efd349323ed6f8067581cf1f6d95816e49841;anchor=swh:1:rev:1caf3be141c6f646f78695c0eb528ce3b753079a;path=/consistent.ml;lines=11-12}{\ExternalLink}) and $\mathcal{S},A$-\textbf{maximal consistent} (\verb|MAXIMAL_CONSISTENT S A|~\href{https://archive.softwareheritage.org/swh:1:cnt:e261ca08330d5ef66376347407c27fcece1e9f2e;origin=https://github.com/HOLMS-lib/HOLMS;visit=swh:1:snp:2c0efd349323ed6f8067581cf1f6d95816e49841;anchor=swh:1:rev:1caf3be141c6f646f78695c0eb528ce3b753079a;path=/consistent.ml;lines=82-85}{\ExternalLink});
            \item Proves some lemmata necessary for the completeness theorem.
        \end{itemize}
    \end{itemize}
\medskip    
     \item[IV.] \textbf{Proving Adequacy Theorems}
     \begin{itemize}
        \item[(a)] Parametric file \verb|gen_completeness.ml|~\href{https://archive.softwareheritage.org/swh:1:cnt:d364de3158b6405c920115aa0c801af9af49e302;origin=https://github.com/HOLMS-lib/HOLMS;visit=swh:1:snp:2c0efd349323ed6f8067581cf1f6d95816e49841;anchor=swh:1:rev:1caf3be141c6f646f78695c0eb528ce3b753079a;path=/gen_completeness.ml}{\ExternalLink}:
        \begin{itemize}
            \item Defines the concept of \textbf{relational frames} (\verb|FRAME|~\href{https://archive.softwareheritage.org/swh:1:cnt:d364de3158b6405c920115aa0c801af9af49e302;origin=https://github.com/HOLMS-lib/HOLMS;visit=swh:1:snp:2c0efd349323ed6f8067581cf1f6d95816e49841;anchor=swh:1:rev:1caf3be141c6f646f78695c0eb528ce3b753079a;path=/gen_completeness.ml;lines=15-16}{\ExternalLink}); 
            \item Defines fundamental concepts of \textbf{correspondence theory }(\verb|CHAR S|~\href{https://archive.softwareheritage.org/swh:1:cnt:d364de3158b6405c920115aa0c801af9af49e302;origin=https://github.com/HOLMS-lib/HOLMS;visit=swh:1:snp:2c0efd349323ed6f8067581cf1f6d95816e49841;anchor=swh:1:rev:1caf3be141c6f646f78695c0eb528ce3b753079a;path=/gen_completeness.ml;lines=44-47}{\ExternalLink} and \verb|APPR S|~\href{https://archive.softwareheritage.org/swh:1:cnt:d364de3158b6405c920115aa0c801af9af49e302;origin=https://github.com/HOLMS-lib/HOLMS;visit=swh:1:snp:2c0efd349323ed6f8067581cf1f6d95816e49841;anchor=swh:1:rev:1caf3be141c6f646f78695c0eb528ce3b753079a;path=/gen_completeness.ml;lines=111-114}{\ExternalLink}) and proves their characterisation  (\verb|CHAR_CAR|~\href{https://archive.softwareheritage.org/swh:1:cnt:d364de3158b6405c920115aa0c801af9af49e302;origin=https://github.com/HOLMS-lib/HOLMS;visit=swh:1:snp:2c0efd349323ed6f8067581cf1f6d95816e49841;anchor=swh:1:rev:1caf3be141c6f646f78695c0eb528ce3b753079a;path=/gen_completeness.ml;lines=82-105}{\ExternalLink} and \verb|APPR_CAR|~\href{https://archive.softwareheritage.org/swh:1:cnt:d364de3158b6405c920115aa0c801af9af49e302;origin=https://github.com/HOLMS-lib/HOLMS;visit=swh:1:snp:2c0efd349323ed6f8067581cf1f6d95816e49841;anchor=swh:1:rev:1caf3be141c6f646f78695c0eb528ce3b753079a;path=/gen_completeness.ml;lines=122-128}{\ExternalLink});
            \item Proves a parametric theorem of\textbf{ soundness w.r.t. characteristic frames} that hods for every normal system (\verb|GEN_CHAR_VALID|~\href{https://archive.softwareheritage.org/swh:1:cnt:d364de3158b6405c920115aa0c801af9af49e302;origin=https://github.com/HOLMS-lib/HOLMS;visit=swh:1:snp:2c0efd349323ed6f8067581cf1f6d95816e49841;anchor=swh:1:rev:1caf3be141c6f646f78695c0eb528ce3b753079a;path=/gen_completeness.ml;lines=71-80}{\ExternalLink});
            \item Formalises the \textbf{parametric} part of the proof of \textbf{completeness} sketched in~\ref{sub:parametrised-completeness} (\verb|GEN_STANDARD_MODEL|~\href{https://archive.softwareheritage.org/swh:1:cnt:d364de3158b6405c920115aa0c801af9af49e302;origin=https://github.com/HOLMS-lib/HOLMS;visit=swh:1:snp:2c0efd349323ed6f8067581cf1f6d95816e49841;anchor=swh:1:rev:1caf3be141c6f646f78695c0eb528ce3b753079a;path=/gen_completeness.ml;lines=180-183}{\ExternalLink}, \verb|GEN_TRUTH_LEMMA|~\href{https://archive.softwareheritage.org/swh:1:cnt:d364de3158b6405c920115aa0c801af9af49e302;origin=https://github.com/HOLMS-lib/HOLMS;visit=swh:1:snp:2c0efd349323ed6f8067581cf1f6d95816e49841;anchor=swh:1:rev:1caf3be141c6f646f78695c0eb528ce3b753079a;path=/gen_completeness.ml;lines=189-350}{\ExternalLink} and \verb|GEN_COUNTER| \verb|MODEL_ALT|~\href{https://archive.softwareheritage.org/swh:1:cnt:d364de3158b6405c920115aa0c801af9af49e302;origin=https://github.com/HOLMS-lib/HOLMS;visit=swh:1:snp:2c0efd349323ed6f8067581cf1f6d95816e49841;anchor=swh:1:rev:1caf3be141c6f646f78695c0eb528ce3b753079a;path=/gen_completeness.ml;lines=574-593}{\ExternalLink})
            \item Ease the proof of the unparametrisable part of completeness in~\ref{sub:unparametrisable-completeness};
            \item Generalises completeness on domains with a generic type via \textit{bisimiluation} (\verb|GEN_LEMMA_FOR_GEN_COMPLETENESS|~\href{https://archive.softwareheritage.org/swh:1:cnt:d364de3158b6405c920115aa0c801af9af49e302;origin=https://github.com/HOLMS-lib/HOLMS;visit=swh:1:snp:2c0efd349323ed6f8067581cf1f6d95816e49841;anchor=swh:1:rev:1caf3be141c6f646f78695c0eb528ce3b753079a;path=/gen_completeness.ml;lines=599-707}{\ExternalLink}).
        \end{itemize}
 \smallskip               
        
        \item[(b)] Specific completeness files 
        \vspace{-4mm}
        \begin{multicols}{2}
        \begin{enumerate}
            \item \verb|k_completeness.ml|~\href{https://archive.softwareheritage.org/swh:1:cnt:ae230138ff15476c8dab9e32606bceca7168285b;origin=https://github.com/HOLMS-lib/HOLMS;visit=swh:1:snp:2c0efd349323ed6f8067581cf1f6d95816e49841;anchor=swh:1:rev:1caf3be141c6f646f78695c0eb528ce3b753079a;path=/k_completeness.ml}{\ExternalLink};
            \item \verb|t_completeness.ml|\href{https://archive.softwareheritage.org/swh:1:cnt:1352cb294724b2251b9346657432700ecbed10d2;origin=https://github.com/HOLMS-lib/HOLMS;visit=swh:1:snp:2c0efd349323ed6f8067581cf1f6d95816e49841;anchor=swh:1:rev:1caf3be141c6f646f78695c0eb528ce3b753079a;path=/t_completeness.ml}{\ExternalLink};
            \item \verb|k4_completeness.ml|~\href{https://archive.softwareheritage.org/swh:1:cnt:32869a1df338d91c4cd4a0cb9e73fb4f1be29991;origin=https://github.com/HOLMS-lib/HOLMS;visit=swh:1:snp:2c0efd349323ed6f8067581cf1f6d95816e49841;anchor=swh:1:rev:1caf3be141c6f646f78695c0eb528ce3b753079a;path=/k4_completeness.ml}{\ExternalLink};
            \item \verb|gl_completeness.ml|~\href{https://archive.softwareheritage.org/swh:1:cnt:52c9de12454731a9a291b06bc750c0d2d14e1fb4;origin=https://github.com/HOLMS-lib/HOLMS;visit=swh:1:snp:2c0efd349323ed6f8067581cf1f6d95816e49841;anchor=swh:1:rev:1caf3be141c6f646f78695c0eb528ce3b753079a;path=/gl_completeness.ml}{\ExternalLink}.
        \end{enumerate}
        \end{multicols}
        For each logic \verb|*| 
        $\in \{ \mathbb{K; \ T; \ K}4; \ \mathbb{GL} \}$, these files provide the:
        \begin{itemize}
            \item Definition of the \textbf{specific axiom schemata} for \verb|*|;
            \item \textbf{Identification} of the set of frames \textbf{characteristic} to \verb|*|;
            \item Theorems of \textbf{soundness} and \textbf{consistency} that follow as corollaries;
            \item \textbf{Identification} of the \textbf{standard accessibility relation} $Rel_*
            ^A$ (\verb|*_STANDARD_REL|)
            \item Two lemmas stating the unparametrisable part of completeness proof sketched in~\ref{sub:unparametrisable-completeness} (\verb|*_MAXIMAL_CONSISTENT_LEMMA| and \verb|*_ACCESSI| \verb|BILITY_LEMMA|)
            \item A \textbf{completeness} theorem for \verb|*| (\verb|*_COMPLETENESS_THM|);
            \item A \textbf{naive decision procedure} for \verb|*| (\verb|*_RULE| and \verb|*_TAC|);
            
        \end{itemize}
        
    \end{itemize}\

     \item[V.] \textbf{Automated theorem proving}
     \begin{itemize}
        \item[(a)] Parametric file \verb|gen_decid.ml|\href{https://archive.softwareheritage.org/swh:1:cnt:dbf493c0cf78f9cb975db418aad12fdd3f37cfb8;origin=https://github.com/HOLMS-lib/HOLMS;visit=swh:1:snp:2c0efd349323ed6f8067581cf1f6d95816e49841;anchor=swh:1:rev:1caf3be141c6f646f78695c0eb528ce3b753079a;path=/gen_decid.ml}{\ExternalLink}: Ease the implementation of the theorem prover;
        \medskip
        
        \item[(b)] Specific files for decision procedures
        \vspace{-4mm}
        \begin{multicols}{2}
        \begin{enumerate}
            \item \verb|k_decid.ml|~\href{https://archive.softwareheritage.org/swh:1:cnt:7ec6e34a7d81fb00cbf22dcbf9b09c66538ad195;origin=https://github.com/HOLMS-lib/HOLMS;visit=swh:1:snp:2c0efd349323ed6f8067581cf1f6d95816e49841;anchor=swh:1:rev:1caf3be141c6f646f78695c0eb528ce3b753079a;path=/k_decid.ml}{\ExternalLink};
            \item \verb|t_decid.ml|\href{https://archive.softwareheritage.org/swh:1:cnt:62ed022ba5b7ebeb6fd3f5b4cea979dceaaaa630;origin=https://github.com/HOLMS-lib/HOLMS;visit=swh:1:snp:2c0efd349323ed6f8067581cf1f6d95816e49841;anchor=swh:1:rev:1caf3be141c6f646f78695c0eb528ce3b753079a;path=/t_decid.ml}{\ExternalLink};
            \item \verb|k4_decid.ml|~\href{https://archive.softwareheritage.org/swh:1:cnt:97df598aa4202fa85e52877bda28aef5ae4952d0;origin=https://github.com/HOLMS-lib/HOLMS;visit=swh:1:snp:2c0efd349323ed6f8067581cf1f6d95816e49841;anchor=swh:1:rev:1caf3be141c6f646f78695c0eb528ce3b753079a;path=/k4_decid.ml}{\ExternalLink};
            \item \verb|gl_decid.ml|~\href{https://archive.softwareheritage.org/swh:1:cnt:83ff158d2d2a7208c9d2b7d8a686a5289dc2d024;origin=https://github.com/HOLMS-lib/HOLMS;visit=swh:1:snp:2c0efd349323ed6f8067581cf1f6d95816e49841;anchor=swh:1:rev:1caf3be141c6f646f78695c0eb528ce3b753079a;path=/gl_decid.ml}{\ExternalLink}.
        \end{enumerate}
        \end{multicols}
        \begin{itemize}
            \item Implement an \textbf{automated theorem prover} (\verb|HOLMS_RULE|). It proves, for each logic \verb|*| $\in \{ \mathbb{K; \ T; \ K}4; \ \mathbb{GL} \}$, if a given modal formula is a theorem of \verb|*| ;
            
             \item Implement an \textbf{automated countermodel constructor} (\verb|HOLMS_| \verb|BUILD_COUNTERMODEL|). For each logic \verb|*| $\in \{ \mathbb{K; \ T; \ K}4; \ \mathbb{GL} \}$ , if a given formula is not a theorem of \verb|*|, it build up a countermodel;
        \end{itemize}
        \medskip
        
        \item[(c.)] Test file \verb|tests.ml|~\href{https://archive.softwareheritage.org/swh:1:cnt:d7a8f32931f96fb31ac253e482f0302778163a20;origin=https://github.com/HOLMS-lib/HOLMS;visit=swh:1:snp:2c0efd349323ed6f8067581cf1f6d95816e49841;anchor=swh:1:rev:1caf3be141c6f646f78695c0eb528ce3b753079a;path=/tests.ml}{\ExternalLink}: Provides some examples of the use of the automated theorem prover. 

    \end{itemize}

\end{itemize}
\end{mydefinition}

\chapter{Growing HOLMS}\label{chap:4}

In Chapter~\ref{sub:HOLMS-project}, we have introduced the main features of our \textbf{HOL}-Light Library for \textbf{M}odal \textbf{S}ystem, together with the brief history of its development and a description of the current git repository~\href{https://github.com/HOLMS-lib/HOLMS}{\ExternalLink}.

In this final chapter, we present a comprehensive account of the HOLMS modular implementation of the normal modal logics $\mathbb{K}$, $\mathbb{T}$, $\mathbb{K}4$, and $\mathbb{GL}$—within the HOL Light framework. It serves as the \emph{formal counterpart}, within the theorem prover, to the exposition provided in Chapter~\ref{chap:1}.

We begin by formalising both the \textbf{syntax} and \textbf{semantics} of these modal systems, by adopting Harrison’s deep embedding method~\cite[\S 20]{harrisontutorial}. 

Next, we provide a formalisation of the ternary \textbf{deducibility relation} for normal modal logics ($\mathcal{S.H} \vdash A$), and we establish a correspondent \textbf{deduction lemma}.

We then turn to the core elements of \textbf{correspondence theory}. In this section, we introduce and prove key correspondence lemmata that bridge modal axioms with their characteristic property of semantic frame, thereby identifying the appropriate frames for each system under analysis.

The most extensive section of the chapter is devoted to the proofs of the \textbf{adequacy theorems}. Here, we develop the proof strategy outlined in Section~\ref{sub:completenes-proof-sketch} using the HOL Light proof assistant. These theorems validate that the axiomatisation accurately captures the intended relational semantics of the modal systems.

Finally, we describe two decision procedures for the modal logics. First, we introduce a \textbf{naive decision procedure} derived directly from the completeness theorem. Although straightforward, this procedure is highly inefficient. We then present a more sophisticated approach based on labelled sequent calculi, which yields a \textbf{principled decision procedure}\footnote{Here \textit{principled} refers to a decision procedure based on a well-defined and structured approach, in this case \textit{proof search in labelled sequent calculi}, outlined in Section~\ref{sec:labelled-sequent-calculi}. For $\mathbb{K}4$ a principled semi-decision procedure is currently available in the repository, but it can be improved to a decision procedure, as the method to achieve this is already known.} that is both correct and more computationally effective.
This chapter thus encapsulates our rigorous exploration of modal logic through both \emph{deep} and \emph{shallow} embedding techniques, offering detailed insights into decision procedures, correspondence theory, and the adequacy of our formal framework within HOL Light.

\section{Deep Embedding of Syntax and Semantics}

HOLMS has been developed as a library for formal reasoning \emph{with and about} modal logics.
Adopting a metalinguistic perspective, we observe that we employ HOL Light as the formal metatheory for studying modal logics as object theories.
 
 Hence, we need to ``teach'' HOL Light how to identify its object theories, beginning with its object language (modal syntax) and the interpretation of this language (Kripke's semantics).  Thus, the basics of the library consist of the definitions of the modal language and its interpretation in relational structures.

In HOLMS, we follow the approach provided by Harrison in HOL Light tutorial~\cite[\S 20]{harrisontutorial}: our file \verb|modal.ml|~\href{https://archive.softwareheritage.org/swh:1:cnt:4a011f69b1180019be5d03c6d5f1cec4550055e8;origin=https://github.com/HOLMS-lib/HOLMS;visit=swh:1:snp:2c0efd349323ed6f8067581cf1f6d95816e49841;anchor=swh:1:rev:1caf3be141c6f646f78695c0eb528ce3b753079a;path=/modal.ml}{\ExternalLink} formalises the syntax of propositional modal logic together with the basics of relational semantics. 
Adopting this method provides a significant foundational advantage. Indeed, the \emph{real} syntactic objects are neatly distinguished from their semantic counterpart, and from HOL Light's logical operators as well.

\subsection{Modal Syntax}

We should note that the \textit{abstract} counterpart of the modal syntax embedded in HOLMS, is an (apparent) extension of the modal syntax presented in~\ref{sec:informal-syntax}:

\begin{center}
    $\mathcal{L'}_{\Box} = \{ \bot; \ \top; \ \neg; \ \land; \ \lor; \ \to; \ \leftrightarrow; \ \Box; \ \Diamond \}$ and $ \Phi = \{a_0, a_1, ...\} $
    $A\in\mathbf{Form'}_{\Box}\,\Coloneqq\,\bot \ | \ \top \ | \ a \ | \ \neg A \ | \ A \land B \ | \ A \lor B \ | \ A \to B \ | \ A \leftrightarrow B \ | \ \Box A.$
\end{center}

To \emph{deeply embed} this syntax in HOL Light, means to represent this grammar using HOL Light's own type and term constructors.

When introducing new object logics into HOL Light, special care must be taken to\textit{ avoid overloading} existing propositional connectives. To achieve this, we introduce a concrete syntax with infix and prefix symbols of type \verb|:string| for representing modal formulas within HOL Light. We also set their parsing status (infix or prefix) and assign the appropriate parsing precedences~\href{https://archive.softwareheritage.org/swh:1:cnt:4a011f69b1180019be5d03c6d5f1cec4550055e8;origin=https://github.com/HOLMS-lib/HOLMS;visit=swh:1:snp:2c0efd349323ed6f8067581cf1f6d95816e49841;anchor=swh:1:rev:1caf3be141c6f646f78695c0eb528ce3b753079a;path=/modal.ml;lines=18-23}{\ExternalLink}.

\begin{lstlisting}[caption={Parsing status and precedences of the object logics connectives.}]
# parse_as_infix("&&",(16,"right"));;
# parse_as_infix("||",(15,"right"));;
# parse_as_infix("-->",(14,"right"));;
# parse_as_infix("<->",(13,"right"));;
# parse_as_prefix "Not";;
# parse_as_prefix "Box";;
\end{lstlisting}

\begin{remark}
   Specifically to clarify the distinction between the logic operators of HOL Light from those of our modal language, we have introduced the glossary reported in Figure~\ref{fig:glossary}.
\end{remark}

\medskip

At this stage, we can extend HOL Light system by the inductive definition of the predicate of being a modal formula as an HOL Light inductive type (\verb|:form|~\href{https://archive.softwareheritage.org/swh:1:cnt:4a011f69b1180019be5d03c6d5f1cec4550055e8;origin=https://github.com/HOLMS-lib/HOLMS;visit=swh:1:snp:2c0efd349323ed6f8067581cf1f6d95816e49841;anchor=swh:1:rev:1caf3be141c6f646f78695c0eb528ce3b753079a;path=/modal.ml;lines=25-34}{\ExternalLink}).
\newpage
\begin{lstlisting}[caption={Inductive type of modal formulas}]
#let form_INDUCT,form_RECURSION = define_type
  "form = False
        | True
        | Atom string
        | Not form
        | && form form
        | || form form
        | --> form form
        | <-> form form
        | Box form";;
\end{lstlisting}    

Moreover, we formalise the two predicates of \textit{subformula} (\verb|SUBFORMULA|~\href{https://archive.softwareheritage.org/swh:1:cnt:4a011f69b1180019be5d03c6d5f1cec4550055e8;origin=https://github.com/HOLMS-lib/HOLMS;visit=swh:1:snp:2c0efd349323ed6f8067581cf1f6d95816e49841;anchor=swh:1:rev:1caf3be141c6f646f78695c0eb528ce3b753079a;path=/modal.ml;lines=91-124}{\ExternalLink}), and  \textit{subsentence} (\verb|SUBSENTENCE|~\href{https://archive.softwareheritage.org/swh:1:cnt:e261ca08330d5ef66376347407c27fcece1e9f2e;origin=https://github.com/HOLMS-lib/HOLMS;visit=swh:1:snp:2c0efd349323ed6f8067581cf1f6d95816e49841;anchor=swh:1:rev:1caf3be141c6f646f78695c0eb528ce3b753079a;path=/consistent.ml;lines=141-158}{\ExternalLink}). We also prove with the aid of the proof assistant Fact~\ref{fct:finite_formulas}, which states that there is a finite number of subformulas of a given formula (\verb|FINITE_SUBFORMULA|~\href{https://archive.softwareheritage.org/swh:1:cnt:4a011f69b1180019be5d03c6d5f1cec4550055e8;origin=https://github.com/HOLMS-lib/HOLMS;visit=swh:1:snp:2c0efd349323ed6f8067581cf1f6d95816e49841;anchor=swh:1:rev:1caf3be141c6f646f78695c0eb528ce3b753079a;path=/modal.ml;lines=138-148}{\ExternalLink}) and that there is a finite number of sets made by subsentences of a given modal formula (\verb|FINITE_SUBSET_SUBFORMULAS_LEMMA |~\href{https://archive.softwareheritage.org/swh:1:cnt:4a011f69b1180019be5d03c6d5f1cec4550055e8;origin=https://github.com/HOLMS-lib/HOLMS;visit=swh:1:snp:2c0efd349323ed6f8067581cf1f6d95816e49841;anchor=swh:1:rev:1caf3be141c6f646f78695c0eb528ce3b753079a;path=/modal.ml;lines=150-157}{\ExternalLink}).

\subsection{Modal Semantics}

\textit{Abstractly}, in Section~\ref{sec:informal-semnatics} we introduced \textit{relational models} and defined the \textit{forcing} relation which evaluates the truth of a modal formula at a given world within a model, as well as the semantic predicate for logical consequence.

As mentioned earlier, HOLMS adopts Harrison’s formalisation of relational semantics through a deep embedding approach. 

This method slightly differs from the classical presentation in Chapter~\ref{chap:1}. 
Specifically, in the file \verb|modal.ml| we first define the forcing relation \verb|holds WR V A|~\href{https://archive.softwareheritage.org/swh:1:cnt:4a011f69b1180019be5d03c6d5f1cec4550055e8;origin=https://github.com/HOLMS-lib/HOLMS;visit=swh:1:snp:2c0efd349323ed6f8067581cf1f6d95816e49841;anchor=swh:1:rev:1caf3be141c6f646f78695c0eb528ce3b753079a;path=/modal.ml;lines=44-59}{\ExternalLink}  without imposing on \verb|WR V| the constraints required for a proper relational model. Subsequently, in \verb|gen_completeness|~\href{https://archive.softwareheritage.org/swh:1:cnt:d364de3158b6405c920115aa0c801af9af49e302;origin=https://github.com/HOLMS-lib/HOLMS;visit=swh:1:snp:2c0efd349323ed6f8067581cf1f6d95816e49841;anchor=swh:1:rev:1caf3be141c6f646f78695c0eb528ce3b753079a;path=/gen_completeness.ml;lines=15-17}{\ExternalLink}--and in the specific files \verb|*_completeness|--we systematically integrate these requirements. The notion of logical consequence, instead, has not been formally defined, as it is not pursued within the library.

\begin{lstlisting}[caption={Implementation in HOLMS of the 
forcing relation \texttt{holds}}]
#let holds =   new_recursive_definition form_RECURSION
  `(holds WR V False (w:W) <=> F) /\
   (holds WR V True w <=> T) /\
   (holds WR V (Atom s) w <=> V s w) /\
   (holds WR V (Not p) w <=> ~(holds WR V p w)) /\
   (holds WR V (p && q) w <=> holds WR V p w /\ holds WR V q w) /\
   (holds WR V (p || q) w <=> holds WR V p w \/ holds WR V q w) /\
   (holds WR V (p --> q) w <=> holds WR V p w ==> holds WR V q w) /\
   (holds WR V (p <-> q) w <=> holds WR V p w <=> holds WR V q w) /\
   (holds WR V (Box p) w <=>
    !w'. w' IN FST WR /\ SND WR w w' ==> holds WR V p w')';;
\end{lstlisting}

We also implement in HOLMS the concepts of validity in a frame and in a class of frames, respectively by the predicates \verb|holds_in|~\href{https://archive.softwareheritage.org/swh:1:cnt:4a011f69b1180019be5d03c6d5f1cec4550055e8;origin=https://github.com/HOLMS-lib/HOLMS;visit=swh:1:snp:2c0efd349323ed6f8067581cf1f6d95816e49841;anchor=swh:1:rev:1caf3be141c6f646f78695c0eb528ce3b753079a;path=/modal.ml;lines=61-62}{\ExternalLink} and \verb|valid|~\href{https://archive.softwareheritage.org/swh:1:cnt:4a011f69b1180019be5d03c6d5f1cec4550055e8;origin=https://github.com/HOLMS-lib/HOLMS;visit=swh:1:snp:2c0efd349323ed6f8067581cf1f6d95816e49841;anchor=swh:1:rev:1caf3be141c6f646f78695c0eb528ce3b753079a;path=/modal.ml;lines=66-67}{\ExternalLink}.
 Similarly to \verb|holds|, in this formalisation we refer to a generic \verb|V:string-> W->bool| (without imposing the condition of being a valuation) and to a generic HOL Light pair \verb|f:((W->bool)#(R:W->W->bool)->bool)| (without imposing conditions of domains and accessibility relation), respectively.    

\begin{lstlisting}[caption={HOLMS implementation of validity in a frame and in a class}]
   #let holds_in = new_definition
  `holds_in (W,R) p <=> !V w:W. w IN W ==> holds (W,R) V p w';;

  #let valid = new_definition
  `L |= p <=> !f:(W->bool)#(W->W->bool). f IN L ==> holds_in f p';;
\end{lstlisting}

\begin{remark}
    In HOLMS listings, lowercase letters, such as \verb|p| and \verb|q|, stands for both propositional atoms and generic modal formulas.
\end{remark}

\section{A Parametric Calculus within HOLMS}

Since we have embedded the syntax of the standard modal language $\mathcal{L'}_{\Box}$, we can formally represent within HOL Light  $\mathcal{S.H} \vdash A$, the proof-theoretic ternary relation of deducibility in a normal modal system starting from a given set of hypotheses introduced in Section~\ref{sub:parametric-calculus}.

As previously remarked, the predicate $\mathcal{S.H} \vdash A$ expresses \textit{derivability in a calculus} $\mathsf{S}$ f\textit{or a normal modal system} $\mathbb{S}$ by conceptualising it as \textit{derivability within the minimal axiomatic system} $\mathsf{K}$, \textit{modularly extended} by instantiating its set of specific axiom schemata $\mathcal{S}$.

Once a preferred axiomatisation for classical propositional logic has been selected, a standard axiomatization of $\mathsf{K}$ is obtained by extending it with the distribution schema: \mbox{$\mathsf{K}:=\Box(A\rightarrow B)\rightarrow(\Box A\rightarrow\Box B)$}.
The set of axioms of $\mathsf{K}$ is then encoded as an inductive predicate in HOL Light, named \verb|KAXIOM|~\href{https://archive.softwareheritage.org/swh:1:cnt:d42782345008434be8b43de7feb014732f8821ea;origin=https://github.com/HOLMS-lib/HOLMS;visit=swh:1:snp:2c0efd349323ed6f8067581cf1f6d95816e49841;anchor=swh:1:rev:1caf3be141c6f646f78695c0eb528ce3b753079a;path=/calculus.ml;lines=57-68}{\ExternalLink}.

\begin{lstlisting}[caption={HOLMS inductive definition of the set of axioms of the minimal K.}]
#let KAXIOM_RULES,KAXIOM_INDUCT,KAXIOM_CASES = 
  new_inductive_definition
  `(!p q. KAXIOM (p --> (q --> p))) /\
   (!p q r. KAXIOM ((p --> q --> r) --> (p --> q) --> (p --> r))) /\
   (!p. KAXIOM (((p --> False) --> False) --> p)) /\
   (!p q. KAXIOM ((p <-> q) --> p --> q)) /\
   (!p q. KAXIOM ((p <-> q) --> q --> p)) /\
   (!p q. KAXIOM ((p --> q) --> (q --> p) --> (p <-> q))) /\
   KAXIOM (True <-> False --> False) /\
   (!p. KAXIOM (Not p <-> p --> False)) /\
   (!p q. KAXIOM (p && q <-> (p --> q --> False) --> False)) /\
   (!p q. KAXIOM (p || q <-> Not(Not p && Not q))) /\
   (!p q. KAXIOM (Box (p --> q) --> Box p --> Box q))';;
\end{lstlisting}

To capture the general notion of derivability, in calculus.ml~\href{https://archive.softwareheritage.org/swh:1:cnt:d42782345008434be8b43de7feb014732f8821ea;origin=https://github.com/HOLMS-lib/HOLMS;visit=swh:1:snp:2c0efd349323ed6f8067581cf1f6d95816e49841;anchor=swh:1:rev:1caf3be141c6f646f78695c0eb528ce3b753079a;path=/calculus.ml;lines=74-80} we finally define a parametric predicate \verb+[ S . H |~ A ]+. This approach avoids code duplication in the formalisation of normal modal calculi: instead of requiring separate definitions for each logic covered by HOLMS--each similar to the GL-provability predicate \verb+|-- A+ in~\cite[\S 3]{DBLP:journals/jar/MaggesiB23}, we use a single, \textit{parameterised} definition.
\newpage
\begin{lstlisting}[caption={HOLMS inductive definition of the parametric deducibility predicate.}]
#let MODPROVES_RULES,MODPROVES_INDUCT,MODPROVES_CASES =
  new_inductive_definition
  `(!H p. KAXIOM p ==> [S . H |~ p]) /\
   (!H p. p IN S ==> [S . H |~ p]) /\
   (!H p. p IN H ==> [S . H |~ p]) /\
   (!H p q. [S . H |~ p --> q] /\ [S . H |~ p] ==> [S . H |~ q]) /\
   (!H p. [S . {} |~ p] ==> [S . H |~ Box p])';;
\end{lstlisting}

Notice that this definition, similarly to the \textit{abstract} Definition~\ref{def:deducibility}, does not state that our calculus is closed under substitution rule \texttt{SUB}. 
In \verb|calculus.ml| we hence implement the definition of substitution~(\ref{def:substitution}) by induction on the complexity of a modal formula~\href{https://archive.softwareheritage.org/swh:1:cnt:d42782345008434be8b43de7feb014732f8821ea;origin=https://github.com/HOLMS-lib/HOLMS;visit=swh:1:snp:2c0efd349323ed6f8067581cf1f6d95816e49841;anchor=swh:1:rev:1caf3be141c6f646f78695c0eb528ce3b753079a;path=/calculus.ml;lines848-857}{\ExternalLink}, and, subsequently, we prove that the minimal system $\mathsf{K}$ is closed under substitution and how our calculus behaves under substitution.

\begin{lstlisting}[caption={HOLMS definition of substitution}]
#let SUBST = new_recursive_definition form_RECURSION
 `(!f. SUBST f True = True) /\
   (!f. SUBST f False = False) /\
   (!f a. SUBST f (Atom a) = f a) /\
   (!f p. SUBST f (Not p) = Not (SUBST f p)) /\
   (!f p q. SUBST f (p && q) = SUBST f p && SUBST f q) /\
   (!f p q. SUBST f (p || q) = SUBST f p || SUBST f q) /\
   (!f p q. SUBST f (p --> q) = SUBST f p --> SUBST f q) /\
   (!f p q. SUBST f (p <-> q) = SUBST f p <-> SUBST f q) /\
   (!f p. SUBST f (Box p) = Box (SUBST f p))';;
\end{lstlisting}

\begin{lstlisting}[caption={HOLMS substittion lemmas}]
SUBST_IMP 
|- !S f H p. (!q. q IN S ==> SUBST f q IN S) /\ [S . H |~ p]
             ==> [S . IMAGE (SUBST f) H |~ SUBST f p]
 
SUBSTITUTION_LEMMA 
|- !S f H p q.
     (!q. q IN S ==> SUBST f q IN S) /\ [S . H |~ p <-> q]
     ==> [S . IMAGE (SUBST f) H |~ SUBST f p <-> SUBST f q]
\end{lstlisting}

\begin{remark}
    HOLMS source code now deals with two distinct symbols of derivability, one for the metatheory, one for the object theories:
    \begin{itemize}
    \item \verb+|-+ :   derivability in the metatheory HOL Light  (denotes a HOL Light theorem);
    \item \verb+[S . H |~ ]+ : derivability in the object logic $\mathbb{S}$ (denotes a HOL Light term representing a theorem of modal logic axiomatised by the schemas in~\verb|S|)
\end{itemize}
\end{remark}

\subsection{HOLMS Deduction Theorem}

We formally prove in HOL Light~\href{https://archive.softwareheritage.org/swh:1:cnt:d42782345008434be8b43de7feb014732f8821ea;origin=https://github.com/HOLMS-lib/HOLMS;visit=swh:1:snp:2c0efd349323ed6f8067581cf1f6d95816e49841;anchor=swh:1:rev:1caf3be141c6f646f78695c0eb528ce3b753079a;path=/calculus.ml;lines=1015-1028}{\ExternalLink} the deduction theorem demonstrated in Lemma~\ref{lem:deduction-theorem}. 
This would reduce the common notion of derivability of $A$ in an axiomatic calculus characterised by the schemata in $\mathcal{S}$ ($\mathcal{S} \vdash A$) to the relation $\mathcal{S}.\varnothing\vdash A$. 

\begin{lstlisting}[caption=H{OLMS proof of deduction theorem}]
#let MODPROVES_DEDUCTION_LEMMA = prove
 (`!S H p q. [S . H |~ p --> q] <=> [S . p INSERT H |~ q]',
 
  REPEAT GEN_TAC THEN
  SUBGOAL_THEN `[S . p INSERT H |~ q] ==> [S . H |~ p --> q]'
    (fun th -> MESON_TAC[th; MODPROVES_DEDUCTION_LEMMA_INSERT]) THEN
  ASM_CASES_TAC `p:form IN H' THENL
  [SUBGOAL_THEN `p:form INSERT H = H' SUBST1_TAC THENL
   [ASM SET_TAC []; ALL_TAC] THEN
   MESON_TAC[MODPROVES_RULES; MLK_add_assum]; ALL_TAC] THEN
  INTRO_TAC "hp" THEN
  SUBGOAL_THEN `H = (p:form INSERT H) DELETE p' SUBST1_TAC THENL
  [ASM SET_TAC []; ALL_TAC] THEN
  MATCH_MP_TAC MODPROVES_DEDUCTION_LEMMA_DELETE THEN
  ASM_REWRITE_TAC[IN_INSERT]);;
\end{lstlisting}

Notice that the proof developed for Lemma~\ref{lem:deduction-theorem}, follows the one in HOLMS; such an \textit{informal} proof can be useful to understand the exact implementation of this first relevant theorem implemented in HOLMS.

\subsection{Normal Systems Implemented in HOLMS}

In the current version of HOLMS four normal systems have been implemented. Consistently with our presentation of the modal cube in Section~\ref{sec:normal-systems}, below we list the set of specific axioms for these normal systems. 
\vspace{-1mm}
\begin{enumerate}
    \item $\mathcal{K}= \varnothing =$ \verb|{}| describes the minimal normal logic K;
    \item $\mathcal{T}= \{\mathbf{T} \}=$ \verb|T_AX|\href{https://archive.softwareheritage.org/swh:1:cnt:1352cb294724b2251b9346657432700ecbed10d2;origin=https://github.com/HOLMS-lib/HOLMS;visit=swh:1:snp:2c0efd349323ed6f8067581cf1f6d95816e49841;anchor=swh:1:rev:1caf3be141c6f646f78695c0eb528ce3b753079a;path=/t_completeness.ml;lines=8-9}{~\ExternalLink}, where $\mathbf{T}: \Box A \to A$, describes the normal logic $\mathbb{T}$;
    \item $\mathcal{K}4= \{ \mathbf{4} \}=$ \verb|K4_AX|\href{https://archive.softwareheritage.org/swh:1:cnt:32869a1df338d91c4cd4a0cb9e73fb4f1be29991;origin=https://github.com/HOLMS-lib/HOLMS;visit=swh:1:snp:2c0efd349323ed6f8067581cf1f6d95816e49841;anchor=swh:1:rev:1caf3be141c6f646f78695c0eb528ce3b753079a;path=/k4_completeness.ml;lines=8-9}{~\ExternalLink}, where $\mathbf{4}: \Box A \to \Box \Box A$, describes the normal logic $\mathbb{K}4$;
    \item $\mathcal{S}= \{ \mathbf{GL} \}=$ \verb|GL_AX|\href{https://archive.softwareheritage.org/swh:1:cnt:52c9de12454731a9a291b06bc750c0d2d14e1fb4;origin=https://github.com/HOLMS-lib/HOLMS;visit=swh:1:snp:2c0efd349323ed6f8067581cf1f6d95816e49841;anchor=swh:1:rev:1caf3be141c6f646f78695c0eb528ce3b753079a;path=/gl_completeness.ml;lines=11-12}{~\ExternalLink}, where $\mathbf{GL}: \Box (\Box A \to A) \to \Box A$, describes the normal logic $\mathbb{GL}$.
\end{enumerate}
\vspace{-5mm}
\begin{lstlisting}[caption={HOLMS definition of the set of axioms for \texttt{T, K4} and \texttt{GL}}]
 #let T_AX = new_definition
   `T_AX = {Box p --> p | p IN (:form)}';;

 #let K4_AX = new_definition
   `K4_AX = {Box p --> Box Box p | p IN (:form)}';;

 #let GL_AX = new_definition
   `GL_AX = {Box (Box p --> p) --> Box p | p IN (:form)}';;
\end{lstlisting}

In \verb|gl_completeness|~\href{https://archive.softwareheritage.org/swh:1:cnt:52c9de12454731a9a291b06bc750c0d2d14e1fb4;origin=https://github.com/HOLMS-lib/HOLMS;visit=swh:1:snp:2c0efd349323ed6f8067581cf1f6d95816e49841;anchor=swh:1:rev:1caf3be141c6f646f78695c0eb528ce3b753079a;path=/gl_completeness.ml;lines=228-232}{\ExternalLink} we prove that $\mathbb{K}4 \subseteq \mathbb{GL}$, by proving that $\mathcal{GL} \vdash \mathbf{4}$. Such a proof uses the specific axiom of $\mathbb{GL}$  (\verb|GL_axiom_lob|) and a series of lemmas developed in \verb|calculus.ml| which hold for each normal logic (\verb|MLK_|).

\begin{lstlisting}[caption={HOLMS theorem stating that GL extends K4}]
 #let GL_schema_4 = prove
 (`!p. [GL_AX . {} |~ (Box p --> Box (Box p))]',
  MESON_TAC[GL_axiom_lob; MLK_imp_box; MLK_and_pair_th;
  MLK_and_intro; MLK_shunt; MLK_imp_trans; MLK_and_right_th;
  MLK_and_left_th; MLK_box_and_th]);
\end{lstlisting}

\section{Formalising Correspondence Theory}
The fundamental notions of correspondence theory presented in~\ref{sec:correspondence-theory}, have been formalised within HOL Light in \verb|gen_completeness.ml|~\href{https://archive.softwareheritage.org/swh:1:cnt:d364de3158b6405c920115aa0c801af9af49e302;origin=https://github.com/HOLMS-lib/HOLMS;visit=swh:1:snp:2c0efd349323ed6f8067581cf1f6d95816e49841;anchor=swh:1:rev:1caf3be141c6f646f78695c0eb528ce3b753079a;path=/gen_completeness.ml}{\ExternalLink}, while the identification of characteristics and appropriate frames for each logic \verb|*| $\in \{ \mathbb{K; \ T; \ K}4; \ \mathbb{GL} \}$ has been developed in its specific file \verb|*_completeness.ml|.
  
\subsection{Characteristic and Appropriate frames}

Given the importance of characteristic frames in our modular proof of adequacy theorems, in HOLMS we decided not to focus on correspondence lemmas; but to immediately rephrase those specific results in a general setting and formally define \verb|CHAR S|~\href{https://archive.softwareheritage.org/swh:1:cnt:d364de3158b6405c920115aa0c801af9af49e302;origin=https://github.com/HOLMS-lib/HOLMS;visit=swh:1:snp:2c0efd349323ed6f8067581cf1f6d95816e49841;anchor=swh:1:rev:1caf3be141c6f646f78695c0eb528ce3b753079a;path=/gen_completeness.ml;lines=44-47}{\ExternalLink}, representing the class of frames \textit{characteristic} to \verb|S| (Definition~\ref{def:characteristic}).
Before introducing this concept, we must encode in HOLMS the set of relational frames (\verb|FRAME|~\href{https://archive.softwareheritage.org/swh:1:cnt:d364de3158b6405c920115aa0c801af9af49e302;origin=https://github.com/HOLMS-lib/HOLMS;visit=swh:1:snp:2c0efd349323ed6f8067581cf1f6d95816e49841;anchor=swh:1:rev:1caf3be141c6f646f78695c0eb528ce3b753079a;path=/gen_completeness.ml;lines=15-16}{\ExternalLink}) to settle our debt for following Harrison's embedding \cite[\S 20]{harrisontutorial} and imposing the standard conditions on nonempty domains and binary accessibility relations.

\begin{lstlisting}[caption={HOLMS definitions of relational frames and sets of characteristic frames}]
#let FRAME_DEF = new_definition
  `FRAME = {(W:W->bool,R:W->W->bool) |
            ~(W = {}) /\ (!x y:W. R x y ==> x IN W /\ y IN W)}';;
             
#let CHAR_DEF = new_definition
 `CHAR S = {(W:W->bool,R:W->W->bool) |
             (W,R) IN FRAME /\
             (!p. p IN S ==> holds_in (W:W->bool,R:W->W->bool) p)}';;
\end{lstlisting}

As observed in Lemma~\ref{lem:CHAR_CAR}, it is possible to characterise the class characteristic to $\mathbb{S}$ as the class of frames in which every theorem of $\mathbb{S}$ holds. The theorem of HOL Light proving this fact goes under the name \verb|CHAR_CAR|~\href{https://archive.softwareheritage.org/swh:1:cnt:d364de3158b6405c920115aa0c801af9af49e302;origin=https://github.com/HOLMS-lib/HOLMS;visit=swh:1:snp:2c0efd349323ed6f8067581cf1f6d95816e49841;anchor=swh:1:rev:1caf3be141c6f646f78695c0eb528ce3b753079a;path=/gen_completeness.ml;lines=82-105}{\ExternalLink}.

Since we have formalised in HOLMS neither Theorem~\ref{thm:K_soundness}--which states the soundness of $\mathbb{K}$--nor the Correspondence Lemma~\ref{lem:Correspondence}, the HOLMS proof of this theorem slightly differs from the one presented in Section~\ref{sec:correspondence-theory}. However, in principle, it can be traced back to it.

\begin{lstlisting}[caption={HOLMS characterisation of the notion of characteristic class of frames}, label={lst:CHAR_CAR}]
CHAR_CAR
|- !S:form->bool W:(W->bool) R:(W->W->bool). 
       ((W,R) IN FRAME /\ !p. [S . {} |~ p] ==> holds_in (W,R) p) <=>
       (W,R) IN CHAR S
\end{lstlisting}

Following~\cite[\S~11]{boolos1995logic} and for practical coding reasons, we also introduce the notion of the class of frames appropriate to a modal system $\mathbb{S}$: \verb|APPR S|~\href{https://archive.softwareheritage.org/swh:1:cnt:d364de3158b6405c920115aa0c801af9af49e302;origin=https://github.com/HOLMS-lib/HOLMS;visit=swh:1:snp:2c0efd349323ed6f8067581cf1f6d95816e49841;anchor=swh:1:rev:1caf3be141c6f646f78695c0eb528ce3b753079a;path=/gen_completeness.ml;lines=111-114}{\ExternalLink}, which formalises Definition~\ref{def:appropriate_class}. This class will immediately be characterised (\verb|APPR_EQ_CHAR_FINITE|~\href{https://archive.softwareheritage.org/swh:1:cnt:d364de3158b6405c920115aa0c801af9af49e302;origin=https://github.com/HOLMS-lib/HOLMS;visit=swh:1:snp:2c0efd349323ed6f8067581cf1f6d95816e49841;anchor=swh:1:rev:1caf3be141c6f646f78695c0eb528ce3b753079a;path=/gen_completeness.ml;lines=130-135}{\ExternalLink} that formalises Lemma~\ref{lem:APPR_CAR}) as the characteristic class to $\mathbb{S}$  \textit{restricted to finite frames}.

\begin{lstlisting}
  #let APPR_DEF = new_definition
  `APPR S = {(W:W->bool,R:W->W->bool) |
             (W,R) IN FINITE_FRAME  /\
             !p. [S. {} |~ p] ==> 
               holds_in (W:W->bool,R:W->W->bool) p}';;
\end{lstlisting}

\subsection{Identification of Characteristic and Appropriate frames}

In the following listings, we present HOLMS definitions of the classes of finite, (finite-)reflexive, (finite-)transitive, transitive-converse well founded and finite-transitive-irreflexive frames. Listing~\ref{lst:gl-classes} provide the lemma \verb|FINITE_EQ_ITF_TRANSNT|~\href{https://archive.softwareheritage.org/swh:1:cnt:52c9de12454731a9a291b06bc750c0d2d14e1fb4;origin=https://github.com/HOLMS-lib/HOLMS;visit=swh:1:snp:2c0efd349323ed6f8067581cf1f6d95816e49841;anchor=swh:1:rev:1caf3be141c6f646f78695c0eb528ce3b753079a;path=/gl_completeness.ml;lines=169-178}{\ExternalLink}, stating the equivalence of the sets \verb|ITF| ($\mathfrak{ITF}$) and \verb|FINITE_FRAME INTER TRANSNT| ($\mathfrak{NTF}$) and formalising Lemma~\ref{lem:ITF-transnt} .

\begin{lstlisting}[caption={HOLMS definition of classes of frames  appropriate to \texttt{K}}, label={lst:classes}] 
 #let FINITE_FRAME_DEF = new_definition
   `FINITE_FRAME = {(W:W->bool,R:W->W->bool) | 
                     (W,R) IN FRAME /\ FINITE W}';;

\end{lstlisting}
\vspace{-7mm}
\begin{lstlisting}[caption={HOLMS definition of characteristic, appropriate classes to  \texttt{T}}, label={lst:t-classes}] 
 #let REFL_DEF = new_definition
  `REFL =
   {(W:W->bool,R:W->W->bool) |
    ~(W = {}) /\ (!x y:W. R x y ==> x IN W /\ y IN W) /\
    (!x:W. x IN W ==> R x x)}';;
    
 #let RF_DEF = new_definition
   `RF = {(W:W->bool,R:W->W->bool) |
          ~(W = {}) /\ (!x y:W. R x y ==> x IN W /\ y IN W) /\
          FINITE W /\
          (!x. x IN W ==> R x x)}';;

\end{lstlisting}
\vspace{-7mm}
\begin{lstlisting}[caption={HOLMS definition of characteristic, appropriate classes to \texttt{K4}}, label={lst:k4-classes}] 
 #let TRANS_DEF = new_definition
  `TRANS =
   {(W:W->bool,R:W->W->bool) |
    ~(W = {}) /\ (!x y:W. R x y ==> x IN W /\ y IN W) /\
    (!x y z:W. x IN W /\ y IN W /\ z IN W /\ R x y /\ R y z
               ==> R x z)}';;

    
 #let TF_DEF = new_definition
  `TF = {(W:W->bool,R:W->W->bool) |
          ~(W = {}) /\ (!x y:W. R x y ==> x IN W /\ y IN W) /\
          FINITE W /\
          (!x y z. x IN W /\ y IN W /\ z IN W /\ R x y /\ R y z
                   ==> R x z)}';;

\end{lstlisting}
\vspace{-7mm}
\begin{lstlisting}[caption={HOLMS definition of classes  of characteristic, appropriate classes to  \texttt{GL}}, label={lst:gl-classes}] 

 #let TRANSNT_DEF = new_definition
 `TRANSNT =
   {(W:W->bool,R:W->W->bool) |
    ~(W = {}) /\ (!x y:W. R x y ==> x IN W /\ y IN W) /\
    (!x y z:W. x IN W /\ y IN W /\ z IN W /\ R x y /\ R y z 
            ==> R x z) /\
    WF(\x y. R y x)}';;
    
 #let ITF_DEF = new_definition
  `ITF = {(W:W->bool,R:W->W->bool) |
           ~(W = {}) /\ (!x y:W. R x y ==> x IN W /\ y IN W) /\
           FINITE W /\
           (!x. x IN W ==> ~R x x) /\
           (!x y z. x IN W /\ y IN W /\ z IN W /\ R x y /\ R y z
                    ==> R x z)}';;

 FINITE_EQ_ITF_TRANSNT 
 |- !W:W->bool R:W->W->bool.
     FINITE (W) ==> ((W,R) IN ITF <=> (W,R) IN TRANSNT)
\end{lstlisting}

The following Listing~\ref{lst:char-cl} provides the identification of the class of characteristic frames for $\mathbb{K}$~\href{https://archive.softwareheritage.org/swh:1:cnt:ae230138ff15476c8dab9e32606bceca7168285b;origin=https://github.com/HOLMS-lib/HOLMS;visit=swh:1:snp:2c0efd349323ed6f8067581cf1f6d95816e49841;anchor=swh:1:rev:1caf3be141c6f646f78695c0eb528ce3b753079a;path=/k_completeness.ml;lines=15-19}{\ExternalLink}, $\mathbb{T}$~\href{https://archive.softwareheritage.org/swh:1:cnt:1352cb294724b2251b9346657432700ecbed10d2;origin=https://github.com/HOLMS-lib/HOLMS;visit=swh:1:snp:2c0efd349323ed6f8067581cf1f6d95816e49841;anchor=swh:1:rev:1caf3be141c6f646f78695c0eb528ce3b753079a;path=/t_completeness.ml;lines=47-57}{\ExternalLink}, $\mathbb{K}4$~\href{https://archive.softwareheritage.org/swh:1:cnt:32869a1df338d91c4cd4a0cb9e73fb4f1be29991;origin=https://github.com/HOLMS-lib/HOLMS;visit=swh:1:snp:2c0efd349323ed6f8067581cf1f6d95816e49841;anchor=swh:1:rev:1caf3be141c6f646f78695c0eb528ce3b753079a;path=/k4_completeness.ml;lines=53-63}{\ExternalLink} and $\mathbb{GL}$~\href{https://archive.softwareheritage.org/swh:1:cnt:52c9de12454731a9a291b06bc750c0d2d14e1fb4;origin=https://github.com/HOLMS-lib/HOLMS;visit=swh:1:snp:2c0efd349323ed6f8067581cf1f6d95816e49841;anchor=swh:1:rev:1caf3be141c6f646f78695c0eb528ce3b753079a;path=/gl_completeness.ml;lines=47-72}{\ExternalLink}. 

Thanks to the proof assistant, proving these lemmata is easier and shorter than in Lemma~\ref{lem:Correspondence}. Suffice it to say that to prove i.e. HOLMS implementation of the~\ref{itm:corr_2}$^{nd}$ point of correspondence lemma:
\vspace{-3mm}
\begin{center}
    \verb|!W R. (!w:W. w IN W ==> R w w) <=> (!p. holds_in (W,R) (Box p --> p))|
\end{center}
\vspace{-3mm}
\noindent it is enough to execute the composed tactic \verb| MODAL_| \verb|SCHEMA_TAC THEN MESON_TAC[]|.

In HOLMS these theorems are stated as equalities between sets, however--to make the statements more readable and similar to Correspondence Lemma~\ref{lem:Correspondence}-- here we present it after applying the HOL Light tactict \texttt{REWRITE\textunderscore TAC}[\texttt{EXTENSION; FORALL\textunderscore PAIR\textunderscore THM}].

\begin{lstlisting}[caption={HOLMS identification of the characteristic classes},label={lst:char-cl}]
 FRAME_CHAR_K
 |- ! W R. (W,R) IN FRAME:(W->bool)#(W->W->bool)->bool 
           <=> (W,R) IN CHAR {}  

 REFL_CHAR_T
 |- ! W R. (W,R) IN REFL:(W->bool)#(W->W->bool)->bool 
           <=> (W,R) IN CHAR T_AX
          
 TRANS_CHAR_K4
 |- ! W R. (W,R) IN TRANS:(W->bool)#(W->W->bool)->bool 
           <=> (W,R) IN CHAR K4_AX

 TRANSNT_CHAR_GL          
 |- ! W R. (W,R) IN TRANSNT:(W->bool)#(W->W->bool)->bool 
           <=> (W,R) IN CHAR GL_AX

\end{lstlisting}

Exploiting the characterisation of appropriate frames as finite characteristic frames, HOLMS provides the following identification of the class of appropriate frames for $\mathbb{K}$~\href{https://archive.softwareheritage.org/swh:1:cnt:ae230138ff15476c8dab9e32606bceca7168285b;origin=https://github.com/HOLMS-lib/HOLMS;visit=swh:1:snp:2c0efd349323ed6f8067581cf1f6d95816e49841;anchor=swh:1:rev:1caf3be141c6f646f78695c0eb528ce3b753079a;path=/k_completeness.ml;lines=27-38}{\ExternalLink}, $\mathbb{T}$~\href{https://archive.softwareheritage.org/swh:1:cnt:1352cb294724b2251b9346657432700ecbed10d2;origin=https://github.com/HOLMS-lib/HOLMS;visit=swh:1:snp:2c0efd349323ed6f8067581cf1f6d95816e49841;anchor=swh:1:rev:1caf3be141c6f646f78695c0eb528ce3b753079a;path=/t_completeness.ml;lines=108-125}{\ExternalLink}, $\mathbb{K}4$~\href{https://archive.softwareheritage.org/swh:1:cnt:32869a1df338d91c4cd4a0cb9e73fb4f1be29991;origin=https://github.com/HOLMS-lib/HOLMS;visit=swh:1:snp:2c0efd349323ed6f8067581cf1f6d95816e49841;anchor=swh:1:rev:1caf3be141c6f646f78695c0eb528ce3b753079a;path=/k4_completeness.ml;lines=53-63}{\ExternalLink} and $\mathbb{GL}$~\href{https://archive.softwareheritage.org/swh:1:cnt:52c9de12454731a9a291b06bc750c0d2d14e1fb4;origin=https://github.com/HOLMS-lib/HOLMS;visit=swh:1:snp:2c0efd349323ed6f8067581cf1f6d95816e49841;anchor=swh:1:rev:1caf3be141c6f646f78695c0eb528ce3b753079a;path=/gl_completeness.ml;lines=184-202}{\ExternalLink}.

\begin{lstlisting}[caption={HOLMS identification of the appropriate classes}]
 FINITE_FRAME_APPR_K
 |- ! W R. (W,R) IN FINITE_FRAME:(W->bool)#(W->W->bool)->bool 
           <=> (W,R) IN APPR {} 

 RF_APPR_T
 |- ! W R. (W,R) IN RF:(W->bool)#(W->W->bool)->bool 
           <=> (W,R) IN APPR T_AX
          
 TF_APPR_K4
 |- ! W R. (W,R) IN TF:(W->bool)#(W->W->bool)->bool 
           <=> (W,R) IN APPR K4_AX

 ITF_APPR_GL          
 |- ! W R. (W,R) IN ITF:(W->bool)#(W->W->bool)->bool 
           <=> (W,R) IN APPR GL_AX

\end{lstlisting}

We summarise all these correspondence results in Table~\ref{table:corr}, which is an HOL Light version of Table~\ref{tab:char-appr}.
\begin{table}[h]
    \centering
    \begin{tabular}{l|l | l | l}
        \toprule
        \textit{\texttt{S}} & \textit{\texttt{CHAR S}} & \textit{\texttt{APPR S}} & \textit{Characteristic properties for \texttt{S}} \\
        \midrule
        \verb|{}|  & \verb|FRAME|~\href{https://archive.softwareheritage.org/swh:1:cnt:d364de3158b6405c920115aa0c801af9af49e302;origin=https://github.com/HOLMS-lib/HOLMS;visit=swh:1:snp:2c0efd349323ed6f8067581cf1f6d95816e49841;anchor=swh:1:rev:1caf3be141c6f646f78695c0eb528ce3b753079a;path=/gen_completeness.ml;lines=15-16}{\ExternalLink} & \verb|FINITE_FRAME|~\href{https://archive.softwareheritage.org/swh:1:cnt:d364de3158b6405c920115aa0c801af9af49e302;origin=https://github.com/HOLMS-lib/HOLMS;visit=swh:1:snp:2c0efd349323ed6f8067581cf1f6d95816e49841;anchor=swh:1:rev:1caf3be141c6f646f78695c0eb528ce3b753079a;path=/gen_completeness.ml;lines=24-25}{\ExternalLink} & -  \\
        \hline
        \verb|T_AX| & \verb|REFL|~\href{https://archive.softwareheritage.org/swh:1:cnt:1352cb294724b2251b9346657432700ecbed10d2;origin=https://github.com/HOLMS-lib/HOLMS;visit=swh:1:snp:2c0efd349323ed6f8067581cf1f6d95816e49841;anchor=swh:1:rev:1caf3be141c6f646f78695c0eb528ce3b753079a;path=/t_completeness.ml;lines=29-34}{\ExternalLink} &  \verb|RF|~\href{https://archive.softwareheritage.org/swh:1:cnt:1352cb294724b2251b9346657432700ecbed10d2;origin=https://github.com/HOLMS-lib/HOLMS;visit=swh:1:snp:2c0efd349323ed6f8067581cf1f6d95816e49841;anchor=swh:1:rev:1caf3be141c6f646f78695c0eb528ce3b753079a;path=/t_completeness.ml;lines=88-94}{\ExternalLink}  & reflexivity \\
        \hline
        \verb|K4_AX| & \verb|TRANS|~\href{https://archive.softwareheritage.org/swh:1:cnt:32869a1df338d91c4cd4a0cb9e73fb4f1be29991;origin=https://github.com/HOLMS-lib/HOLMS;visit=swh:1:snp:2c0efd349323ed6f8067581cf1f6d95816e49841;anchor=swh:1:rev:1caf3be141c6f646f78695c0eb528ce3b753079a;path=/k4_completeness.ml;lines=35-40}{\ExternalLink} &  \verb|TF|~\href{https://archive.softwareheritage.org/swh:1:cnt:32869a1df338d91c4cd4a0cb9e73fb4f1be29991;origin=https://github.com/HOLMS-lib/HOLMS;visit=swh:1:snp:2c0efd349323ed6f8067581cf1f6d95816e49841;anchor=swh:1:rev:1caf3be141c6f646f78695c0eb528ce3b753079a;path=/k4_completeness.ml;lines=96-102}{\ExternalLink}  & transitivity \\
        \hline
       \verb|GL_AX| & \verb|TRANSNT|~\href{https://archive.softwareheritage.org/swh:1:cnt:52c9de12454731a9a291b06bc750c0d2d14e1fb4;origin=https://github.com/HOLMS-lib/HOLMS;visit=swh:1:snp:2c0efd349323ed6f8067581cf1f6d95816e49841;anchor=swh:1:rev:1caf3be141c6f646f78695c0eb528ce3b753079a;path=/gl_completeness.ml;lines=26-32}{\ExternalLink} & \verb|ITF|~\href{https://archive.softwareheritage.org/swh:1:cnt:52c9de12454731a9a291b06bc750c0d2d14e1fb4;origin=https://github.com/HOLMS-lib/HOLMS;visit=swh:1:snp:2c0efd349323ed6f8067581cf1f6d95816e49841;anchor=swh:1:rev:1caf3be141c6f646f78695c0eb528ce3b753079a;path=/gl_completeness.ml;lines=141-148}{\ExternalLink}  & transitivity,  noetherianity:  \\
        \bottomrule
    \end{tabular}
    \caption{Correspondence results in HOLMS}
    \label{table:corr}
\end{table}

At this stage, we have adequately analysed the theory of correspondence to prove soundness and completeness. 
        
\section{Formalising Adequacy}

We present in this section HOLMS formalisation of adequacy theorem--together with consistency results--for $\mathbb{K, \ T, \ K}4$ and $\mathbb{GL}$, focusing in particular on which parts of the proofs are \textit{parametric} and which \textit{ad-hoc polymorphic}.

\subsection{Proving Soundness}

As remarked in~\ref{sub:soundness}, identifying appropriate classes provides \textit{per se} a proof of soundness for each axiomatic calculus w.r.t~its characteristic class of finite frames. An analogous result for frames that are not necessarily finite is obtained by considering the characterisation \verb|CHAR_CAR|~\href{https://archive.softwareheritage.org/swh:1:cnt:d364de3158b6405c920115aa0c801af9af49e302;origin=https://github.com/HOLMS-lib/HOLMS;visit=swh:1:snp:2c0efd349323ed6f8067581cf1f6d95816e49841;anchor=swh:1:rev:1caf3be141c6f646f78695c0eb528ce3b753079a;path=/gen_completeness.ml;lines=82-105}{\ExternalLink}.
These two \emph{parametric} results are proved in HOLMS \verb|gen_completeness| under the names \verb|GEN_APPR_VALID|~\href{https://archive.softwareheritage.org/swh:1:cnt:d364de3158b6405c920115aa0c801af9af49e302;origin=https://github.com/HOLMS-lib/HOLMS;visit=swh:1:snp:2c0efd349323ed6f8067581cf1f6d95816e49841;anchor=swh:1:rev:1caf3be141c6f646f78695c0eb528ce3b753079a;path=/gen_completeness.ml;lines=147-150}{\ExternalLink} and \verb|GEN_CHAR_VALID|~\href{https://archive.softwareheritage.org/swh:1:cnt:d364de3158b6405c920115aa0c801af9af49e302;origin=https://github.com/HOLMS-lib/HOLMS;visit=swh:1:snp:2c0efd349323ed6f8067581cf1f6d95816e49841;anchor=swh:1:rev:1caf3be141c6f646f78695c0eb528ce3b753079a;path=/gen_completeness.ml;lines=71-80}{\ExternalLink}.

Soundness for any axiom system formalisable by our deducibility relation (normal systems) is then obtained by \textit{instantiating the parameters} of this theorem.
\vspace{-3mm}
\begin{lstlisting}[caption={HOLMS theorems of soundness for K, T, K4 and GL}]
 K_FRAME_VALID
|- !H p. [{} . H |~ p] /\
         (!q. q IN H ==> FRAME:(W->bool)#(W->W->bool)->bool |= q)
        ==> FRAME:(W->bool)#(W->W->bool)->bool |= p

T_REFL_VALID
|- !H p. [T_AX . H |~ p] /\
         (!q. q IN H ==> REFL:(W->bool)#(W->W->bool)->bool |= q)
         ==> REFL:(W->bool)#(W->W->bool)->bool |= p

K4_TRANS_VALID
|- !H p. [K4_AX . H |~ p] /\
         (!q. q IN H ==> TRANS:(W->bool)#(W->W->bool)->bool |= q)
         ==> TRANS:(W->bool)#(W->W->bool)->bool |= p

GL_TRANSNT_VALID
|- !H p. [GL_AX . H |~ p] /\
        (!q. q IN H ==> TRANSNT:(W->bool)#(W->W->bool)->bool |= q)
        ==> TRANSNT:(W->bool)#(W->W->bool)->bool |= p
\end{lstlisting}

\subsection{Proving Consistency}

Following the canonical strategy summarised in~\ref{sub:consisency}, we derive consistency of a logic
from its soundness w.r.t.~a non-empty class of frames.

This approach follows a standard \textit{negation ad absurdum} argument common to all systems, which we outline as follows.

\begin{itemize}
    \item \verb|REFUTE_THEN(MP_TAC o MATCH_MP (*_APPR_VALID))| \\
    First, we assume $\mathcal{S} \vdash \bot$, then, leveraging soundness, we conclude $\mathfrak{S} \vDash \bot$.
    \item \verb|REWRITE_TAC| [\verb|valid; holds; holds_in; FORALL_PAIR_THM;| \\ $\phantom{REWRITEii}$ \verb|IN_APPR*; NOT_FORALL_THM|] \\
    By unpacking the definitions of validity and forcing, we establish that no world in any model forces falsehood, i.e. $\mathcal{M}, w \not \vDash \bot$.

    \item Finally, exhibiting a world in $\mathfrak{S}$, i.e. proving that the class is not empty, we reach a contradiction: $\mathfrak{S} \not \vDash \bot$. $\lightning$
\end{itemize}

In HOLMS, each specific file \verb|*_completeness.ml| presents this \textit{ad-hoc polymorphic} proof of consistency of $\verb|*|\,\in\{$K~\href{https://archive.softwareheritage.org/swh:1:cnt:ae230138ff15476c8dab9e32606bceca7168285b;origin=https://github.com/HOLMS-lib/HOLMS;visit=swh:1:snp:2c0efd349323ed6f8067581cf1f6d95816e49841;anchor=swh:1:rev:1caf3be141c6f646f78695c0eb528ce3b753079a;path=/k_completeness.ml;lines=48-56}{\ExternalLink}, T~\href{https://archive.softwareheritage.org/swh:1:cnt:1352cb294724b2251b9346657432700ecbed10d2;origin=https://github.com/HOLMS-lib/HOLMS;visit=swh:1:snp:2c0efd349323ed6f8067581cf1f6d95816e49841;anchor=swh:1:rev:1caf3be141c6f646f78695c0eb528ce3b753079a;path=/t_completeness.ml;lines=139-145}{\ExternalLink}, K4~\href{https://archive.softwareheritage.org/swh:1:cnt:32869a1df338d91c4cd4a0cb9e73fb4f1be29991;origin=https://github.com/HOLMS-lib/HOLMS;visit=swh:1:snp:2c0efd349323ed6f8067581cf1f6d95816e49841;anchor=swh:1:rev:1caf3be141c6f646f78695c0eb528ce3b753079a;path=/k4_completeness.ml;lines=147-153}{\ExternalLink}, GL~\href{https://archive.softwareheritage.org/swh:1:cnt:52c9de12454731a9a291b06bc750c0d2d14e1fb4;origin=https://github.com/HOLMS-lib/HOLMS;visit=swh:1:snp:2c0efd349323ed6f8067581cf1f6d95816e49841;anchor=swh:1:rev:1caf3be141c6f646f78695c0eb528ce3b753079a;path=/gl_completeness.ml;lines=216-222}{\ExternalLink}$\}$, by applying its specific soundness lemmata and providing an \textit{ad-hoc} inhabitant of its appropriate class. 

\newpage
\begin{lstlisting}[caption={HOLMS consistency theorems}]
K_CONSISTENT |- ~ [{} . {} |~ False]

T_CONSISTENT |- ~ [T_AX . {} |~ False]

K4_CONSISTENT |- ~ [K4_AX . {} |~ False]

GL_CONSISTENT |- ~ [GL_AX . {} |~ False]
\end{lstlisting}

\subsection{Proving Completeness}\label{sub:formal-completeness}
In this section, we examine the completeness theorems for the logics currently formalized in HOLMS.

Our \textit{as-general-as-possible} demonstrative strategy will not follow the common version of the `canonical model method', because it cannot be employed for every logic implemented in HOLMS, e.g. for $\mathbb{GL}$ due to its incompactness as already implicitly stated in~\cite{boolos1995logic} argued in the prior work of two HOLMS developers logic \cite[\S 4]{DBLP:journals/jar/MaggesiB23}. 

Still, that work~\cite{maggesi_et_al:LIPIcs.ITP.2021.26,DBLP:journals/jar/MaggesiB23} formalised a Henkin-style proof for GL w.r.t.~finite frames by hacking the strategy outlined by George Boolos in~\cite[\S 5]{boolos1995logic}. The two authors already observed there that the major part of Boolos' completeness proof is generalisable to most of normal (mono)modal systems. 

In our communication paper \cite{DBLP:conf/overlay/BilottaMBQ24}, we remarked on this idea and presented a partial parametrisation of the completeness proof, with a \textit{parametric} version of the general truth lemma.

Moreover, in Chapter~\ref{sub:completeness} we succeed in showcasing the proof outlined in Boolos' monograph, which further clarifies the modularity of the strategy.
The implementation of this proof within HOL Light avoids code duplication as far as possible and allow us to \textit{precisely measure the generality} of Boolos' strategy. 

\begin{remark}
\end{remark}
    The proof presented in section~\ref{sub:proof-sketch} is a formalisation— using the HOL Light proof assistant— of the proof detailed in Section~\ref{sub:completeness}. Since the full \textit{informal} proof has already been presented comprehensively, here we provide only a sketch of the proof, which highlights the most important lemmas employed. This outline emphasizes the lines of code that are \textit{parametric} and \textit{ad hoc polymorphic} and some technical choices made in HOLMS. 

\subsubsection{Towards Completeness}

As highlighted in our extended \textit{informal} proof of completeness in Section~\ref{sub:consisency}, maximal consistent sets of formulas play a central role in our strategy for proving completeness. In this section, we introduce a formalisation of this concept in HOL Light, along with five related lemmas, which have been verified using the proof assistant.
\medskip

\begin{remark}
   In HOLMS we work on words that are \emph{lists of formulas without repetitions} rather than \emph{sets} of formulas.
   This implementation choice is driven purely by practical considerations:
   \begin{enumerate} \vspace{-2mm}
       \item The theory of lists allows us to leverage \textit{pre-proven results} for finite lists; \vspace{-2mm}
       \item Theory of lists provides a useful \textit{structural recursion rule}. \vspace{-2mm}
   \end{enumerate} 
In Section~\ref{Theories}, we discussed various theories developed in HOLMS, and in particular we presented key results and definitions related to HOL Light’s theory of lists.
\end{remark}
\medskip
\medskip

As in Chapter~\ref{chap:1} we need to define the big conjunction over sets of formulas $\bigwedge\mathcal{X}$ which was not part of the modal language $\mathcal{L}_{\Box}$, HOLMS file \verb|conjlist.ml|~\href{https://archive.softwareheritage.org/swh:1:cnt:3d890d9e1ac7d24532b2645b6fb9ad5b160c7ba0;origin=https://github.com/HOLMS-lib/HOLMS;visit=swh:1:snp:2c0efd349323ed6f8067581cf1f6d95816e49841;anchor=swh:1:rev:1caf3be141c6f646f78695c0eb528ce3b753079a;path=/conjlist.ml}{\ExternalLink} provides a formal definition of the big conjunction over lists of formulas (\verb|CONJLIST X|) and contains additional parametric results related to this concept.

\begin{lstlisting}[caption=HOLMS definition of the big conjunction over lists of formulas]
 #let CONJLIST = new_recursive_definition list_RECURSION
 `CONJLIST [] = True /\
  (!p X. CONJLIST (CONS p X) = if X = [] then p 
                                else p && CONJLIST X)';;
\end{lstlisting}

HOLMS file \verb|consistent.ml|~\href{https://archive.softwareheritage.org/swh:1:cnt:e261ca08330d5ef66376347407c27fcece1e9f2e;origin=https://github.com/HOLMS-lib/HOLMS;visit=swh:1:snp:2c0efd349323ed6f8067581cf1f6d95816e49841;anchor=swh:1:rev:1caf3be141c6f646f78695c0eb528ce3b753079a;path=/consistent.ml}{\ExternalLink} formalises the parametric notions of  $\mathcal{S}$-consistency (\verb|CONSISTENT S|~\href{https://archive.softwareheritage.org/swh:1:cnt:e261ca08330d5ef66376347407c27fcece1e9f2e;origin=https://github.com/HOLMS-lib/HOLMS;visit=swh:1:snp:2c0efd349323ed6f8067581cf1f6d95816e49841;anchor=swh:1:rev:1caf3be141c6f646f78695c0eb528ce3b753079a;path=/consistent.ml;lines=11-12}{\ExternalLink})  and  $\mathcal{S},A$-maximal consistency (\verb|MAXIMAL_CONSISTENT S|~\href{https://archive.softwareheritage.org/swh:1:cnt:e261ca08330d5ef66376347407c27fcece1e9f2e;origin=https://github.com/HOLMS-lib/HOLMS;visit=swh:1:snp:2c0efd349323ed6f8067581cf1f6d95816e49841;anchor=swh:1:rev:1caf3be141c6f646f78695c0eb528ce3b753079a;path=/consistent.ml;lines=82-85}{\ExternalLink}) and proves an HOL Light version of the Lemmata~\ref{lem:consistency_lemma}-~\ref{lem:max_cons}, which are fundamental to prove completeness. Notice that the profs of these results provided in Section~\ref{sub:maximal-consistency} are an \textit{abstract} version of the ones provided in HOLMS.

\begin{lstlisting}[caption={HOLMS definition of S-consistency and a related lemma}, label={lst:cons}]
 #let CONSISTENT = new_definition
 `CONSISTENT S (l:form list) <=> ~[S . {} |~ Not (CONJLIST l)]';;

 FALSE_IMP_NOT_CONSISTENT 
 |- !S X. MEM False X ==> ~ CONSISTENT S X
\end{lstlisting}
\vspace{-7mm}
\begin{lstlisting}[caption={HOLMS definition of S,A-maximal consistency and related lemmata}, label={lst:max-cons}]
 #let MAXIMAL_CONSISTENT = new_definition
 `MAXIMAL_CONSISTENT S p X <=>
   CONSISTENT S X /\ NOREPETITION X /\
   (!q. q SUBFORMULA p ==> MEM q X \/ MEM (Not q) X)';;

 MAXIMAL_CONSISTENT_MEM_NOT 
 |- !S X p q. MAXIMAL_CONSISTENT S p X /\ q SUBFORMULA p
             ==> (MEM (Not q) X <=> ~ MEM q X)

 EXTEND_MAXIMAL_CONSISTENT
 |- !S p X. CONSISTENT S X /\
           (!q. MEM q X ==> q SUBSENTENCE p)
           ==> ?M. MAXIMAL_CONSISTENT S p M /\
                   (!q. MEM q M ==> q SUBSENTENCE p) /\
                   X SUBLIST M

 NONEMPTY_MAXIMAL_CONSISTENT
 |- !S p. ~ [S . {} |~ p]
         ==> ?M. MAXIMAL_CONSISTENT S p M /\
                 MEM (Not p) M /\
                 (!q. MEM q M ==> q SUBSENTENCE p)          

 MAXIMAL_CONSISTENT_LEMMA 
 |-!S p X A b. MAXIMAL_CONSISTENT S p X /\
               (!q. MEM q A ==> MEM q X) /\
               b SUBFORMULA p /\
               [S . {} |~ CONJLIST A --> b]
               ==> MEM b X
\end{lstlisting}

\subsubsection{HOLMS Proof Sketch}\label{sub:proof-sketch}
Let \verb|S| be the set of specific schemata of a normal modal system implemented in HOLMS. We claim that it is complete with respect to the set of its appropriate frames \verb|APPR S|. 
\vspace{-4mm}
\begin{lstlisting}[caption={Claim 1: Proving Completeness },label={lst:claim_1}]
 val it : goalstack = 1 subgoal (1 total)
 `!p.APPR S |= p ==> [S. {} |~ p]'
\end{lstlisting}

We proceed by contraposition, then we claim the rewritten sentence:
\vspace{-4mm}
\begin{lstlisting}[caption={Claim 2: contraposition },label={lst:claim_2}]
 #  (GEN_REWRITE_TAC I [GSYM CONTRAPOS_THM]);;
 val it : goalstack = 1 subgoal (1 total)

 `!p. ~[S. {} |~ p] ==> ~(APPR S |= p)'
\end{lstlisting}

We rewrite validity in a set of frames (\verb|valid|) in terms of forcing relation (\verb|holds|) and we exploit some propositional tautologies:
\vspace{-4mm}
\begin{lstlisting}[caption={Claim 3: rewriting },label={lst:claim-3}]
# e (REWRITE_TAC[valid; holds_in; NOT_FORALL_THM; NOT_IMP]);;
val it : goalstack = 1 subgoal (1 total)

`!p. ~[S . {} |~ p]
     ==> ?W R. (W,R) IN APPR S /\
               (?V m. m IN W /\ ~holds (W,R) V p w)'         
\end{lstlisting}

This means that for each set of axioms $\verb|S|$ and for each modal formula \verb|p|, we have to find a countermodel \verb|(W,R) V| inhabiting \verb|APPR S|, and a `counterworld' \verb|m| inhabiting \verb|WR V| such that \verb|~holds (W,R) V p w|.

To do so, we formalise the argument in \cite[\S 5]{boolos1995logic} and implement the following two-part strategy, which distinguishes the \textit{parametric polymorphic} aspects of the proof-fully independent from the concrete instantiations of the parameters—from those that are \textit{properly ad hoc polymorphic}--with details more closely tied to the logics under investigation.
\medskip
\newpage
\begin{itemize}

\item{\textbf{Parametric part of the proof}} \\
\begin{myproof1}
    This entire part of the proof is developed in \verb|gen_completeness.ml|~\href{https://archive.softwareheritage.org/swh:1:cnt:d364de3158b6405c920115aa0c801af9af49e302;origin=https://github.com/HOLMS-lib/HOLMS;visit=swh:1:snp:2c0efd349323ed6f8067581cf1f6d95816e49841;anchor=swh:1:rev:1caf3be141c6f646f78695c0eb528ce3b753079a;path=/gen_completeness.ml}{\ExternalLink}, that collects all the HOLMS parametric lines of code related with adequacy theorems and correspondence theory.
    
\begin{enumerate}
    
    \item \textbf{We identify a \textit{parametric} countermodel} \verb|GEN_STANDARD_MODEL S p|~\href{https://archive.softwareheritage.org/swh:1:cnt:d364de3158b6405c920115aa0c801af9af49e302;origin=https://github.com/HOLMS-lib/HOLMS;visit=swh:1:snp:2c0efd349323ed6f8067581cf1f6d95816e49841;anchor=swh:1:rev:1caf3be141c6f646f78695c0eb528ce3b753079a;path=/gen_completeness.ml;lines=180-183}{\ExternalLink} such that: 
    
    \begin{itemize}
        \item Is based on a \verb|GEN_STANDARD_FRAME S p|~\href{https://archive.softwareheritage.org/swh:1:cnt:d364de3158b6405c920115aa0c801af9af49e302;origin=https://github.com/HOLMS-lib/HOLMS;visit=swh:1:snp:2c0efd349323ed6f8067581cf1f6d95816e49841;anchor=swh:1:rev:1caf3be141c6f646f78695c0eb528ce3b753079a;path=/gen_completeness.ml;lines=156-162}{\ExternalLink}, i.e.:
        \begin{itemize}
            \item[(a)] Its domain \verb|W| is composed by \verb|MAXIMAL_CONSISTENT S p| list of formulas, which are \verb|SUBSENTNCE p|.
            \begin{itemize}
                \item By \verb|NONEMPTY_MAXIMAL_CONSISTENT|~\href{https://archive.softwareheritage.org/swh:1:cnt:e261ca08330d5ef66376347407c27fcece1e9f2e;origin=https://github.com/HOLMS-lib/HOLMS;visit=swh:1:snp:2c0efd349323ed6f8067581cf1f6d95816e49841;anchor=swh:1:rev:1caf3be141c6f646f78695c0eb528ce3b753079a;path=/consistent.ml;lines=228-245}{\ExternalLink}, we prove that this set is not empty and thus is a well-defined domain;
                \item Thanks to the theorems \verb|FINITE_NOREPETITION|~\href{https://archive.softwareheritage.org/swh:1:cnt:d100c0780569a67ceb5f1802e928df774c6a6107;origin=https://github.com/HOLMS-lib/HOLMS;visit=swh:1:snp:2c0efd349323ed6f8067581cf1f6d95816e49841;anchor=swh:1:rev:1caf3be141c6f646f78695c0eb528ce3b753079a;path=/misc.ml;lines=207-213}{\ExternalLink} proved in \verb|misc.ml| and \verb|FINITE_SUBSET_SUBFORMULAS_LEMMA|~\href{https://archive.softwareheritage.org/swh:1:cnt:4a011f69b1180019be5d03c6d5f1cec4550055e8;origin=https://github.com/HOLMS-lib/HOLMS;visit=swh:1:snp:2c0efd349323ed6f8067581cf1f6d95816e49841;anchor=swh:1:rev:1caf3be141c6f646f78695c0eb528ce3b753079a;path=/modal.ml;lines=150-157}{\ExternalLink}, we prove that that the domain is also finite.
            \end{itemize} 
             \item[(b)]  Its accessibility relation \verb|R| verifies two requirements:
             \begin{enumerate}
             
                 \item[i.]\label{itm:req-1} \verb|(W,R) IN APPR S|;
                 \item[ii.]\label{itm:req-2} The one occupying the last three lines of standard frame definition in Listing~\ref{lst:GEN_STANDARD_MODEL}.
            \end{enumerate}
        \end{itemize}
        \item[(c)] A world verifies an atom (\verb|V a w|) if and only if the atom belongs to the world \verb|MEM (Atom a) w| and is a \verb|SUBSENTNCE p|.
    \end{itemize}

The notion of \verb|*_STANDARD_MODEL| specific to each normal system \verb|*| $\in \{ \mathbb{K; \ T; \ K}4; \ \mathbb{GL} \}$ is defined as an \textit{instantiation of the parameters} in \verb|GEN_STANDARD_MODEL|.

\begin{lstlisting}[caption={HOLMS definition of standard frames and models},label={lst:GEN_STANDARD_MODEL}]
 #let GEN_STANDARD_FRAME_DEF = new_definition
 `GEN_STANDARD_FRAME S p =
  APPR S INTER
  {(W,R) | W = {w | MAXIMAL_CONSISTENT S p w /\
                (!q. MEM q w ==> q SUBSENTENCE p)} /\
           (!q w. ({Box q SUBFORMULA p /\ w IN W})
             ==> (MEM (Box q) w <=> 
                 !x. R w x ==> MEM q x))}';;
    
#let GEN_STANDARD_MODEL_DEF = new_definition
 `GEN_STANDARD_MODEL S p (W,R) V <=>
  (W,R) IN GEN_STANDARD_FRAME S p /\
  (!a w. w IN W ==> (V a w <=> MEM (Atom a) w /\ 
                               Atom a SUBFORMULA p))';;
\end{lstlisting}

    \item \textbf{We prove a \textit{parametric}} \verb|GEN_TRUTH_LEMMA|~\href{https://archive.softwareheritage.org/swh:1:cnt:d364de3158b6405c920115aa0c801af9af49e302;origin=https://github.com/HOLMS-lib/HOLMS;visit=swh:1:snp:2c0efd349323ed6f8067581cf1f6d95816e49841;anchor=swh:1:rev:1caf3be141c6f646f78695c0eb528ce3b753079a;path=/gen_completeness.ml;lines=189-350}{\ExternalLink}\label{itm:truth-lemma}. \\ This step allows, for every \verb|q SUBFORMULA p|, the reduction of the model-theoretic notion of \textit{forcing} (\verb|holds (W,R) V q w|) to the list-theoretic one of \textit{membership} (\verb|MEM q w|). The proof of this lemma exploits all the results presented in Listing~\ref{lst:cons} and~\ref{lst:max-cons}.
\begin{lstlisting}[caption={HOLMS parametric truth lemma},label={lst:GEN_TRUTH_LEMMA}]
 GEN_TRUTH_LEMMA 
 |-!S W R p V q.
     ~ [S . {} |~ p] /\
     GEN_STANDARD_MODEL S p (W,R) V /\
     q SUBFORMULA p
     ==> !w. w IN W ==> (MEM q w <=> holds (W,R) V q w)
\end{lstlisting}

Since the modal language $\mathcal{L'}_{\Box}$ embedded in HOLMS is an extension of the one presented in Chapter~\ref{chap:1}, HOLMS proof of this lemma is much longer than the one presented in~\ref{lem:truth}. as it involves three base cases instead of two and six inductive steps instead of two. However, despite these differences in structure and length, the two proofs remain equivalent in their core argument and overall reasoning.

     \item In the parametric lemma \verb|GEN_COUNTERMODEL|~\href{https://archive.softwareheritage.org/swh:1:cnt:d364de3158b6405c920115aa0c801af9af49e302;origin=https://github.com/HOLMS-lib/HOLMS;visit=swh:1:snp:2c0efd349323ed6f8067581cf1f6d95816e49841;anchor=swh:1:rev:1caf3be141c6f646f78695c0eb528ce3b753079a;path=/gen_completeness.ml;lines=549-572}{\ExternalLink}, we identify a `counterworld' \verb| m| in \verb|GEN_STANDARD_MODEL S p (W,R) V| such that:
     \begin{itemize}
         \item \verb|MEM (Not p) m|;
         \item \verb|~holds (W,R) V p m|.
     \end{itemize}

     This parametric lemma states the claim in Listing~\ref{lst:claim-3} and is proved by  \verb|NONEMPTY_MAXIMAL_CONSISTENT| in Listing~\ref{lst:max-cons} and \verb|GEN_TRUTH_LEMMA|.
     
\begin{lstlisting}[caption={HOLMS parametric countermodel lemmata},label={lst:GEN_COUNTERMODEL}]
 GEN_COUNTERMODEL
 !S W R V m p. ~ [S . {} |~ p] /\
               MAXIMAL_CONSISTENT S p m /\
               MEM (Not p) m /\
               (!q. MEM q m ==> q SUBSENTENCE p) /\
               GEN_STANDARD_MODEL S p (W,R) V
                 ==> ~holds (W,R) V p m            
\end{lstlisting}

\end{enumerate}
\end{myproof1}

\item \textbf{Ad hoc polymorphic part of the proof}
\begin{myproof1}
    
To be \textit{parametric}, the definition of the standard frame cannot directly define its own accessibility relation, but it merely requests the verification of two constraints. Consequently, we must define for each normal logic \verb|*| a \verb|*_STANDARD_ACCESSIBILITY_RELATION| on maximal consistent lists verifying the two requirements~\customref{itm:req-1} and~\customreff{itm:req-2}.

Nevertheless, this part of the proof is not developed separately for each logic. Instead, it is structured in an \textit{ad hoc polymorphic} way:

\begin{itemize}
    \item[A.] The \textit{parametric} file \verb|gen_completeness.ml|:
\end{itemize}
\begin{itemize}
    \item Defines a parametric accessibility relation (\verb|GEN_STANDARD_REL|~\href{https://archive.softwareheritage.org/swh:1:cnt:d364de3158b6405c920115aa0c801af9af49e302;origin=https://github.com/HOLMS-lib/HOLMS;visit=swh:1:snp:2c0efd349323ed6f8067581cf1f6d95816e49841;anchor=swh:1:rev:1caf3be141c6f646f78695c0eb528ce3b753079a;path=/gen_completeness.ml;lines=356-370}{\ExternalLink}) which subsumes any relation specific to $\mathcal{S}$ we are going to define. Listing~\ref{list:stn-rel} codifies this class of ``standard'' relations for the completeness proof.

\begin{lstlisting}[caption={HOLMS definition of standard relations for completeness},label={list:stn-rel}]
#let GEN_STANDARD_REL = new_definition
 `GEN_STANDARD_REL S p w x <=>
   MAXIMAL_CONSISTENT S p w /\
   (!q. MEM q w ==> q SUBSENTENCE p) /\
   MAXIMAL_CONSISTENT S p x /\ 
   (!q. MEM q x ==> q SUBSENTENCE p) /\
   (!B. MEM (Box B) w ==> MEM B x)';;
\end{lstlisting}

    \item Proves parametric results that apply uniformly across different logics. These simplify and shorten the verification that the two conditions \customref{itm:req-1} and~\customreff{itm:req-2} hold for each \verb|*_STANDARD_ACCESSIBILITY_| \verb|RELATION|  We will not present these technical results here, but the user can find them in the lines~\href{https://archive.softwareheritage.org/swh:1:cnt:d364de3158b6405c920115aa0c801af9af49e302;origin=https://github.com/HOLMS-lib/HOLMS;visit=swh:1:snp:2c0efd349323ed6f8067581cf1f6d95816e49841;anchor=swh:1:rev:1caf3be141c6f646f78695c0eb528ce3b753079a;path=/gen_completeness.ml;lines=362-543}{\ExternalLink}.
\end{itemize}

\item[B.] For \verb|*| $\in \{ \mathbb{K; \ T; \ K}4; \ \mathbb{GL} \}$, each file \verb|*_completeness.ml| provides:

\begin{enumerate}
    \item [I.] \textbf{An identification of the accessibility relation} \verb|*_STANDARD_REL|.

    Table~\ref{tab:standard-rel} recalls HOLMS formalisation of the standard accessibility relation $Rel_*^A$ provided in Table~\ref{tab:standard_accessibility_reation}.
    We then consider a frame, whose domain is the set of \verb|MAXIMAL_CONSISTENT * p| list made of \verb|SUBSENTENCE p| and whose accessibility relation is \verb|*_STANDARD_REL|.

\end{enumerate}
\end{myproof1}

\begin{table}[h]
\centering
\begin{tabular}{|l|l|}
\hline
\verb|*| & \verb|*_STANDARD_REL p w x| \\
\midrule
\verb|K| & \verb|GEN_STANDARD_REL {} p w x| \\
\hline
\verb|T| & \verb|GEN_STANDARD_REL T_AX p w x| \\
\hline
\verb|K4| & \verb|GEN_STANDARD_REL K4_AX p w x /\| \\
          & \verb|(! B . MEM ( Box B ) w == > MEM ( Box B ) x| \\
\hline
\verb|GL| & \verb|GEN_STANDARD_REL GL_AX p w x /\| \\
          & \verb|(! B . MEM ( Box B ) w == > MEM ( Box B ) x ) /\| \\
          & \verb|(? E . MEM ( Box E ) x /\ MEM ( Not ( Box E )) w | \\
\hline
\end{tabular}
    \caption{HOLMS \textit{ad hoc polymorphic} definitions of standard accessibility relations}
    \label{tab:standard-rel}
\end{table}

\begin{myproof1}
\begin{enumerate}

    \item[II.] \textbf{Verification that the considered frame is a} \verb|*_STANDARD_FRAME|.

    In particular, we have to prove that this frame verifies the two constraints~\customref{itm:req-1} and~\customreff{itm:req-2} for \verb|*_STANDARD_MODEL|:

\begin{lstlisting}
 val it : goalstack = 1 subgoal (1 total)

 `! p. ~[* . {} |~ p]
     ==> {M | MAXIMAL_CONSISTENT * p M /\
         (? q. MEM q M ==> q SUBSENTENCE p)},
         *_STANDARD_REL p IN *_STANDARD_FRAME p'
\end{lstlisting}

\begin{enumerate}

    \item[1(b) i.] Our lemma \verb|*_MAXIMAL_CONSISTENT| verifies that this frame is appropriate (\verb|IN APPR *|).
    The strategy followed by these proofs is the same developed in the \textit{abstract} Lemma~\ref{lem:appr_std_frame} ($\langle W_*^{A}, Rel_*^{A} \rangle \in \mathfrak{S}$. ).
\vspace{-5mm}    
\begin{lstlisting}[caption={HOLMS guarantees that each defined frame is appropriate}]
 K_MAXIMAL_CONSISTENT
 |- !p. ~ [{} . {} |~ p]
        ==> ({M | MAXIMAL_CONSISTENT {} p M /\
                  (!q. MEM q M ==> q SUBSENTENCE p)},
             K_STANDARD_REL p)
            IN FINITE_FRAME

 RF_MAXIMAL_CONSISTENT
 |- !p. ~ [T_AX . {} |~ p]
        ==> ({M | MAXIMAL_CONSISTENT T_AX p M /\
                  (!q. MEM q M ==> q SUBSENTENCE p)},
             T_STANDARD_REL p)
            IN RF `

 TF_MAXIMAL_CONSISTENT
 |- !p. ~ [T_AX . {} |~ p]
        ==> ({M | MAXIMAL_CONSISTENT T_AX p M /\
                  (!q. MEM q M ==> q SUBSENTENCE p)},
             T_STANDARD_REL p)
            IN RF

 ITF_MAXIMAL_CONSISTENT
 |- !p. ~ [GL_AX . {} |~ p]
          ==> ({M | MAXIMAL_CONSISTENT GL_AX p M /\
                    (!q. MEM q M ==> q SUBSENTENCE p)},
               GL_STANDARD_REL p)
              IN ITF
 \end{lstlisting}

\item[1(b) ii.] Our lemma \verb|*_ACCESSIBILITY_LEMMA| verifies the requirement~\customreff{itm:req-2}.\\

The strategy employed by the followng theorems is the same developed in the \textit{abstract} Lemma~\ref{lem:accessibility_std_frame}.

 \begin{lstlisting}[caption={HOLMS theorem guaranteeing that \ref{itm:req-2} holds}]
 K_ACCESSIBILITY_LEMMA
 |- !p w q. ~ [{} . {} |~ p] /\
      MAXIMAL_CONSISTENT {} p w /\
      (!q. MEM q w ==> q SUBSENTENCE p) /\
      Box q SUBFORMULA p /\
      (!x. K_STANDARD_REL p w x ==> MEM q x)
      ==> MEM (Box q) w`
     
 T_ACCESSIBILITY_LEMMA
 |- !p w q.
      ~ [T_AX . {} |~ p] /\
      MAXIMAL_CONSISTENT T_AX p w /\
      (!q. MEM q w ==> q SUBSENTENCE p) /\
      Box q SUBFORMULA p /\
      (!x. T_STANDARD_REL p w x ==> MEM q x)
      ==> MEM (Box q) w

 K4_ACCESSIBILITY_LEMMA
 |- !p w q.
      ~ [K4_AX . {} |~ p] /\
      MAXIMAL_CONSISTENT K4_AX p w /\
      (!q. MEM q w ==> q SUBSENTENCE p) /\
      Box q SUBFORMULA p /\
      (!x. K4_STANDARD_REL p w x ==> MEM q x)
           ==> MEM (Box q) w

 GL_ACCESSIBILITY_LEMMA
 |- !p w q.
      ~ [GL_AX . {} |~ p] /\
      MAXIMAL_CONSISTENT GL_AX p w /\
      (!q. MEM q w ==> q SUBSENTENCE p) /\
      Box q SUBFORMULA p /\
      (!x. GL_STANDARD_REL p w x ==> MEM q x)
      ==> MEM (Box q) w
 \end{lstlisting}
\end{enumerate}
Figure~\ref{fig:adhoc} collects these results in a single table. 
\end{enumerate}
\end{myproof1}

\begin{table}[h]
    \centering
    \begin{tabular}{l|l | l}
        \toprule
        \textit{\texttt{*}} & \textit{Maximal consistent lemma for \texttt{*}} & \textit{Accesibility lemma for \texttt{*}}  \\
        \midrule
        \verb|K|  & \verb|K_MAXIMAL_CONSISTENT|~\href{https://archive.softwareheritage.org/swh:1:cnt:ae230138ff15476c8dab9e32606bceca7168285b;origin=https://github.com/HOLMS-lib/HOLMS;visit=swh:1:snp:2c0efd349323ed6f8067581cf1f6d95816e49841;anchor=swh:1:rev:1caf3be141c6f646f78695c0eb528ce3b753079a;path=/k_completeness.ml;lines=119-127}{\ExternalLink} & \verb|K_ACCESSIBILITY_LEMMA|~\href{https://archive.softwareheritage.org/swh:1:cnt:ae230138ff15476c8dab9e32606bceca7168285b;origin=https://github.com/HOLMS-lib/HOLMS;visit=swh:1:snp:2c0efd349323ed6f8067581cf1f6d95816e49841;anchor=swh:1:rev:1caf3be141c6f646f78695c0eb528ce3b753079a;path=/k_completeness.ml;lines=129-150}{\ExternalLink}   \\
        \hline
        T & \verb|RF_MAXIMAL_CONSISTENT|~\href{https://archive.softwareheritage.org/swh:1:cnt:1352cb294724b2251b9346657432700ecbed10d2;origin=https://github.com/HOLMS-lib/HOLMS;visit=swh:1:snp:2c0efd349323ed6f8067581cf1f6d95816e49841;anchor=swh:1:rev:1caf3be141c6f646f78695c0eb528ce3b753079a;path=/t_completeness.ml;lines=220-259}{\ExternalLink} &  \verb|T_ACCESSIBILITY_LEMMA|~\href{https://archive.softwareheritage.org/swh:1:cnt:1352cb294724b2251b9346657432700ecbed10d2;origin=https://github.com/HOLMS-lib/HOLMS;visit=swh:1:snp:2c0efd349323ed6f8067581cf1f6d95816e49841;anchor=swh:1:rev:1caf3be141c6f646f78695c0eb528ce3b753079a;path=/t_completeness.ml;lines=261-282}{\ExternalLink} \\
        \hline
        \verb|K4| & \verb|TF_MAXIMAL_CONSISTENT|~\href{https://archive.softwareheritage.org/swh:1:cnt:32869a1df338d91c4cd4a0cb9e73fb4f1be29991;origin=https://github.com/HOLMS-lib/HOLMS;visit=swh:1:snp:2c0efd349323ed6f8067581cf1f6d95816e49841;anchor=swh:1:rev:1caf3be141c6f646f78695c0eb528ce3b753079a;path=/k4_completeness.ml;lines=229-249}{\ExternalLink} &  \verb|K4_ACCESSIBILITY_LEMMA|~\href{https://archive.softwareheritage.org/swh:1:cnt:32869a1df338d91c4cd4a0cb9e73fb4f1be29991;origin=https://github.com/HOLMS-lib/HOLMS;visit=swh:1:snp:2c0efd349323ed6f8067581cf1f6d95816e49841;anchor=swh:1:rev:1caf3be141c6f646f78695c0eb528ce3b753079a;path=/k4_completeness.ml;lines=274-382}{\ExternalLink}  \\
        \hline
       \verb|GL| & \verb|ITF_MAXIMAL_CONSISTENT|~\href{https://archive.softwareheritage.org/swh:1:cnt:52c9de12454731a9a291b06bc750c0d2d14e1fb4;origin=https://github.com/HOLMS-lib/HOLMS;visit=swh:1:snp:2c0efd349323ed6f8067581cf1f6d95816e49841;anchor=swh:1:rev:1caf3be141c6f646f78695c0eb528ce3b753079a;path=/gl_completeness.ml;lines=314-342}{\ExternalLink} & \verb|GL_ACCESSIBILITY_LEMMA|~\href{https://archive.softwareheritage.org/swh:1:cnt:52c9de12454731a9a291b06bc750c0d2d14e1fb4;origin=https://github.com/HOLMS-lib/HOLMS;visit=swh:1:snp:2c0efd349323ed6f8067581cf1f6d95816e49841;anchor=swh:1:rev:1caf3be141c6f646f78695c0eb528ce3b753079a;path=/gl_completeness.ml;lines=344-467}{\ExternalLink}  \\
        \bottomrule
    \end{tabular}
    \caption{\textit{Ad-hoc polymorhic} results in HOLMS}
    \label{fig:adhoc}
\end{table}
\end{itemize}

Thanks to these lemmata distributed in the files \verb|*_competeness.ml|, we are able to apply \verb|GEN_COUNTERMODEL| lemma in Listing~\ref{lst:GEN_COUNTERMODEL} to obtain as corollaries the completeness theorems for each \verb|*| $\in \{ \mathbb{K; \ T; \ K}4; \ \mathbb{GL} \}$.

 \begin{lstlisting}[caption={Completeness thorem for K in k\textunderscore completeness.ml}, label={lst:k_completeness}]
 K_COMPLETENESS_THM
 |- !p. FINITE_FRAME:(form list->bool)#(form list->form list->bool)->bool |= p
       ==> [{} . {} |~ p]
\end{lstlisting}
\vspace{-5mm}
 \begin{lstlisting}[caption={Completeness thorem for T in t\textunderscore completeness.ml}]
 T_COMPLETENESS_THM
 |- !p. RF:(form list->bool)#(form list->form list->bool)->bool |= p
       ==> [T_AX . {} |~ p]
\end{lstlisting}
\newpage
\begin{lstlisting}[caption={Completeness thorem for K4 in k4\textunderscore completeness.ml}]
 K4_COMPLETENESS_THM
 |- !p. TF:(form list->bool)#(form list->form list->bool)->bool |= p
       ==> [K4_AX . {} |~ p]
\end{lstlisting}
\vspace{-5mm}
\begin{lstlisting}[caption={Completeness thorem for GL in gl\textunderscore completeness.ml}, label ={lst:gl_completeness}]
 GL_COMPLETENESS_THM
 |- !p. ITF:(form list->bool)#(form list->form list->bool)->bool |= p
       ==> [GL_AX . {} |~ p]
\end{lstlisting}

\subsubsection{Generalising the Completeness Results}\label{sec:generalising-completeness}
\begin{remark}
\begin{mydefinition}
Our goal of reducing of the semantic concept of \textit{forcing}  to the list-theoretic notion of \textit{membership} (\verb|holds (W,R)| \verb|V q w| $\longrightarrow$ \verb|MEM q w|), led us to restrict our analysis on frames whose domain is type-theoretically composed of formula lists:
\begin{center}
  \verb|(W,R):(form list->bool)#(form list->form list->bool)|.
\end{center}
This parsing comes from the fact that \verb|MEM| has type \verb|:A->(A)list->bool|, while \verb|holds| \verb|:(W->bool)#(W->W->bool)->((char)list->W->bool)->form->W->bool|.
As a consequence, the proof presented above establishes completeness only w.r.t.~the class:
\begin{center}
    \verb|APPR S: (form list->bool)#(form list->form list->bool)->bool|.
\end{center}

\noindent Despite this annoying technical consequence, it is possible to overcome such a limitation and generalise the theorem to appropriate frames on arbitrary polymorphic domains: \verb|APPR S: (W->bool)#(W->W->bool)->bool|).
\medskip

\noindent The whole demonstrative strategy for completeness is then split into two stages:
\begin{enumerate}
     \item[($\alpha$)] Prove the completeness theorem on appropriate frames with \textit{domains of formula lists}: restricted, not polymorphic \verb| APPR S: (form list->bool)# | \\ \verb|(form list->form list->bool)->bool|. \\
    \textbf{\textit{Remark:}} This part of the proof has been developed in the previous section.
    \item[($\beta$)] Generalise the theorem on appropriate frames with \textit{generic domains}: \\ no restrictions, polymorphic \verb|APPR S: (W->bool)#(W->W->bool)->bool|. \\
    \textbf{\textit{Remark:}} This part of the proof will be developed in the following.
\end{enumerate}

\end{mydefinition}
\end{remark}

\medskip

To generalize our completeness result, we aim to establish a \textit{correspondence} between models of different types. A well-known method for achieving a rigorous correspondence is through \textit{bisimulation theory}~\cite{DBLP:books/el/07/BBW2007}.

In particular, we apply a type-theoretic version of the \emph{bisimulation invariance lemma} \verb|BISIMILAR_VALID|~\href{https://archive.softwareheritage.org/swh:1:cnt:4a011f69b1180019be5d03c6d5f1cec4550055e8;origin=https://github.com/HOLMS-lib/HOLMS;visit=swh:1:snp:2c0efd349323ed6f8067581cf1f6d95816e49841;anchor=swh:1:rev:1caf3be141c6f646f78695c0eb528ce3b753079a;path=/modal.ml;lines=307-316}{\ExternalLink}, which formalises Lemma~\ref{thm:forc-bisim} and states that any bisimulation preserves the validity of modal formulas.

The following Listing~\ref{lst:bisimulation} encodes HOLMS formalisation of the most important concepts and results of \textit{bisimulation theory}, abstractly presented and proved in~\ref{sec:bisimulation}. \verb|modal.ml| formalises  bisimulation in HOL Light as a relation between models (triples), rather than as a function.

\begin{lstlisting}[caption={HOLMS formalisation of Bisimulation Theory}, label={lst:bisimulation}]
 # let BISIMILAR = new_definition
 `BISIMILAR (W1,R1,V1) (W2,R2,V2) (w1:A) (w2:B) <=>
    ?Z. BISIMIMULATION (W1,R1,V1) (W2,R2,V2) Z /\ Z w1 w2';;
 #let BISIMILAR_HOLDS = prove
 (`!W1 R1 V1 W2 R2 V2 w1:A w2:B.
    BISIMILAR (W1,R1,V1) (W2,R2,V2) w1 w2
     ==> (!p. holds (W1,R1) V1 p w1 <=> holds (W2,R2) V2 p w2)',
  REWRITE_TAC[BISIMILAR] THEN MESON_TAC[BISIMIMULATION_HOLDS]);;
 
 BISIMILAR_HOLDS_IN 
 |-!W1 R1 W2 R2.
     (!V1 w1:A. ?V2 w2:B. BISIMILAR (W1,R1,V1) (W2,R2,V2) w1 w2)
     ==> (!p. holds_in (W2,R2) p ==> holds_in (W1,R1) p)
 
 BISIMILAR_VALID = prove
 |-`!L1 L2 .
    (!W1 R1 V1 w1:A.
       (W1,R1) IN L1 /\ w1 IN W1
       ==> ?W2 R2 V2 w2:B.
             (W2,R2) IN L2 /\
             BISIMILAR (W1,R1,V1) (W2,R2,V2) w1 w2)
    ==> (!p. L2 |= p ==> L1 |= p)   
\end{lstlisting}

These results allow us (\verb|MATCH_MP_TAC BISIMILAR_VALID|) to prove in \verb|gen_comple| \verb|teness.ml| a parametric auxiliary lemma \verb|GEN_LEMMA_FOR_GEN_COMPLETENESS|\href{https://archive.softwareheritage.org/swh:1:cnt:d364de3158b6405c920115aa0c801af9af49e302;origin=https://github.com/HOLMS-lib/HOLMS;visit=swh:1:snp:2c0efd349323ed6f8067581cf1f6d95816e49841;anchor=swh:1:rev:1caf3be141c6f646f78695c0eb528ce3b753079a;path=/gen_completeness.ml;lines=599-707}{~\ExternalLink}.

\begin{lstlisting}[caption={HOLMS parametric lemma generalising completeness}, label={lst:GEN_LEMMA_FOR_GEN_COMPLETENESS}]
 GEN_LEMMA_FOR_GEN_COMPLETENESS
 |-`!S. INFINITE (:A)
       ==> !p. APPR S:(A->bool)#(A->A->bool)->bool |= p
            ==> APPR S:(form list->bool)#(form list->form list->bool)
                       ->bool
                 |= p'    
\end{lstlisting}

Finally, as corollaries of their completeness theorems and of this parametric lemma \verb|GEN_LEMMA_FOR_GEN_COMPLETENESS|, we prove completeness w.r.t.~any infinite-typed domain in the files  \verb|k_completeness.ml|~\href{https://archive.softwareheritage.org/swh:1:cnt:ae230138ff15476c8dab9e32606bceca7168285b;origin=https://github.com/HOLMS-lib/HOLMS;visit=swh:1:snp:2c0efd349323ed6f8067581cf1f6d95816e49841;anchor=swh:1:rev:1caf3be141c6f646f78695c0eb528ce3b753079a;path=/k_completeness.ml;lines=217-225}{\ExternalLink}, \verb|T_completeness.ml|~\href{https://archive.softwareheritage.org/swh:1:cnt:1352cb294724b2251b9346657432700ecbed10d2;origin=https://github.com/HOLMS-lib/HOLMS;visit=swh:1:snp:2c0efd349323ed6f8067581cf1f6d95816e49841;anchor=swh:1:rev:1caf3be141c6f646f78695c0eb528ce3b753079a;path=/t_completeness.ml;lines=344-352}{\ExternalLink}, \verb|K4_complete| \verb|ness.ml|~\href{https://archive.softwareheritage.org/swh:1:cnt:32869a1df338d91c4cd4a0cb9e73fb4f1be29991;origin=https://github.com/HOLMS-lib/HOLMS;visit=swh:1:snp:2c0efd349323ed6f8067581cf1f6d95816e49841;anchor=swh:1:rev:1caf3be141c6f646f78695c0eb528ce3b753079a;path=/k4_completeness.ml;lines=451-459}{\ExternalLink}, and \verb|GL_completeness.ml|~\href{https://archive.softwareheritage.org/swh:1:cnt:52c9de12454731a9a291b06bc750c0d2d14e1fb4;origin=https://github.com/HOLMS-lib/HOLMS;visit=swh:1:snp:2c0efd349323ed6f8067581cf1f6d95816e49841;anchor=swh:1:rev:1caf3be141c6f646f78695c0eb528ce3b753079a;path=/gl_completeness.ml;lines=534-542}{\ExternalLink}.
\vspace{-5mm}
\begin{lstlisting}[caption={HOLMS completeness theorems for generic types}]
 K_COMPLETENESS_THM_GEN 
 |- !p.INFINITE (:A) /\ FINITE_FRAME:(A->bool)#(A->A->bool)->bool |= p
       ==> [{} . {} |~ p]'

 T_COMPLETENESS_THM_GEN 
 |-!p. INFINITE (:A) /\ RF:(A->bool)#(A->A->bool)->bool |= p
       ==> [T_AX . {} |~ p]'
 K4_COMPLETENESS_THM_GEN = prove
 |-!p. INFINITE (:A) /\ TF:(A->bool)#(A->A->bool)->bool |= p
       ==> [K4_AX . {} |~ p]'

 GL_COMPLETENESS_THM_GEN = prove
 |-!p. INFINITE (:A) /\ ITF:(A->bool)#(A->A->bool)->bool |= p
       ==> [GL_AX . {} |~ p]
\end{lstlisting}

\section{Naive Decision Procedure}\label{sec:naive-decision}

The formalisation of the \textit{restricted} completeness theorems in Listings~\ref{lst:k_completeness}-~\ref{lst:gl_completeness} already offers a valuable contribution to automated reasoning, since these results allow us to prove that theoremhood in $\mathbb{K,\ T, \ K}4$ and $\mathbb{GL}$ is \textit{decidable}.

In particular, this result yields an (albeit highly non-optimal) upper bound on the cardinality of the models that need to be considered when testing validity: when verifying whether a given modal formula  $A$ of size\footnote{The \textit{size} of a modal formula measures its syntactic complexity.} $n$ is a theorem of the system $\mathbb{S}$ (i.e., whether $\mathcal{S} \vdash A$), we can restrict our model-checking procedure to examining models with cardinality  $k$, for any $k\le 2^n$.

This bound stems from the \textit{finite model property} (FMP), which ensures that if a formula is not a theorem, then there exists a finite countermodel demonstrating its invalidity. 
While the $2^n$  bound is far from optimal, it provides a theoretical guarantee that the truth of a modal formula can always be determined by considering a finite (albeit potentially large) number of models.

\subsubsection{In more details}

In our restricted completeness theorems, we establish the completeness of certain modal logics w.r.t.~their appropriate classes of frames with domain of formulas lists.

To state these results, in \verb|GEN_COUNTERMODEL|~\href{https://archive.softwareheritage.org/swh:1:cnt:d364de3158b6405c920115aa0c801af9af49e302;origin=https://github.com/HOLMS-lib/HOLMS;visit=swh:1:snp:2c0efd349323ed6f8067581cf1f6d95816e49841;anchor=swh:1:rev:1caf3be141c6f646f78695c0eb528ce3b753079a;path=/gen_completeness.ml;lines=549-572}{\ExternalLink} we build up a relational countermodel. 
Since the countermodel constructed is \textit{explicitly finite}, we can safely conclude— thanks to~\cite{harrop_1958}—that the \emph{finite model property} (FMP) holds for the modal systems $\mathbb{K, \ T, \ K}4$ and $\mathbb{GL}$.

An immediate corollary of the FMP is that theoremhood in these modal systems is \emph{decidable}, i.e. there exists an algorithm that can determine, in a finite number of steps, whether a given formula is a theorem of the system.

In principle, this means we could implement a decision procedure for these logics in HOL Light. 
A naive approach to such an implementation would involve defining--for each \verb|*| $\in \{ \mathbb{K~\href{https://archive.softwareheritage.org/swh:1:cnt:ae230138ff15476c8dab9e32606bceca7168285b;origin=https://github.com/HOLMS-lib/HOLMS;visit=swh:1:snp:2c0efd349323ed6f8067581cf1f6d95816e49841;anchor=swh:1:rev:1caf3be141c6f646f78695c0eb528ce3b753079a;path=/k_completeness.ml;lines=227-246}{\ExternalLink}; \ T~\href{https://archive.softwareheritage.org/swh:1:cnt:1352cb294724b2251b9346657432700ecbed10d2;origin=https://github.com/HOLMS-lib/HOLMS;visit=swh:1:snp:2c0efd349323ed6f8067581cf1f6d95816e49841;anchor=swh:1:rev:1caf3be141c6f646f78695c0eb528ce3b753079a;path=/t_completeness.ml;lines=354-379}{\ExternalLink}; \ K}4~\href{https://archive.softwareheritage.org/swh:1:cnt:32869a1df338d91c4cd4a0cb9e73fb4f1be29991;origin=https://github.com/HOLMS-lib/HOLMS;visit=swh:1:snp:2c0efd349323ed6f8067581cf1f6d95816e49841;anchor=swh:1:rev:1caf3be141c6f646f78695c0eb528ce3b753079a;path=/k4_completeness.ml;lines=461-483}{\ExternalLink}; \ \mathbb{GL}~\href{https://archive.softwareheritage.org/swh:1:cnt:52c9de12454731a9a291b06bc750c0d2d14e1fb4;origin=https://github.com/HOLMS-lib/HOLMS;visit=swh:1:snp:2c0efd349323ed6f8067581cf1f6d95816e49841;anchor=swh:1:rev:1caf3be141c6f646f78695c0eb528ce3b753079a;path=/gl_completeness.ml;lines=544-564}{\ExternalLink} \}$--a syntactic tactic \verb|*_TAC| and its associated rule \verb|*_RULE| that perform the following steps:

\begin{enumerate}
    \item Apply the completeness theorem: \\
  \verb|MATCH_MP_TAC *_COMPLETENESS_THM|;
 \item Unfold some definitions:
 \\ \verb|REWRITE_TAC[valid; FORALL_PAIR_THM; holds_in; holds;| \\ \phantom{REWRITEAi} \verb|IN_APPR_S; GSYM MEMBER_NOT_EMPTY]|;
 \item Attempt to resolve the resulting semantic problem using first-order reasoning techniques. The
 \verb|MESON_TAC[]|;
\end{enumerate} 

The use of \verb|MESON_TAC[]|--introduced in~\ref{sec:tactics}--in step 3 attempts to resolve the resulting semantic problem using first-order reasoning techniques. However, since this tactic has a finite termination bound, if the search space is too large, the naive decision procedure may fail to find a solution within the allowed steps. This limitation persists despite the decidability of the logics under analysis, as their decision procedures may still require an impractically large search space to reach a conclusion.

Below, we present the corresponding HOL Light implementation of this approach.

\begin{lstlisting}[caption={HOLMS naive decision procedure for K}]
 #let K_TAC : tactic =
  MATCH_MP_TAC K_COMPLETENESS_THM THEN
  REWRITE_TAC[valid; FORALL_PAIR_THM; holds_in; holds;
              IN_FINITE_FRAME; GSYM MEMBER_NOT_EMPTY] THEN
  MESON_TAC[];;
 
 #let K_RULE tm =
  prove(tm, REPEAT GEN_TAC THEN K_TAC);;
\end{lstlisting}
\vspace{-8mm}
\begin{lstlisting}[caption={HOLMS naive decision procedure for T}]
 #let T_TAC : tactic =
  MATCH_MP_TAC T_COMPLETENESS_THM THEN
  REWRITE_TAC[valid; FORALL_PAIR_THM; holds_in; holds;
              IN_RF; GSYM MEMBER_NOT_EMPTY] THEN
  MESON_TAC[];;
 
 #let T_RULE tm =
  prove(tm, REPEAT GEN_TAC THEN T_TAC);;
\end{lstlisting}
\vspace{-8mm}

\begin{lstlisting}[caption={HOLMS naive decision procedure for K4}]
 #let K4_TAC : tactic =
  MATCH_MP_TAC K4_COMPLETENESS_THM THEN
  REWRITE_TAC[valid; FORALL_PAIR_THM; holds_in; holds;
              IN_TF; GSYM MEMBER_NOT_EMPTY] THEN
  MESON_TAC[];;
 
 #let K4_RULE tm =
  prove(tm, REPEAT GEN_TAC THEN K4_TAC);;
\end{lstlisting}
\vspace{-8mm}

\begin{lstlisting}[caption={HOLMS naive decision procedure for GL}]
 #let GL_TAC : tactic =
   MATCH_MP_TAC GL_COMPLETENESS_THM THEN
   REWRITE_TAC[valid; FORALL_PAIR_THM; holds_in; holds;
              IN_ITF; GSYM MEMBER_NOT_EMPTY] THEN
   MESON_TAC[];;
 
 #let GL_RULE tm =
   prove(tm, REPEAT GEN_TAC THEN GL_TAC);;
\end{lstlisting}

As stressed in previous work on $\mathbb{GL}$~\cite{maggesi_et_al:LIPIcs.ITP.2021.26}, such a naive procedure can successfully prove certain lemmata that hold by the normality of the system. However, it often fails when applied to specific lemmas of  $\mathbb{T, \ K}4$ and $\mathbb{GL}$. The following List~\ref{lst:naive-fails} suggests the limitations of this naive approach.

\begin{lstlisting}[caption={HOLMS naive decision procedure cannot prove an instance of GL axiom}, label={lst:naive-fails}]
 # GL_RULE `[GL_AX . {} |~ Box (Box False --> False) --> Box [GL_AX . {} |~ Box (Box False --> False) --> Box False]';;
 0..0..0..4..8..12..20..104554..117586..132638..Exception: Failure 
 "solve_goal: Too deep".    
\end{lstlisting}

A straightforward improvement could be given by introducing an OCaml function that imposes an upper bound $k \le 2^n$ on the size of the frames where the validity of a modal formula $A$ (of size $n$) has to be model-checked. 
This refinement would ensure that the procedure operates within a \emph{finite} and \emph{explicitly defined} search space, making it more efficient.

However, instead of implementing this feasible but limited enhancement in HOLMS, we take a broader approach, presented in the following section. We extend to $\mathbb{K, T}$ and $\mathbb{K}4$ the method to develop a fully automated theorem prover and countermodel constructor for $\mathbb{GL}$~\cite{maggesi_et_al:LIPIcs.ITP.2021.26, DBLP:journals/jar/MaggesiB23}.

\section{Mechanising Modal Reasoning}\label{sub:mechanising-decision}

In this final section, we outline the automated theorem prover and countermodel constructor within the HOLMS framework for $\mathbb{K, \ T, \ K}4$, and $\mathbb{GL}$. It is obtained by combining the ad-hoc polymorphic \textit{completeness} results from Section~\ref{sub:formal-completeness} with the modular \textit{labelled sequent calculi} abstractly introduced in~\ref{sub:labelled-sequent}.

The semi-automated modal reasoning procedure is implemented through the \verb|HOLMS_TAC| tactic, its associated rule \verb|HOLMS_RULE|, and the countermodel constructor \verb|HOLMS_BUILD_COUNTERMODEL| developed in the files \verb|gen_decid.ml| and \verb|*_decid.ml| for \verb|*| $\in \{ \mathbb{K; \ T; \ K}4; \ \mathbb{GL} \}$.

To put it in a nutshell, given the goal \verb+[* . {} |~ A ]+, \verb|HOLMS_TAC| executes a terminating\footnote{The existence of a terminating proof search for the calculi $\mathsf{G3K}$, $\mathsf{G3KT}$ and $\mathsf{G3KGL}$ is guaranteed by the results discussed in Section~\ref{sub:labelled-sequent}. $\mathsf{G3KK4}$ provide, instead, a semi-decision procedure, as observed in Remark~\ref{rmK:semi-dec-k4}.} root-first proof search for the formula $\verb|A|$ within the labelled sequent calculus $\mathsf{G3K*}$, which is adequate for the logic \verb|*| and shallowly embedded in HOLMS. If the search succeeds, the tactic generates a HOL Light theorem; otherwise, it constructs a countermodel for $\verb|A|$ based on the relational semantics for \verb|*|. This procedure has been sketched in Figure~\ref{fig:HOLMS-schema}.

These mechanised (semi-)decision procedures modularly extend the one developed in the \verb|GL| repository~\cite[\S 6]{DBLP:journals/jar/MaggesiB23}, enhancing its flexibility and applicability to a broader range of modal logics.

In the following pages, we first showcase some \textit{examples} of the procedure in action and then we discuss in details its \textit{implementation}.

\subsection{Examples of Modal Reasoning within HOLMS}

The file \verb|tests.ml|~\href{https://archive.softwareheritage.org/swh:1:cnt:d7a8f32931f96fb31ac253e482f0302778163a20;origin=https://github.com/HOLMS-lib/HOLMS;visit=swh:1:snp:2c0efd349323ed6f8067581cf1f6d95816e49841;anchor=swh:1:rev:1caf3be141c6f646f78695c0eb528ce3b753079a;path=/tests.ml}{\ExternalLink} in the HOLMS repository collects some examples illustrating the application of \verb|HOLMS_RULE| and \verb|HOLMS_BUILD_COUNTERMODEL|. 

While our procedures remain a research prototype, implemented with a primary focus on \textit{modularity} rather than \textit{computational efficiency}, they nonetheless prove effective in deriving both interesting theorems and countermodels. Despite the lack of specific optimizations, their execution remains computationally feasible even on mid-level laptop (seconds in most cases, or at most a few minutes).

As an example of the application of automated modal reasoning, Listing~\ref{lst:holms-tac} shows how \verb|HOLMS_RULE| verifies the validity of the formula $\Box(\Box A \to \Diamond A)$ in the modal logic $\mathbb{T}$ via $\mathsf{G3KT}$.
It also presents the countermodel construction for the same formula in $\mathbb{K}$ and for the Löb schema ($\mathbf{GL}$) in $\mathbb{T}$, graphically displayed in Figure~\ref{fig:countermodel-GL}.

\begin{lstlisting}[caption={Interaction with HOLMS proof search and countermodel generation},label={lst:holms-tac}]
# HOLMS_RULE `[T_AX . {} |~ Box (Box a --> Diam a)]';;
|- [T_AX . {} |~Box (Box a --> Diam a)]

# HOLMS_BUILD_COUNTERMODEL `[{} . {} |~ Box (Box a --> Diam a)]';;
Countermodel found:
y IN W /\ w IN W /\
R w y /\
holds (W,R) V (Box a) y  /\ holds (W,R) V (Box Not a) y

# HOLMS_BUILD_COUNTERMODEL 
`[T_AX . {} |~ Box (Box a --> a) --> Box a]';;
Countermodel found:
y' IN W /\ y IN W /\ w IN W /\
R y' y' /\ R y y' /\ R y y /\ R w y /\ R w w /\
~holds (W,R) V a y' /\ ~holds (W,R) V a y /\ 
holds (W,R) V (Box (Box a --> a)) w 
\end{lstlisting}

   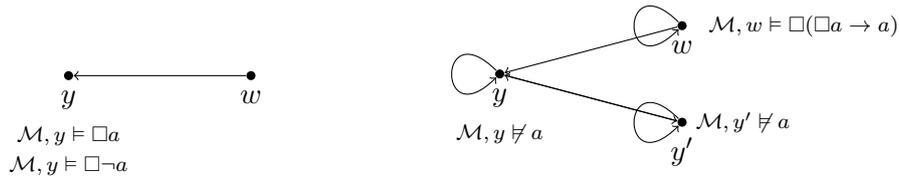
\begin{figure}[h]
    \centering
    \begin{tikzpicture}[scale=0.8]
        \node (y) at (0,0) [circle,fill,inner sep=1.2pt,label=below:$y$] {};
        \node (w) at (3,0) [circle,fill,inner sep=1.2pt,label=below:$w$] {};
        \node at (0,-1) {\scriptsize$\mathcal{M},y  \vDash \Box a$};
        \node at (0,-1.5) {\scriptsize$\mathcal{M},y  \vDash \Box \neg a$};
        \draw[<-] (y) -- (w);
    \end{tikzpicture}
    $\quad\quad\quad\quad\quad$
    \begin{tikzpicture}[scale=0.8]
        \node (y) at (0,0) [circle,fill,inner sep=1.2pt,label=below:$y$] {};
        \node (w) at (3,0.8) [circle,fill,inner sep=1.2pt,label=below:$w$] {};
        \node (y') at (3,-0.8) [circle,fill,inner sep=1.2pt,label=below:$y'$] {};
        \node at (0,-1) {\scriptsize$\mathcal{M},y \not \vDash a$};
        \node at (5,0.8) {\scriptsize$\mathcal{M},w  \vDash \Box (\Box a \to a)$};
        \node at (4,-0.8) {\scriptsize$\mathcal{M}, y' \not  \vDash a$};
        \draw[<-] (y) -- (w);
        \draw[->] (y) -- (y');
        \draw[<-] (y) -- (y');
        \draw[->] (w) to[out=135, in=225, looseness=30] (w);
        \draw[->] (y') to[out=135, in=225, looseness=30] (y');
        \draw[->] (y) to[out=135, in=225, looseness=30] (y);
    \end{tikzpicture}
    \caption{Relational Countermodels to $\Box(\Box A \to  \Diamond A)$ in $\mathbb{K}$ and to $\mathbf{GL}$ in $\mathbb{T}$}
    \label{fig:countermodel-GL}
\end{figure}

\subsection{Modular Implementation of the Decision Procedure}

\verb|HOLMS_RULE| determines whether a modal formula $A$ is a theorem of a logic \verb|*| $\in \{ \mathbb{K; \ T; \ K}4; \ \mathbb{GL} \}$. It achieves this result by reducing--thanks to \textit{adequacy}--\textit{theoremhood} in \verb|*| (\verb+[*.{} |~ A]+) to the verification of \textit{validity} within the appropriate class of frames for \verb|*| (\verb+APPR * |= A+), and further simplifying this check through the \textit{internalisation of relational semantics} in labelled sequent calculi. Specifically, the task that remains to be verified is the construction of a derivation of $\Rightarrow w:A$ within the formal version of the labelled sequent calculus \verb|G3K*| implemented in HOLMS.

In the following, we will first discuss the implementation of labelled sequent calculi within HOL Light through a \textit{shallow embedding}. Then, we will provide a detailed explanation of the \textit{implementation} in HOLMS of the modular decision procedure for theoremhood in \verb|*|.

\subsubsection{Shallow Embedding of Labelled Sequent Calculi within HOLMS}

Since labelled sequent calculi explicitly internalise relational semantics\footnote{See Section~\ref{sub:labelled-sequent}.}, our predicate \verb|holds (W,R) V A w|  precisely corresponds to the labelled formula $w : A$. 
We display in Table~\ref{tab:holms-notation} three different ways--two abstract and one within HOL Light--at our disposal for expressing forcing relation $\mathcal{M},w \vDash A$.
\smallskip

\begin{table}[h]
    \centering
    \begin{tabular}{l|c|c}

        & \textbf{Semantic Notation} & \textbf{Labelled Notation} \\
        \hline
        \textbf{Abstract Notation} & $\mathcal{M}, w \vDash A$  & $w:A$ \\
        \hline 
        \textbf{HOLMS notation} & \verb|holds (W,R) V A w| & - \\
        \hline
        
    \end{tabular}
    \caption{Comparison of semantic and sequents notations in abstract and in HOLMS}
    \label{tab:holms-notation}
\end{table}

  The correspondence between these concepts suggests that a \textit{deep embedding} of $\mathsf{G3K*}$ is unnecessary. Instead, the following approach provide a \textit{shallow embedding} of labelled sequent calculi within HOL Light:
  \begin{itemize}
      \item Leverage the \textbf{internalisation} of Kripke semantics, \textit{already deeply embedded};
      \item \textbf{Adapt} HOL Light\textit{ goal stack mechanism} to develop \verb|G3K*| \textit{root-first proof search.}
  \end{itemize}

We recall here the main lines of the adaptation of the goal-stack mechanism to the labelled sequent calculi root-first proof search.
Let us call any expression of forcing in HOL Light notation a \verb|holds|-proposition.

\begin{enumerate}
\item[A.] \textbf{Formalise the sequents within HOL Light}
  \begin{enumerate}
    \item[1.] \textbf{Treating a multi-consequent sequent calculus within HOL Light} \\
        The logical engine underlying HOL Light proof development is based on a \textit{single-consequent} sequent calculus for \textit{higher-order logic}. However, to support our framework, we must extend it to a \textit{multi-consequent} sequent calculus, \textit{incorporating multi-sets} of \verb|holds|-propositions and formalised relational atoms.

        \begin{enumerate}
            \item To align with the intended semantics of sequents, we represent \textbf{commas} using HOL Light's logical operators:
            \begin{itemize}
                \item Meta-level conjunction (\verb|/\|) in the antecedent;
                \item Meta-level disjunction (\verb|\/|) in the consequent.
            \end{itemize}
            
            \item Since HOL Light does not natively support \textbf{multi-sets}, we work with lists allowing permutation.
        \end{enumerate}

    \item[2.] \textbf{Formalising sequents meta deducibility relation `$\Rightarrow$'} \\
    HOL Light goal-stack mechanism operates on:
    \begin{enumerate}
        \item \textbf{Its goal}: We represent labelled formulas that appear on the \textit{right-hand} side of a $\mathsf{G3K*}$ sequent as a \textit{finite disjunction} of \verb|holds|-propositions in the goal stack;
        \item \textbf{Its hypothesis list}: We represent the formulas of $\mathcal{L}_{LS}$ that appear on the\textit{ left-hand} side of a $\mathsf{G3K*}$ as lists of hypothesis consisting of \verb|holds|-proposition and formalised relational atoms.
  \end{enumerate}

  \item[3.] \textbf{Formalising $\mathcal{L}_{LS}$ formulas} \\
  We translate $\mathcal{L}_{LS}$ formulas into relational semantics statements \textit{deeply embedded }in HOLMS, as follows:
  \begin{enumerate}
  \item `$w:A$'$ \longrightarrow $ `\verb|holds(W,R) V A w|';
  \item `$wRx$ $\longrightarrow$ `\verb|Rwx|'.
  \end{enumerate}
  Next, we apply the following translation strategies on the root formula:

  \begin{enumerate}
  \item[a.] Right-hand side: It is translated into an equivalent disjunction of conjunctions of \verb|holds_proposition| using the disjunctive normal form conversion, implemented by \verb|HOLDS_NNFC_UNFOLD_CONV|~\href{https://archive.softwareheritage.org/swh:1:cnt:dbf493c0cf78f9cb975db418aad12fdd3f37cfb8;origin=https://github.com/HOLMS-lib/HOLMS;visit=swh:1:snp:2c0efd349323ed6f8067581cf1f6d95816e49841;anchor=swh:1:rev:1caf3be141c6f646f78695c0eb528ce3b753079a;path=/gen_decid.ml;lines=14-30}{\ExternalLink};

   \item[b.] Left-hand side: It is translated into hypotheses composed of \verb|holds|-propositions and formalised relational atoms.
  \end{enumerate}

   For an illustrative example of this translation, see Listing~\ref{lst:translation}.

    \item[B.] \textbf{Mimic the rules of labelled sequent calculi as HOL Light tactics} \\
     To implement the inference rules of $\mathsf{G3K*}$ within the goal-stack mechanism, we treat them as HOL Light tactics:
    \begin{enumerate}
        \item \textbf{Propositional Rules} \\
        Most of the \textit{right rules} are formalised using HOL Light's propositional tactics, while combinations of the proof assistant tactics (denoted as '==' in Table~\ref{tab:formalising-prop-rules}) or HOLMS-specific tactics (in blue in the table) handle the \textit{left rules} for propositional connectives. \end{enumerate}

        \begin{table}[h]
    \centering
    \begin{tabular}{l|c|c|c|c}
        \toprule
        \multirow{2}{*} & \multicolumn{2}{c|}{Left} & \multicolumn{2}{c}{Right} \\ \hline
        & Rule & HOLMS Tactic & Rule & HOLMS Tactic \\
        \midrule
        $\bot$ & \texttt{L}$\bot$ & \textcolor{Blue}{\texttt
        {NEG\textunderscore CONTR\textunderscore TAC}}~\href{https://archive.softwareheritage.org/swh:1:cnt:dbf493c0cf78f9cb975db418aad12fdd3f37cfb8;origin=https://github.com/HOLMS-lib/HOLMS;visit=swh:1:snp:2c0efd349323ed6f8067581cf1f6d95816e49841;anchor=swh:1:rev:1caf3be141c6f646f78695c0eb528ce3b753079a;path=/gen_decid.ml;lines=96-99}{\ExternalLink} & -- & -- \\
        $\neg$ & \texttt{L}$\neg$  & \textcolor{Blue}{\texttt
        {NEG\textunderscore LEFT\textunderscore TAC}}~\href{https://archive.softwareheritage.org/swh:1:cnt:dbf493c0cf78f9cb975db418aad12fdd3f37cfb8;origin=https://github.com/HOLMS-lib/HOLMS;visit=swh:1:snp:2c0efd349323ed6f8067581cf1f6d95816e49841;anchor=swh:1:rev:1caf3be141c6f646f78695c0eb528ce3b753079a;path=/gen_decid.ml;lines=92-94}{\ExternalLink} & \texttt{R}$\neg$ & \textcolor{Blue}{\texttt
        {NEG\textunderscore RIGHT\textunderscore TAC}}~\href{https://archive.softwareheritage.org/swh:1:cnt:dbf493c0cf78f9cb975db418aad12fdd3f37cfb8;origin=https://github.com/HOLMS-lib/HOLMS;visit=swh:1:snp:2c0efd349323ed6f8067581cf1f6d95816e49841;anchor=swh:1:rev:1caf3be141c6f646f78695c0eb528ce3b753079a;path=/gen_decid.ml;lines=86-90}{\ExternalLink} \\
        $\land$ & \texttt{L}$\land$  & ==  & \texttt{R}$\land$  & \verb|CONJ_TAC|  \\
        $\lor$ & \texttt{L}$\lor$  & == & \texttt{R}$\lor$ & \verb|DISJ_TAC| \\
        $\to$ & \texttt{L}$\to$  & ==  & \texttt{R}$\to$  &  \verb|DISCH_TAC| \\
        \bottomrule
    \end{tabular}
    \caption{HOLMS implementation of sequent calculi propositional rules}
    \label{tab:formalising-prop-rules}
    
\end{table}
        \item \textbf{Modal Rules} \\
         In order to ease the automation of the process, in the hypothesis of each goalstack we make the explicit assumptions about:
         \begin{itemize}
             \item Left-to-right direction of the standard forcing condition (\texttt{L}$\Box$);
             \item Right-to-left direction of the forcing condition (\texttt{R}$\Box$ or \texttt{R}$\Box_{Lob}$);
             \item ``Relational'' Rules: We assume the characteristic properties of the accessibility relation for \verb|*|.
         \end{itemize}
             Table~\ref{tab:goal_stack_hypotheses} summarises the additional hypotheses to shallow embed each \verb|G3K*|--for \verb|*| $\in \{ \mathbb{K; \ T; \ K}4; \ \mathbb{GL} \}$-- within HOL Light. For an illustrative example of these assumptions, see instead hypothes \verb|1-5| in Listing~\ref{lst:assumption}.
     
             \begin{table}[h]
             \centering
             \begin{tabular}{l|c|c|c|c}
             \toprule
             & $\mathsf{G3K}$ & $\mathsf{G3KT}$ & $\mathsf{G3KK4}$ & $\mathsf{G3KGL}$ \\
             \midrule
             Left-to-right side condition & \texttt{L}$\Box$ & \texttt{L}$\Box$ & \texttt{L}$\Box$ & \texttt{L}$\Box$ \\ \hline
             Right-to-left side condition & \texttt{R}$\Box$ & \texttt{R}$\Box$ & \texttt{R}$\Box$ & \texttt{ \ \  R}$\Box_{Lob}$ \\ \hline
             Characteristic condition & -- & \texttt{Refl} & \texttt{Trans} & \texttt{Irrefl}, \texttt{Trans} \\
             \bottomrule
             \end{tabular}
             \caption{Additional hypotheses in the goal stack mechanism}
             \label{tab:goal_stack_hypotheses}
             \end{table}
    
    \end{enumerate}
\end{enumerate}

\begin{remark}\label{rmK:semi-dec-k4}
In order to guarantee decision procedures, the order of application of these HOLMS tactics (formalised inference rules) has to follow the terminating proof searches designed in~~\cite{negri2005proof,negri2014proofs} and introduced in Section~\ref{sub:labelled-sequent}.
\end{remark}

\subsubsection{Design of the formal decision procedure}

Here, we describe the functionality of the automated (semi-)decision procedure for the modal logics $\mathbb{K, T, K}4$ and $\mathbb{GL}$ within in HOLMS. 

In parallel, we retrace in Listings~\ref{lst:1}-~\ref{lst:last} the procedure implemented for the example introduced in Listing~\ref{lst:holms-tac} \verb+‘[T_AX . {} |~ Box (Box a --> Diam a)]'+.
\medskip

\begin{mydefinition}
\verb|HOLMS_RULE|~\href{https://archive.softwareheritage.org/swh:1:cnt:dbf493c0cf78f9cb975db418aad12fdd3f37cfb8;origin=https://github.com/HOLMS-lib/HOLMS;visit=swh:1:snp:2c0efd349323ed6f8067581cf1f6d95816e49841;anchor=swh:1:rev:1caf3be141c6f646f78695c0eb528ce3b753079a;path=/gen_decid.ml;lines=183-184}{\ExternalLink}: \\
Given the input \verb+[* . {} |~ A ]+, \verb|HOLMS_RULE| sets it as the verification goal and invokes \verb|HOLMS_TAC|.
\vspace{-5mm}
\begin{lstlisting}[caption={Setting the verification goal}, label={lst:1}]
 val it : goalstack = 1 subgoal (1 total)
 `[T_AX . {} |~ Box (Box a --> Diam a)]'
\end{lstlisting}

\medskip

\noindent\verb|HOLMS_TAC|~\href{https://archive.softwareheritage.org/swh:1:cnt:dbf493c0cf78f9cb975db418aad12fdd3f37cfb8;origin=https://github.com/HOLMS-lib/HOLMS;visit=swh:1:snp:2c0efd349323ed6f8067581cf1f6d95816e49841;anchor=swh:1:rev:1caf3be141c6f646f78695c0eb528ce3b753079a;path=/gen_decid.ml;lines=151-181}{\ExternalLink}:\\
This tactic first rewrites the occurrences of $\Diamond$ as $\neg \Box \neg$ and then calls the \textit{ad-hoc polymorphic} \verb|*_TAC|.
\vspace{-3mm}
\begin{claim}
 \verb+[* . {} |~ A ]+;
        \end{claim}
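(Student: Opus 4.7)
The plan is to read the claim as recording the goal state produced by the first action of \verb|HOLMS_TAC|, namely the rewriting of $\Diamond$, and to explain both why the outer form \verb+[* . {} |~ A ]+ is preserved by this step and how the subsequent steps dispatch the goal. The preservation is essentially a syntactic observation: the Diamond rewriting uses the metalinguistic abbreviation $\Diamond B \coloneq \neg \Box \neg B$ from Convention~\ref{cnv:metaling_abbr}, which is a definitional unfolding taking place strictly \emph{inside} the modal formula argument of the ternary deducibility predicate \verb+[S . H |~ -]+. Since the rewriting is propagated structurally through the inductive definition of \verb|:form| and leaves the ambient predicate, the axiom-schema set \verb|*|, and the empty hypothesis set \verb|{}| untouched, the resulting goal is of the form \verb+[* . {} |~ A']+ with $A'$ the $\Diamond$-free version of $A$.

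The second part of the plan shows how the decision procedure then reduces this goal. Having eliminated the $\Diamond$ operator so that only primitive connectives and $\Box$ occur, \verb|HOLMS_TAC| dispatches to the logic-specific \verb|*_TAC|, which invokes the ad-hoc polymorphic adequacy chain of Section~\ref{sub:formal-completeness}: completeness (\verb|*_COMPLETENESS_THM|) converts \verb+[* . {} |~ A']+ into the semantic goal \verb+APPR * |= A'+, and the internalisation of relational semantics inside labelled sequents turns this into a proof-search problem in HOL Light's goal stack, as described in Table~\ref{tab:holms-notation} and Section~\ref{sub:mechanising-decision}. At this point the characteristic conditions of \verb|*| (reflexivity for $\mathbb{T}$, transitivity for $\mathbb{K}4$, irreflexivity-transitivity for $\mathbb{GL}$, plus the appropriate direction of the forcing condition for $\Box$) are injected as standing hypotheses, as summarised in Table~\ref{tab:goal_stack_hypotheses}.

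The third step is the actual root-first proof search in $\mathsf{G3K*}$: the propositional rules of Table~\ref{tab:formalising-prop-rules} are applied using HOL Light's native tactics and the HOLMS-specific tactics for negation and falsity, while the modal steps use the hypotheses introduced in the previous paragraph. The order of applications follows the terminating strategies of~\cite{negri2005proof, negri2014proofs}, so each branch is either closed by an initial sequent (yielding a derivation whose reflection upward through the adequacy chain gives a HOL Light theorem \verb+[* . {} |~ A]+) or saturated into an open leaf, from which \verb|HOLMS_BUILD_COUNTERMODEL| reads off a finite relational countermodel (as in the second and third examples of Listing~\ref{lst:holms-tac} and Figure~\ref{fig:countermodel-GL}).

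The main obstacle in carrying out this plan is not the preservation claim itself, which is syntactically immediate, but the companion guarantee that the subsequent automated machinery is both \emph{sound} (which follows from the fully formalised adequacy theorems of Section~\ref{sub:formal-completeness}) and \emph{terminating}. For $\mathbb{K}$, $\mathbb{T}$ and $\mathbb{GL}$ termination is inherited from the strategies cited above, but for $\mathbb{K}4$ the procedure as implemented is only a semi-decision one (Remark~\ref{rmK:semi-dec-k4}): attaining a true decision procedure there would require adding saturation conditions on the \texttt{Trans} rule to prevent repeated expansions on the same relational context, and this is where the genuine work of the plan lies.
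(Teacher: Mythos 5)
Your reading is correct and matches the paper's own treatment: the claim is simply the goal-state annotation at the point where \verb|HOLMS_TAC| hands off to the ad-hoc polymorphic \verb|*_TAC|, and the paper's "justification" is exactly the observation you make — the rewriting of $\Diamond$ as $\neg\Box\neg$ via \texttt{REWRITE\_TAC[diam\_DEF; dotbox\_DEF]} is a definitional unfolding inside the formula argument, leaving the ambient \verb+[* . {} |~ -]+ shape, the schema set, and the empty hypothesis set untouched (as witnessed concretely in Listing~\ref{lst:2}). Your additional material on the subsequent completeness step, proof search, and the $\mathbb{K}4$ termination caveat accurately tracks the rest of Section~\ref{sub:mechanising-decision} but is not needed to establish this particular goal-state claim.
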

\vspace{-8mm}
\begin{lstlisting}[caption={Rewriting Diamond}, label={lst:2}]
 # e(REWRITE_TAC[diam_DEF; dotbox_DEF]);;
 val it : goalstack = 1 subgoal (1 total)

 `[T_AX . {} |~ Box (Box a --> Not Box Not a)]'
\end{lstlisting}

\medskip
\noindent\verb|*_TAC| for \verb|*| $\in \{ \mathbb{K~\href{https://archive.softwareheritage.org/swh:1:cnt:7ec6e34a7d81fb00cbf22dcbf9b09c66538ad195;origin=https://github.com/HOLMS-lib/HOLMS;visit=swh:1:snp:2c0efd349323ed6f8067581cf1f6d95816e49841;anchor=swh:1:rev:1caf3be141c6f646f78695c0eb528ce3b753079a;path=/k_decid.ml;lines=107-117}{\ExternalLink}; \ T~\href{https://archive.softwareheritage.org/swh:1:cnt:62ed022ba5b7ebeb6fd3f5b4cea979dceaaaa630;origin=https://github.com/HOLMS-lib/HOLMS;visit=swh:1:snp:2c0efd349323ed6f8067581cf1f6d95816e49841;anchor=swh:1:rev:1caf3be141c6f646f78695c0eb528ce3b753079a;path=/t_decid.ml;lines=115-125}{\ExternalLink}; \ K}4~\href{https://archive.softwareheritage.org/swh:1:cnt:97df598aa4202fa85e52877bda28aef5ae4952d0;origin=https://github.com/HOLMS-lib/HOLMS;visit=swh:1:snp:2c0efd349323ed6f8067581cf1f6d95816e49841;anchor=swh:1:rev:1caf3be141c6f646f78695c0eb528ce3b753079a;path=/k4_decid.ml;lines=120-130}{\ExternalLink}; \ \mathbb{GL}~\href{https://archive.softwareheritage.org/swh:1:cnt:83ff158d2d2a7208c9d2b7d8a686a5289dc2d024;origin=https://github.com/HOLMS-lib/HOLMS;visit=swh:1:snp:2c0efd349323ed6f8067581cf1f6d95816e49841;anchor=swh:1:rev:1caf3be141c6f646f78695c0eb528ce3b753079a;path=/gl_decid.ml;lines=143-153}{\ExternalLink} \}$: 
\vspace{-3mm}
\begin{enumerate}
    \item \textbf{Application of completeness theorem}
    \begin{enumerate}
        \item[i.] Invokes \verb|*_COMPLETENESS_THM_GEN|:
        \begin{claim}
         \verb+APPR * |= A+;
        \end{claim}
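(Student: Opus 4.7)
The approach is to further transform the validity goal into a set-up amenable to root-first proof search inside the shallow embedding of $\mathsf{G3K*}$. The first step is to unfold the definitions involved: \verb|valid|, \verb|holds_in|, membership in \verb|APPR *|, and the appropriate frame definition (\verb|RF|, \verb|TF|, \verb|ITF|, etc.). Applying \verb|REWRITE_TAC| with the characterization lemmata and \verb|FORALL_PAIR_THM| destructures the arbitrary appropriate frame as a pair \verb|(W,R)| and introduces as explicit hypotheses the nonemptiness of \verb|W|, the fact that \verb|R| relates worlds in \verb|W|, and the characteristic properties of \verb|R| for the logic under consideration (reflexivity for \verb|T|, transitivity for \verb|K4|, transitivity and irreflexivity for \verb|GL|; no additional ones for \verb|K|). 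The goal then becomes a universally quantified statement about an arbitrary valuation \verb|V| and world \verb|w IN W|, asserting \verb|holds (W,R) V A w|.

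Second, I would prepare the goal for the sequent-style search by rewriting it via \verb|HOLDS_NNFC_UNFOLD_CONV| into a disjunction of conjunctions of \verb|holds|-propositions, mirroring a $\mathsf{G3K*}$ sequent with multiset consequent encoded as a finite disjunction and multiset antecedent encoded as the hypothesis list. Simultaneously I would internalise the forcing conditions for \verb|Box| as additional hypotheses: the left-to-right direction \texttt{L}$\Box$ for every boxed formula occurring negatively, and the right-to-left condition (\texttt{R}$\Box$ for \verb|K|, \verb|T|, \verb|K4|, or \texttt{R}$\Box_{Lob}$ for \verb|GL|), so that modal steps become purely first-order reasoning steps about \verb|W|, \verb|R|, and \verb|V|.

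Third, I would drive the proof by the tactic block that formalises the terminating proof search of $\mathsf{G3K*}$: propositional rules are discharged by \verb|CONJ_TAC|, \verb|DISJ_TAC|, \verb|DISCH_TAC|, and the HOLMS-specific \verb|NEG_LEFT_TAC|, \verb|NEG_RIGHT_TAC|, \verb|NEG_CONTR_TAC| (Table~\ref{tab:formalising-prop-rules}); modal rules fire through the hypotheses introduced above, with fresh worlds generated by the \texttt{R}$\Box$-step and transitivity/reflexivity/irreflexivity deployed where needed; finally each leaf is closed by \verb|MESON_TAC[]| or a matching \verb|ACCEPT_TAC| against the initial-sequent pattern \verb|V a w ==> V a w|. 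By the analyticity and subterm property of $\mathsf{G3K*}$, this search is steered by the structure of \verb|A|; the saturation conditions ensure termination for \verb|K|, \verb|T| and (via the Löb variant) \verb|GL|.

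The main obstacle is controlling termination in a foundationally safe way: for \verb|K4| the procedure remains a semi-decision only, so the tactic must impose the loop-prevention strategy of~\cite{DBLP:conf/lics/GargGN12}; for \verb|GL| one has to carefully order \texttt{L}$\Box$ and \texttt{R}$\Box_{Lob}$ to realise the Tait-Schütte-Takeuti reduction. When the search closes every branch, the resulting HOL Light derivation witnesses \verb|APPR * |= A|; when it does not, the open branch is recycled by \verb|HOLMS_BUILD_COUNTERMODEL| into an explicit relational countermodel falsifying \verb|A|, as illustrated in Listing~\ref{lst:holms-tac} and Figure~\ref{fig:countermodel-GL}.
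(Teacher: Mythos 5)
Your proposal is correct and follows essentially the same route as the paper: after the completeness theorem reduces theoremhood to validity over \texttt{APPR *}, the goal is discharged by unfolding \texttt{valid}/\texttt{holds\_in} with \texttt{FORALL\_PAIR\_THM}, introducing the modal and relational rules of $\mathsf{G3K*}$ as explicit hypotheses, normalising via \texttt{HOLDS\_NNFC\_UNFOLD\_CONV}, and running the terminating root-first proof search (with the K4 semi-decision caveat and countermodel extraction on failure), exactly as in steps (ii)--4 of the paper's description of \texttt{*\_TAC}. The only divergences are cosmetic implementation details (e.g.\ you close leaves with \texttt{MESON\_TAC}/\texttt{ACCEPT\_TAC} where the paper uses the \texttt{*\_STEP\_TAC} closing condition, and you omit the technical instantiation of the domain at type \texttt{:num}).
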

        
        \item[ii.] Provides a model \verb|(W,R) V | based on a frame \verb|(W,R) IN APPR *| and a world \verb|w| inhabiting this model:
        \begin{claim}
         \verb|holds (W,R) V A w | $\ \ $($\mathsf{G3K*} \vdash \Rightarrow w:A$ ).
        \end{claim}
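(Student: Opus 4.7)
The plan is to continue the decision procedure by simulating a terminating root-first proof search in the labelled sequent calculus $\mathsf{G3K*}$ corresponding to the modal system under analysis. Since the shallow embedding identifies the current goal state \verb|holds (W,R) V A w| with the root sequent $\Rightarrow w:A$, completing the verification amounts to iteratively applying HOL Light tactics that mimic the inference rules of $\mathsf{G3K*}$ in a bottom-up fashion, each tactic reducing a labelled formula either in the goal (right-side formulas) or in the hypothesis list (left-side formulas).

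First, I would normalise the goal into a disjunction of conjunctions of \verb|holds|-propositions and negated relational atoms using the conversion \verb|HOLDS_NNFC_UNFOLD_CONV|, so that its syntactic shape matches that of a labelled sequent. Second, I would introduce explicit hypotheses corresponding to the side conditions of the modal rules--namely, the inductive clause of \verb|holds| for $\Box$ in both the left-to-right (\texttt{L}$\Box$) and right-to-left (\texttt{R}$\Box$ or \texttt{R}$\Box_{Lob}$) directions--together with the characteristic properties of the accessibility relation (reflexivity for $\mathbb{T}$, transitivity for $\mathbb{K}4$, transitivity and irreflexivity for $\mathbb{GL}$). These assumptions internalise the semantic content of the frame class \verb|APPR *| inside the goal stack.

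Third, the procedure would then iteratively apply the propositional and modal HOLMS tactics corresponding to the $\mathsf{G3K*}$ inference rules (as summarised in Tables~\ref{tab:formalising-prop-rules} and~\ref{tab:goal_stack_hypotheses}), in an order compatible with the terminating proof strategies of Negri et al. Each branch of the resulting proof tree is closed when an initial sequent is detected, i.e.\ when a \verb|holds|-proposition appears simultaneously in both antecedent and consequent, yielding a HOL Light theorem; otherwise, when a branch saturates without closing, the hypotheses remaining on that branch directly describe a Kripke countermodel for $A$ at $w$, which is what \verb|HOLMS_BUILD_COUNTERMODEL| reads off and displays to the user.

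The main obstacle will be guaranteeing termination of the proof search for $\mathsf{G3KK4}$, where naive root-first application of \texttt{L}$\Box$ combined with the \texttt{Trans} rule can loop indefinitely: a suitable loop-detection or saturation mechanism, inspired by~\cite{DBLP:conf/lics/GargGN12}, is required to turn the semidecision procedure currently available in the repository into a genuine decision procedure, and its faithful implementation on top of HOL Light's goal-stack mechanism is the most delicate aspect of the whole construction.
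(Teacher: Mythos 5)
Your proposal matches the paper's own procedure essentially step for step: normalisation of the goal via \verb|HOLDS_NNFC_UNFOLD_CONV|, introduction of explicit hypotheses encoding \texttt{L}$\Box$, \texttt{R}$\Box$ (or \texttt{R}$\Box_{Lob}$) and the characteristic frame properties, iterative application of the tactics mimicking the $\mathsf{G3K*}$ rules under a terminating strategy, branch closure at initial sequents, countermodel extraction from saturated open branches, and the caveat that $\mathsf{G3KK4}$ currently yields only a semi-decision procedure. The only divergences are cosmetic (you normalise before introducing the modal-rule hypotheses, whereas the paper does the reverse, and you omit the \texttt{L}$\bot$ closing condition), so this is the same argument.
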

    \end{enumerate}
     \vspace{-2mm}
    \begin{remark}
        For technical reasons. the domain \verb|W| sits on type \verb|:num|. Consequently, we apply a specialised instantiation of the general completeness theorem for this type, denoted as \verb|*_COMPLETENESS_THEOREM_NUM|.
    \end{remark}
    \vspace{-7mm}
    \newpage
    \begin{lstlisting}[caption={Applying T completeness result}, label={lst:3}]
  # e(MATCH_MP_TAC T_COMPLETENESS_THEOREM_NUM THEN
  REWRITE_TAC[valid; FORALL_PAIR_THM; holds_in]);;
  val it : goalstack = 1 subgoal (1 total)

 `holds (W,R) V (Box (Box a --> Not Box Not a)) w'
    \end{lstlisting}

    \item \textbf{Introducing modal rules} \\
    Introduces explicit additional hypotheses, as outlined in Table~\ref{tab:goal_stack_hypotheses}. This step ensures that modal and relational rules from $\mathsf{G3K*}$ can be effectively managed within HOL Light goal-stack. 

    \begin{lstlisting}[caption={Assuming modal rules for G3KT}, label={lst:assumption}]
 e(MATCH_MP_TAC T_COMPLETENESS_NUM THEN REPEAT GEN_TAC THEN INTRO_TAC "refl boxr1 boxr2 boxl1 boxl2 w acc");;
val it : goalstack = 1 subgoal (1 total)

  0 [`! x y. R x y ==> R y y'] (refl)
  1 [`! p w.
        w IN W /\ (! y. y IN W /\ R w y ==> holds (W,R) V p y
        ) ==> holds (W,R) V (Box p) w'] (boxr1)
  2 [`! Q p w.
        w IN W /\ (! y. y IN W /\ R w y ==> holds (W,R) V p y 
        \/ Q) ==> holds (W,R) V (Box p) w \/ Q'] (boxr2)
  3 [`! w w'.
        R w w'
        ==> (! p. holds (W,R) V (Box p) w ==> holds (W,R) V p w')'] (boxl1)
  4 [`! p w.
        holds (W,R) V (Box p) w
        ==> (! w'. R w w' ==> holds (W,R) V p w')'] (boxl2)
  5 [`w IN W'] (w)
  6 [`R w w'] (acc)

`holds (W,R) V (Box (Box a --> Not Box Not a)) w'
    \end{lstlisting}

    \item \textbf{Rewriting goal and hypotheses in normal form} \\
    Unfolds the goal and the hypotheses using \verb|HOLDS_NNFC_UNFOLD_CONV|, as previously described. This transformation guarantees that, at each step of the proof search, the goal term is expressed as a finite conjunction of disjunctions of positive and negative \verb|holds| propositions. Keeping the goal and the hypothesis in this conjunctive normal form, facilitates the activation of the appropriate rules.
\newpage
\begin{lstlisting}[caption={Rewriting in disjunctive normal form}, label={lst:translation}]
  0 [`! x y. R x y ==> R y y'] (refl)
  1 [`! p w. w IN W /\ (! y. y IN W /\ R w y ==> 
          holds (W,R) V p y) ==> holds (W,R) V (Box p) w']
          (boxr1)
  2 [`! Q p w. w IN W /\ (! y. y IN W /\ R w y ==> 
          holds (W,R) V p y \/ Q)  ==>
          holds (W,R) V (Box p) w \/ Q'] (boxr2)
  3 [`! w w'. R w w' ==>
          (! p. holds (W,R) V (Box p) w ==> 
          holds (W,R) V p w')'] (boxl1)
  4 [`! p w. holds (W,R) V (Box p) w ==>
          (! w'. R w w' ==> holds (W,R) V p w')'] (boxl2)
  5 [`w IN W'] (w)
  6 [`R w w`] (acc)

`holds (W,R) V (Box (Box a --> Not Box Not a)) w'
\end{lstlisting}

    \item \textbf{Applying formalised $\mathsf{G3K*}$ rules following a terminating proof search}
    \begin{enumerate}
    \item  By checking the hypotheses list, we apply recursively the left rules for propositional connectives--prioritising non-branching rules, i.e. those that do not generate subgoals--, as well as the \texttt{L}$\Box$: \verb|HOLDS_TAC|~\href{https://archive.softwareheritage.org/swh:1:cnt:dbf493c0cf78f9cb975db418aad12fdd3f37cfb8;origin=https://github.com/HOLMS-lib/HOLMS;visit=swh:1:snp:2c0efd349323ed6f8067581cf1f6d95816e49841;anchor=swh:1:rev:1caf3be141c6f646f78695c0eb528ce3b753079a;path=/gen_decid.ml;lines=106-145}{\ExternalLink};

    \item Each new goal term is reordered by a sorting tactic--a tactic rearranging terms according to a specific priority order--which always precedes the implementation of the right rule for the modal operator: \verb|SORT_BOX_TAC|~\href{https://archive.softwareheritage.org/swh:1:cnt:dbf493c0cf78f9cb975db418aad12fdd3f37cfb8;origin=https://github.com/HOLMS-lib/HOLMS;visit=swh:1:snp:2c0efd349323ed6f8067581cf1f6d95816e49841;anchor=swh:1:rev:1caf3be141c6f646f78695c0eb528ce3b753079a;path=/gen_decid.ml;lines=44-72}{\ExternalLink}

    This tactic optimises the proof search, by performing a conversion of the goal term and ordering it so that priority is given to negated holds-propositions, followed by those holds-propositions formalising the forcing of a boxed formula
    
    \item  Applying the formalised \texttt{R}$\Box$ rule requires saturating the sequent with respect to this modal rule: \verb|SATURATE_ACC_TAC|;

    \item Proceed with the right rule for box (\texttt{R}$\Box$ or \texttt{R}$\Box_{Lob}$): \verb|BOX_RIGHT_TAC|;
\end{enumerate}

 Steps  1.ii-4 are repeated unless the same \texttt{holds}-proposition appears in both the hypotheses and the goal disjuncts, in which case the branch closes successfully. Otherwise, when no rule can be triggered, the proof terminates with a countermodel, which the proof assistant displays to the user (\verb|HOLMS_BUILD_COUNTERMODEL|). If all branches and sub-goals are closed, the proof assistant returns a new HOL Light theorem stating that the input formula is a lemma of the modal system under consideration.

The procedure that manages this iterative process is implemented using \verb|FIRST o map| \verb|CHANGED_TAC|. It ensures that at each step, the appropriate tactic—a specific phase of the procedure (1.ii-5.)—is selected and applied from \verb|*_STEP_TAC|, ensuring that it does not fail. The closing condition--stating that the current branch is closed, i.e. an initial sequent has been reached, or the sequent currently analysed has a labelled formula $x :\bot \Rightarrow$--is  formalised as follows:

\begin{quote}
    \textit{Closing} The same holds-proposition occurs both among the current hypotheses and the disjuncts of the (sub)goal; or \verb|holds (W,R) V False x| occurs in the current hypothesis list for some label \verb|x|.
\end{quote}

Termination of the proof search is assured by the formal implementation of terminating proof search strategy designed in~~\cite{negri2005proof,DBLP:conf/lics/GargGN12,negri2014proofs}.

\begin{lstlisting}[caption={Formal decision of the example in three STEP\textunderscore TAC}, label={lst:last}]
 # e(STEP_TAC);;
 val it : goalstack = 1 subgoal (1 total)

  0-6
  7 [`y IN W`]
  8 [`R w y`] (acc)
  9 [`R y y`] (acc)

`holds (W,R) V (Box a --> Not Box Not a) y`

  6 [`R w w`] (acc)
  7 [`y IN W`]
  8 [`R w y`] (acc)
  9 [`R y y`] (acc)
 10 [`holds (W,R) V (Box a) y`] (holds)
 11 [`holds (W,R) V a y`] (holds)
 
 # e(STEP_TAC);;
 val it : goalstack = 1 subgoal (1 total)
`~holds (W,R) V (Box Not a) y`

 # e(STEP_TAC);;
 val it : goalstack = No subgoals
 \end{lstlisting}
\end{enumerate}
  
\newpage

\verb|HOLMS_BUILD_COUNTERMODEL|~\href{https://archive.softwareheritage.org/swh:1:cnt:dbf493c0cf78f9cb975db418aad12fdd3f37cfb8;origin=https://github.com/HOLMS-lib/HOLMS;visit=swh:1:snp:2c0efd349323ed6f8067581cf1f6d95816e49841;anchor=swh:1:rev:1caf3be141c6f646f78695c0eb528ce3b753079a;path=/gen_decid.ml;lines=186-195}{\ExternalLink}:
\begin{enumerate}
    \item Considers the goal state which the previous \verb|STEP_TACTIC| stopped at;
    \item collects all the hypotheses, discarding the meta-hypotheses;
    \item Negates all the disjuncts constituting the goal term.
\end{enumerate}
\end{mydefinition}

By results on labelled sequent calculi~\cite{negri2011proof,negri2014proofs}, we know that this information suffices to construct a relational countermodel for input formula $A$.
\begin{figure}[h]
\centering
\[\infer[\text{R}\to]{\Rightarrow \ w: \Box(\Box A \to \neg \Box \neg A) }{\infer[\text{R}\Box]{ xRy \ \Rightarrow \ x:\Box A \to \neg \Box \neg A}{\infer[\text{R}\to]{x:\Box A, \ xRy \ \Rightarrow \ x:\neg \Box \neg A}{\infer[\text{R}\neg]{x:\Box A, \ xRy, \ x:\Box \neg A \ \Rightarrow }{ \infer[\text{Refl} ]{yRy, x:\Box A, \ xRy, \ x:\Box \neg A \ \Rightarrow }{\infer[\text{L}\Box \times 2]{yRy, x:\Box A, \ xRy, \ x:\Box \neg A \ \Rightarrow}{\infer[\texttt{L}\neg]{x:A, \ x:\neg A, \ yRy, x:\Box A, \ xRy, \ x:\Box \neg A \ \Rightarrow}{\infer{\textcolor{Blue}{x: A}, \ yRy, x:\Box A, \ xRy, \ x:\Box \neg A \ \Rightarrow \ \ \textcolor{Blue}{x: A}}{}}}}}}}}\]
    \caption{Abstract derivation of the example in $\mathsf{G3KT}$}
    \label{fig:enter-label}
\end{figure}

\begin{remark}[: On decidability]
   Each of the four systems considered here is known to be decidable, as discussed in Section~\ref{sec:naive-decision}. For $\mathbb{K}$, $\mathbb{T}$, and $\mathbb{GL}$, the procedure implemented in \verb|HOLMS_TAC| serves as a genuine decision method. This is because the labelled sequent calculi for these systems allow a terminating proof search, and the completeness results guarantee that all valid formulas will indeed yield a derivation.
    
    The situation differs for $\mathbb{K}4$, where the same uniform algorithm does not always terminate due to known interactions between \textit{transitivity closure} and the \textit{logical rules} of the labelled calculus $\mathsf{G3K4}$. Literature on labelled sequent calculi provides uniform solutions to this issue and ensures termination of proof search for this kind of calculi~\cite{DBLP:conf/lics/GargGN12}, that can be implemented in future versions of HOLMS. Nonetheless, the current implementation of \verb|HOLMS_TAC| remains valuable for modal reasoning in HOL Light, providing a robust and reliable foundation that support further enhancements.

\end{remark}
 
\chapter*{Conclusions}  
\addcontentsline{toc}{chapter}{Conclusions}  
\markboth{Conclusions}{Conclusions}

\noindent \textbf{Summary of Work.}
In order to provide a solid foundation for presenting our HOLMS library, we first offered an introduction to modal logic and the HOL Light interactive theorem prover, aiming to make them accessible to non-specialists. We then explored two approaches to formalising modal logics within HOL-based theorem provers,
whose comparison allows the reader to appreciate the \emph{modularity} of HOLMS and implementation choices behind it. 

With this groundwork in place, we presented HOLMS as an extension of the HOL Light proof assistant, developed to support a \emph{principled} implementation of modal reasoning within a general-purpose proof assistant.
We examined four normal modal logics--K, T and K4 within the modal cube, and GL beyond it--in order to illustrate the \emph{uniform and modular design} of the library.

In particular, we \emph{deeply-embedded} a standard syntax for modal logic and relational semantics based on Kripke Frames, together with an axiomatic calculus formalising an abstract relation of deducibility in normal systems, instantiated according to the specific axiom schemata of the modal system under consideration.

Furthermore, we introduced an abstract predicate (\textit{characteristic}) to semantically characterise modal axiom schemata and we formally established key results from \textit{correspondence theory} and \textit{type-theoretic bisimulation theory}.

We demonstrated within the theorem prover a \textit{parametric soundness lemma} applicable to any normal system and an \textit{ad-hoc polymorphic} \textit{consistency lemma} that holds for the logics under consideration. We then formalised an \textit{ad-hoc polymorphic} completeness theorem which we \textit{instantiated} for K, T, K4 and GL using correspondence result, and later \textit{generalised to arbitrary typed domains} by leveraging bisimulation techniques.

Finally, we exploited the formalisation of adequacy theorems and relational semantics to simulate a complete root-first proof search within the HOL Light goal-stack mechanism, effectively implementing labelled sequent calculi for the logics under investigation.
As a result, we obtained \textit{theoremhood decision procedures} for K, T and GL, as well as a \textit{principled semi-decision procedure} for K4. demonstrating how \textit{specialised theorem provers can be implemented within general-purpose theorem provers}.
\medskip

\noindent \textbf{A resource for future researchers. }
Besides presenting new results in the field of computerised mathematics and mechanised reasoning, this work provides a comprehensive introduction to the abstract modal concepts underlying HOLMS, a detailed description of the interactive theorem prover HOL Light, and a step-by-step presentation of the implementation of the HOLMS library, this dissertation aims to serve as a valuable resource for those beginning their study in implementing modal logic within HOL-based proof assistants, as well as for future developer of HOLMS.
\medskip

\noindent \textbf{Future Work.}  
The HOLMS project offers numerous opportunities for future development, both in terms of expanding the library’s \textit{dimensions} and enhancing its \textit{proof-search performance}, to which growth I hope to have the chance to contribute even after the completion of this thesis.

In terms of expanding the number of logics implemented in HOLMS, the most natural evolution of the library--highlighted by the first chapter of this dissertation--would be to incorporate the remaining logics within the \textit{modal cube}, namely D, S4, B, and S5.

The implementation of GL--a modal system beyond the cube--in HOLMS opens up promising possibilities for incorporating other logics beyond the cube, such as \textit{constructive and intuitionistic modalities}. Moreover, with some additional effort-- e.g. neighbourhood semantics, it would be possible to implement \textit{non-normal logics} as well.


Regarding \textit{computational efficiency}, we plan to optimes the principled decision algorithm starting with the implementation of a decision procedure for K4, following~\cite{DBLP:conf/lics/GargGN12}. Enhancing the library in this way will not only enable a fair comparison between our framework and existing tools for automated modal reasoning but also improve its portability across different labelled calculi and between proof assistants with similar architectures.

On the formalisation side, we aim to extend the framework to multimodal languages by indexing operators over a given type and making the current monomodal system a special case.  This extension, which allows the library to treat multimodal modal logic, is particularly relevant for logic verification methods.


Finally, on the side of accessibility of HOLMS for non-specialists, we plan to develop visual interfaces for representing formula derivations and relational models.

\appendix
\chapter{Binary Relations}\label{app:binary_relations}

We present here some definitions and lemmas on binary relations that will be useful for treating the relational semantics of modal logic.

\begin{definition}[\textbf{Binary Relation}]\phantom{Allyounedisloveloveisallyouneed}
    \begin{mydefinition}
     \begin{mydefinition2}
        $R$ is a \textit{relation} over $W$ $(R \subseteq W \times W)$ iff $\ \forall x, w (wRx \implies x, w \in W )$.
    \end{mydefinition2}
 \end{mydefinition}
\end{definition}

\begin{definition}[\textbf{Properties of binary relations}]\phantom{Allyounedislove}
\begin{mydefinition}
     \begin{mydefinition2}
     Let $R$ be a relation over $W$:
     \begin{enumerate}
          \item $R$ is \textbf{serial} iff $\forall w \in W \ \exists y \in W (wRy)$;
         \item $R$ is \textbf{reflexive} iff $\forall w \in W (wRw)$;
         \item $R$ is \textbf{irreflexive} iff $ \forall w \in W \neg(wRw)$;
         \item $R$ is a \textbf{symetric} iff $\forall w,x \in W (wRx \implies xRw)$;
         \item $R$ is \textbf{antisymmetric} iff $\forall w, x \in W (wRx \ and \ xRw \implies w=x)$
         \item $R$ is \textbf{transitive} iff $\forall w, x, y \in W (wRx \ and \ xRy \implies wRy)$;
         \item $R$ is a \textbf{euclidean} iff $\forall w,x,y \in W (wRx \ and \ wRy \implies yRx)$
         \item $R$ is \textbf{well-founded} iff \\ $\forall X \subseteq W (X\not = \varnothing \implies \exists w \in X (\neg \ \exists x \in X(xRw)))$
         \item $R$ is \textbf{converse well-founded}~\footnotemark 
          \ iff \\ $\forall X \subseteq W\,(X \not = \varnothing \implies \exists w \in X\,(\neg \exists x \in X (wRx)))$
         \item $R$ is an \textbf{equivalence relation} iff R is \textit{reflexive}, \textit{symetric} and \textit{transitive}. 
     \end{enumerate}
    \end{mydefinition2}
 \end{mydefinition}
\end{definition}

\footnotetext{We will abbreviate this property by writing  ``Converse WF'' or ``CWF''. This property is also know as \textbf{Noetherianity} ``NT''.}
\begin{remark}\label{rmk:binary_remark}
    All these properties, except from \textit{well-foundedness} and \textit{converse well-foundedness}, are expressible in first-order logic.
\end{remark} 

We list below some simple facts on binary relations.

\begin{fact}[\textbf{Properties on binary relations}]\label{lem:binary}\phantom{Allyouneedisloveloveis}
    \begin{mydefinition}
     \begin{enumerate}
         \item\label{itm:binary_1} \textnormal{Given $R$ a} symmetric \textnormal{relation on W, \\ $R$ is \textit{transitive} $\iff$ $R$ is \textit{euclidean};
         \item\label{itm:binary_2} Given $R$ a \textit{reflexive} relation on W, \\ $R$ is \textit{euclidean} $\implies$ $R$ is \textit{symmetric};
         \item\label{itm:binary_3} Given $R$ a relation on W, \\ $R$ is an \textit{equivalence} $\iff$ $R$ is \textit{euclidean} and \textit{reflexive};
          \item\label{itm:binary_4} Given $R$ a relation on W, \\ $R$ is \textit{reflexive} $\implies$ $R$ is 
         \textit{serial};
         \item\label{itm:CWF_IR} Given $R$ a relation on W, \\ $R$ is \textit{converse well-founded} $\implies$ $R$ is \textit{irreflexive}}.
     \end{enumerate}
 \end{mydefinition}
\end{fact}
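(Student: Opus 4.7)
The plan is to prove each of the five items as a short self-contained argument, exploiting the fact that items (1)--(3) are linked and can be combined, while (4) and (5) are essentially one-line observations.

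For (1), assume $R$ is symmetric. I would prove the forward direction by taking $w,x,y$ with $wRx$ and $wRy$, applying symmetry to get $xRw$, then transitivity on $xRw$ and $wRy$ to obtain $xRy$, and finally symmetry once more to conclude $yRx$, i.e.\ the euclidean property. The reverse direction is symmetric: from $wRx$ and $xRy$, symmetry gives $xRw$, so the euclidean hypothesis on $xRw$ and $xRy$ yields $yRw$, and a final application of symmetry gives $wRy$. For (2), assume $R$ is reflexive and euclidean, and take $wRx$; reflexivity gives $wRw$, and the euclidean property on $wRx$ and $wRw$ yields $xRw$, proving symmetry. For (3), the forward direction combines the definition of equivalence (which yields symmetry and transitivity) with an argument identical to the one used for (1) to derive the euclidean property; the reverse direction is obtained by chaining (2) and (1): euclidean plus reflexive gives symmetric by (2), and then symmetric plus euclidean gives transitive by (1).

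For (4), the argument is immediate: given $w \in W$, reflexivity provides $wRw$, which witnesses the existential in seriality with $y \coloneq w$. For (5), I would argue by contradiction: suppose $R$ is converse well-founded and yet some $w \in W$ satisfies $wRw$; applying the definition of converse well-foundedness to the nonempty subset $X \coloneq \{w\} \subseteq W$ yields some $x \in X$ with no $R$-successor in $X$, but necessarily $x = w$ and $wRw$ with $w \in X$, contradicting the choice of $x$.

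I expect no real obstacle in any of these steps, as all five statements are classical and each argument fits in a few lines. The only minor care point is keeping the variable roles straight in the euclidean clause $\forall w,x,y(wRx \land wRy \implies yRx)$, which is the reason I would present (1)--(3) as a linked block, deriving (3) from (1) and (2) rather than re-doing the computation, to avoid duplicating the same symmetry/euclidean bookkeeping twice.
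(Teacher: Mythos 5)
Your proposal is correct. The paper states these five items as a ``Fact'' in the appendix without giving any proof, so there is no argument of the paper's to compare against; your arguments supply exactly the routine verifications one would expect, and all five check out against the appendix's definitions (in particular your instantiations of the euclidean clause $\forall w,x,y\,(wRx \wedge wRy \implies yRx)$ in items (1)--(3), and the singleton-set argument for item (5), are correct).
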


\bibliographystyle{plainurl}
\bibliography{chapters/Bibliografia}

\begin{thebibliography}{10}

\bibitem{abrahamsson_et_al:LIPIcs.ITP.2022.3}
Oskar Abrahamsson, Magnus~O. Myreen, Ramana Kumar, and Thomas Sewell.
\newblock {Candle: A Verified Implementation of HOL Light}.
\newblock In June Andronick and Leonardo de~Moura, editors, {\em 13th International Conference on Interactive Theorem Proving (ITP 2022)}, volume 237 of {\em Leibniz International Proceedings in Informatics (LIPIcs)}, pages 3:1--3:17, Dagstuhl, Germany, 2022. Schloss Dagstuhl -- Leibniz-Zentrum f{\"u}r Informatik.
\newblock URL: \url{https://drops.dagstuhl.de/entities/document/10.4230/LIPIcs.ITP.2022.3}, \href {https://doi.org/10.4230/LIPIcs.ITP.2022.3} {\path{doi:10.4230/LIPIcs.ITP.2022.3}}.

\bibitem{Barendregt1985LamdaCalc}
Henk Barendregt.
\newblock {\em The Lambda Calculus, Its Syntax and Semantics}, volume 103 of {\em Studies in Logic and the Foundations of Mathematics}.
\newblock North-Holland, revised edition, 1985.

\bibitem{holms-website}
Antonella Bilotta, Marco Maggesi, and Cosimo~Perini Brogi.
\newblock A {HOL} {L}ight {L}ibrary for {M}odal {S}ystems ({HOLMS}).
\newblock \url{https://holms-lib.github.io/}.
\newblock Accessed: 2025-02-21.

\bibitem{DBLP:conf/overlay/BilottaMBQ24}
Antonella Bilotta, Marco Maggesi, Cosimo {Perini Brogi}, and Leonardo Quartini.
\newblock Growing {HOLMS}, a {HOL} {L}ight {L}ibrary for {M}odal {S}ystems.
\newblock In Daniele Porello, Cosimo Vinci, and Matteo Zavatteri, editors, {\em Short Paper Proceedings of the 6th International Workshop on Artificial Intelligence and Formal Verification, Logic, Automata, and Synthesis, {OVERLAY} 2024, Bolzano, Italy, November 28-29, 2024}, volume 3904 of {\em {CEUR} Workshop Proceedings}, pages 41--48. CEUR-WS.org, 2024.
\newblock URL: \url{https://ceur-ws.org/Vol-3904/paper5.pdf}.

\bibitem{DBLP:books/cu/BlackburnRV01}
Patrick Blackburn, Maarten de~Rijke, and Yde Venema.
\newblock {\em Modal Logic}, volume~53 of {\em Cambridge Tracts in Theoretical Computer Science}.
\newblock Cambridge University Press, 2001.
\newblock \href {https://doi.org/10.1017/CBO9781107050884} {\path{doi:10.1017/CBO9781107050884}}.

\bibitem{DBLP:books/el/07/BBW2007}
Patrick Blackburn, J.~F. A.~K. van Benthem, and Frank Wolter, editors.
\newblock {\em Handbook of Modal Logic}, volume~3 of {\em Studies in logic and practical reasoning}.
\newblock North-Holland, 2007.
\newblock URL: \url{https://www.sciencedirect.com/bookseries/studies-in-logic-and-practical-reasoning/vol/3/suppl/C}.

\bibitem{boolos1995logic}
George Boolos.
\newblock {\em The logic of provability}.
\newblock Cambridge University Press, 1995.

\bibitem{boulton-tpcd}
Richard Boulton, Andrew Gordon, Mike Gordon, John Harrison, John Herbert, and John~Van Tassel.
\newblock Experience with embedding hardware description languages in {HOL}.
\newblock In Stavridou et~al. \cite{tpcd93}, pages 129--156.

\bibitem{BoyerMooore1979NQTHM}
Robert~S. Boyer and J.~Strother Moore.
\newblock {\em A Computational Logic}.
\newblock ACM Monograph Series. Academic Press, 1979.

\bibitem{Carnap1942}
Rudolph Carnap.
\newblock {\em Introduction to Semantics}.
\newblock MA: Harvard University Press., 1942.

\bibitem{Church1940TypedLambdaCalc}
Alonzo Church.
\newblock {A formulation of the Simple Theory of Types}.
\newblock {\em Journal of Symbolic Logic}, 5:56--68, 1940.

\bibitem{clarke1999model}
E.M. Clarke, O.~Grumberg, D.~Peled, and D.A. Peled.
\newblock {\em Model Checking}.
\newblock The Cyber-Physical Systems Series. MIT Press, 1999.

\bibitem{Curry1930currify}
Heskell Curry.
\newblock {Grundlagen der Kombinatorischen Logik}.
\newblock {\em American Journal of Mathematics}, 52:509--536 and 789--834, 1930.

\bibitem{Dawson1990-DAWAIO}
John Dawson and A.~S. Troelstra.
\newblock An interpretation of the intuitionistic propositional calculus.
\newblock {\em Journal of Symbolic Logic}, 55(1):346--346, 1990.
\newblock \href {https://doi.org/10.2307/2274984} {\path{doi:10.2307/2274984}}.

\bibitem{deBrujin1980AUTOMATH}
Nicolaas~G. de~Bruijn.
\newblock {A survey of the project AUTOMATH}.
\newblock In Jonathan~P. Seldin and J.~Roger Hindley, editors, {\em To H. B. Curry: Essays in Combinatory Logic, Lambda Calculus, and Formalism}, page 589–606. Academic Press, 1980.

\bibitem{fitting2013proof}
Melvin Fitting.
\newblock {\em Proof methods for modal and intuitionistic logics}, volume 169.
\newblock Springer Science \& Business Media, 2013.

\bibitem{Gabbay1996-GABLDS}
Dov~M. Gabbay.
\newblock {\em Labelled Deductive Systems}.
\newblock Oxford University Press, New York, 1996.

\bibitem{DBLP:conf/lics/GargGN12}
Deepak Garg, Valerio Genovese, and Sara Negri.
\newblock Countermodels from sequent calculi in multi-modal logics.
\newblock In {\em Proceedings of the 27th Annual {IEEE} Symposium on Logic in Computer Science, {LICS} 2012, Dubrovnik, Croatia, June 25-28, 2012}, pages 315--324. {IEEE} Computer Society, 2012.
\newblock \href {https://doi.org/10.1109/LICS.2012.42} {\path{doi:10.1109/LICS.2012.42}}.

\bibitem{sep-logic-modal}
James Garson.
\newblock {Modal Logic}.
\newblock In Edward~N. Zalta and Uri Nodelman, editors, {\em The {Stanford} Encyclopedia of Philosophy}. Metaphysics Research Lab, Stanford University, {S}pring 2024 edition, 2024.

\bibitem{Gentzen1935-GENUBD-3}
Gerhard Gentzen.
\newblock Untersuchungen Über das logische schließen. i.
\newblock {\em Mathematische Zeitschrift}, 35:176--210, 1935.

\bibitem{Gentzen1935-GENUBD-4}
Gerhard Gentzen.
\newblock {Untersuchungen Über das Logische Schließen. II.}
\newblock {\em Mathematische Zeitschrift}, 39:405--431, 1935.

\bibitem{Gordon1993-GORITH}
Michael J.~C. Gordon and Tom~F. Melham.
\newblock {\em Introduction to Hol: A Theorem Proving Environment for Higher Order Logic}.
\newblock 1993.

\bibitem{Gordon1979EdinburghLCFy}
Mike J.~C. Gordon, Robin Milner, and Christopher~P. Wadsworth.
\newblock Edinburgh lcf: A mechanised logic of computation.
\newblock {\em Lecture Notes in Computer Science}, 78, 1979.

\bibitem{Godel-1933}
Kurt Gödel.
\newblock Eine interpretation des intuitionistischen aussagenkalkiils.
\newblock {\em Ergebnisse eines mathematischen Kolloquiums}, 4, 1933.
\newblock trasl. en in \cite{Godel-collected-1}.

\bibitem{Godel-collected-1}
Kurt Gödel.
\newblock {\em Kurt Gödel: Collected Works: Vol. I}.
\newblock Oxford University Press, 1986.

\bibitem{KeplerConj}
T.~Hales, M.~Adams, G.~Bauer, T.D. Dang, J.~Harrison, L.T. Hoang, C.~Kaliszyk, V.~Magron, S.~Magron, S.~McLaughlin, T.T. Nguyen, Q.T. Nguyen, T.~Nipkw, S.~Obua, J.~Pleso, J.~Rute, A.~Solovyev, T.H.~An Ta, N.~T. Trand, T.D. Trieu, J.Urban, K.~Vu, and R.~Zumkeller.
\newblock {A formal proof of the Kepler conjecture}.
\newblock University of Cambridge. \url{https://www.cambridge.org/core/services/aop-cambridge-core/content/view/78FBD5E1A3D1BCCB8E0D5B0C463C9FBC/S2050508617000014a.pdf/formal_proof_of_the_kepler_conjecture.pdf}, 2016.

\bibitem{harrisonfloatingpoint}
John Harrison.
\newblock {Floating-Point Verification using Theorem Proving}.
\newblock Intel Corporation. \url{https://www.cl.cam.ac.uk/~jrh13/papers/sfm.pdf}.

\bibitem{hol-light-webpage}
John Harrison.
\newblock {HOL} {L}ight {H}ome {Page}.
\newblock University of Cambridge. \url{https://www.cl.cam.ac.uk/~jrh13/hol-light/}.

\bibitem{harrisonmanual}
John Harrison.
\newblock {T}he {HOL} {L}ight manual (1.1).
\newblock University of Cambridge. \url{https://www.cl.cam.ac.uk/~jrh13/hol-light/manual-1.1.pdf}, 2000.

\bibitem{harrisonreference}
John Harrison.
\newblock {T}he {HOL} {L}ight {S}ystem of {REFERENCE}.
\newblock University of Cambridge. \url{https://www.cl.cam.ac.uk/~jrh13/hol-light/reference.pdf}, 2011.

\bibitem{harrisontutorial}
John Harrison.
\newblock {HOL} {L}ight tutorial.
\newblock Intel Corporation. \url{http://www.cl.cam.ac.uk/~jrh13/hol-light/tutorial.pdf}, 2017.

\bibitem{hol-light}
John Harrison.
\newblock {The HOL Light Theorem Prover}.
\newblock Official webpage: \url{https://hol-light.github.io/}, 2024.

\bibitem{harrop_1958}
Ronald Harrop.
\newblock On the existence of finite models and decision procedures for propositional calculi.
\newblock {\em Mathematical Proceedings of the Cambridge Philosophical Society}, 54(1):1–13, 1958.
\newblock \href {https://doi.org/10.1017/S0305004100033120} {\path{doi:10.1017/S0305004100033120}}.

\bibitem{Henkin1963lambdadef}
Leon Henkin.
\newblock "{A theory of propositional types}".
\newblock {\em Fundamenta Mathematicae}, 52:323--344, 1963.

\bibitem{Hilbert1926ÜberInfinite}
David Hilbert.
\newblock {Über das Unendliche}.
\newblock {\em Mathernatische Annolen}, 95:161--190, 1926.
\newblock trasl. en in \cite{Hilbert1926OntheInfinite}.

\bibitem{Hilbert1926OntheInfinite}
David Hilbert.
\newblock {On the Infinite}.
\newblock In Paul Benaceraff and Hilary~W. Putnam, editors, {\em "{Philosophy of Mathematics. Selected Readings}"}, pages 183--201. Cambridge University Press, 1983.

\bibitem{HindelyLambdaCalc}
Roger Hindley and Jonathan~P. Seldin.
\newblock {\em {Lambda Calculus and Combinators. An Introduction}}.
\newblock Cambridge University Press, second edition, 2008.

\bibitem{Kanger1957-KANPIL}
Stig Kanger.
\newblock {\em Provability in Logic}.
\newblock Almqvist \& Wiksell, Stockholm,, 1957.

\bibitem{Ketonen1945-KETUZP}
Oiva Ketonen.
\newblock Untersuchungen zum pr\"{a}dikatenkalkul.
\newblock {\em Journal of Symbolic Logic}, 10(4):127--130, 1945.
\newblock \href {https://doi.org/10.2307/2269018} {\path{doi:10.2307/2269018}}.

\bibitem{Ke52}
Stephen~Cole Kleene.
\newblock Permutability of inferences in gentzen's calculi lk and lj.
\newblock {\em Memoirs of the American mathematical society 10}, 1952.

\bibitem{Kripke1959-KRIACT}
Saul Kripke.
\newblock A completeness theorem in modal logic.
\newblock {\em Journal of Symbolic Logic}, 24(1):1--14, 1959.
\newblock \href {https://doi.org/10.2307/2964568} {\path{doi:10.2307/2964568}}.

\bibitem{Kripke1963-KRISCO}
Saul Kripke.
\newblock Semantical considerations on modal logic.
\newblock {\em Acta Philosophica Fennica}, 16:83--94, 1963.

\bibitem{Kripke1963-KRISAO-2}
Saul~A. Kripke.
\newblock Semantical analysis of intuitionistic logic i.
\newblock In Michael Dummett and J.~N. Crossley, editors, {\em Formal Systems and Recursive Functions}, pages 92--130. North Holland, 1963.

\bibitem{ocaml-light}
Xavier Leroy, Damien Doligez, Alain Frisch, Jacques Garrigue, Didier Rémy, KC~Sivaramakrishnan, and Jérôme Vouillon.
\newblock The{ OC}aml system release 5.2.
\newblock INRIA. \url{https://ocaml.org/manual/5.2/index.html}, 2024.

\bibitem{10.1093/mind/XXI.84.522}
Clarence~Irving Lewis.
\newblock {Implication and the Algebra of Logic}.
\newblock {\em Mind}, XXI(84):522--531, 01 1912.
\newblock \href {https://arxiv.org/abs/https://academic.oup.com/mind/article-pdf/XXI/84/522/9889908/522.pdf} {\path{arXiv:https://academic.oup.com/mind/article-pdf/XXI/84/522/9889908/522.pdf}}, \href {https://doi.org/10.1093/mind/XXI.84.522} {\path{doi:10.1093/mind/XXI.84.522}}.

\bibitem{Lewis1932-LEWSLA}
Clarence~Irving Lewis.
\newblock {\em Symbolic Logic}.
\newblock The Century co., New York, 1932.

\bibitem{Madhavapeddy_Minsky_2022}
Anil Madhavapeddy and Yaron Minsky.
\newblock {\em Real World OCaml: Functional Programming for the Masses}.
\newblock Cambridge University Press, 2 edition, 2022.

\bibitem{maggesi_et_al:LIPIcs.ITP.2021.26}
Marco Maggesi and Cosimo Perini~Brogi.
\newblock A formal proof of modal completeness for provability logic.
\newblock In Liron Cohen and Cezary Kaliszyk, editors, {\em 12th International Conference on Interactive Theorem Proving (ITP 2021)}, volume 193 of {\em Leibniz International Proceedings in Informatics (LIPIcs)}, pages 26:1--26:18, Dagstuhl, Germany, 2021. Schloss Dagstuhl -- Leibniz-Zentrum f{\"u}r Informatik.
\newblock URL: \url{https://drops.dagstuhl.de/opus/volltexte/2021/13921}, \href {https://doi.org/10.4230/LIPIcs.ITP.2021.26} {\path{doi:10.4230/LIPIcs.ITP.2021.26}}.

\bibitem{DBLP:journals/jar/MaggesiB23}
Marco Maggesi and Cosimo {Perini Brogi}.
\newblock {M}echanising {G}{\"{o}}del-{L}{\"{o}}b {P}rovability {L}ogic in {HOL} {L}ight.
\newblock {\em Journal of Automated Reasoning}, 67(3):29, 2023.
\newblock URL: \url{https://doi.org/10.1007/s10817-023-09677-z}, \href {https://doi.org/10.1007/S10817-023-09677-Z} {\path{doi:10.1007/S10817-023-09677-Z}}.

\bibitem{McColl1880-MCCSR}
Hugh McColl.
\newblock Symbolical reasoning.
\newblock {\em Mind}, 5(17):45--60, 1880.

\bibitem{Milner1978Polymorphism}
Robin Milner.
\newblock A theory of type polymorphism in programming.
\newblock {\em {Journal of Computer and Systems Sciences}}, 17:348--375, 1978.

\bibitem{MugnaiStoria}
Massimo Mugnai.
\newblock Introduzione alla logica modale.
\newblock Scuola Normale Superiore di Pisa. \url{https://www.yumpu.com/it/document/view/14950095/introduzione-alla-logica-modale-scuola-normale-superiore}, 2008.

\bibitem{negri2005proof}
Sara Negri.
\newblock Proof analysis in modal logic.
\newblock {\em Journal of Philosophical Logic}, 34(5):507--544, 2005.

\bibitem{negri2014proofs}
Sara Negri.
\newblock Proofs and countermodels in non-classical logics.
\newblock {\em Logica Universalis}, 8(1):25--60, 2014.

\bibitem{negri2017proof}
Sara Negri.
\newblock Proof theory for non-normal modal logics: The neighbourhood formalism and basic results.
\newblock {\em IfCoLog Journal of Logics and their Applications}, 4(4):1241--1286, 2017.

\bibitem{negri2021proof}
Sara Negri and Edi Pavlovi{\'c}.
\newblock Proof-theoretic analysis of the logics of agency: The deliberative stit.
\newblock {\em Studia Logica}, 109(3):473--507, 2021.

\bibitem{negri2008structural}
Sara Negri and Jan von Plato.
\newblock {\em Structural proof theory}.
\newblock Cambridge university press, 2008.

\bibitem{negri2011proof}
Sara Negri and Jan von Plato.
\newblock {\em Proof analysis: a contribution to Hilbert's last problem}.
\newblock Cambridge University Press, 2011.

\bibitem{poggiolesi2010gentzen}
Francesca Poggiolesi.
\newblock {\em Gentzen calculi for modal propositional logic}, volume~32.
\newblock Springer Science \& Business Media, 2010.

\bibitem{poggiolesi2016natural}
Francesca Poggiolesi.
\newblock Natural deduction calculi and sequent calculi for counterfactual logics.
\newblock {\em Studia Logica}, 104(5):1003--1036, 2016.

\bibitem{Prior1957-PRITAM}
Arthur~N. Prior.
\newblock {\em Time and Modality}.
\newblock Oxford University Press, London, 1957.

\bibitem{tpcd93}
Victoria Stavridou, Thomas~F. Melham, and R.~T. Boute, editors.
\newblock {\em Proceedings of the {IFIP} {TC10/WG 10.2} International Conference on Theorem Provers in Circuit Design: Theory, Practice and Experience}, volume A-10 of {\em IFIP Transactions A: Computer Science and Technology}, Nijmegen, The Netherlands, 1993. North-Holland.

\bibitem{stirling1991modal}
Colin Stirling.
\newblock {\em Modal and temporal logics}.
\newblock LFCS, Department of Computer Science, University of Edinburgh, 1991.

\bibitem{Stirling_2011}
Colin Stirling.
\newblock {\em Bisimulation and logic}, page 173–196.
\newblock Cambridge Tracts in Theoretical Computer Science. Cambridge University Press, 2011.

\bibitem{strachey2000fundamental}
Christopher Strachey.
\newblock Fundamental concepts in programming languages.
\newblock {\em Higher-order and symbolic computation}, 13:11--49, 2000.

\bibitem{Tarski1968-TARUT}
Alfred Tarski.
\newblock {\em Undecidable Theories}.
\newblock North-Holland Pub. Co., Amsterdam, 1968.

\bibitem{VanBenthem2001}
Johan Van~Benthem.
\newblock {\em Correspondence Theory}, pages 325--408.
\newblock Springer Netherlands, 2001.
\newblock \href {https://doi.org/10.1007/978-94-017-0454-0_4} {\path{doi:10.1007/978-94-017-0454-0_4}}.

\bibitem{sep-logic-provability}
Rineke~(L.C.) Verbrugge.
\newblock {Provability Logic}.
\newblock In Edward~N. Zalta and Uri Nodelman, editors, {\em The {Stanford} Encyclopedia of Philosophy}. Metaphysics Research Lab, Stanford University, {S}ummer 2024 edition, 2024.

\bibitem{Vigano2000-VIGLNL}
Luca Viganò.
\newblock {\em Labelled Non-Classical Logics}.
\newblock Kluwer Academic Publishers, Boston, 2000.

\bibitem{Russel1910Principia}
Bertrand Russel Alfred~N. Whitehead.
\newblock {\em Principia Mathematica}.
\newblock Cambridge University Press, 1910.

\bibitem{ZaltaManuscript-ZALBCI}
Edward~N. Zalta.
\newblock Basic concepts in modal logic.
\newblock manuscript.

\bibitem{ca44bd83-49fd-3b77-987a-071e647d06f0}
Jan Łukasiewicz.
\newblock {Farewell Lecture at the University of Warsaws}, march 7, 1918.
\newblock {\em The Polish Review}, 13(3):45--47, 1968.
\newblock URL: \url{http://www.jstor.org/stable/25776790}.

\end{thebibliography}
\chapter*{Special Thanks}  
\addcontentsline{toc}{chapter}{Special Thanks}  
\markboth{Special Thanks}{Special Thanks}

 I would like to express my sincere gratitude to Professors Maggesi and Perini Brogi for giving me the opportunity to participate in the HOLMS project and for introducing me to the world of research in logic.\medskip

\noindent I would also like to thank my father, who taught me curiosity, and my mother, who taught me kindness.  \medskip
 
 \noindent A special thanks to Luca, a storyteller who lifted my spirits during the most challenging moments, and with whom I have walked almost all the streets of Florence (including those on private property) during these last two years. \medskip
 
 \noindent I would like to thank my fellow students from Florence, an incredible group of people whom, by some statistical anomaly, I had the fortune to meet in the shadow of Brunelleschi’s Dome. \medskip
 
 \noindent Finally, I would like to express my heartfelt thanks to Florence, with its hills, its springs, its small cinemas and bookshops, its red rooftops, and its people, so lively outside the city center.

\end{document}